\Crefname{figure}{Figure}{Figures}
\def\@instring#1#2{TT\fi\begingroup
  \edef\x{\endgroup\noexpand\in@{#1}{#2}}\x\ifin@}
\newcommand\@newmathsfone[1]{%
  \expandafter\newcommand\csname #1\endcsname{\mathsf{#1}}%
}
\newcommand\@newmathsftwo[2]{%
  \expandafter\newcommand\csname #1\endcsname{\mathsf{#2}}%
}
\NewDocumentCommand\@newmathsfsplit{ > { \SplitArgument { 1 } { : } } m}
{ \@newmathsftwo #1 }
\newcommand{\newmathsf}[1]{%
\if\@instring{:}{#1}%
\@newmathsfsplit{#1}
\else%
\@newmathsfone{#1}
\fi}
\newcommand*{\newmathsfs}{%
  \let\do\newmathsf%
  \docsvlist%
}
\renewcommand{\geq}{\geqslant}
\renewcommand{\leq}{\leqslant}
\renewcommand{\le}{\leq}
\renewcommand{\ge}{\geq}
\DeclareMathAlphabet{\mathbfcal}{OMS}{cmsy}{b}{n}
\DeclareMathAlphabet{\mathbbold}{U}{bbold}{m}{n}
\newcommand{\E}{\mathbb{E}}
\newcommand{\N}{\mathbb{N}} 
\newcommand{\R}{\mathbb{R}}
\newcommand{\trop}[1][]{\ifthenelse{\equal{#1}{}}{ \mathbb{T} }{ \mathbb{T}(#1) }}
\newcommand{\abs}[1]{|{#1}|}
\newcommand{\card}[1]{|{#1}|}
\newtheorem{theorem}{Theorem}[section]
\newtheorem{proposition}[theorem]{Proposition}
\newtheorem{corollary}[theorem]{Corollary}
\newtheorem{lemma}[theorem]{Lemma}
\theoremstyle{definition}
\newtheorem{definition}[theorem]{Definition}
\newtheorem{assumption}{Assumption}
\theoremstyle{remark}
\newtheorem{remark}[theorem]{Remark}
\newtheorem{example}[theorem]{Example}
\tikzset{grid/.style={gray!30,very thin}}
\tikzset{axis/.style={gray!50,->,>=stealth'}}
\tikzset{convex/.style={draw=none,fill=lightgray,fill opacity=0.7}}
\tikzset{convexborder/.style={very thick}}
\tikzset{point/.style={blue!50}}
\tikzset{hs/.style={fill opacity=0.3,fill=orange,draw=none}}
\tikzset{hsborder/.style={orange,ultra thick,dashdotted}}
\newcommand{\polyh}{\mathcal{P}}
\newcommand{\argmax}{\arg\max}
\newcommand{\argmin}{\arg\min}
\newcommand{\dgraph}{\vec{\mathcal{G}}}
\newcommand{\vertices}{V}
\newcommand{\vertex}{v}
\newcommand{\edges}{E}
\newcommand{\cycle}{C}
\newcommand{\Max}{\mathrm{Max}}
\newcommand{\Min}{\mathrm{Min}}
\newcommand{\Maxvertices}{\vertices_{\Max}}
\newcommand{\Minvertices}{\vertices_{\Min}}
\newcommand{\dunion}{\uplus}
\newcommand{\payoff}{r}
\newcommand{\shapley}{F}
\newcommand{\gameval}{\lambda}
\newcommand{\dominion}{\mathcal{D}}
\newcommand{\Prob}{\mathbb{P}}
\newcommand{\dpath}[3][]{(#1\ifthenelse{\equal{#1}{}}{}{,} #2 \rightarrow #3   )}
\newcommand{\ndpath}[3][]{(#1\ifthenelse{\equal{#1}{}}{}{,} #2 \nrightarrow #3   )}
\newcommand{\onenorm}[1]{\| #1 \|_{\mathrm{1}}}
\newcommand{\supnorm}[1]{\| #1 \|_{\infty}}
\newcommand{\bias}{u}
\newcommand{\poly}{\mathrm{poly}}
\newcommand{\Normal}{\mathcal{N}}
\newcommand{\Generic}{\mathcal{U}}
\newcommand{\dens}{f}
\newcommand{\Var}{\mathrm{Var}}
\newcommand{\Cond}{\Delta}
\newcommand{\disc}{\gamma}
\newcommand{\SwitchMax}{\textsc{SwitchMax}}
\newcommand{\SwitchMin}{\textsc{SwitchMin}}
\newcommand{\DiscPI}{\textsc{DiscountedPI}}
\newcommand{\IncDiscPI}{\textsc{IncreasingDiscountPI}}
\newcommand{\IncDiscPItwo}{\textsc{IncreasingDiscountPI2}}
\newcommand{\IncDiscPIor}{\textsc{IncreasingDiscountPIOracle}}
\newcommand{\IncDiscPIortwo}{\textsc{IncreasingDiscountPIOracle2}}
\newcommand{\eps}{\varepsilon}
\title{Smoothed analysis of deterministic discounted and mean-payoff games}
\author[Bruno Loff]{Bruno Loff}
\address{LASIGE, Faculdade de Ciências, Universidade de Lisboa}
\email{bruno.loff@gmail.com}
\author[Mateusz Skomra]{Mateusz Skomra}
\address{LAAS-CNRS, Universit{\'e} de Toulouse, CNRS, Toulouse, France}
\email{mateusz.skomra@laas.fr}
\begin{document}

\makeatletter
\def\@tocline#1#2#3#4#5#6#7{\relax
  \ifnum #1>\c@tocdepth 
  \else
    \par \addpenalty\@secpenalty\addvspace{#2}%
    \begingroup \hyphenpenalty\@M
    \@ifempty{#4}{%
      \@tempdima\csname r@tocindent\number#1\endcsname\relax
    }{%
      \@tempdima#4\relax
    }%
    \parindent\z@ \leftskip#3\relax \advance\leftskip\@tempdima\relax
    \rightskip\@pnumwidth plus4em \parfillskip-\@pnumwidth
    #5\leavevmode\hskip-\@tempdima
      \ifcase #1
       \or\or \hskip 1em \or \hskip 2em \else \hskip 3em \fi%
      #6\nobreak\relax
    \hfill\hbox to\@pnumwidth{\@tocpagenum{#7}}\par
    \nobreak
    \endgroup
  \fi}
\makeatother

\begin{abstract}
  We devise a policy-iteration algorithm for deterministic two-player discounted and mean-payoff games, that runs in polynomial time with high probability, on any input where each payoff is chosen independently from a sufficiently random distribution.

  This includes the case where an arbitrary set of payoffs has been perturbed by a Gaussian, showing for the first time that policy-iteration algorithms are efficient, in the sense of smoothed analysis.

  More generally, we devise a \emph{condition number} for deterministic discounted and mean-payoff games, and show that our algorithm runs in time polynomial in this condition number.

  \medskip
  Our result confirms a previous conjecture of Boros et al., which was claimed as a theorem \cite{BorosElbassioniFouzGurvich:2011} and later retracted \cite{boros2018approximation}. It stands in contrast with a recent counter-example by Christ and Yannakakis \cite{christ2023smoothed}, showing that Howard's policy-iteration algorithm does \emph{not} run in smoothed polynomial time on \emph{stochastic} single-player mean-payoff games.

  Our approach is inspired by the analysis of random optimal assignment instances by Frieze and Sorkin \cite{frieze2007probabilistic}, and the analysis of bias-induced policies for mean-payoff games by Akian, Gaubert and Hochart \cite{generic_uniqueness}.
\end{abstract}

\maketitle

\bigskip\noindent\textbf{Acknowledgements.} MS would like to thank Xavier Allamigeon, St\'{e}phane Gaubert, and Ricardo D. Katz for many useful discussions on mean-payoff games, policy iteration, for explaining the operator approach of \cite{generic_uniqueness}, for exchanging ideas about the problem of smoothed analysis, for their remarks on a preliminary version of this paper, and for being a perpetual source of friendship and inspiration.

Funded by the European Union (ERC, HOFGA, 101041696). Views and opinions expressed are however those of the author(s) only and do not necessarily reflect those of the European Union or the European Research Council. Neither the European Union nor the granting authority can be held responsible for them. Also supported by FCT through the LASIGE Research Unit, ref.\ UIDB/00408/2020 and ref.\ UIDP/00408/2020.

\tableofcontents

\section{Introduction}

\subsection{A history of discounted and mean-payoff games} John von Neumann proved his minimax theorem in 1928, founding game theory by showing the existence of optimal strategies in zero-sum matrix games. In 1953, Lloyd Shapley~\cite{shapley_stochastic} considered what happened if two players repeatedly played a zero-sum matrix game. The overall game proceeds as follows. We have $n$ states, and to each state $i\in[n]$ corresponds a zero-sum matrix game $G_i$. At each round, the two players are in some state $i\in[n]$ and play the corresponding game $G_i$, with each player simultaneously choosing an action out of a finite set of possible actions. The two players' choice of actions determines not just the payoff, but also the state at the next round. This is repeated ad-infinitum.

In these games, randomness is possible in two different ways. First, the state at the next round can be chosen stochastically, or deterministically, based on the current state and on the players' chosen actions. This gives us two variants: \emph{stochastic} games, and \emph{deterministic} games, with the latter being a special case of the former. Second, the players' choice of action can itself be \emph{pure} (a single action), or \emph{mixed} (a distribution over the possible actions).

In an infinite game such as this there are two natural ways of determining the winning player. In the \emph{discounted} variant, payoffs received at round $t$ are multiplied by a \emph{discount factor} of $\disc^t$ (for some $0 \le \disc < 1$), and we wish to know the total discounted payoff in the limit as the number of rounds goes to infinity. This is equivalent to saying that the game is forced to stop after every round with probability $1-\disc$, and asking for the expected payoff at the limit. In the \emph{mean-payoff} variant, we measure the liminf or limsup, as the number of rounds goes to infinity, of the average payoff received so far (i.e. total payoff divided by the number of rounds). Shapley~\cite{shapley_stochastic} proved the existence of a value and optimal (mixed) strategies for the stochastic, discounted variant.

Concurrently to Shapley's work, Bellman \cite{bellman1957dynamic} studied a class of problems which he termed \emph{Markov Decision Processes} (MDPs). MDPs model decision making when the result of one's actions can be partly random, and they can be seen as \emph{single-player} variant of stochastic games. At each round, the player finds himself in a given state out of a finite number of states, and chooses an action. Depending on his choice he receives a payoff, and transitions to a different state. This transition can be either deterministic or stochastic. The player's goal is to maximize the discounted payoff or mean payoff at the limit, as the number of rounds goes to infinity. Bellman provided a method to find an optimal pure strategy in the discounted variant.


In both MDPs and in discounted games the optimal strategies can be made \emph{memoryless}, in that the choice what to do only depends on the current state $i$, and not on the past history. In the general case of discounted stochastic games where the players play simultaneously, the optimal memoryless strategies must be mixed. In the case of MDPs, the optimal memoryless strategy can further be made \emph{pure}.

As for the mean-payoff variant, Gillette~\cite{gillette} gave an example of a mean-payoff two-player game, where the players play simultaneously at each round, whose optimal strategies cannot be memoryless.\footnote{In fact, it was only in the 1980s \cite{mertens_neyman} that simultanous-move mean-payoff games were proven to have a value. For every $\eps > 0$, optimal (not memoryless) strategies exist for each player ensuring that the payoff is $\eps$-close to the value.} With this in mind, Gillette introduced a \emph{turn-based} variant of Shapley's infinite game, where two players play in turns. At each round, the game is in some state $i$, and one of the players (depending on $i$) chooses an action, which determines the next state (stochastically or deterministically) and a resulting payoff. One player is trying to maximize the payoff at the limit, and the other tries to minimize it. Gillette claimed that turn-based two-player games have a value and that optimal strategies exist for both players which are both memoryless and pure. Gillette's proof was actually wrong, but it was later corrected by Liggett and Lippman~\cite{liggett_lippman}, so the statement \emph{is} true. It also implies the corresponding statement for the mean-payoff variant of MDPs, as a special case.

A pure, memoryless strategy for such games is called a \emph{policy}, and it is a finite object: one can represent it as a finite function $\sigma:\mathsf{States} \to \mathsf{Actions}$ specifying the chosen action at each state. In this paper, we will not concern ourselves with simultaneous two-player games, and only consider single-player games and turn-based two-player games. We will use the informal nomenclature ``deterministic/stochastic single-player/two-player discounted\allowbreak /\allowbreak mean-payoff game'' to denote each of the eight variants of (non-simultaneous) games just mentioned. Let us also use the term ``discounted and mean-payoff games'' to refer to these games as whole.

\subsection{Algorithms} The above results show that all eight variants have a value, and that optimal strategies are policies, hence finite objects. It then makes sense to consider the algorithmic problem of \emph{solving} such a game: given as input a specification of the game with rational weights (and with rational discount factor, if applicable), compute the value of the game, and optimal policies for the players.\footnote{It can be shown that the value of such a game with rational weights (and discount factor) is a rational number of comparable size.}
The study of discounted and mean-payoff games has always been accompanied by the development of algorithms for solving them. Most algorithms for solving discounted and mean-payoff games can be broadly classified into three families: value-iteration algorithms\footnote{The first algorithm ever invented was a value iteration algorithm \cite{bellman1957dynamic}. For a modern value-iteration algorithm for single-player games, see \cite{sidford2018variance}, which contains a historical overview in Section 2. For two players see, e.g., \cite{zwick_paterson,ibsen-jensen_miltersen,chatterjee_ibsen-jensen,AllamigeonGaubertKatzSkomra:2022,AkianGaubertNaepelsTerver:2023}. 
}, algorithms for MDPs that use linear programming\footnote{See, e.g., Chapter 2 of \cite{filar_vrieze}, or various sections of \cite{puterman}.}, and, of particular importance to us, policy iteration algorithms.

Policy iteration algorithms have been invented for solving all variants of games described above. These algorithms maintain a policy in memory, and proceed by repeatedly modifying the policy, so that its quality improves monotonically according to some measure, until it can no longer be improved, at which point the measure must guarantee that we have found an optimal strategy for both players.

The first policy iteration algorithm was invented by Howard \cite{howard1960dynamic}, and finds an optimal strategy for deterministic (and some stochastic) Markov Decision Processes. This was later extended by Denardo and Fox \cite{denardo1968multichain} to work on all stochastic MDPs. The method was first extended to two-player mean-payoff games by Hoffman and Karp \cite{hoffman_karp}, and two-player discounted games by Denardo \cite{denardo1967contraction}, with later developments by many other authors \cite{rao1973algorithms,puri1995theory,CochetGaubertGunawardena:1999,gaubert1998duality,raghavan2003policy,cochetterrasson2006policy}. A good historical overview with more technical details appears in \cite{detournay_policy}, where a policy iteration algorithm first appeared that can handle all the variants of mean-payoff games. In the case of discounted games, an optimal strategy can be found in time polynomial in $\frac{1}{1-\gamma}$ \cite{ye2011simplex,hansen2013strategy}. Otherwise, for mean-payoff games, or for discount factors $\gamma$ exponentially close to $1$, no upper-bound is known on the number of iterations, significantly better than the number of policies, which is exponential in the number $n$ of states. More precisely, the best upper-bound on the number of iterations is $2^{\tilde O(\sqrt n)}$ \cite{halman,hansen_zwick_random_facet}.

\subsection{Policy iteration versus the simplex method}
When one first studies policy iteration algorithms, one gets a sense of familiarity, as if policy iteration algorithms are analogous to the simplex method for linear programming. The intuitive sense is that the choice of policy plays the same role as the choice of basic feasible solution in the simplex method, with a change in policy being analogous to a pivot operation.

In fact, in some cases, this analogy can be formally established. It is possible to express a MDP by a particular linear program, and in this particular case the connection is perfect: a simplex pivoting rule gives us a policy iteration algorithms for MDPs, and any policy iteration algorithm that switches a single node at a time gives us a pivoting rule for applying simplex on this particular program.

As a result, many known counter-examples for the simplex method, showing that certain pivoting rules require an exponential number of pivots, were devised by first finding examples of MDPs for which certain policy-iteration algorithms need an exponential number of iterations, and then \emph{translating} the counter-example to work for the simplex algorithm, by the above connection \cite{friedmann2011exponential,disser2023unified}.

More broadly, it turns out that deterministic two-player mean-payoff games are exactly equivalent to tropical linear programming, i.e., solving systems of ``linear'' inequalities over the tropical $(\min,+)$ semiring.
This was first explicitly shown by Akian, Gaubert, and Guterman~\cite{polyhedra_equiv_mean_payoff}, strengthening earlier connections between these problems that were made in the literature on tropical algebra (such as \cite{gaubert1998duality,dhingra2006how,katz2007max}) and in works on scheduling problems \cite{moehring2004scheduling}.\footnote{Stochastic two-player mean-payoff games, on the other hand, are equivalent to tropical semidefinite programming~\cite{issac2016jsc}.}

Furthermore, tropical linear programs can be reduced to linear programs over the non-Archimedean field of convergent generalized power series \cite{develin_yu,tropical_simplex}.\footnote{A linear program over such a field can be thought of as a parametric family of linear programs over $\R$. It follows from the above reduction that deterministic mean-payoff games can be encoded as linear programs with coefficients of exponential bit-length. Such an encoding was first derived by Schewe \cite{schewe2009parity}, without reference to non-Archimedean fields.} This characterization has been exploited to show that, if there exists a strongly-polynomial-time pivoting rule for the simplex algorithm, where the choice of basis element to pivot is semialgebraic in a certain technical sense (and this is the case for many pivoting rules), then the entire algorithm can be tropicalized, to get a polynomial-time algorithm for deterministic two-player mean-payoff games~\cite{combinatorial_mean_payoff}.

The analogy between policy iteration and the simplex method is also seen in practice. The aforementioned counter-examples show that policy-iteration algorithms run in exponential time in the worst-case. 
And yet, various benchmarks have shown that policy-iteration algorithms are very efficient at solving real-world instances, both for single-player \cite{georgiadis2009experimental,cohen1998numerical,kretinsky2017efficient}  and two-player games \cite{dhingra2006how,chaloupka2009parallel}. This difference between worst-case and real-life performance is also what happens with the simplex method. And in both cases it begs the question: \emph{why?}

\subsection{Smoothed analysis} 
In the case of the simplex method, the generally accepted explanation was proposed by Spielman and Teng \cite{spielman2004smoothed}. They have shown that, if one takes any linear programming instance $\max \{ c\cdot x \mid A \cdot x \ge b\}$ of dimension $n$, and perturbs each entry of $A,b$ and $c$ by a Gaussian with mean $0$ and standard deviation $\frac{1}{\phi}$, then the simplex method, with a particular choice of pivoting rule, will solve the resulting perturbed system in time $\poly(\phi \cdot n)$ \cite{spielman2004smoothed,dadush2018friendly}. It is then reasonable to expect the simplex method to work efficiently on real-world instances, since they incorporate real-world data which is prone to such perturbations. It was this result of Spielman and Teng that founded the area of \emph{smoothed analysis}, where one studies the efficiency of algorithms on such perturbed inputs.

The question then naturally follows: are policy-iteration algorithms efficient in the sense of smoothed analysis?

\medskip
Recent evidence seems to indicate that no, they are not. The first policy-iteration algorithm for single-player games (MDPs), by Howard \cite{howard1960dynamic}, determines for the current policy $\sigma:\mathsf{States} \to \mathsf{Actions}$, and for each state $i$ of the game, if a local improvement is possible: \emph{would a different choice of action at $i$ improve the value of the game when starting at $i$, if the game were to be played according to $\sigma$ at every other state?} The algorithm then changes the action $\sigma(i)$ at every state $i$ where such a local improvement is possible, to the best possible local improvement. This is sometimes called \emph{Howard's policy iteration}, or the \emph{greedy all-switches} rule.

In a paper published in STOC last year, Christ and Yannakakis \cite{christ2023smoothed} showed a remarkable lower-bound. They showed that $2^{\Omega(n^c)}$ iterations are necessary on a certain family of stochastic MDPs (single-player games), even when the payoffs are perturbed. In fact, the lower-bound holds not only probabilistically, where each payoff is independently perturbed by a Gaussian with standard deviation $\frac{1}{\poly(n)}$, but even adversarily, where each payoff is perturbed by any value within $\pm\frac{1}{\poly(n)}$.

It is surprising that such a bound can be proven at all. However, their result only holds for \emph{stochastic} games, and does not necessarily apply to deterministic games, where it has been previously conjectured that Howard's rule is efficient \cite{hansen2010lower}. Also, this result shows that a particular way of improving the policy, the greedy all-switches rule, does not give us an efficient algorithm (in the sense of smoothed analysis). This is analogous to saying that a particular pivoting rule in the simplex algorithm is not efficient, and does not exclude the possibility that other ways of improving the policy might work.

Exploiting any one of these caveats could in principle allow for a smoothed analysis for policy iteration. Our main result exploits both: we show that for \emph{deterministic} two-player games, a \emph{slightly different} policy-improvement method will be efficient, in the sense of smoothed analysis.

\begin{theorem}[our main theorem]\label{thm:main1}
  There exists a policy-iteration algorithm for solving $n$-state deterministic two-player (discounted or mean-payoff) games, which runs in time $\poly(\phi \cdot n)$ with high probability, on an input where normalized payoffs in $[-1,1]$ have been independently perturbed by a Gaussian with mean $0$ and standard deviation $\frac{1}{\phi}$.
\end{theorem}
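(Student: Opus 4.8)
The plan is: (i) reduce solving a deterministic two-player mean-payoff or discounted game to solving a \emph{discounted} game whose discount factor $\gamma$ is sufficiently close to $1$; (ii) introduce a condition number $\Cond$ measuring how close to $1$ this $\gamma$ must be, and show that the policy-iteration algorithm $\IncDiscPI$ --- discounted policy iteration run under a doubling schedule of $\gamma$ --- solves the game in $\poly(\Cond, n)$ time; (iii) prove that for payoffs in $[-1,1]$ perturbed by independent Gaussians of standard deviation $1/\phi$ one has $\Cond=\poly(\phi n)$ with high probability. I expect (iii) to be the crux.

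\textbf{From discounted to mean-payoff.} Writing $t=1-\gamma$, the optimal value of the $\gamma$-discounted game admits near $t=0$ an expansion $v_\gamma^{\opt}=\gameval/t+\bias+O(t)$, where $\gameval$ is the mean-payoff value and $\bias$ the bias vector; the $\gamma$-discounted-optimal policy is piecewise constant in $\gamma$ and, for $\gamma$ close enough to $1$, equals a fixed policy $\Minpol^{\ast}$, the Blackwell-optimal policy, which is in particular mean-payoff optimal and which --- by the Akian--Gaubert--Hochart analysis of bias-induced policies --- is, for generic (in particular, perturbed) payoffs, pinned down by $\gameval$ together with $\bias$. I would make this quantitative: there is an explicit threshold $\gamma^{\ast}<1$ such that for every $\gamma\in[\gamma^{\ast},1)$ the $\gamma$-discounted-optimal policy is mean-payoff optimal, and $\gameval$ and $\bias$ are recoverable from $v_\gamma^{\opt}$ by rounding, using the standard polynomial bound on the bit-size of the value.

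\textbf{The condition number and the algorithm.} Define $\Cond$ as (the reciprocal of) the smallest separation occurring when $\Minpol^{\ast}$ is compared against its competitors: comparisons of cycle means for the mean-payoff value, and comparisons of transient ``bias'' sums among value-optimal policies; a routine interpolation estimate then gives $\tfrac{1}{1-\gamma^{\ast}}\le\poly(\Cond,n)$. The algorithm $\IncDiscPI$ starts from $\gamma=\tfrac12$ and an arbitrary policy and repeats: solve the current $\gamma$-discounted game with discounted policy iteration ($\DiscPI$), warm-started from the previous policy; test whether the output is mean-payoff optimal (equivalently, whether increasing $\gamma$ further would leave it unchanged, or whether it satisfies the optimality inequalities of the mean-payoff dynamic-programming operator); if not, replace $1-\gamma$ by $(1-\gamma)/2$ and iterate; on success, output the policy together with $\gameval$ and $\bias$. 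Correctness follows from the previous step once $\tfrac1{1-\gamma}$ passes the threshold. For the running time, invoke the known bound that discounted policy iteration --- also in the two-player case --- performs $\poly(n,\tfrac1{1-\gamma})$ iterations from any starting policy; as $\tfrac1{1-\gamma}$ only doubles between stages, the last stage dominates and $\IncDiscPI$ runs in $\poly(\Cond,n)$ time.

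\textbf{The smoothed bound on $\Cond$ --- the main obstacle.} Each comparison entering $\Cond$ is a linear form in the payoffs with small integer coefficients, hence after perturbation a non-degenerate Gaussian, so $\Prob[\,|\text{comparison}|\le\eps\,]=O(\eps\phi)$; the difficulty is that there are exponentially many such comparisons --- one per pair of simple cycles, respectively per pair of value-optimal policies --- so a naive union bound yields only an exponential bound on $\Cond$. Following Frieze--Sorkin, I would avoid this by arguing along the algorithm's trajectory and using the principle of deferred decisions: show that, with high probability, only polynomially many comparisons are ever relevant to the run of $\IncDiscPI$, and that each iteration that fails to make macroscopic progress can be charged against an as-yet-unconditioned coordinate of the perturbation, so that summing these charges bounds the expected number of iterations, a tail version of the same estimate giving the high-probability statement. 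Together with the reductions above this proves \Cref{thm:main1}; the case of discounted games with discount factor bounded away from $1$ is the easy special case in which $\gamma^{\ast}$ is trivial and $\DiscPI$ alone suffices.
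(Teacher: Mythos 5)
Your algorithmic skeleton matches the paper: a condition number, a bound of the form $1/(1-\gamma^{*}) \le \poly(n,\Cond)$ on the Blackwell threshold, and an increasing-discount policy-iteration loop ($\IncDiscPI$) whose per-stage cost is controlled by the known $\poly(n,\frac{1}{1-\gamma})$ bounds for discounted policy iteration --- this is essentially \cref{conditioned_discount}, \cref{main_algo,main_algo_disc}. But the step you yourself flag as the crux --- the smoothed bound on $\Cond$ --- is where the proposal has a genuine gap, both in the definition of $\Cond$ and in the probabilistic argument. You define $\Cond$ via separations among \emph{all} value-optimal policies and all pairs of cycle means; the paper shows this is the wrong object in the two-player case: optimal policies are not stable under perturbation (\cref{ex:emerging_policies,ex:unstable_policies_good_approx,ex:unstable_policies_again}), and the value as a function of a single payoff can have $2^{\Omega(n)}$ breakpoints (\cref{ex:exponential}), so there really are exponentially many relevant comparisons and no per-comparison anti-concentration argument can be rescued by a union bound. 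Your proposed fix --- "deferred decisions along the trajectory of $\IncDiscPI$, charging stalled iterations to unconditioned coordinates" --- is a hope rather than an argument: the policies visited by $\DiscPI$ at each stage depend on every payoff, so by the time you identify which comparisons are "relevant" you have already conditioned on all coordinates, and nothing in the sketch explains how a stalled iteration exposes a fresh, still-random coordinate. The Frieze--Sorkin idea the paper actually imports is different: it is about all \emph{reduced costs at the optimum} being large, not about charging iterations to randomness.

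What closes the gap in the paper is a structural detour you do not have: restrict attention to \emph{bias-induced} policies, prove that the sets $\polyh^{\sigma,\tau}$ of payoffs with a unique pair of bias-induced policies are open polyhedral cones covering all of $\R^m$ except a finite union of hyperplanes (\cref{polyhedral_complex}), and extract from the explicit inequality description the one-dimensional "stability" property (\cref{two_player_stability}): for an edge $(i,j)$ not used by $(\sigma,\tau)$, membership in $\polyh^{\sigma,\tau}$ as a function of $\payoff_{ij}$ alone is a half-line. This yields, for each edge, a single threshold $Z_{ij}$ that depends only on $\payoff_{-ij}$ and equals the reduced cost $\gameval+\bias_i-\bias_j$ (\cref{threshold_two_players}); independence then gives $\Prob(\exists ij,\,\abs{\payoff_{ij}-Z_{ij}}\le\alpha)\le 2\alpha m\phi$ with a union bound over only $m$ events (\cref{prob_estimate_two_players}), and hence $\Cond\le\poly(n,m,\phi)$ with high probability (\cref{cond_estimate}), where $\Cond$ is defined through the unique bias rather than through arbitrary optimal policies. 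Without this (or some substitute of comparable strength), your bound on $\Cond$ --- and with it the whole smoothed analysis --- does not go through; the remaining quantitative claims you call routine (the threshold $\gamma^{*}$ in terms of $\Cond$, correctness of the stopping test) do correspond to real but manageable work in the paper (\cref{conditioned_discount} and the ergodic-equation termination check), and note also that the paper's analysis is carried out on ergodic graphs, a restriction your proposal never addresses.
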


It should be emphasized that the lower-bound of Christ and Yannakakis holds even for the single-player case, and our result could be contrasted with theirs, even if we only had proved it for the single-player case. However, our policy-iteration algorithm (our upper-bound) works even for two-player games, which are much harder. Our policy-improvement rule is similar to the greedy all-switches rule, except that the choice of switches is allowed to depend on an additional parameter (a discount factor) which evolves over time.

\subsection{Computational complexity} Our result should be contrasted with the case of the simplex algorithm for linear programming. Recall that we \emph{do} know polynomial-time algorithms for linear programming, it is only the simplex algorithm which fails to run efficiently in the worst case. However, it should be emphasized that we do \emph{not} know any polynomial-time algorithms for solving two-player discounted or mean-payoff games.

Indeed, solving one-player discounted and mean-payoff games reduces to linear programming, so we have polynomial-time algorithms. But the complexity of solving \emph{two-player} discounted and mean-payoff games is one of the great unsolved problems in computational complexity, first posed by Gurvich, Karzanov, and Khachiyan \cite{gurvich}. 
For any of the two-player variants, the respective decision problem (what is the $i$-th bit of the value) is in $\mathsf{UP} \cap \mathsf{coUP}$ \cite{jurdzinski_parity_up}, and the search problem (find an optimal strategy) is in the computational complexity class $\mathsf{UEOPL}$ (\emph{Unique End-of-Potential-Line}) \cite{hubacek2020hardness,fearnley2020unique}. In fact it is in the promise version of this class, which we denote $\mathsf{pUEOPL}$. Even within the class $\mathsf{pUEOPL}$, the problem of solving two-player discounted and mean-payoff games seems to be only a special, simple case: the sought optimal policies can be obtained from the unique fixed point of a simple monotone operator. This places the search problem in the class $\mathsf{Tarski}$. The two complexity classes $\mathsf{pUEOPL}$ and $\mathsf{Tarski}$ sit at the bottom of a large hierarchy of complexity classes \cite{fearnley2022complexity,goeoes2022further}. This hierarchy stratifies the broad class $\mathsf{TFNP}$ of \emph{$\mathsf{NP}$ search problems} \cite{johnson1988how,megiddo1991total,papadimitriou1994complexity}. See \Cref{fig:complexity-classes}.

    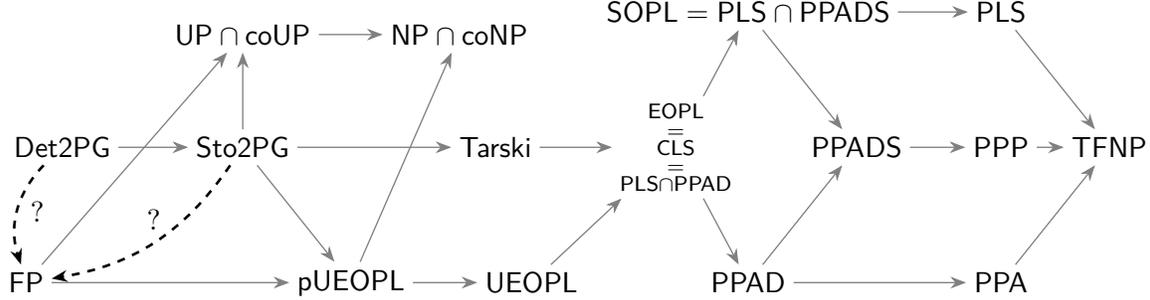
\begin{figure}
      \newmathsfs{FP,UEOPL,pUEOPL,EOPL,SOPL,CLS,PLS,PPAD,PPADS,PPP,PPA,TFNP,DetG:Det2PG,StG:Sto2PG,NP,coNP,UP,coUP,Tarski}
      \centering
      \begin{tikzpicture}[scale=0.6]
    \tikzset{inner sep=0,outer sep=3}


    \begin{scope}[xscale=1.6]
      \node (FP) at (-7.5,-3) {$\FP$};
      \node (DetG) at (-7,0) {$\DetG$};
      \node (StG) at (-4.5,0) {$\StG$};
      \node (pUEOPL) at (-3,-3) {$\pUEOPL$};
      \node (UPcoUP) at (-4.5,2.5) {$\UP\cap\coUP$};
      \node (NPcoNP) at (-1.5,2.5) {$\NP\cap\coNP$};
      \node (Tarski) at (-1,0) {$\Tarski$};
      \node (UEOPL) at (-.5,-3) {$\UEOPL$};
      \node (PLS-PPAD) at (1.5,0) {$\substack{\EOPL\\=\\\CLS\\=\\\PLS\cap\PPAD}$};
      \node (PLS-PPADS) at (2.5,3) {$\SOPL=\PLS\cap\PPADS$};
      \node (PPAD) at (2.5,-3) {$\PPAD$};
      \node (PPADS) at (4,0) {$\PPADS$};
      \node (PLS) at (6,3) {$\PLS$};
      \node (PPP) at (6,0) {$\PPP$};
      \node (PPA) at (6,-3) {$\PPA$};
      \node (TFNP) at (7.5,0) {$\TFNP$};
    \end{scope}

    \path[-{Stealth[length=6pt]},line width=.5pt,gray]
    (FP) edge (pUEOPL)
    (FP) edge (UPcoUP)
    (DetG) edge (StG)
    (StG) edge (pUEOPL)
    (StG) edge (UPcoUP)
    (StG) edge (Tarski)
    (Tarski) edge (PLS-PPAD)
    (UPcoUP) edge (NPcoNP)
    (pUEOPL) edge (NPcoNP)
    (pUEOPL) edge (UEOPL)
    (UEOPL) edge (PLS-PPAD)
    (PLS-PPAD) edge (PLS-PPADS)
    (PLS-PPAD) edge (PPAD)
    (PLS-PPADS) edge (PPADS)
    (PLS-PPADS) edge (PLS)
    (PPAD) edge (PPADS)
    (PPAD) edge (PPA)
    (PPADS) edge (PPP)
    (PLS) edge (TFNP)
    (PPP) edge (TFNP)
    (PPA) edge (TFNP);

    \hypersetup{hidelinks}
    \tikzset{new/.style={-{Stealth[length=6pt]},line width=1pt}}
    
\draw[new]
(DetG) edge[dashed,bend right=31]
node[midway,right,inner sep=2pt] {?}
(FP);
\draw[new]
(StG) edge[dashed,bend left=23]
node[midway,above,inner sep=2pt] {?}
(FP);

  \end{tikzpicture}
  
  \caption{A hierarchy of $\NP$ search problems. $\DetG$ and $\StG$ refer to deterministic, respectively stochastic, two-player games. Arrows denote inclusion or containment. By inclusion of a search problem in the classes $\UP \cap \coUP$ and $\NP \cap \coUP$ of decision problems, we mean that the problem of deciding each bit of the unique answer can be computed in these classes. }
  \label{fig:complexity-classes}
\end{figure}


In this sense, the problem of solving two-player discounted and mean-payoff games is the simplest known problem in $\mathsf{NP}$, which is not yet known to be solvable in polynomial time (or even in time $2^{n^{o(1)}}$). For any problem which is not known to be in $\mathsf{P}$, one may ask the question: \emph{is the problem still hard on a random instance?} Many hard problems have been conjectured to have this property, of being hard to solve even for a random instance. This is the case for $\mathsf{SAT}$ \cite{selman1996generating,feige2002relations} and subset sum \cite{rudich1997super}, but also for other, not necessarily $\mathsf{NP}$-hard, problems in $\mathsf{NP}$, such as lattice problems \cite{ajtai1999generating}. There is a broad belief that natural problems which are hard, remain hard on random inputs.\footnote{The distributions which are considered hard must be chosen carefully to avoid trivial cases, e.g. CNF formulas with too many or too few clauses, but they are often simple and natural distributions.} In contrast, our main theorem generalizes to show that solving deterministic two-player games is easy, for any sufficiently random input distribution.

\begin{theorem}[generalization]
  Consider distributions on $n$-state deterministic two-player discounted or mean-payoff games, where each payoff is chosen independently according to a (not necessarily identical) distribution with mean in $[-1,1]$ and standard deviation $\le \frac{1}{\phi}$ and with probability density functions satisfying $f(y) \le \phi$ for all $y \in \mathbb R$. (This includes, for example, payoffs perturbed by a Gaussian, or payoffs sampled from a uniform interval of length $\ge \frac{1}{\phi}$.)
  
  There exists a policy-iteration algorithm which runs in time $\poly(n, \phi)$, with high probability, on games sampled according to any such distribution.
\end{theorem}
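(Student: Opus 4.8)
The plan is to establish both theorems at once: the general statement follows from the proof of \Cref{thm:main1} because the only probabilistic facts used are the bounded-density hypothesis $f\le\phi$ (for anti-concentration) and the bounded-moment hypothesis (to keep all payoffs, and all quantities derived from them, of polynomial magnitude), never Gaussianity. The argument has three parts: a reduction of the mean-payoff case to the discounted case controlled by a \emph{condition number}; a policy-iteration algorithm whose running time is polynomial in that condition number; and a probabilistic argument bounding the condition number by $\poly(n,\phi)$ with high probability.

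\emph{Reduction and condition number.} A deterministic two-player mean-payoff game $G$ is solved by solving the associated discounted game for any discount factor exceeding a critical value $\gamma^\star(G)<1$: for $\gamma\in(\gamma^\star(G),1)$ the discounted-optimal policy is the \emph{bias-induced} mean-payoff-optimal policy of Akian--Gaubert--Hochart \cite{generic_uniqueness}, which for generic $G$ is unique and is characterized lexicographically (maximize cycle means, then biases, then the higher-order terms of the asymptotic expansion of the discounted value as $\gamma\to1$). I would define the condition number $\cond(G)$ to bundle together (a) the quantity $1/(1-\gamma^\star(G))$, and (b) a lower bound on the \emph{margin} of every policy-improvement comparison: over all policies related by a single switch and all discount factors in the relevant range, how far the value function of a policy is, at the switched state, from the value it would attain after the switch. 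Both quantities are controlled by differences of ratios of weighted path- and cycle-sums, via Cramer's rule applied to the linear systems $(I-\gamma P_\sigma)u = r_\sigma$.

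\emph{Algorithm and its analysis.} The algorithm, which we may call \IncDiscPI{}, maintains a policy $\sigma_t$ and a discount factor $\gamma_t$; it performs Howard-style greedy all-switches improvement with respect to $\gamma_t$, and when no improving switch remains at $\gamma_t$ it increases the discount --- say $1-\gamma_{t+1}=\tfrac12(1-\gamma_t)$, or on an adaptive schedule read off from the observed margins --- warm-starting from $\sigma_t$. It halts once $\gamma_t>\gamma^\star(G)$ and the current policy is discounted-optimal; correctness is then exactly the characterization of bias-induced policies in \cite{generic_uniqueness}. For the running time, for any fixed $\gamma$ bounded away from $1$ discounted policy iteration converges in $\poly\!\big(n/(1-\gamma)\big)$ iterations \cite{ye2011simplex,hansen2013strategy}, each costing $\poly(n)$ arithmetic operations on numbers whose bit-length stays polynomial; the number of discount-increase phases before passing $\gamma^\star(G)$ is $O(\log\cond(G))$; and within every phase $1-\gamma_t\ge 1/\cond(G)$. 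Summing over phases gives total running time $\poly\big(n,\cond(G),L\big)$, where $L$ is the bit-length of the input. (Alternatively one can run the whole iteration symbolically over the non-Archimedean field of Puiseux series in $1-\gamma$, replacing the $1/(1-\gamma)$ blow-up by the margin directly; the crude bound above already suffices.)

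\emph{Bounding the condition number --- the main obstacle.} It remains to show $\cond(G)\le\poly(n,\phi)$ with high probability. The tool is anti-concentration: a fixed nonzero linear form in the payoffs with rational coefficients of polynomially bounded height has density $O(\phi\cdot\poly(n))$, hence falls in any interval of length $\delta$ with probability $O(\delta\,\phi\,\poly(n))$; and the moment bounds keep all payoffs and all derived path- and cycle-sums of magnitude $\poly(n,\phi)$ except with probability $o(1)$. The difficulty --- precisely as in the smoothed analysis of the simplex method \cite{spielman2004smoothed} --- is that there are exponentially many policies and cycles, so a naive union bound over all improvement comparisons with $\delta=1/\poly$ fails. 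I would circumvent this by not union-bounding over policies at all, but working locally and incrementally in the spirit of Frieze--Sorkin's analysis of random assignment instances \cite{frieze2007probabilistic}: for each state $i$ and each discount phase, expose only the $\poly(n)$ payoffs incident to the paths and cycles relevant at $i$, and apply anti-concentration to each of the $\poly(n)$ best-versus-second-best comparisons available at $i$ once the remaining randomness is frozen, concluding that the margin at $i$ is $\ge 1/\poly(n,\phi)$ except with probability $1/\poly(n)$; a similar argument handles the cycle-mean and bias differences that pin down $\gamma^\star(G)$. Summing the failure probabilities over the $n$ states and the $O(\log\cond(G))$ phases gives the high-probability bound. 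The two genuinely delicate points I expect to be the real work are (i) isolating the right asymptotic (lexicographic) notion of margin --- one that simultaneously governs $\gamma^\star(G)$ and is a genuine linear form in the payoffs amenable to anti-concentration, which is where the operator/bias-induced-policy machinery of \cite{generic_uniqueness} enters --- and (ii) arranging the conditioning so that at every anti-concentration step the linear form is nondegenerate and its coefficients are independent of the randomness not yet exposed.
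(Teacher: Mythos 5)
Your overall architecture does coincide with the paper's: a condition number, an increasing-discount greedy all-switches policy iteration with $O(\log \Cond)$ phases and $\poly\bigl(n/(1-\disc)\bigr)$ switches per phase, correctness certified through bias-induced policies, and a leave-one-out conditioning argument in place of a union bound over policies. However, the two points you defer as ``the real work'' are precisely the technical core of the result, and your concrete plan for them does not go through as stated. Your local-exposure scheme --- ``expose only the $\poly(n)$ payoffs incident to the paths and cycles relevant at $i$'' and apply anti-concentration to best-versus-second-best comparisons at each state --- is circular: which paths and cycles are relevant at $i$ is itself a function of \emph{all} the payoffs, so the comparisons you want to treat as fixed linear forms are not measurable with respect to the randomness you propose to freeze. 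The paper's conditioning is per \emph{edge}, not per state: freeze $\payoff_{-ij}$, let the single coordinate $\payoff_{ij}$ vary, and define a threshold $Z_{ij}$ past which the bias-induced pair stops using $(i,j)$; anti-concentration then needs only a union bound over the $m$ edges (\cref{prob_estimate_two_players}). What makes $Z_{ij}$ a genuine single threshold is a stability property (\cref{two_player_stability}) that provably \emph{fails} for optimal policies and equilibrium cycles in two-player games --- the optimal pair can flip back and forth exponentially many times as one payoff varies (\cref{ex:exponential,ex:unstable_policies_again}) --- and is restored only through the polyhedral decomposition of payoff space into the open cones $\polyh^{\sigma,\tau}$ of unique bias-induced pairs (\cref{polyhedral_complex}, strengthening \cite[Theorem~3.2]{generic_uniqueness}), after which $Z_{ij}$ is identified with the reduced cost $\gameval + \bias_i - \bias_j$ (\cref{threshold_two_players}). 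None of this is supplied by your sketch; it is exactly your item (i).

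Second, since your condition number bundles $1/(1-\gamma^{\star})$ by fiat, the probabilistic bound $\Cond\le\poly(n,\phi)$ requires showing that the Blackwell threshold is itself controlled by the same reduced-cost margins to which anti-concentration applies; ``a similar argument handles the cycle-mean and bias differences that pin down $\gamma^{\star}$'' is an assertion, not an argument. In the paper this is \cref{conditioned_discount}, $\gamma^{\star}\le 1-\frac{1}{6n^{2}\Cond(\payoff)}$ with $\Cond$ defined purely from the ergodic equation, proved by a case analysis on cycles and paths of $\dgraph^{\sigma,\tau'}$ using the bias together with the Zwick--Paterson estimate; the Cramer's-rule route you invoke yields only bit-length/determinant bounds, which are worst-case exponential in $n$ and insensitive to the randomness, hence useless for a $\poly(n,\phi)$ bound. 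Two smaller gaps: your halting test ``$\gamma_t>\gamma^{\star}(G)$'' is not effective --- the paper instead checks that the candidate $(\gameval,\bias)$ solves the ergodic equation of the two-player game (and, for a target discount $\bar{\disc}$, performs a one-step switch test at $\bar{\disc}$); and the entire bias/ergodic-equation machinery requires the ergodicity assumption (constant value), which you never impose --- the paper restricts to ergodic graphs and explicitly leaves the non-ergodic case open.
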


\subsection{A previous approach and our approach} 

A first, naive attempt at proving \Cref{thm:main1} could proceed along the same lines as the Mulmuley, Vazirani and Vazirani's isolation lemma \cite{mulmuley1987matching}. We think of what happens to the game when all of the payoffs of all the actions are fixed, except for one, which is sampled independently according to some distribution as above.
  Let us suppose that the free payoff is for an action of some state $i$. As it turns out, when every other payoff is fixed, the value of the game at state $i$ is a piecewise linear function of the free payoff, and one can then hope that it has few break points.
If this were indeed the case, that there were only $\poly(n)$
break points, one could then argue similarly to the MVV isolation lemma, to show that approximating the random payoffs using $O(\log(n))$ bits of precision is enough to isolate the linear piece (in the piecewise linear function).
From this it would follow that optimal policies for the truncated payoffs are also optimal for the untruncated payoffs.
One could then invoke a pseudo-polynomial-time value-iteration algorithm \cite{zwick_paterson} on the truncated payoffs, and this would run in polynomial time.

The conference version of the paper of Boros, Elbassioni, Fouz, Gurvich, Makino and Manthey \cite{BorosElbassioniFouzGurvich:2011} outlines a similar proof strategy. Among other results, a result similar to our \Cref{thm:main1} was claimed. Their proof works for the one-player case, and the authors claimed, without a careful proof, that the two-player case also follows. This claim was later retracted in the journal version \cite{boros2018approximation}.\footnote{In their paper, the breakpoints are chosen so that between any two breakpoints the value of the optimal strategy and the value of the second-best strategy are sufficiently far apart. This works for the single player-case, but the argument we just presented, where we only keep track of the number of break points in the value function, is simpler and also works.} Indeed, it turns out that the two-player case is significantly more subtle.

In the one-player case it can be shown that for every action there exist at most $n$
break points, and hence an isolation lemma can be proven. One can then conjecture, for the two-player case, a $\poly(n)$ upper-bound on the number of break points. As it turns out, this conjecture is wrong. 
An exponential example (\cref{ex:exponential}) can be created using the construction of \cite{bezem2008exponential} that was also used in \cite{log_barrier} to prove that interior point methods for linear programming are not strongly polynomial. This construction gives us a deterministic two-player game with $n$ states such that, leaving the payoffs of all but one of actions fixed, as the free payoff varies between $-1$ and $1$, the value of the game is a piecewise-linear function with $2^{\Omega(n)}$ break points.

\medskip
So what do we do instead? One natural thing to try is to show that the number of break points \emph{is} $\poly(n)$, with very high probability, for randomly-chosen payoffs. This could well be true, and an argument in the style of the MVV isolation lemma would then follow. But we were unable to show it.

Instead, our results depend on a deeper analysis of deterministic two-player games. We also prove an isolation lemma, but using an approach different to MVV. Instead of attempting to isolate an optimal policy among all possible policies, we show that sufficiently random payoffs, with high probability, isolate a Blackwell-optimal policy. Blackwell-optimal policies are policies that arise in discounted games with discount factor $\gamma$ close to $1$. Blackwell-optimal policies are part of a family of policies which are induced by an object called a \emph{bias}.
Not all optimal policies are Blackwell-optimal, or even bias-induced. Nonetheless, every two-player discounted game has a Blackwell-optimal policy \cite{liggett_lippman}. Furthermore, there exist policy-iteration algorithms for finding a Blackwell-optimal policy \cite{kallenberg2002handbook,hordijk2002handbook} (that are inefficient in the worst case). 

We are then able to show (\Cref{threshold_two_players}) that, if the payoffs are sufficiently random, then with high probability it will happen that there is a unique bias-induced policy, which must then be the Blackwell-optimal policy. Furthermore, from the proof of this theorem we devise a \emph{condition number} $\Cond(r)$, associating a number in $[0, +\infty]$ to any given choice $r$ of payoffs (\Cref{def:condition-number}). The uniqueness proof generalizes to show that a deterministic two-player game with sufficiently random payoffs will have a small condition number ($\Delta(r) \le \poly(n)$) with high probability (\Cref{cond_estimate}).

This is a condition number in the same sense as the known condition numbers that govern the complexity of algorithms for solving linear equations, semidefinite feasibility, \emph{etc}, and broadly measure the \emph{inverse-distance to ill-posedness} (see \cite{buergisser2013condition}). Although, strictly speaking, our condition number does not measure inverse-distance to some set, it does have the property that $\Delta(r)$ is finite if and only if the game with payoffs $r$ has a unique bias-induced policy, and that every payoff $\tilde r \in B_\infty(r,\delta)$, within a ball of size $\delta \le \frac{1}{\poly(\Delta(r))}$ around $r$, will also have a unique bias-induced policy (\Cref{cond_estimate_approx}). So, at least intuitively, we can think of $\Delta(r)$ as an inverse-distance between $r$ and the set of games with multiple bias-induced policies. 

Finally, it can be shown that, taking a discount rate $\gamma \ge 1 - \frac{1}{\poly(n, \Cond)}$  (i.e., sufficiently close to $1$), the only optimal policy is the Blackwell-optimal policy (\Cref{conditioned_discount}). We can then use the results of \cite{ye2011simplex,hansen2013strategy} to obtain a policy iteration algorithm for finding the Blackwell optimal policy in time  $\poly(n,\Cond)$ (\Cref{main_algo,main_algo_disc}). This algorithm runs in polynomial smoothed time, because, as mentioned above, sufficiently random payoffs have small condition number. This includes any fixed choice of payoffs that has been randomly perturbed by a Gaussian.

\subsection{Related work}

There is not a lot of work on the complexity of discounted and mean-payoff games with random payoffs. Besides the papers of Boros et al. \cite{BorosElbassioniFouzGurvich:2011,boros2018approximation} and Christ and Yannakakis \cite{christ2023smoothed}, which we mentioned above, we only know of a paper by Mathieu and Wilson \cite{MathieuWilson:2013}. They do not provide an algorithm, but they analyze the distribution of the value of a deterministic single-player mean-payoff game (deterministic MDP) played on a complete graph with i.i.d.~exponentially distributed payoffs. Our algorithm will also work on such a payoff distribution.

A paper of Allamigeon, Benchimol, and Gaubert \cite{tropical_shadow_vertex} analyzes efficiency in a different random model. The shadow vertex rule is known to be efficient on average under any distribution over linear feasibility problems, which is symmetric up to changing of the sign in each linear inequality \cite{adler1987simplex}. Allamigeon et al tropicalize this result, to show that a certain tropical analogue of the shadow vertex rule will solve deterministic two-player mean-payoff games in a bipartite graph in expected polynomial time, if the distribution over the payoff matrix is invariant under transposition (this is the tropical analogue of the above symmetry). In particular, if the payoffs of some given fixed input obey the same symmetry, their algorithm runs in polynomial time. Of course, in general, perturbed payoffs need not be symmetric in this way.  

\medskip

It was a paper of Frieze and Sorkin \cite{frieze2007probabilistic} that gave us the first idea of how to approach the problem. Frieze and Sorkin analyse the gap between optimal and second-optimal assignment in the assignment problem, using a bound on the reduced costs of the associated linear program at the optimum solution \cite[Theorem 3]{frieze2007probabilistic}. In the simplex algorithm, the reduced cost works as a gradient, telling us the improvement in the objective function obtained by changing the current basic solution in a given direction. Frieze and Sorkin show that, at an optimum basic solution of a random assignment problem, every reduced cost is large, which implies that there is a large difference between the optimum and second-best solution. This large difference implies that the optimum solution is stable under perturbations. In the case of deterministic two-player games, biases will play the role of dual variables, which allows us define an analogous notion of reduced costs at the optimal solution. We then show, analogously, that, with high probability for a random instance, every reduced cost is large at the optimum policy, which also implies stability. Our condition number is the (normalized) inverse of the smallest reduced cost.

Our analysis of discounted and mean-payoff games is based on the operator approach to study these games. Using this approach, Akian, Gaubert, and Hochart \cite{generic_uniqueness} have previously shown that a generic two-player mean-payoff game has a unique bias (which must then equal the Blackwell bias). More precisely, they show that for any stochastic or deterministic two-player mean-payoff game, the set of payoffs where the bias is not unique has measure zero. We give a more precise version of this result, for deterministic two-player mean-payoff games, by showing that the policies induced by the unique bias are also unique. This further allows us the measure ``how far from having multiple bias-induced policies'' is a given choice of payoffs.

%

The operator approach was also used by Allamigeon, Gaubert, Katz and Skomra \cite{AllamigeonGaubertKatzSkomra:2022}, who define a condition number for the value iteration algorithm. Computer experiments indicate that value iteration  converges quickly for random games \cite{issac2016jsc}, which strongly suggests that the condition number of \cite{AllamigeonGaubertKatzSkomra:2022} is small for random games, but there is currently no formal proof of this claim. Even though our condition number and the one from \cite{AllamigeonGaubertKatzSkomra:2022} are based on the bias vector, it is not clear how these two conditions numbers compare to each other. In particular, we do not know if the value iteration algorithm has polynomial smoothed complexity and we leave this problem as an open question.

\subsection{Outline of the paper}

The paper is organized as follows.  In \cref{sec:one_player}, we analyze the single-player case. We give two simple proofs that a random choice of payoffs, with high probability, has a unique optimal solution which is significantly better than the second-best solution. In \cref{sec:det_mpg} we introduce two-player mean-payoff games, and give several examples explaining why the two-player case is much more subtle. In \cref{sec:ergodic}, we study discounted games, biases, and bias-induced policies for mean-payoff games on ergodic graphs. This culminates in \cref{polyhedral_complex}, showing that the set of games with non-unique bias-induced policies has Lebesgue measure zero.

Based on this result, in \cref{sec:of-threshold_two_players} we define a condition number $\Cond(r)$ and show that random games are robustly well conditioned, meaning, with high probability, a deterministic two-player mean-payoff game with random payoffs $r$ has small condition number $\Cond(r)$, and this holds for any other set of payoffs $\tilde r$ sufficiently close to $r$.

Finally, in \cref{sec:algorithms}, we present an algorithm for two-player games that runs in time $\poly(n, \Cond(r))$.

\subsection{Technical summary}
For the sake of simplicity, we restrict our attention to deterministic mean-payoff games played on an \emph{ergodic} weighted directed graph $\dgraph = ([n], \edges, \payoff)$, where $\card{\edges} = m$ and $[n]$ is split intro vertices controlled by Max and Min, $[n] = \Maxvertices \dunion \Minvertices$. The ergodicity is taken in the sense of \cite{gurvich_lebedev,ergodicity_conditions,hochartdominion}.\footnote{This is a notion similar to strong connectivity, but for two-player games. Intuitively, a graph is ergodic if no player can play in such a way as to force the game to get stuck on a sub-graph.} A typical example of such a graph is a complete bipartite graph, in which the bipartition is formed by $\Maxvertices,\Minvertices$. Ergodic graphs are representative for the difficulty of mean-payoff games, because solving games on general graphs reduces to solving games on complete bipartite graphs~\cite{ChatterjeeHenzingerKrinningerNanongkai:2014}.

The weights $\payoff_{ij}$ of the edges in our model are chosen randomly: we suppose that $(\payoff_{ij})$ are independent absolutely continuous variables with densities $\dens_{ij}$. We further suppose that weights are normalized --- $\E(\payoff_{ij}) \in [-1,1]$ --- and that there exists a number $\phi > 0$ such that $\dens_{ij}(y) \le \phi$ for all $i,j,y$ and $\Var(\payoff_{ij}) \le 1/\phi^2$. As an example, if the weights $\payoff$ are taken by perturbing some fixed weights $\bar{\payoff}_{ij} \in [-1,1]$ by Gaussian noise, so that $\payoff_{ij} \sim \Normal(\bar{\payoff}_{ij}, \rho^2)$, then we can take $\phi \coloneqq 1/\rho$.

Under the ergodicity assumption, it is known~\cite{hochartdominion} that the following \emph{ergodic equation} has a solution $(\gameval,\bias) \in \R^{n+1}$ for all choices of weights:
\begin{equation*}
\begin{cases}
\forall i \in \Maxvertices, \ \gameval + \bias_i = \max_{(i,j) \in \edges}\{\payoff_{ij} + \bias_j\} \, , \\
\forall i \in \Minvertices, \ \gameval + \bias_i = \min_{(i,j) \in \edges}\{\payoff_{ij} + \bias_j\} \, .
\end{cases}
\end{equation*}

Furthermore, the number $\gameval$ is unique and it is the value of the game (which does not depend on the initial state because of ergodicity). The vector $\bias \in \R^n$, called a \emph{bias}, is never unique because we can always add a constant to all its coordinates. In general, the bias is not unique even up to adding a constant. We say that a pair of policies $\sigma \colon \Maxvertices \to \vertices$ (of Max) and $\tau \colon \Minvertices \to \vertices$ (of Min) is \emph{bias-induced} if there is a bias $\bias$ such that the edges used by $\sigma,\tau$ achieve the maxima and minima in the ergodic equation. Bias-induced policies are optimal, but not every optimal policy is bias-induced. To study the behavior of random games, we introduce the sets 
\begin{align*}
\polyh^{\sigma,\tau} \coloneqq \{\payoff \in \R^m \colon &\text{$(\sigma,\tau)$ is the only pair of bias-induced policies}\\
&\text{in the MPG with weights $\payoff$}\} ,
\end{align*}
for any pair $(\sigma,\tau)$ such that the resulting graph $\dgraph^{\sigma,\tau}$ has a single directed cycle. We denote by $\Xi$ the set of all such pairs of policies. We also put $\Generic \coloneqq \cup \polyh^{\sigma,\tau}$. Using the techniques from \cite{generic_uniqueness} we are then able to show the following proposition. This proposition strengthens \cite[Theorem~3.2]{generic_uniqueness} for deterministic games by showing that each maximum and minimum in the ergodic equation is generically achieved by a single edge.

\begin{proposition}[{cf.~\cite[Theorem~3.2]{generic_uniqueness}}]\label{intro_polyhedral_complex}
The sets $\polyh^{\sigma,\tau}$ are open polyhedral cones. Moreover, these cones are disjoint and $\R^m \setminus \Generic$ is included in a finite union of hyperplanes. In particular, this set has Lebesgue measure zero. Furthermore, if $\payoff \in \Generic$, then the ergodic equation has a single solution (up to adding a constant to the bias), and each maximum and minimum in the ergodic equation is achieved by a single edge.
\end{proposition}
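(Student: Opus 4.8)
The plan is to identify each cone $\polyh^{\sigma,\tau}$ with an explicit system of \emph{strict} optimality inequalities and read all the assertions off that description. Fix $(\sigma,\tau)\in\Xi$, let $\eta\colon[n]\to[n]$ be the map equal to $\sigma$ on $\Maxvertices$ and to $\tau$ on $\Minvertices$, and let $\cycle$ be the unique cycle of the functional graph $\dgraph^{\sigma,\tau}$. First I would record the linear algebra of single-cycle policies: for any weights $\payoff$ the \emph{active equations} $\gameval+\bias_i=\payoff_{i\eta(i)}+\bias_{\eta(i)}$ ($i\in[n]$) have a solution, unique up to adding a constant to $\bias$, in which $\gameval$ is necessarily the mean weight $\frac{1}{|\cycle|}\sum_{e\in\cycle}\payoff_{e}$ of the cycle; indeed, summing the active equations around $\cycle$ fixes $\gameval$, that value makes them consistent on $\cycle$, and off $\cycle$ one solves for $\bias$ by back-substitution along $\eta$, which works precisely because every vertex reaches $\cycle$ under $\eta$ (the single-cycle hypothesis). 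So $\gameval$ and $\bias$ become homogeneous linear functions of $\payoff$ once $\bias$ is normalized at one vertex; write them $\gameval^{\sigma,\tau}(\payoff),\bias^{\sigma,\tau}(\payoff)$. Let $\mathcal{Q}^{\sigma,\tau}$ be the set of $\payoff$ for which, with $\bias=\bias^{\sigma,\tau}(\payoff)$, every maximum in the ergodic equation at a vertex of $\Maxvertices$ is attained strictly only by the edge prescribed by $\sigma$, and every minimum at a vertex of $\Minvertices$ strictly only by the edge of $\tau$. Since $\bias^{\sigma,\tau}$ is homogeneous linear, $\mathcal{Q}^{\sigma,\tau}$ is a finite intersection of sets $\{\payoff:\ell(\payoff)>0\}$ with $\ell$ homogeneous linear, hence an open polyhedral cone (possibly empty). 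Everything reduces to the claim $\polyh^{\sigma,\tau}=\mathcal{Q}^{\sigma,\tau}$.

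The substantive inclusion is $\mathcal{Q}^{\sigma,\tau}\subseteq\polyh^{\sigma,\tau}$. Given $\payoff\in\mathcal{Q}^{\sigma,\tau}$ and $\bias:=\bias^{\sigma,\tau}(\payoff)$, the active equations plus strictness say exactly that $(\gameval^{\sigma,\tau}(\payoff),\bias)$ solves the ergodic equation with each extremum attained by the single edge of $\eta$, so $\bias$ is a bias and $(\sigma,\tau)$ is bias-induced; the point is to show \emph{the bias is unique}. If $\bias'$ is another bias, then (by uniqueness of $\gameval$) it has the same $\gameval$, and I would examine $d:=\bias'-\bias$ and its two extreme level sets $Z_-=\{i:d_i=\min_k d_k\}$ and $Z_+=\{i:d_i=\max_k d_k\}$. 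The key lemma — this is the ``technique from \cite{generic_uniqueness}'' — is that \emph{both} $Z_-$ and $Z_+$ are invariant under $\eta$: after subtracting a constant from $\bias'$ one has, at a vertex $i$ of $Z_-$, the chain $\gameval+\bias_i=\gameval+\bias'_i=\max_j\{\payoff_{ij}+\bias'_j\}\ge\payoff_{i\eta(i)}+\bias'_{\eta(i)}\ge\payoff_{i\eta(i)}+\bias_{\eta(i)}=\gameval+\bias_i$ at Max-vertices (forcing $\eta(i)\in Z_-$), and a symmetric argument at Min-vertices using that $\eta(i)$ is the \emph{unique} minimizer for $\bias$; the treatment of $Z_+$ is dual and uses strict maximality of $\eta(i)$ for $\bias$ at Max-vertices — all of which is exactly what $\payoff\in\mathcal{Q}^{\sigma,\tau}$ supplies. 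A nonempty $\eta$-invariant set contains a cycle of $\dgraph^{\sigma,\tau}$, and $\cycle$ is the only one, so $\cycle\subseteq Z_-\cap Z_+$, forcing $\min_k d_k=\max_k d_k$, i.e.\ $d$ constant. Hence $\bias$ is the unique bias, it strictly induces $(\sigma,\tau)$, and so $(\sigma,\tau)$ is the unique bias-induced pair: $\payoff\in\polyh^{\sigma,\tau}$. I expect this uniqueness-of-bias step to be the real obstacle, since a two-player mean-payoff game generally has infinitely many biases; what makes it go through is precisely the interaction of strictness (which keeps the argmax/argmin reductions at \emph{both} players' vertices unambiguous, so the level sets propagate under $\eta$) with the single-cycle hypothesis (which places the \emph{same} cycle inside $Z_-$ and $Z_+$).

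For $\polyh^{\sigma,\tau}\subseteq\mathcal{Q}^{\sigma,\tau}$: if $\payoff\in\polyh^{\sigma,\tau}$, its unique bias-induced pair is $(\sigma,\tau)\in\Xi$, whose inducing bias satisfies the active equations, hence equals $\bias^{\sigma,\tau}(\payoff)$ up to a constant; if some optimality inequality were an equality, redirecting that one vertex to the tied edge would give a different pair still bias-induced by the same bias, contradicting uniqueness, so all inequalities are strict. Thus $\polyh^{\sigma,\tau}=\mathcal{Q}^{\sigma,\tau}$ are open polyhedral cones; they are pairwise disjoint because ``the only pair of bias-induced policies'' cannot be two distinct pairs; and for $\payoff\in\Generic$ the defining inequalities of $\mathcal{Q}^{\sigma,\tau}$ say each maximum and minimum is attained by a single edge, while the second paragraph gives uniqueness of the ergodic-equation solution.

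Finally, to bound $\R^m\setminus\Generic$ I would show that any $\payoff$ outside a fixed finite union of hyperplanes lies in $\Generic$. Pick any bias $\bias$ of the game with weights $\payoff$ and any induced pair $(\sigma,\tau)$. If $\dgraph^{\sigma,\tau}$ has two distinct cycles $\cycle_1,\cycle_2$, summing the active equations around each gives $\mathrm{mean}(\cycle_1)=\gameval=\mathrm{mean}(\cycle_2)$, so $\payoff$ lies on the hyperplane $\{\mathrm{mean}(\cycle_1)=\mathrm{mean}(\cycle_2)\}$. Otherwise $(\sigma,\tau)\in\Xi$ and $\bias=\bias^{\sigma,\tau}(\payoff)$, and either all optimality inequalities are strict — so $\payoff\in\mathcal{Q}^{\sigma,\tau}=\polyh^{\sigma,\tau}\subseteq\Generic$ — or one is an equality $\payoff_{i\eta(i)}+\bias^{\sigma,\tau}_{\eta(i)}(\payoff)=\payoff_{ij}+\bias^{\sigma,\tau}_{j}(\payoff)$, which, $\bias^{\sigma,\tau}$ being linear, is a hyperplane. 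Since there are only finitely many policy pairs, pairs of cycles, and competing edges, $\R^m\setminus\Generic$ lies in a finite union of hyperplanes and hence has Lebesgue measure zero.
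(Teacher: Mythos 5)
Your proof is correct and reaches the same cone description as the paper (the strict-inequality system of \cref{eq:polyhedral_cone}, obtained from the zero-player single-cycle lemma and linearity of $\gameval^{\sigma,\tau},\bias^{\sigma,\tau}$), but it takes a genuinely different route at the two places where the real work happens. For uniqueness of the bias inside the cone, the paper extracts an $\varepsilon$-margin from the strict inequalities, shows that any bias lying in a thin tube around the line $\{\bias+\alpha(1,\dots,1)\}$ must solve the ergodic equation of the one-cycle subgraph and hence lie on that line, and then invokes path-connectedness of the bias set (\cref{biases_connected}, imported from \cite{generic_uniqueness}) to exclude biases outside the tube. You instead argue directly: for a second bias $\bias'$ you propagate the extremal level sets $Z_-,Z_+$ of $d=\bias'-\bias$ under the policy map $\eta$ --- at Max vertices $Z_-$ propagates for free while $Z_+$ needs the strict maximality of the $\sigma$-edge, and dually at Min vertices --- so both sets are $\eta$-invariant, hence both contain the unique cycle of $\dgraph^{\sigma,\tau}$, forcing $d$ to be constant. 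Your case analysis is compressed but sound (e.g.\ at a Min vertex of $Z_-$, a $\bias'$-minimizer different from $\tau(i)$ would have strictly smaller $d$ by strictness of $\bias$, a contradiction); what this buys is a self-contained argument that does not rely on the connectedness theorem, in the style of min/max level-set arguments for nonexpansive operators. For the covering of $\R^m\setminus\Generic$ by hyperplanes, the paper first proves by a perturbation argument (\cref{biased_induced_one_cycle}) that every weight vector admits a bias-induced pair in $\Xi$ and only then extracts an equality; you bypass that lemma by allowing the chosen induced pair to have several cycles and, in that case, placing $\payoff$ on a ``two cycle means coincide'' hyperplane, which is nontrivial because distinct cycles of a functional graph are edge-disjoint. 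The only point you leave implicit is the nontriviality of the equality hyperplanes in the single-cycle case; as the paper notes, it holds because $\gameval^{\sigma,\tau}$ and $\bias^{\sigma,\tau}$ do not depend on the weight of an edge not used by $(\sigma,\tau)$ --- worth one sentence, but not a gap.
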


This motivates the introduction of the following \emph{condition number} $\Cond$, which measures the difference between the edge that achieves a maximum or minimum in the ergodic equation and the ``second best'' edge, relatively to the spread of the weights around the value:

\begin{definition}
Given $\payoff \in \Generic$, we put
\[
\Cond(\payoff) \coloneqq \frac{\max\{\abs{\payoff_{ij} - \gameval} \colon (i,j) \in \edges\}}{\min\{\abs{\payoff_{ij} - \gameval + \bias_j - \bias_i} \colon (i,j) \in \edges, \payoff_{ij} - \gameval + \bias_j - \bias_i \neq 0\} } \, .
\]
\end{definition}

When defined in such way, the condition number does not change when the weights are multiplied by a positive constant, or when the same constant is added to all the weights.

To analyze the behavior of the random games, we introduce the following random variables. If $i$ is a vertex controlled by Min, then for any edge $(i,j) \in \edges$ we put
\begin{align*}
Z_{ij} = \inf\{x \in \R \colon &\text{$(x,\payoff_{-ij}) \in \Generic$ and the MPG with weights $(x,\payoff_{-ij})$ has a pair }\\
&\text{of bias-induced policies $(\sigma,\tau) \in \Xi$ such that $\tau(i) \neq j $}\} \, .
\end{align*}
Here, $(x,\payoff_{-ij})$ is the vector obtained from $\payoff$ by replacing the $ij$th coordinate with $x$. We analogously define the variables $Z_{ij}$ for vertices controlled by Max, changing $\inf$ to $\sup$. Since the variable $Z_{ij}$ does not depend on $\payoff_{ij}$, we get the estimate

\begin{lemma}
For any $\alpha > 0$, $\Prob(\exists ij, \abs{\payoff_{ij} - Z_{ij}} \le \alpha) \le 2\alpha m \phi$.
\end{lemma}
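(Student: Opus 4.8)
The plan is to exploit the key structural fact that, by construction, the random variable $Z_{ij}$ depends on the payoff vector $\payoff$ only through the coordinates $\payoff_{-ij}$, i.e.\ through all entries \emph{other than} $\payoff_{ij}$. Indeed, in the definition of $Z_{ij}$ we fix all of $\payoff_{-ij}$ and vary only the single scalar $x$ in the $ij$th slot; the infimum (or supremum, for Max vertices) is a deterministic function of $\payoff_{-ij}$ alone, taking values in $\R \cup \{\pm\infty\}$. Consequently, conditionally on $\payoff_{-ij}$, the quantity $Z_{ij}$ is a constant, while $\payoff_{ij}$ is still an absolutely continuous random variable (independent of the others) with density $\dens_{ij}$ bounded by $\phi$.

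First I would estimate, for a single edge $(i,j)$, the probability $\Prob(\abs{\payoff_{ij} - Z_{ij}} \le \alpha)$. Condition on $\payoff_{-ij}$. If $Z_{ij} = \pm\infty$ the event is empty and has probability $0$. Otherwise $Z_{ij} = z$ for some real $z$ determined by $\payoff_{-ij}$, and the event becomes $\payoff_{ij} \in [z-\alpha, z+\alpha]$, an interval of length $2\alpha$. Since $\payoff_{ij}$ has density at most $\phi$, independently of $\payoff_{-ij}$, we get $\Prob(\abs{\payoff_{ij} - Z_{ij}} \le \alpha \mid \payoff_{-ij}) \le 2\alpha\phi$ pointwise in $\payoff_{-ij}$; integrating out $\payoff_{-ij}$ gives $\Prob(\abs{\payoff_{ij} - Z_{ij}} \le \alpha) \le 2\alpha\phi$ for each fixed $(i,j)$.

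Then I would finish by a union bound over the $m$ edges:
\[
\Prob\bigl(\exists\, ij,\ \abs{\payoff_{ij} - Z_{ij}} \le \alpha\bigr) \;\le\; \sum_{(i,j)\in\edges} \Prob\bigl(\abs{\payoff_{ij} - Z_{ij}} \le \alpha\bigr) \;\le\; 2\alpha m \phi \, ,
\]
which is exactly the claimed bound.

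The only genuinely delicate point — the ``main obstacle'' in spirit, though it is really a matter of careful bookkeeping rather than difficulty — is making rigorous the claim that $Z_{ij}$ is a well-defined measurable function of $\payoff_{-ij}$ and is independent of $\payoff_{ij}$. This requires checking that the set appearing inside the infimum, namely $\{x : (x,\payoff_{-ij}) \in \Generic \text{ and the MPG with weights } (x,\payoff_{-ij}) \text{ has a pair of bias-induced policies } (\sigma,\tau)\in\Xi \text{ with } \tau(i)\ne j\}$, is for each fixed $\payoff_{-ij}$ a measurable (indeed, by \Cref{intro_polyhedral_complex}, semialgebraic, in fact a finite union of open intervals in $x$) subset of $\R$, so that its infimum is measurable in $\payoff_{-ij}$; and that the value of this infimum manifestly does not reference $\payoff_{ij}$. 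Both are immediate from \Cref{intro_polyhedral_complex} and the definitions: the cones $\polyh^{\sigma,\tau}$ are open, so their intersections with the line $\{(x,\payoff_{-ij}) : x\in\R\}$ are open, hence the admissible set of $x$ is open, and its infimum is a Borel function of the remaining coordinates. Once this measurability/independence is granted, the conditioning argument above and the union bound give the lemma with no further work.
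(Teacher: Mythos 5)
Your proposal is correct and follows essentially the same argument as the paper: the paper also observes that $Z_{ij}$ depends only on $\payoff_{-ij}$, restricts to the event $Z_{ij}\in\R$, bounds the single-edge probability by $2\alpha\phi$ via Fubini's theorem (your conditioning argument is the same computation in different language), and concludes with a union bound over the $m$ edges.
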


Furthermore, the variables $Z_{ij}$ are related to the ergodic equation in the following way.

\begin{lemma}
Suppose that $\payoff \in \polyh^{\sigma,\tau}$ for some $(\sigma,\tau) \in \Xi$. Then, for every $(i,j) \in \edges$ that is not used in $(\sigma,\tau)$ we have $Z_{ij} = \gameval + \bias_i - \bias_j$.
\end{lemma}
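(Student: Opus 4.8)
The plan is to show that $Z_{ij}$ sits exactly at the threshold $x^{\star} \coloneqq \gameval + \bias_i - \bias_j$ at which the weight of the edge $(i,j)$ starts to matter for the local optimum at vertex $i$ in the ergodic equation. I will argue for $i \in \Minvertices$; the case $i \in \Maxvertices$ is symmetric, interchanging $\inf$ with $\sup$, $\min$ with $\max$, and reversing the inequalities throughout. Write $S$ for the set of $x$ whose infimum defines $Z_{ij}$. Since $\Generic = \bigcup_{(\sigma',\tau') \in \Xi}\polyh^{\sigma',\tau'}$ with the cones pairwise disjoint, a game in $\Generic$ has a unique bias-induced pair, and that pair lies in $\Xi$; so $x \in S$ is equivalent to: $(x,\payoff_{-ij}) \in \Generic$ and its (unique) bias-induced pair $(\sigma',\tau')$ has $\tau'(i) \ne j$. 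The goal is then to prove the two bounds $\inf S \le x^{\star}$ and $\inf S \ge x^{\star}$.

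For the first bound, let $(\gameval,\bias)$ be the essentially unique solution of the ergodic equation for $\payoff$; since $\payoff \in \polyh^{\sigma,\tau} \subseteq \Generic$ and $(\sigma,\tau)$ is its bias-induced pair, the minimum at $i$ is attained by the single edge $(i,\tau(i))$, and $\tau(i)\neq j$ because $(i,j)$ is unused, so $\payoff_{ij}+\bias_j > \gameval+\bias_i$, i.e.\ $\payoff_{ij} > x^{\star}$. I would then observe that for every $x \in (x^{\star},\payoff_{ij}]$ the pair $(\gameval,\bias)$ still solves the ergodic equation for $(x,\payoff_{-ij})$: only the equation at $i$ is affected, and there $x+\bias_j > x^{\star}+\bias_j = \gameval+\bias_i$ keeps the minimum equal to $\gameval+\bias_i$, attained at $\tau(i)$. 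Hence $(\sigma,\tau)$ is a bias-induced pair of $(x,\payoff_{-ij})$ with $\tau(i)\neq j$. It remains to know $(x,\payoff_{-ij}) \in \Generic$; but by \Cref{intro_polyhedral_complex} the complement of $\Generic$ lies in finitely many hyperplanes, and the segment $\{(x,\payoff_{-ij}) : x \in [x^{\star},\payoff_{ij}]\}$ is contained in none of them (the endpoint $\payoff$ lies in $\Generic$), so $(x,\payoff_{-ij}) \in \Generic$ for all but finitely many $x \in (x^{\star},\payoff_{ij})$. Each such $x$ belongs to $S$, and as they accumulate at $x^{\star}$ we get $\inf S \le x^{\star}$.

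The delicate part is $\inf S \ge x^{\star}$, i.e.\ that $S$ contains no $x < x^{\star}$: one must rule out that, once $\payoff_{ij}$ is pushed below $x^{\star}$, the game re-stabilizes somewhere with $(i,j)$ again unused — and since the value can be a highly non-convex piecewise-linear function of $\payoff_{ij}$ with exponentially many breakpoints (cf.\ \cref{ex:exponential}), a local argument near $x^{\star}$ will not do. The trick I would use is monotonicity combined with uniqueness in $\Generic$. Fix $x < x^{\star}$ and let $(\gameval',\bias')$ be any solution of the ergodic equation for $(x,\payoff_{-ij})$ (one exists by \cite{hochartdominion}). I claim $(i,j)$ attains the minimum at $i$, i.e.\ $x+\bias'_j = \gameval'+\bias'_i$. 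If not, then $x+\bias'_j > \gameval'+\bias'_i$, and raising the weight on $(i,j)$ to any $x' \ge x$ keeps an already-nonminimal term nonminimal while changing no other equation; hence $(\gameval',\bias')$ solves the ergodic equation for $(x',\payoff_{-ij})$ for every $x' \ge x$, in particular for $x' = \payoff_{ij}$. Uniqueness of the solution for $\payoff \in \Generic$ then forces $\gameval' = \gameval$ and $\bias' = \bias + c\ones$ for some $c$, whence $x+\bias_j > \gameval+\bias_i$, i.e.\ $x > x^{\star}$ — a contradiction. Granting the claim, if moreover $(x,\payoff_{-ij}) \in \Generic$ then by \Cref{intro_polyhedral_complex} the argmin at $i$ is a single edge, which contains $j$ and hence equals $\{j\}$, so the unique bias-induced pair uses $(i,j)$ at $i$ and $x \notin S$; and if $(x,\payoff_{-ij}) \notin \Generic$ then $x \notin S$ by definition. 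Thus $S \cap (-\infty,x^{\star}) = \emptyset$, so $\inf S \ge x^{\star}$, and combining the two bounds gives $Z_{ij} = \gameval + \bias_i - \bias_j$. The only genuinely non-routine step is the global claim in the last paragraph; everything else is a direct check against \Cref{intro_polyhedral_complex} and the definitions.
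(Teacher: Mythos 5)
Your lower-bound argument ($Z_{ij} \ge \gameval + \bias_i - \bias_j$) is correct, and it is a genuinely different route from the paper's: where the paper applies its stability lemma (\cref{two_player_stability}) to both cones $\polyh^{\sigma,\tau}$ and $\polyh^{\tilde{\sigma},\tilde{\tau}}$ and then uses their disjointness, you argue directly on the ergodic equation — if a solution $(\gameval',\bias')$ at $(x,\payoff_{-ij})$ did not have $(i,j)$ in the argmin, it would remain a solution for every larger weight, in particular at $\payoff$, and uniqueness of the ergodic solution at $\payoff \in \Generic$ (part of \cref{intro_polyhedral_complex}) forces $x > \gameval + \bias_i - \bias_j$. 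This is self-contained, needing only the statement of \cref{intro_polyhedral_complex} rather than the explicit polyhedral description.

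The gap is in the upper bound. You correctly check that for $x \in (\gameval+\bias_i-\bias_j,\payoff_{ij}]$ the pair $(\gameval,\bias)$ still solves the ergodic equation of $(x,\payoff_{-ij})$ and still induces $(\sigma,\tau)$; but membership of $x$ in the set defining $Z_{ij}$ also requires $(x,\payoff_{-ij}) \in \Generic$, i.e., that $(\sigma,\tau)$ is the \emph{only} bias-induced pair there, and your justification of this fails. You argue that the segment cannot lie in any of the finitely many hyperplanes covering $\R^m \setminus \Generic$ ``because the endpoint $\payoff$ lies in $\Generic$''. That is a non sequitur: \cref{intro_polyhedral_complex} only says the complement of $\Generic$ is \emph{contained} in a finite union of hyperplanes; those hyperplanes can (and in general do) intersect $\Generic$, so $\payoff \in \Generic$ does not prevent $\payoff$, and hence the entire line $\{(x,\payoff_{-ij}) : x \in \R\}$, from lying inside one of them — indeed the hyperplanes in the paper's construction are of the form $\lambda^{\sigma',\tau'}(\payoff)+\bias^{\sigma',\tau'}_{i'}(\payoff)-\bias^{\sigma',\tau'}_{j'}(\payoff)=\payoff_{i'j'}$ for \emph{other} pairs $(\sigma',\tau')$, whose equations may not involve the coordinate $\payoff_{ij}$ at all, in which case the line is contained in the hyperplane whenever $\payoff$ happens to satisfy the equation. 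If the line does lie in such a hyperplane, your counting argument yields no information about genericity along the segment, and the upper bound collapses. The missing ingredient is exactly the paper's \cref{two_player_stability}: the whole ray $x > \gameval+\bias_i-\bias_j$ stays in $\polyh^{\sigma,\tau}$. This rests on the explicit inequality description \cref{eq:polyhedral_cone} of $\polyh^{\sigma,\tau}$, whose nontrivial direction (all strict inequalities imply uniqueness of the bias up to a constant, hence uniqueness of the bias-induced pair) is proved in \cref{polyhedral_complex} via connectedness of the set of biases (\cref{biases_connected}). Your verification that $(\sigma,\tau)$ is \emph{a} bias-induced pair at $(x,\payoff_{-ij})$ does not rule out a second, essentially different bias inducing another pair, and without that uniqueness you cannot conclude $(x,\payoff_{-ij}) \in \Generic$, so $\inf S \le \gameval+\bias_i-\bias_j$ is not established.
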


The two lemmas above improve the conclusion of \cref{intro_polyhedral_complex}: not only each maximum and minimum in the ergodic equation is achieved by a single edge, but with high probability the difference between the best edge and the second best edge is large. In particular, this shows that bias-induced policies do not change when the random weights are truncated, and it gives an estimate of the condition number.

\begin{theorem}\label{intro_robust_bias_policy}
Let $\delta \coloneqq 1/(4n(2n+1)m\phi)$. Then, with probability at least $1 - 1/n$, the whole $\ell_\infty$ ball $B_{\infty}(\payoff, \delta)$ is included in a single polyhedron $\polyh^{\sigma,\tau}$.
\end{theorem}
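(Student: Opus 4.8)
The plan is to show that, on an event of probability $\ge 1 - 1/n$, the weight vector $\payoff$ lies so deep inside the polyhedral cone $\polyh^{\sigma,\tau}$ containing it that none of that cone's defining inequalities can be violated by an $\ell_\infty$-perturbation of size $\delta$. Since $\R^m \setminus \Generic$ has Lebesgue measure zero by \cref{intro_polyhedral_complex}, almost surely there is a unique $(\sigma,\tau) \in \Xi$ with $\payoff \in \polyh^{\sigma,\tau}$; fix it. As $\dgraph^{\sigma,\tau}$ has a single directed cycle $\cycle$, the value $\gameval$ is the $\payoff$-average over $\cycle$, and each bias difference $\bias_j - \bias_i$ is a determined linear form in $\payoff$, obtained by telescoping the identities $\gameval + \bias_a = \payoff_{a\, f(a)} + \bias_{f(a)}$ (with $f$ the policy map, equal to $\sigma$ on $\Maxvertices$ and to $\tau$ on $\Minvertices$) along a path running from $j$ into $\cycle$, around $\cycle$ to where $i$'s path enters it, and back along $i$'s path. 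Writing $g_{ij}(\payoff) \coloneqq \payoff_{ij} - \gameval + \bias_j - \bias_i$ for these linear forms, extended to all of $\R^m$, the first step is to read off from the proof of \cref{intro_polyhedral_complex} that $\polyh^{\sigma,\tau}$ is cut out \emph{exactly} by the strict inequalities
\[
g_{ij} < 0 \quad (i \in \Maxvertices,\ j \neq \sigma(i)), \qquad\qquad g_{ij} > 0 \quad (i \in \Minvertices,\ j \neq \tau(i)),
\]
ranging over the non-policy edges: these say precisely that $(\gameval, \bias)$ solves the ergodic equation with each maximum and minimum uniquely attained, which on $\Generic$ forces $(\sigma,\tau)$ to be the unique bias-induced pair. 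One subtlety here is ruling out, via convexity of $\polyh^{\sigma,\tau}$ and density of $\Generic$ in $\R^m$, that $\polyh^{\sigma,\tau}$ is only a proper open dense subset of the region defined by these inequalities.

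Next I would bound the $\supnorm{\cdot}$-Lipschitz constant of each $g_{ij}$. Up to a global sign, $g_{ij}$ is a sum of at most $2n+1$ atoms, each of the form $\payoff_e$ or $\payoff_e - \gameval$ (the edges of the telescoping path together with $(i,j)$ itself); since $\gameval$ is an average of weights, each atom is $2$-Lipschitz, so $g_{ij}$ is $2(2n+1)$-Lipschitz. Now combine this with the lemma relating the variables $Z_{ij}$ to the ergodic equation, which for every non-policy edge gives $Z_{ij} = \gameval + \bias_i - \bias_j$, hence $g_{ij}(\payoff) = \payoff_{ij} - Z_{ij}$. Set $\alpha \coloneqq 1/(2nm\phi) = 2(2n+1)\delta$. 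The probabilistic lemma gives $\Prob(\exists\, ij : \abs{\payoff_{ij} - Z_{ij}} \le \alpha) \le 2\alpha m\phi = 1/n$, so on the intersection of the complementary event with the almost-sure event $\payoff \in \Generic$ --- an event of probability $\ge 1 - 1/n$ --- we have $\abs{g_{ij}(\payoff)} = \abs{\payoff_{ij} - Z_{ij}} > \alpha$ for every non-policy edge.

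On that event, let $\payoff' \in B_\infty(\payoff, \delta)$ be arbitrary. For each non-policy edge the Lipschitz bound gives $\abs{g_{ij}(\payoff') - g_{ij}(\payoff)} \le 2(2n+1)\delta = \alpha < \abs{g_{ij}(\payoff)}$, so $g_{ij}(\payoff')$ has the same sign as $g_{ij}(\payoff)$; since these are exactly the signs required for membership, $\payoff' \in \polyh^{\sigma,\tau}$, and thus $B_\infty(\payoff, \delta) \subseteq \polyh^{\sigma,\tau}$. The step I expect to be the main obstacle is the bookkeeping of the first two paragraphs: identifying $\polyh^{\sigma,\tau}$ exactly with the region carved out by the forms $g_{ij}$ --- so that its boundary is reached only when some $g_{ij}$ vanishes --- and establishing that each $g_{ij}$ is $2(2n+1)$-Lipschitz, since this constant is exactly what calibrates $\delta$. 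Once these are in place, the conclusion is an immediate consequence of the two lemmas together with the (exact, first-order) estimate along the segment $[\payoff, \payoff']$.
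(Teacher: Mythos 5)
Your proposal is correct and follows essentially the same route as the paper: it combines the explicit polyhedral description of $\polyh^{\sigma,\tau}$ from \cref{eq:polyhedral_cone}, the identity $Z_{ij}=\gameval^{\sigma,\tau}(\payoff)+\bias_i^{\sigma,\tau}(\payoff)-\bias_j^{\sigma,\tau}(\payoff)$ for non-policy edges (\cref{threshold_two_players}), the anti-concentration bound of \cref{prob_estimate_two_players} with $\alpha=1/(2nm\phi)$, and the Lipschitz estimates of \cref{linear_functions} (your telescoping derivation of the constant $2(2n+1)$ is just a re-proof of that lemma). The calibration $\alpha=2(2n+1)\delta$ and the sign-preservation argument coincide with the paper's proof of \cref{robust_bias_policy}.
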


\begin{theorem}\label{intro_cond_estimate}
Random mean payoff games are well conditioned with high probability. More formally, for every $\varepsilon > 0$ we have $\Prob\bigl(\Cond \ge \frac{8m}{\varepsilon}(\phi + \sqrt{\frac{2m}{\varepsilon}})\bigr) \le \varepsilon$.
\end{theorem}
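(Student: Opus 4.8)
The plan is to estimate the numerator and the denominator of $\Cond(\payoff)$ separately and combine them by a union bound. Since the weights are absolutely continuous, \cref{intro_polyhedral_complex} gives $\payoff\in\Generic$ with probability one, so almost surely there is a unique pair $(\sigma,\tau)\in\Xi$ with $\payoff\in\polyh^{\sigma,\tau}$, the ergodic equation has a unique solution $(\gameval,\bias)$ up to a constant on $\bias$, and $\Cond(\payoff)$ is well defined. First I would identify the set of edges over which the denominator is minimized: every out-edge $(i,j)$ used by $(\sigma,\tau)$ satisfies $\payoff_{ij}+\bias_j=\gameval+\bias_i$ by the ergodic equation, i.e.\ $\payoff_{ij}-\gameval+\bias_j-\bias_i=0$; every non-used edge $(i,j)$ has $\payoff_{ij}+\bias_j$ strictly below, resp.\ above, the optimal value at $i$, because by \cref{intro_polyhedral_complex} each max and each min is attained by a single edge, so $\payoff_{ij}-\gameval+\bias_j-\bias_i\neq 0$. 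Hence the min in the denominator runs exactly over the non-used edges, and for such edges the second of the two lemmas above gives $Z_{ij}=\gameval+\bias_i-\bias_j$, so that $\payoff_{ij}-\gameval+\bias_j-\bias_i=\payoff_{ij}-Z_{ij}$. Consequently the denominator equals $\min\{\abs{\payoff_{ij}-Z_{ij}}:(i,j)\text{ not used by }(\sigma,\tau)\}\ge\min_{(i,j)\in\edges}\abs{\payoff_{ij}-Z_{ij}}$, and by the first lemma this is $>\alpha$ with probability at least $1-2\alpha m\phi$.

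For the numerator I would first record the a priori bound $\abs{\gameval}\le\max_{ij}\abs{\payoff_{ij}}$: evaluating the ergodic equation at a vertex maximizing $\bias$ (resp.\ minimizing $\bias$) shows $\gameval\le\max_{ij}\payoff_{ij}$ (resp.\ $\gameval\ge\min_{ij}\payoff_{ij}$). Therefore $\max_{ij}\abs{\payoff_{ij}-\gameval}\le 2\max_{ij}\abs{\payoff_{ij}}$, and it remains to control $\max_{ij}\abs{\payoff_{ij}}$. Using $\E\payoff_{ij}\in[-1,1]$ and $\Var(\payoff_{ij})\le 1/\phi^2$, Chebyshev's inequality yields $\Prob(\abs{\payoff_{ij}}\ge 1+t)\le 1/(\phi^2 t^2)$ for each edge, and a union bound over the $m$ edges gives $\Prob(\max_{ij}\abs{\payoff_{ij}}\ge 1+t)\le m/(\phi^2 t^2)$. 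So, except on an event of probability at most $m/(\phi^2 t^2)$, the numerator is at most $2(1+t)$.

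Combining the two estimates, outside an event of probability at most $2\alpha m\phi+m/(\phi^2 t^2)$ we have $\Cond(\payoff)\le 2(1+t)/\alpha$. I would then set $\alpha\coloneqq\varepsilon/(4m\phi)$ and $t\coloneqq\frac{1}{\phi}\sqrt{2m/\varepsilon}$, which makes the failure probability $\varepsilon/2+\varepsilon/2=\varepsilon$ and gives $2(1+t)/\alpha=\frac{8m\phi}{\varepsilon}\bigl(1+\tfrac1\phi\sqrt{2m/\varepsilon}\,\bigr)=\frac{8m}{\varepsilon}\bigl(\phi+\sqrt{2m/\varepsilon}\,\bigr)$, which is exactly the claimed bound. (The case $\varepsilon\ge 1$ is trivial, and the degenerate case where every out-edge is used — so the denominator is a minimum over the empty set and $\Cond(\payoff)=0$ by convention — is also immediate.)

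I do not expect a serious obstacle here: the two lemmas do the real work, and what is left is bookkeeping. The points to get right are the sign/indexing argument that matches each nonzero term $\payoff_{ij}-\gameval+\bias_j-\bias_i$ in the denominator with $\payoff_{ij}-Z_{ij}$ so that the first lemma applies, the elementary bound $\abs{\gameval}\le\max_{ij}\abs{\payoff_{ij}}$ for the numerator, and the split of $\varepsilon$ into two equal halves when tuning $\alpha$ and $t$ so that the final constant comes out to $8m/\varepsilon$.
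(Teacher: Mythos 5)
Your proposal is correct and follows essentially the same route as the paper: bound the denominator by the gap $\min_{ij}\abs{\payoff_{ij}-Z_{ij}}>\alpha$ via the probability lemma for the thresholds $Z_{ij}$, bound the numerator by $2\supnorm{\payoff}$ via Chebyshev, and combine with $\alpha=\varepsilon/(4m\phi)$ and $t=\frac{1}{\phi}\sqrt{2m/\varepsilon}$. The only cosmetic difference is that the paper justifies $\abs{\gameval}\le\supnorm{\payoff}$ by noting $\gameval$ is the mean weight of the cycle of $\dgraph^{\sigma,\tau}$, whereas you evaluate the ergodic equation at extremal coordinates of the bias; both are fine.
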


To propose an algorithm that exploits the condition number, we use the fact that every mean-payoff game has a pair of \emph{Blackwell-optimal} policies, i.e., policies that are optimal for all discount factors $\disc$ close to $1$. Such policies are induced by the \emph{Blackwell bias}, which is defined as $\bias^* \coloneqq \lim_{\disc \to 1}(\gameval^{(\disc)} - \gameval)/(1 - \disc)$, where $\gameval^{(\disc)}$ is the value of the discounted game with discount factor $\disc$. We then show that, for well-conditioned games, the Blackwell-optimal policies can be already found when the discount factor is low.

\begin{theorem}\label{intro_conditioned_discount}
Suppose that $\payoff \in \polyh^{\sigma,\tau}$ and fix $1 > \disc > 1 - \frac{1}{6n^2\Cond(\payoff)}$. Then, $(\sigma,\tau)$ is the unique pair of optimal policies in the discounted game with discount factor $\disc$. 
\end{theorem}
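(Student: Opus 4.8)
The plan is to write down the value vector of the $\disc$-discounted game explicitly and to read the optimal policies off it. Let $T$ be the associated Shapley operator, $(Tv)_i = \max_{(i,j) \in \edges}\{\payoff_{ij} + \disc v_j\}$ for $i \in \Maxvertices$ and $(Tv)_i = \min_{(i,j) \in \edges}\{\payoff_{ij} + \disc v_j\}$ for $i \in \Minvertices$. Since $\disc < 1$ this is a sup-norm contraction; its unique fixed point $v^{(\disc)}$ is the value of the discounted game, and, applying the same contraction argument to the one-player operators obtained by fixing one player's policy, a pair $(\sigma',\tau')$ is a pair of optimal policies if and only if, with respect to $v^{(\disc)}$, the edge $(i,\sigma'(i))$ attains the maximum defining $(Tv^{(\disc)})_i$ at every $i \in \Maxvertices$ and $(i,\tau'(i))$ attains the minimum at every $i \in \Minvertices$. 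Hence it suffices to produce a vector $v$ with $Tv = v$ at which, at every vertex, the optimizer is \emph{uniquely} attained and is attained by the edge used by $(\sigma,\tau)$.

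The candidate is $v^{\sigma,\tau}$, the discounted payoff of the deterministic play obtained when both players commit to $(\sigma,\tau)$; writing $\pi$ for the common successor map of $\sigma$ and $\tau$, it satisfies $v^{\sigma,\tau}_i = \payoff_{i\pi(i)} + \disc\, v^{\sigma,\tau}_{\pi(i)}$ by definition, so the selected edges attain equality in $Tv=v$ for free, and everything reduces to showing that every \emph{non-selected} edge is strictly beaten. Since $(\sigma,\tau) \in \Xi$, the graph $\dgraph^{\sigma,\tau}$ is a functional graph with a single directed cycle $\cycle$, so from every vertex the orbit of $\pi$ reaches $\cycle$ in at most $n-1$ steps; since $(\sigma,\tau)$ is bias-induced, the ergodic equation gives $\payoff_{i\pi(i)} = \gameval + \bias_i - \bias_{\pi(i)}$ for all $i$, and summing this around $\cycle$ shows that the total weight of $\cycle$ equals $\card{\cycle}\,\gameval$. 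Expanding $v^{\sigma,\tau}_i = \sum_{t \ge 0} \disc^t \payoff_{\pi^t(i)\pi^{t+1}(i)}$, substituting $\payoff_{\pi^t(i)\pi^{t+1}(i)} = \gameval + \bias_{\pi^t(i)} - \bias_{\pi^{t+1}(i)}$, and summing by parts, I expect to obtain, after normalizing the bias so that $\sum_{w \in \cycle}\bias_w = 0$,
\[
v^{\sigma,\tau}_i = \frac{\gameval}{1-\disc} + \bias_i + \eps_i(\disc), \qquad \abs{\eps_i(\disc)} \le (1-\disc)\, C\, n^2 \max_{(k,l) \in \edges}\abs{\payoff_{kl} - \gameval},
\]
for an absolute constant $C$.

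Feeding this into the slack of a non-selected edge $(i,j)$ — say $i \in \Maxvertices$; the case $i \in \Minvertices$ is mirror-symmetric — the divergent terms $\frac{\gameval}{1-\disc}$ cancel and one is left with
\[
v^{\sigma,\tau}_i - \payoff_{ij} - \disc\, v^{\sigma,\tau}_j = \bigl(\gameval + \bias_i - \bias_j - \payoff_{ij}\bigr) + E_{ij}(\disc), \qquad \abs{E_{ij}(\disc)} \le (1-\disc)\, C'\, n^2 \max_{(k,l) \in \edges}\abs{\payoff_{kl} - \gameval},
\]
since $\abs{\bias_w} = O(n \max\abs{\payoff-\gameval})$ under the above normalization. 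The first parenthesis is exactly the slack of $(i,j)$ in the ergodic equation. By \Cref{intro_polyhedral_complex}, as $\payoff \in \polyh^{\sigma,\tau}$ — so $(\sigma,\tau)$ is the \emph{unique} bias-induced pair and the maximum at $i$ is attained only by $(i,\sigma(i))$ — this slack is nonzero; it is positive, and by the definition of $\Cond(\payoff)$ its absolute value is at least $\max_{(k,l)}\abs{\payoff_{kl} - \gameval}/\Cond(\payoff)$. Therefore, as soon as $(1-\disc)\, C'\, n^2 < 1/\Cond(\payoff)$ — which, with the constant tracked carefully, is implied by the hypothesis $1 - \disc < 1/(6 n^2 \Cond(\payoff))$ — the slack of every non-selected edge at a Max vertex is strictly positive, and the mirror computation makes the slack at every non-selected edge of a Min vertex strictly negative. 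Consequently $T v^{\sigma,\tau} = v^{\sigma,\tau}$ with the optimizer uniquely attained along $(\sigma,\tau)$ at every vertex, whence $v^{\sigma,\tau} = v^{(\disc)}$ and $(\sigma,\tau)$ is the unique pair of optimal policies of the $\disc$-discounted game.

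The step I expect to be the main obstacle is the quantitative estimate of $\eps_i(\disc)$. A crude ``partial sum $\le$ largest bias entry'' bound only yields $\abs{\eps_i(\disc)} = O\!\bigl((1-\disc) n^3 \max\abs{\payoff - \gameval}\bigr)$, losing a factor of $n$ relative to what the theorem can afford. The saving exploits that the sequence $\bigl(\payoff_{\pi^t(i)\pi^{t+1}(i)} - \gameval\bigr)_t$ is eventually periodic — once $\pi^t(i) \in \cycle$ — and, crucially, has zero sum over each period, precisely because the weight of $\cycle$ is $\card{\cycle}\,\gameval$. This permits summing by parts \emph{twice}: once to replace the weights by bias differences (bounded by $\max\abs{\payoff - \gameval}$), and a second time to cash in the vanishing period-sums, which keeps the relevant partial sums at $O\!\bigl(n^2 \max\abs{\payoff - \gameval}\bigr)$ instead of $O(n^2)$ times a bias bound of order $O\!\bigl(n \max\abs{\payoff - \gameval}\bigr)$. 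The normalization $\sum_{w \in \cycle}\bias_w = 0$ is what pins $\abs{\bias_w} = O\!\bigl(n \max\abs{\payoff - \gameval}\bigr)$ and makes all constants explicit; the cancellations in the slack formula are then purely algebraic.
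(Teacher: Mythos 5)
Your argument is sound and reaches the stated conclusion, but it takes a genuinely different route from the paper's proof of \cref{conditioned_discount}. The paper fixes $\sigma$ and shows that $\tau$ is the \emph{unique best response} in the discounted game: it rules out a deviation $\tau'$ by a two-case analysis (the deviating vertex lies on a cycle of $\dgraph^{\sigma,\tau'}$, or only on a path to the unique cycle), using the external estimate $\supnorm{\gameval^{(\disc)}-\gameval}\le 2n(1-\disc)\supnorm{\payoff}$ of \cref{discount_to_mean} together with a telescoping of the bias along cycles and paths, and then symmetrizes in the two players. You instead exhibit the discounted value vector explicitly, $v^{\sigma,\tau}=\frac{\gameval}{1-\disc}+\bias+O\bigl((1-\disc)n^{2}\max\abs{\payoff-\gameval}\bigr)$, verify that it is the fixed point of the Shapley operator with the optimum at every vertex attained \emph{only} by the edge of $(\sigma,\tau)$ (using the sign and magnitude of the ergodic slacks coming from \cref{eq:polyhedral_cone} and the definition of $\Cond$), and then invoke the characterization of optimal policies in \cref{th:shapley}. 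What your route buys: it is self-contained (you never need \cref{discount_to_mean}; in effect you re-derive a sharper version of it on the subgraph $\dgraph^{\sigma,\tau}$, i.e.\ a quantitative form of the Kohlberg expansion of \cref{th:kohlberg}), it avoids the case distinction, and uniqueness of the optimal pair falls out immediately from unique attainment at the fixed point. What it costs is exactly the estimate you flag: the crude bound on $\eps_i(\disc)$ is only $\supnorm{\bias}$, and to get the needed $(1-\disc)n^{2}$ bound you must use that, after entering the cycle, the sequence $\bias_{\pi^t(i)}$ is periodic with zero sum per period under your normalization $\sum_{w\in\cycle}\bias_w=0$ (a single Abel summation with bounded partial sums then suffices); this works, and in the relevant regime one has $1-\disc\le\frac{1}{2n}$ so the $\disc^{-(\ell-1)}$ factor in the periodic estimate is harmless, but the final constant is tight enough that you must carry the bias bound (of order $n\max\abs{\payoff-\gameval}$ under your normalization, cf.\ \cref{one_cycle_graph}) and the transient/periodic split carefully to land within the stated $6n^{2}$ rather than a slightly larger multiple; the paper's bookkeeping (constant $3n^{2}$ against $\beta/2$ on each side) shows such a margin is available, so this is a matter of care, not a gap in the idea.
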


Combining \cref{intro_cond_estimate,intro_conditioned_discount} with the results of \cite{ye2011simplex,hansen2013strategy} showing that policy iteration has polynomial complexity for any fixed discount rate, we get our final result.

\begin{theorem}
The greedy-all switches policy iteration rule combined with increasing discount factor solves random instances of deterministic discounted or mean-payoff games in polynomial smoothed complexity.
\end{theorem}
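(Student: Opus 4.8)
The plan is to combine the three ingredients already in hand --- the polynomial bound on the condition number of a random game (\Cref{intro_cond_estimate}), the fact that a mildly small discount rate already singles out the Blackwell-optimal policies of a well-conditioned game (\Cref{intro_conditioned_discount}), and the worst-case efficiency of greedy all-switches policy iteration on a discounted game with a fixed discount rate \cite{ye2011simplex,hansen2013strategy} --- the one missing piece being that the algorithm does not know $\Cond(\payoff)$ and so cannot pick the right discount rate directly; this is exactly what the ``increasing discount factor'' supplies, realized as an exponential search (equivalently, a single run of policy iteration in which the discount factor is raised geometrically between phases). Concretely, I would first fix a target failure probability $\eps$ (any $\eps$ bounded below by $1/\poly(n)$ works for a high-probability statement) and use \Cref{intro_cond_estimate} to obtain an event $\mathcal E$ of probability at least $1-\eps$ on which $\Cond(\payoff) \le C$, where $C \coloneqq \tfrac{8m}{\eps}\bigl(\phi + \sqrt{2m/\eps}\bigr)$; since $m \le n^2$ this gives $C = \poly(n,\phi)$, and all later estimates are made on $\mathcal E$.

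The algorithm itself runs an exponential search on the condition number: for $k = 0,1,2,\dots$, set $K_k \coloneqq 2^k$, pick a rational $\disc_k$ with $1-\disc_k = \tfrac{1}{7n^2 K_k}$ (of bit-length $O(k+\log n)$), and run greedy all-switches policy iteration on the $\disc_k$-discounted game from an arbitrary starting pair of policies; by \cite{ye2011simplex,hansen2013strategy} this halts after $\poly\bigl(n,\tfrac{1}{1-\disc_k}\bigr) = \poly(n,2^k)$ iterations and returns the unique pair $(\sigma_k,\tau_k)$ of optimal policies of the $\disc_k$-discounted game. I would then test, in polynomial time, whether $(\sigma_k,\tau_k)$ is optimal for the mean-payoff game --- equivalently, whether it is bias-induced, which one checks by solving the linear system obtained from the ergodic equation after fixing both policies and verifying that the edges used by $\sigma_k$ and $\tau_k$ attain every maximum and every minimum --- and if the test passes, output $(\sigma_k,\tau_k)$ and stop, otherwise increment $k$.

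For correctness and complexity: as soon as $K_k \ge \tfrac{7}{6}\Cond(\payoff)$ we have $1 > \disc_k > 1 - \tfrac{1}{6n^2\Cond(\payoff)}$, so \Cref{intro_conditioned_discount} identifies $(\sigma_k,\tau_k)$ with the unique pair $(\sigma,\tau)$ with $\payoff \in \polyh^{\sigma,\tau}$, which is Blackwell-optimal, hence mean-payoff optimal, so the test passes; on $\mathcal E$ this happens for some $k = O(\log C)$, the search halts with $K_k = O(\Cond(\payoff)) = \poly(n,\phi)$, and since the per-round cost $\poly(n,2^k)$ grows geometrically the last round dominates, giving total time $\poly(n,\Cond(\payoff)) = \poly(n,\phi)$. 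For the discounted variant with a prescribed rational $\disc$, I would run the same search with two additions --- at round $k$, if $\tfrac{1}{1-\disc}\le 6n^2 K_k$, also run greedy all-switches directly on the $\disc$-discounted game (cost $\poly(n,\tfrac{1}{1-\disc}) = \poly(n,K_k)$), and test each returned pair for optimality in the $\disc$-discounted game as well --- so that on $\mathcal E$ either $\disc > 1-\tfrac{1}{6n^2\Cond(\payoff)}$, in which case the mean-payoff-optimal pair found above is also $\disc$-optimal by \Cref{intro_conditioned_discount}, or $\tfrac{1}{1-\disc}\le 6n^2\Cond(\payoff) \le 6n^2 C$, in which case the direct branch succeeds by round $\lceil \log_2 C\rceil$; in both cases the total time is $\poly(n,\phi)$ on $\mathcal E$.

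I expect the real difficulty to lie entirely in the earlier results (\Cref{intro_polyhedral_complex}, \Cref{intro_cond_estimate}, \Cref{intro_conditioned_discount}); in this final assembly the only point needing care is the stopping test, which must be both polynomial-time and \emph{sound} --- it may never certify a non-optimal pair --- so that the output is always correct and randomness enters only through the bound on the number of rounds.
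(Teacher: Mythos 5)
Your proposal is correct and follows essentially the same route as the paper: the paper's \cref{alg:increasing_discount_mpg} updates $\disc \gets (1+\disc)/2$, which is exactly your doubling of $1/(1-\disc)$, runs greedy all-switches policy iteration at each discount level, and stops when the pair is certified bias-induced via the ergodic equation of the fixed-policy subgame, with correctness from ``bias-induced $\Rightarrow$ optimal'' and the round/iteration count controlled by \cref{conditioned_discount}, \cref{cond_estimate}, and the fixed-discount bounds on policy iteration. Your treatment of the discounted variant (capping the search at the target discount and testing optimality there directly) likewise matches \cref{alg:increasing_discount_disc}.
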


In the theorem above, ``polynomial smoothed complexity'' is defined as in \cite{beier2004typical,roglin2007smoothed}: there exists a polynomial $\poly(x_1,x_2,x_3,x_4)$ such that for all $\varepsilon \in \interval[open left]{0}{1}$ the probability that number of iterations of our algorithm exceeds $\poly(n,m,\phi,\frac{1}{\varepsilon})$ is at most $\varepsilon$.

\newpage


\section{Toy example: deterministic Markov Decision Processes}\label{sec:one_player}

A deterministic Markov Decision Process (DMDP) is a deterministic single-player game, and here we will consider its mean-payoff variant. As such, the game is played on a weighted digraph $\dgraph = ([n],\edges, \payoff)$ with $m \coloneqq |\edges|$ edges. The \emph{mean weight} of a cycle of $\dgraph$ is the sum of the weights of the cycle's edges, divided by the number of edges in the cycle. Given an initial state $i \in [n]$, the objective of the player is to find a cycle of minimal mean weight that is reachable from $i$ and a path from $i$ that leads to such a cycle. Let us here focus on the case of graphs $\dgraph$ with one strongly connected component (and such that each vertex has at least one outgoing edge). In this case, the weight of an optimal cycle does not depend on the initial state $i \in [n]$, since the player can reach any cycle from any starting position. We may then speak of \emph{the} mean weight of an optimal (minimum mean weight) cycle, without reference to $i$. This is the \emph{value} of the game $\dgraph$.

A \emph{policy} of a player is a function $\tau \colon [n] \to [n]$ such that $(i,\tau(i)) \in \edges$ for all $i$.  A policy $\tau$ is \emph{optimal} if an optimal cycle is reached from every starting position if the player follows $\tau$.

The aim of this introductory section is to prove a simplified version of our proof that is adapted to the one-player case. Our main claim is as follows: when the weights of the graph are picked at random, then the optimal cycle is robust with respect to any perturbation of the weights within a ``large'' radius. This claim can also be deduced from \cite[Lemma~5]{BorosElbassioniFouzGurvich:2011}. We give a full proof of the claim to illustrate our approach. In this section, we assume that the weights $(\payoff_{ij})$ are absolutely continuous independent random variables and we denote by $\dens_{ij}$ the density function of $\payoff_{ij}$. We further assume that there exists $\phi > 0$ such that $\dens_{ij}(y) \le \phi$ for all $i,j$ and $y$.

\begin{definition}\label{def:polyhedral_cells_one_player}
  Given a cycle $\cycle \subseteq \edges$ we denote
  \[
    \polyh^{\cycle} \coloneqq \{\payoff \in \R^{m} \colon \text{$\cycle$ is the unique optimal cycle in a DMDP with weights $\payoff$} \} \, .
  \]
  We also set $\Generic \coloneqq \cup_{\cycle} \polyh^{\cycle}$.
\end{definition}

\begin{lemma}\label{polyhedral_cells_one_player}
The sets $\polyh^{\cycle}$ are disjoint and open polyhedral cones. Furthermore, $\R^{m} \setminus \Generic$ is included in a finite union of hyperplanes. In particular, it is a set of Lebesgue measure zero. 
\end{lemma}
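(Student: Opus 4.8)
The statement has three parts: (i) the $\polyh^\cycle$ are disjoint, (ii) each is an open polyhedral cone, (iii) $\R^m \setminus \Generic$ lies in finitely many hyperplanes. Part (i) is immediate from the definition: if $\payoff \in \polyh^{\cycle_1} \cap \polyh^{\cycle_2}$ then $\cycle_1$ and $\cycle_2$ are both "the unique optimal cycle", so $\cycle_1 = \cycle_2$. For part (ii), fix a cycle $\cycle$. I would write the mean weight of a cycle $\cycle'$ as $\mu_{\cycle'}(\payoff) = \frac{1}{|\cycle'|}\sum_{(i,j)\in\cycle'}\payoff_{ij}$, which is a linear functional of $\payoff$. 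Then $\payoff \in \polyh^{\cycle}$ if and only if $\mu_{\cycle}(\payoff) < \mu_{\cycle'}(\payoff)$ for every cycle $\cycle' \neq \cycle$ of $\dgraph$. Since $\dgraph$ is finite it has finitely many cycles, so $\polyh^{\cycle}$ is the intersection of finitely many open homogeneous half-spaces $\{\payoff : (\mu_{\cycle} - \mu_{\cycle'})(\payoff) < 0\}$, hence an open polyhedral cone. (If $\polyh^{\cycle}$ is empty it is vacuously such a cone; the nonempty case is the content.)

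For part (iii), observe that a point $\payoff \notin \Generic$ is precisely a point where there is \emph{no} unique minimum-mean cycle, i.e. where the minimum of the finitely many linear functionals $\mu_{\cycle'}(\payoff)$ is attained at two or more distinct cycles. If $\payoff \notin \Generic$, pick two distinct cycles $\cycle', \cycle''$ both attaining the minimum; then $(\mu_{\cycle'} - \mu_{\cycle''})(\payoff) = 0$. The linear functional $\mu_{\cycle'} - \mu_{\cycle''}$ is not identically zero: two distinct cycles differ in at least one edge, and clearing denominators shows $|\cycle''|\sum_{\cycle'}\payoff_{ij} - |\cycle'|\sum_{\cycle''}\payoff_{ij}$ has a nonzero coefficient on any edge of $\cycle'$ not in $\cycle''$ (or conversely), so its kernel is a genuine hyperplane $\hplane_{\cycle',\cycle''}$. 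Hence
\[
  \R^m \setminus \Generic \ \subseteq\ \bigcup_{\cycle' \neq \cycle''} \hplane_{\cycle',\cycle''},
\]
a finite union of hyperplanes, which therefore has Lebesgue measure zero.

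The only genuinely delicate point is checking that $\mu_{\cycle'} - \mu_{\cycle''}$ is never the zero functional when $\cycle' \neq \cycle''$ — i.e. that $\hplane_{\cycle',\cycle''}$ really is a hyperplane and not all of $\R^m$ — since otherwise the "finite union of hyperplanes" claim could degenerate. This is handled by the edge-coefficient argument above: since the cycles are distinct as edge sets, some edge lies in exactly one of them, and that edge gets coefficient $\pm|\cycle''|$ or $\pm|\cycle'|$, which is nonzero. Everything else is bookkeeping with the finitely many cycles of $\dgraph$. I'd also remark that the argument shows a bit more — the hyperplane arrangement $\{\hplane_{\cycle',\cycle''}\}$ cuts $\R^m$ into open cones on each of which the minimum-mean cycle is constant — but the stated lemma only needs the measure-zero conclusion.
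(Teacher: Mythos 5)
Your proposal is correct and follows essentially the same route as the paper: disjointness from the definition, openness and conicity from the strict linear inequalities comparing mean cycle weights, and the complement covered by the hyperplanes where two distinct cycles have equal mean weight. Your explicit check that $\mu_{\cycle'}-\mu_{\cycle''}$ is a nonzero functional is a welcome detail that the paper only asserts by calling the hyperplanes ``nontrivial.''
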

\begin{proof}
The fact that the sets $\polyh^{\cycle}$ are disjoint follows from the definition. They are open polyhedral cones because they are defined by strict affine inequalities of the form $\frac{1}{\card{\cycle}}\sum_{ij \in \cycle} \payoff_{ij} < \frac{1}{\card{\cycle'}}\sum_{ij \in \cycle'} \payoff_{ij}$, where $\cycle' \neq \cycle$ is a cycle of $\dgraph$. Moreover, if $\payoff \in \R^{m} \setminus \Generic$, then the DMPD with weights $\payoff$ has at least two optimal cycles, so $\payoff$ lies on a nontrivial hyperplane of the form $\{ x \in \R^{m} \colon \frac{1}{\card{\cycle}}\sum_{ij \in \cycle} x_{ij} = \frac{1}{\card{\cycle'}}\sum_{ij \in \cycle'} x_{ij}\}$.
\end{proof}

In particular, \cref{polyhedral_cells_one_player} implies that in our random model, the probability that two different cycles are optimal at the same time is zero. 

It is useful to use the following notation. Given a vector of weights $\payoff \in \R^{m}$ and $(i,j) \in \edges$, we denote by $\payoff_{-ij} \in \R^{m - 1}$ the vector obtained from $\payoff$ by removing the $ij$th coordinate. We also denote by $(x,\payoff_{-ij}) \in \R^{m}$ the vector obtained from $\payoff$ by replacing the $ij$th coordinate with $x$. We further denote by $\gameval(\payoff)$ the value of the DMDP with weights $\payoff$ and by $\gameval^{\cycle}(\payoff)$ the mean weight of the cycle $\cycle$ in the DMDP with weights $\payoff$. We now give the most important lemma for our analysis.

\begin{lemma}\label{one_player_concavity}
For any $\payoff_{-ij}$, the function $\R \ni x \mapsto \gameval(x,\payoff_{-ij}) \in \R$ is continuous, piecewise affine, and has at most $n$ break points.
\end{lemma}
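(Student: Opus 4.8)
The plan is to reason directly about the combinatorial structure of optimal cycles as the free weight $x$ varies. Fix $\payoff_{-ij}$. For any cycle $\cycle$ of $\dgraph$, the mean weight $\gameval^{\cycle}(x,\payoff_{-ij})$ is an affine function of $x$: it equals $\gameval^{\cycle}(\payoff_{-ij})$ (a constant) if the edge $(i,j)$ does not lie on $\cycle$, and it has slope $1/\card{\cycle}$ if $(i,j) \in \cycle$. Since $\gameval(x,\payoff_{-ij}) = \min_{\cycle} \gameval^{\cycle}(x,\payoff_{-ij})$, the value function is a minimum of finitely many affine functions, hence continuous, concave, and piecewise affine. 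The slopes that appear are $0$ and $1/k$ for $k = 1,\dots,n$ (the possible lengths of a cycle through $(i,j)$ in an $n$-vertex graph), and by concavity the slope is nonincreasing in $x$. That already bounds the number of slope changes by $n$, but I want the cleaner bound of $n$ break points stated in the lemma, so I would argue as follows.

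First I would show that for large $x$ the optimal cycle avoids $(i,j)$: once $x$ exceeds $n \cdot \max\{\abs{\payoff_{kl}} : (k,l) \neq (i,j)\}$, every cycle containing $(i,j)$ has mean weight larger than any cycle avoiding $(i,j)$, so the value function is eventually constant (slope $0$). For small (very negative) $x$, the cheapest cycle through $(i,j)$ wins, and among cycles through $(i,j)$ the one minimizing $\bigl(x + \sum_{(k,l)\in\cycle\setminus\{(i,j)\}}\payoff_{kl}\bigr)/\card{\cycle}$ is, for $x \to -\infty$, the \emph{longest} available such cycle (slope $1/n$ if a Hamiltonian cycle through $(i,j)$ exists, and in general $1/k_{\max}$ where $k_{\max}$ is the maximal length of a cycle through $(i,j)$). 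So the value function has slope $1/k_{\max}$ near $-\infty$ and slope $0$ near $+\infty$.

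The key step is then: between a region where the optimal cycle has length $k$ and the adjacent region to the right, the optimal cycle length can only \emph{strictly decrease} (or jump to $0$). Indeed, at a break point $x_0$ two affine pieces $\gameval^{\cycle}$ and $\gameval^{\cycle'}$ cross, with $\cycle$ optimal just left of $x_0$ and $\cycle'$ just right; by concavity $\mathrm{slope}(\gameval^{\cycle'}) < \mathrm{slope}(\gameval^{\cycle})$, i.e.\ $1/\card{\cycle'} < 1/\card{\cycle}$ (interpreting $1/\card{\cdot} = 0$ when the edge is absent), so $\card{\cycle'} > \card{\cycle}$ is impossible — the length strictly drops, from some value in $\{2,\dots,n\}$ down toward $1$ or to ``absent''. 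Hence the sequence of optimal-cycle-lengths encountered left to right is strictly decreasing within $\{n, n-1, \dots, 1\}$ and then possibly becomes ``no edge'', giving at most $n$ distinct slope values and therefore at most $n$ break points. I would note that the naive count gives $n$ directly; one does not need to separate the ``absent'' case since that slope is $0 < 1/k$ for all $k$, consistent with the strictly-decreasing-slope argument.

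The main obstacle is purely bookkeeping: making precise the claim that the optimal cycle length is nonincreasing requires handling the degenerate case where a break point is the crossing of more than two pieces, or where the ``optimal cycle'' is not unique (which happens exactly on the measure-zero hyperplanes of \cref{polyhedral_cells_one_player}, now intersected with the line $\{(x,\payoff_{-ij})\}$, i.e.\ at finitely many points — precisely the break points). At such a point I would just pick any optimal cycle on each side; the slope comparison still goes through by concavity of the min. A secondary subtlety is that I am implicitly using that $\dgraph$ has finitely many cycles and that each vertex has an outgoing edge so that at least one cycle exists; both are given in the setup. No heavy machinery is needed — the whole lemma is an exercise in the concavity of a minimum of affine functions plus the observation that the relevant slopes are reciprocals of cycle lengths bounded by $n$.
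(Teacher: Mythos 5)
Your core argument is exactly the paper's proof: $\gameval(\cdot,\payoff_{-ij})$ is the minimum of the affine functions $\gameval^{\cycle}(\cdot,\payoff_{-ij})$, hence continuous, piecewise affine and concave, the possible slopes lie in $\{1,\tfrac12,\dots,\tfrac1n,0\}$, and since the slope strictly decreases at each break point there are at most $n+1$ affine pieces and at most $n$ break points. Your first paragraph already contains this complete argument; the extra ``cleaner'' elaboration is unnecessary, and it is also where you slip: concavity gives $1/\card{\cycle'} < 1/\card{\cycle}$ for the piece active to the \emph{right} of a break point, which forces $\card{\cycle'} > \card{\cycle}$ --- the optimal cycle through $(i,j)$ gets \emph{longer} as $x$ increases (until the edge drops out and the slope becomes $0$), not shorter, so your claim that ``$\card{\cycle'} > \card{\cycle}$ is impossible'' contradicts the very inequality you derived. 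Likewise, as $x \to -\infty$ the active piece of a minimum of affine functions is the one of \emph{largest} slope, i.e.\ the \emph{shortest} cycle through $(i,j)$, not the longest. Neither reversal affects the bound, since the count only uses that the slope is strictly monotone across break points and takes at most $n+1$ values; I recommend simply deleting the second and third paragraphs and keeping the concavity-plus-slope-count argument, which is the paper's proof.
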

\begin{proof}
We have $\gameval(x,\payoff_{-ij}) = \min_{\cycle} \gameval^{\cycle}(x,\payoff_{-ij})$. Therefore, the function $\gameval(\cdot,\payoff_{-ij})$ is the minimum of some collection of univariate affine functions. In particular, $\gameval(\cdot,\payoff_{-ij})$ is continuous, piecewise affine, and concave. Concavity implies that the slopes of the different affine pieces of this function are decreasing. Since these slopes are in the set $\{1,1/2,\dots,1/n,0\}$, the function has at most $n+1$ affine pieces and at most $n$ break points.
\end{proof}

For $k \in \{1,2,\dots,n\}$, let $X_{ij}^{(k)}(\payoff) \in \R$ denote the $k$th break point of the function $x \mapsto \gameval(x,\payoff_{-ij})$, with the convention that $X_{ij}^{(k)}(\payoff) = 0$ if this function has less than $k$ break points. By a reasoning similar to the MVV isolation lemma, we can now prove that it is unlikely that the chosen weights are close to one of the breakpoints.

\begin{lemma}\label{prob_estimate_one_player}
For any $\alpha > 0$ we have 
\[
\Prob(\exists (i,j) \in \edges, \, \exists k \in [n], \abs{\payoff_{ij} - X_{ij}^{(k)}} \le \alpha) \le 2\alpha nm\phi \, .
\]
\end{lemma}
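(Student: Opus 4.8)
The plan is to fix a single edge $(i,j) \in \edges$ and a single index $k \in [n]$, bound $\Prob(\abs{\payoff_{ij} - X_{ij}^{(k)}} \le \alpha) \le 2\alpha\phi$, and then conclude by a union bound over the $m$ edges and the $n$ choices of $k$.

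The key structural point is that $X_{ij}^{(k)}$ depends only on $\payoff_{-ij}$, not on $\payoff_{ij}$. Indeed, by construction it is the $k$th break point of the univariate function $x \mapsto \gameval(x,\payoff_{-ij})$, which by \Cref{one_player_concavity} is well-defined, piecewise affine, and has at most $n$ break points; this function is entirely determined by the weights $\payoff_{-ij}$. The tie-breaking convention ``$X_{ij}^{(k)} = 0$ if this function has fewer than $k$ break points'' does not spoil this, since whether the function has at least $k$ break points is again a property of $\payoff_{-ij}$ alone. Hence $X_{ij}^{(k)}$ is a (measurable) function of $\payoff_{-ij}$ only.

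Now I would condition on $\payoff_{-ij}$. Because the weights $(\payoff_{ij})$ are independent, conditionally on $\payoff_{-ij}$ the variable $\payoff_{ij}$ still has density $\dens_{ij} \le \phi$, while $X_{ij}^{(k)}$ has become a fixed real number $c = c(\payoff_{-ij})$. Therefore
\[
\Prob\bigl(\abs{\payoff_{ij} - X_{ij}^{(k)}} \le \alpha \bigm| \payoff_{-ij}\bigr) = \int_{c-\alpha}^{c+\alpha} \dens_{ij}(y)\,dy \le 2\alpha\phi \, .
\]
Taking expectation over $\payoff_{-ij}$ gives $\Prob(\abs{\payoff_{ij} - X_{ij}^{(k)}} \le \alpha) \le 2\alpha\phi$, uniformly over $(i,j)$ and $k$. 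Summing over the $m$ edges and the $n$ indices $k \in [n]$ yields the claimed bound $2\alpha n m\phi$.

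There is essentially no obstacle here beyond this bookkeeping: the only points requiring care are that the tie-breaking convention does not introduce a dependence on $\payoff_{ij}$ (it does not, as noted above) and that $\payoff_{ij}$ retains a $\phi$-bounded conditional density given $\payoff_{-ij}$ (which is immediate from independence together with $\dens_{ij} \le \phi$). This is exactly the argument behind the Mulmuley--Vazirani--Vazirani isolation lemma, specialized to a single random coordinate.
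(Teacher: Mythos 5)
Your proposal is correct and follows essentially the same route as the paper: the paper also observes that $X_{ij}^{(k)}$ is a function of $\payoff_{-ij}$ alone and then bounds $\Prob(\abs{\payoff_{ij} - X_{ij}^{(k)}} \le \alpha) \le 2\alpha\phi$ by integrating the density $\dens_{ij} \le \phi$ over an interval of length $2\alpha$ (phrased via Fubini's theorem rather than conditioning, which is the same argument), and concludes with the union bound over the $m$ edges and $n$ indices.
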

\begin{proof}
Note that the random variable $X_{ij}^{(k)}$ depends only on $\payoff_{-ij}$, not on $\payoff_{ij}$. Let $\hat{\dens}(\payoff_{-ij}) = \prod_{(i',j') \in \edges \setminus \{(i,j)\}} \dens_{i'j'}(\payoff_{i',j'})$. By Fubini's theorem we have
\begin{align*}
\Prob(X_{ij}^{(k)} - \alpha \le \payoff_{ij} \le X_{ij}^{(k)} + \alpha) &= \int_{\R^{m-1}}\Bigl( \int_{X_{ij}^{(k)} - \alpha}^{X_{ij}^{(k)} + \alpha} \dens_{ij}(\payoff_{ij}) d \payoff_{ij} \Bigr) \hat{\dens}(\payoff_{-ij}) d \payoff_{-ij} \\
&\le 2\alpha \phi \int_{\R^{m-1}}\hat{\dens}(\payoff_{-ij}) d \payoff_{-ij} = 2\alpha \phi \, .
\end{align*}
The claim follows from the union bound.
\end{proof}

Let $B_1(\payoff, \delta) \coloneqq \{\tilde{\payoff} \in \R^{m} \colon \onenorm{\payoff - \tilde{\payoff}} \le \delta \}$ denote the closed ball around $\payoff$ with radius $\delta$ in the $\ell_1$ norm. We recall that such ball is a polytope. More precisely, $B_1(\payoff, \delta)$ is the convex hull of the points $\{\payoff \pm \delta e_{ij}\}_{ij}$, where $e_{ij}$ denotes the $ij$th vector of the standard basis in $\R^m$. We are now ready to prove the main proposition of this section. 

\begin{proposition}[Robustness of optimal cycle, proof I]\label{robust_cycle_one_player}
Let $\delta \coloneqq 1/(2n^2m\phi)$. If the weights $\payoff$ are picked at random, then with high probability the whole ball $B_1(\payoff, \delta)$ is included in a single polyhedron $\polyh^{\cycle}$. More formally,
\[
\Prob(\exists \cycle, B_1(\payoff, \delta) \subseteq \polyh^{\cycle}) \ge 1 - \frac{1}{n} \, .
\]
\end{proposition}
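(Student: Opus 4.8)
The plan is to exploit convexity. By \cref{polyhedral_cells_one_player} each $\polyh^{\cycle}$ is a (convex) polyhedral cone, and, as recalled just before the statement, $B_1(\payoff,\delta)$ is the convex hull of the $2m$ points $\payoff \pm \delta e_{ij}$. Hence it suffices to exhibit a single cycle $\cycle^{*}$ with $\payoff \pm \delta e_{ij} \in \polyh^{\cycle^{*}}$ for every edge $(i,j)\in\edges$; convexity of $\polyh^{\cycle^{*}}$ then forces $B_1(\payoff,\delta)=\conv\{\payoff\pm\delta e_{ij}\}\subseteq\polyh^{\cycle^{*}}$. The natural candidate for $\cycle^{*}$ is the unique optimal cycle of the unperturbed instance $\payoff$, which exists with probability $1$ because $\Prob(\payoff\notin\Generic)=0$ by \cref{polyhedral_cells_one_player}.

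Next I would introduce the favorable event $E\coloneqq\{\,\abs{\payoff_{ij}-X_{ij}^{(k)}}>\delta \text{ for all }(i,j)\in\edges,\ k\in[n]\,\}$. Applying \cref{prob_estimate_one_player} with $\alpha=\delta$ gives $\Prob(E)\ge 1-2\delta n m\phi = 1-\tfrac1n$ for the stated value $\delta=1/(2n^{2}m\phi)$. Fix an outcome in $E\cap\{\payoff\in\Generic\}$ (a set of probability $\ge 1-\tfrac1n$) and let $\cycle^{*}$ be the unique optimal cycle at $\payoff$. Fix an edge $(i,j)$ and set $g(x)\coloneqq\gameval(x,\payoff_{-ij})$. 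By \cref{one_player_concavity} the function $g$ is concave and piecewise affine with all its break points among $X_{ij}^{(1)},\dots,X_{ij}^{(n)}$; since each of those is at distance strictly greater than $\delta$ from $\payoff_{ij}$, there is some $\delta'>\delta$ such that $g$ has no break point in $(\payoff_{ij}-\delta',\payoff_{ij}+\delta')$, i.e.\ on that interval $g$ coincides with a single affine function.

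The crux is a local-stability claim: on that break-point-free interval $\cycle^{*}$ remains the \emph{unique} optimal cycle. Writing $g=\min_{\cycle}\gameval^{\cycle}(\cdot,\payoff_{-ij})$ as a minimum of globally affine functions, I would argue: (i) since $g\le\gameval^{\cycle^{*}}(\cdot,\payoff_{-ij})$ with equality at $\payoff_{ij}$, the two must have the same slope on the interval, so $g$ agrees with $\gameval^{\cycle^{*}}(\cdot,\payoff_{-ij})$ throughout it; and (ii) if some other cycle $\cycle\neq\cycle^{*}$ satisfied $\gameval^{\cycle}(x_{0},\payoff_{-ij})=g(x_{0})$ at an interior point $x_{0}$, then the two globally affine functions $\gameval^{\cycle}(\cdot,\payoff_{-ij})$ and $\gameval^{\cycle^{*}}(\cdot,\payoff_{-ij})$, one lying above the other on the interval and touching at an interior point, would be identical, contradicting strict optimality of $\cycle^{*}$ at $\payoff_{ij}$. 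Since $\delta'>\delta$, the points $(\payoff_{ij}\pm\delta,\payoff_{-ij})=\payoff\pm\delta e_{ij}$ lie in the interior of that interval, hence in $\polyh^{\cycle^{*}}$; running this over all $(i,j)$ and invoking convexity of $\polyh^{\cycle^{*}}$ finishes the proof.

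I expect step (ii) to be the main obstacle. The estimate \cref{prob_estimate_one_player} controls only the univariate restriction $\gameval(\cdot,\payoff_{-ij})$ in each coordinate direction, so one must check that staying inside an affine piece of this restriction genuinely pins down the combinatorial optimum — the optimal cycle — and not merely the numerical value of the game. Concavity of $g$, together with the fact that each $\gameval^{\cycle}$ is affine \emph{globally} rather than just piecewise, is precisely what makes this work; and the strict inequality in the event $E$ is what lets one place the endpoints $\payoff_{ij}\pm\delta$ in the interior of a break-point-free interval rather than on its boundary.
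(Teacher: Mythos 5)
Your proposal is correct and follows essentially the same route as the paper: apply \cref{prob_estimate_one_player} with $\alpha=\delta$, localize to the affine piece of $x\mapsto\gameval(x,\payoff_{-ij})$ around $\payoff_{ij}$, use global affinity of each $\gameval^{\cycle}$ to show the unique optimal cycle stays strictly optimal on that interval (your step (ii) is exactly the paper's transversal-versus-identical dichotomy), and finish via the vertices $\payoff\pm\delta e_{ij}$ of the $\ell_1$ ball and convexity of $\polyh^{\cycle}$.
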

\begin{proof}
By applying \cref{prob_estimate_one_player} with $\alpha \coloneqq \delta$ we get
\[
\Prob(\forall (i,j) \in \edges, \, \forall k \in [n], \abs{\payoff_{ij} - X_{ij}^{(k)}} > \delta) \ge 1 - \frac{1}{n} \, .
\]
Let $\payoff \in \Generic$ be such that $\abs{\payoff_{ij} - X_{ij}^{(k)}} > \delta$ for all $(i,j)$ and all $k$. Let $\cycle$ denote the cycle such that $\payoff \in \polyh^{\cycle}$ and let $I = I_{ij} \coloneqq \interval[open]{a}{b}$ be the open interval of maximal length such that $\payoff_{ij} \in I$ and $\gameval(\cdot, \payoff_{-ij})$ is affine on $I$. Note that such interval $I$ exists because $\payoff_{ij}$ is not one of the break points of this function. Furthermore, we have $\interval{\payoff_{ij} - \delta}{\payoff_{ij} + \delta} \subset I$. Moreover, by definition, $\gameval(\payoff_{ij}, \payoff_{-ij}) = \gameval^{\cycle}(\payoff_{ij},\payoff_{-ij}) < \gameval^{\cycle'}(\payoff_{ij},\payoff_{-ij})$ for every cycle $\cycle'$ of $\dgraph$ that is different from $\cycle$. By continuity, the equality $\gameval(z, \payoff_{-ij}) = \gameval^{\cycle}(z,\payoff_{-ij})$ holds on some neighborhood of $\payoff_{ij}$. Since $\gameval(\cdot,\payoff_{-ij})$ is piecewise affine and $\gameval^C(\cdot,\payoff_{-ij})$ is affine, this implies that the equality $\gameval(z, \payoff_{-ij}) = \gameval^{\cycle}(z,\payoff_{-ij})$ holds for every $z \in I$. Furthermore, we have $\gameval^{\cycle}(z,\payoff_{-ij}) < \gameval^{\cycle'}(z,\payoff_{-ij})$ for every $\cycle' \neq \cycle$ and $z \in I$. Indeed, if $\gameval^{\cycle}(z,\payoff_{-ij}) = \gameval^{\cycle'}(z,\payoff_{-ij})$ for some $z \in I$, then the two lines $\{\bigl(x, \gameval^{\cycle}(x,\payoff_{-ij})\bigr) \colon x \in \R\}$, $\{\bigl(x, \gameval^{\cycle'}(x,\payoff_{-ij})\bigr) \colon x \in \R\}$ either intersect transversally at $z$ or are identical. The former possibility cannot hold because it would imply that $\gameval^{\cycle'}(x,\payoff_{-ij}) < \gameval^{\cycle}(x,\payoff_{-ij})= \gameval(x, \payoff_{-ij})$ for some $x \in I$ close to $z$. The latter possibility cannot hold because $\gameval^{\cycle'}(\payoff_{ij},\payoff_{-ij}) \neq \gameval^{\cycle}(\payoff_{ij},\payoff_{-ij})$. Hence, we get that $(y,\payoff_{-ij}) \in \polyh^{\cycle}$ for every $y \in \interval{\payoff_{ij} - \delta}{\payoff_{ij} + \delta}$. Since $(i,j)$ was arbitrary, every point of the form $\payoff \pm \delta e_{ij}$ belongs to $\polyh^{\cycle}$. By convexity of $\polyh^{\cycle}$, the convex hull of these points is included in $\polyh^{\cycle}$. This convex hull is exactly $B_1(\payoff, \delta)$, which finishes the proof.
\end{proof}

\Cref{robust_cycle_one_player} shows the robustness property of the optimal cycle in the randomized setting: if we pick the weights at random, then we can perturb them to $\frac{1}{\poly(n,\phi)}$ and the optimal cycle does not change. The main technical result of this paper shows that a similar property is still true in the case of two-player games. One could try to naively generalize the proof above to the two-player case by replacing the ``optimal cycle'' $\cycle$ with an ``equilibrium cycle'' $\cycle$, where the equilibrium cycle is the cycle that is reached by the players when they use their optimal policies. The first (and less important) issue with this approach is that the sets $\polyh^{\cycle}$ are no longer polyhedral cones. The second (and major) issue is that the value function $x \mapsto \gameval(x,\payoff_{-ij})$ is no longer convex or concave, and so we cannot prove an analogue of \cref{one_player_concavity}. Indeed, in the case of two-player mean-payoff games, \cref{ex:non_convex} gives a two-player game where the function $x \mapsto \gameval(x,\payoff_{-ij})$ is non-convex, and \cref{ex:exponential} shows that the number of break points of this function can be exponential.

\medskip
To overcome this issue let us present a second proof of the robustness result in the one-player case. This proof is may be less intuitive, but its advantage is that it does not involve \cref{one_player_concavity}. The proof relies on the following quantity. Given an edge $(i,j) \in \edges$, we denote 
\begin{align*}
Y_{ij}(\payoff) \coloneqq \inf\{x \in \R \colon &\text{there exists an optimal cycle in the DMPD with weights } \\ &\text{$(x,\payoff_{-ij})$  that does not contain the edge $(i,j)$} \} \, .
\end{align*}

We note that $Y_{ij} = -\infty$ if and only if the edge $(i,j)$ does not belong to the strongly connected component of $\dgraph$. Indeed, an edge that does not belong to this component can never be a part of the optimal cycle, and an edge $(i,j)$ that belongs to this component is a part of at least one cycle. For a sufficiently small $x$, the mean weight of this cycle is smaller that the mean weight of any cycle that does not contain $(i,j)$, so all optimal cycles contain $(i,j)$ and therefore $x < Y_{ij}$.  Analogously, for a large $x$ the mean weight of any cycle that contains $(i,j)$ is higher than the mean weight of any cycle that does not contain $(i,j)$. Therefore, $x > Y_{ij}$ if at least one of the cycles does not contain $(i,j)$. Hence, we have $Y_{ij} = +\infty$ if and only if all cycles of the graph contain the edge $(i,j)$. If $Y_{ij} \in \R$, then the number $Y_{ij}$ splits the real line into two parts: for all $x > Y_{ij}$, there is an optimal cycle of DMDP with payoff $(x,\payoff_{-ij})$ that does not involve the edge $(i,j)$. In particular, a cycle that has the minimal mean weight among the cycles that do not contain $(i,j)$ is optimal for all $x > Y_{ij}$. For all $x < Y_{ij}$, every optimal cycle contains the edge $(i,j)$. The optimal cycle can change as $x$ decreases. In essence, \cref{one_player_concavity} shows that the optimal cycle can change at most $n$ times, but the proof below does not use this property.

\begin{lemma}\label{prob_estimate_one_player_second}
For any $\alpha > 0$ we have
\[
\Prob(\exists ij, Y_{ij} \le \payoff_{ij} \le Y_{ij} + \alpha) \le \alpha m \phi \, .
\]
\end{lemma}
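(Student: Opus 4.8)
The plan is to mirror the proof of \cref{prob_estimate_one_player}, only now with a single quantity $Y_{ij}$ in place of the $n$ breakpoints $X_{ij}^{(k)}$, and with a one-sided interval $\interval{Y_{ij}}{Y_{ij}+\alpha}$ in place of the symmetric interval. The crucial structural observation is that $Y_{ij}$, as defined, depends only on the coordinates $\payoff_{-ij}$ and not on $\payoff_{ij}$: indeed, $Y_{ij}(\payoff)$ is an infimum over $x \in \R$ of a property of the weight vector $(x,\payoff_{-ij})$, in which the original value $\payoff_{ij}$ plays no role. Hence, conditionally on $\payoff_{-ij}$, the number $Y_{ij}$ is a constant, while $\payoff_{ij}$ remains an absolutely continuous variable with density bounded by $\phi$.

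Concretely, I would fix an edge $(i,j)$, write $\hat{\dens}(\payoff_{-ij}) = \prod_{(i',j') \neq (i,j)} \dens_{i'j'}(\payoff_{i'j'})$ for the joint density of the other coordinates, and apply Fubini's theorem:
\[
\Prob\bigl(Y_{ij} \le \payoff_{ij} \le Y_{ij} + \alpha\bigr)
= \int_{\R^{m-1}} \Bigl( \int_{Y_{ij}}^{Y_{ij}+\alpha} \dens_{ij}(\payoff_{ij})\, d\payoff_{ij} \Bigr)\, \hat{\dens}(\payoff_{-ij})\, d\payoff_{-ij}
\le \alpha\phi \int_{\R^{m-1}} \hat{\dens}(\payoff_{-ij})\, d\payoff_{-ij} = \alpha\phi \, ,
\]
where the inner integral is over an interval of length $\alpha$ and so is bounded by $\alpha\phi$ using $\dens_{ij} \le \phi$. (If $Y_{ij} \in \{-\infty,+\infty\}$ the event is empty, contributing $0$, so this causes no difficulty.) Summing over all $m$ edges via the union bound yields $\Prob(\exists ij,\, Y_{ij} \le \payoff_{ij} \le Y_{ij}+\alpha) \le \alpha m \phi$, as claimed.

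There is no real obstacle here: the only thing to be careful about is to justify cleanly that $Y_{ij}$ is measurable in $\payoff_{-ij}$ and independent of $\payoff_{ij}$, which is immediate from the definition, and to handle the degenerate cases $Y_{ij} = \pm\infty$ (where the probed event is vacuous). The point of stating this lemma separately is precisely that, unlike \cref{prob_estimate_one_player}, it does not invoke the concavity result \cref{one_player_concavity} and therefore will have a chance of generalizing to the two-player setting, where the value function need not be concave.
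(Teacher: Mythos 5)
Your proof is correct and follows essentially the same route as the paper: observe that $Y_{ij}$ is a function of $\payoff_{-ij}$ alone, dispose of the cases $Y_{ij}=\pm\infty$ where the event is empty, bound the conditional probability by $\alpha\phi$ via Fubini and the density bound, and finish with a union bound over the $m$ edges. No gaps.
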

\begin{proof}
Fix an edge $(i,j)$ and note that $Y_{ij}$ depends only on $\payoff_{-ij}$ and not on $\payoff_{ij}$. If $(i,j)$ is not in the strongly connected component, then $Y_{ij} = -\infty$ and $\Prob(Y_{ij} \le \payoff_{ij} \le Y_{ij} + \alpha) = 0$. Likewise, if all cycles of the graph contain the edge $(i,j)$, then $Y_{ij} = +\infty$ and $\Prob(Y_{ij} \le \payoff_{ij} \le Y_{ij} + \alpha) = 0$. Otherwise, as discussed above, we have $Y_{ij} \in \R$ and the Fubini theorem gives $\Prob(Y_{ij} \le \payoff_{ij} \le Y_{ij} + \alpha) \le \alpha \phi$ just as in the proof of \cref{prob_estimate_one_player}. The union bound finishes the proof.
\end{proof}

In the second proof of robustness, we replace the ball in the $\ell_{1}$ norm by a ball in the $\ell_{\infty}$ norm, $B_{\infty}(\payoff, \delta) \coloneqq \{\tilde{\payoff} \in \R^{m} \colon \supnorm{\payoff - \tilde{\payoff}} \le \delta \}$.

\begin{proposition}[Robustness of optimal cycle, proof II]\label{robust_cycle_one_player_second}
Let $\delta \coloneqq 1/(3n^2m\phi)$. If the weights $\payoff$ are picked at random, then with high probability the whole ball $B_{\infty}(\payoff, \delta)$ is included in a single polyhedron $\polyh^{\cycle}$. More formally,
\[
\Prob(\exists \cycle, B_{\infty}(\payoff, \delta) \subseteq \polyh^{\cycle}) \ge 1 - \frac{1}{n} \, .
\]
\end{proposition}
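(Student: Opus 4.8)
The plan is to reduce the robustness statement to a lower bound on the \emph{cycle gap}. For $\payoff \in \Generic$ let $\cycle$ denote the unique optimal cycle and set $\mathrm{gap}(\payoff) \coloneqq \bigl(\min_{\cycle' \neq \cycle}\gameval^{\cycle'}(\payoff)\bigr) - \gameval(\payoff) > 0$. If $\tilde\payoff \in B_{\infty}(\payoff,\delta)$, then for every cycle $\cycle'$ we have $\abs{\gameval^{\cycle'}(\tilde\payoff) - \gameval^{\cycle'}(\payoff)} \le \delta$, since $\gameval^{\cycle'}$ is an average of edge weights each perturbed by at most $\delta$. Hence, if $\mathrm{gap}(\payoff) > 2\delta$, then $\gameval^{\cycle}(\tilde\payoff) < \gameval^{\cycle'}(\tilde\payoff)$ for every $\cycle' \neq \cycle$, so $\cycle$ remains the unique optimal cycle for $\tilde\payoff$; as $\tilde\payoff$ ranges over $B_{\infty}(\payoff,\delta)$ this gives $B_{\infty}(\payoff,\delta) \subseteq \polyh^{\cycle}$. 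So it is enough to prove that $\payoff \in \Generic$ and $\mathrm{gap}(\payoff) > 2\delta$ with probability at least $1 - \tfrac1n$.

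The crux is a deterministic inequality relating $\mathrm{gap}(\payoff)$ to the thresholds $Y_{ij}$. Fix $\payoff \in \Generic$, let $\cycle'$ attain $\min_{\cycle'' \neq \cycle}\gameval^{\cycle''}(\payoff)$, and choose an edge $(i,j) \in \cycle' \setminus \cycle$ --- this set is nonempty, since two distinct simple cycles cannot be nested. I claim that $0 \le \payoff_{ij} - Y_{ij} \le \card{\cycle'}\cdot \mathrm{gap}(\payoff) \le n\,\mathrm{gap}(\payoff)$. The left inequality holds because $\cycle$ is an optimal cycle of the DMDP with weights $\payoff$ that avoids $(i,j)$, so $Y_{ij} \le \payoff_{ij}$ (and $Y_{ij} \in \R$: the edge $(i,j)$ lies on the cycle $\cycle'$, hence in the strongly connected component, while $\cycle$ avoids it). For the right inequality, put $x_0 \coloneqq \payoff_{ij} - \card{\cycle'}\,\mathrm{gap}(\payoff)$ and take $x < x_0$: every cycle $\cycle''$ avoiding $(i,j)$ has $\gameval^{\cycle''}(x,\payoff_{-ij}) = \gameval^{\cycle''}(\payoff) \ge \gameval(\payoff)$ by optimality of $\cycle$, whereas $\gameval^{\cycle'}(x,\payoff_{-ij}) = \gameval^{\cycle'}(\payoff) + (x - \payoff_{ij})/\card{\cycle'} = \gameval(\payoff) + \mathrm{gap}(\payoff) + (x-\payoff_{ij})/\card{\cycle'} < \gameval(\payoff)$. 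Thus $\cycle'$ strictly beats every cycle avoiding $(i,j)$, so every optimal cycle of the DMDP with weights $(x,\payoff_{-ij})$ uses $(i,j)$; therefore $Y_{ij} \ge x_0$, which is the claim.

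It remains to combine this with \cref{prob_estimate_one_player_second}, applied with $\alpha \coloneqq 2n\delta$. Since $\alpha m\phi = 2n\delta m\phi = \tfrac{2}{3n}$, with probability at least $1 - \tfrac{2}{3n} \ge 1 - \tfrac1n$ it holds, for every edge $(i,j)$, that $Y_{ij} \le \payoff_{ij} \le Y_{ij}+\alpha$ \emph{fails}; and $\payoff \in \Generic$ almost surely by \cref{polyhedral_cells_one_player}. Work on the intersection of these events. For the edge $(i,j) \in \cycle'\setminus\cycle$ from the previous paragraph we have $Y_{ij} \le \payoff_{ij}$, so the only way for the bad event to fail is $\payoff_{ij} > Y_{ij}+\alpha$, i.e.\ $\payoff_{ij} - Y_{ij} > \alpha = 2n\delta$; together with the deterministic inequality, $n\,\mathrm{gap}(\payoff) \ge \payoff_{ij}-Y_{ij} > 2n\delta$, hence $\mathrm{gap}(\payoff) > 2\delta$, and the first paragraph concludes. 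The only delicate point is the deterministic inequality of the second paragraph: one must use the monotonicity of $x \mapsto \gameval^{\cycle'}(x,\payoff_{-ij})$, the $x$-independence of the mean weights of cycles avoiding $(i,j)$, and the optimality of $\cycle$, to conclude that just below $\payoff_{ij}-\card{\cycle'}\,\mathrm{gap}(\payoff)$ every optimal cycle is forced through $(i,j)$. The probabilistic input is entirely contained in \cref{prob_estimate_one_player_second} (Fubini plus a union bound), and the passage from the gap bound to robustness is the elementary averaging estimate for $\ell_\infty$-perturbations of cycle means.
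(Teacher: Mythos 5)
Your proof is correct and follows essentially the same route as the paper's: the same thresholds $Y_{ij}$ and \cref{prob_estimate_one_player_second} with $\alpha = \Theta(1/(nm\phi))$, an edge of a competing cycle outside $\cycle$ to convert $\payoff_{ij} - Y_{ij} > \alpha$ into a mean-weight gap of order $\alpha/n$, and the $\ell_\infty$-stability of cycle means to conclude. The only cosmetic difference is that you lower-bound $Y_{ij} \ge \payoff_{ij} - \card{\cycle'}\,\mathrm{gap}(\payoff)$ directly for the second-best cycle, whereas the paper perturbs $\payoff_{ij}$ downward by $\alpha$ and invokes stability for every competing cycle; both yield the same estimate.
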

\begin{proof}
By applying \cref{prob_estimate_one_player_second} with $\alpha \coloneqq 1/(nm\phi)$ we get
\[
\Prob(\forall (i,j) \in \edges, \payoff_{ij} < Y_{ij} \lor \payoff_{ij} > Y_{ij} + \alpha) \ge 1 - \frac{1}{n} \, .
\]
Let $\payoff \in \Generic$ be such that this event holds. Let $\cycle$ be the cycle such that $\payoff \in \polyh^{\cycle}$ and $\cycle' \neq \cycle$ be any other cycle. We will prove that 
\[
(\text{mean weight of $\cycle$}) \le (\text{mean weight of $\cycle'$}) - \frac{\alpha}{n} \, .
\]
To do so, let $(i,j)$ be an edge that is in $\cycle'$ but not in $\cycle$. Then, by definition we have $\payoff_{ij} \ge Y_{ij}$, so that $\payoff_{ij} > Y_{ij} + \alpha$. In particular, if we replace $\payoff_{ij}$ by $\tilde{\payoff}_{ij} \coloneqq \payoff_{ij} - \alpha$, then $\cycle$ is still an optimal cycle. Indeed, we have $\tilde{\payoff}_{ij} > Y_{ij}$, so there exists an optimal cycle that does not contain the edge $(i,j)$ in the graph with modified weight. In particular, the mean weight of this cycle is the same as in the graph with original weights. Therefore, the same cycle $C$ is optimal with the modified weights. The optimality of $\cycle$ gives
\begin{align*}
  (\text{mean weight of $\cycle$ with weights $r$})
  & = (\text{mean weight of $\cycle$ with weights $\tilde r$}) \\
  & \le (\text{mean weight of $\cycle'$ with weights $\tilde r$})\\
  & = (\text{mean weight of $\cycle'$ with weights $r$}) - \frac{\alpha}{|C'|}\\
  & \le (\text{mean weight of $\cycle'$ with weights $r$}) - \frac{\alpha}{n} \, . %
\end{align*}
To finish the proof, note that if $\tilde{\payoff} \in \R^m$ is such that $\supnorm{\tilde{\payoff} - \payoff} \le \delta = \frac{\alpha}{3n}$, then the mean weight of any cycle in the graph with weights $\tilde{\payoff}$ differs by at most $\frac{\alpha}{3n}$ from its mean weight in the graph with weights $\payoff$. Hence $\tilde{\payoff} \in \polyh^{\cycle}$.
\end{proof}


\section{Deterministic mean-payoff games}\label{sec:det_mpg}

\subsection{Definition of mean-payoff games}
A \emph{(deterministic, two-player) mean-payoff game (MPG)} generalizes the notion of the deterministic Markov Decision Process discussed in \cref{sec:one_player}. It is a game played on directed weighted graph $\dgraph = ([n], \edges, \payoff)$ in which the vertices are split into two disjoint sets, $[n] = \Maxvertices \dunion \Minvertices$. We say that $\Maxvertices$ is the set of vertices controlled by \emph{player Max} and $\Minvertices$ is the set of vertices controlled by \emph{player Min}. One of these sets can be empty (in this case, we get a one-player game, i.e., a DMDP). Furthermore, we denote by $m \coloneqq \card{\edges}$ the number of edges of the graph. The game is played as follows. At the beginning of the game, a token is placed on one of the vertices $i_0 \in [n]$. Subsequently, the player controlling that vertex chooses an outgoing edge $(i_0, i_1) \in \edges$ and the token is moved to $i_1$. Then, the player controlling $i_1$ chooses the next move of the token, and this procedure is repeated ad infinitum. The final payoff received by player Max is defined as
\[
\liminf_{N \to \infty} \frac{1}{N}(\payoff_{i_0i_1} + \payoff_{i_1i_2} + \dots + \payoff_{i_{N-1}i_N}) \, .
\]
The objective of player Max is to maximize the payoff, while the objective of player Min is to minimize the payoff. This means that the game is of zero sum. Moreover, the players try to optimize the average payoff of the game in the long run, which justifies the name ``mean-payoff game''. The game is well-defined provided that each player can always make the next move, i.e., that the graph satisfies the following assumption.

\begin{assumption}
Every vertex of $\dgraph$ has at least one outgoing edge.
\end{assumption}

In the remaining part of the paper, we always assume that $\dgraph$ satisfies the assumption above. We further assume that $\dgraph$ has no multiple edges, but it can have loops. The weights $\payoff \in \R^m$ of the graph are often called \emph{payoffs} because they correspond to immediate payoffs received by player Max in the game and the final payoff of player Max is often called the \emph{reward}. However, since in this work we study what happens when $\payoff$ changes or is picked at random, for the sake of clarity we prefer to use a more distinct terminology: we refer to $\payoff$ as the \emph{weights} of the underlying graph, and we reserve the word ``payoff'' for the final outcome of the game. 

There are many ways to play a mean-payoff game, but the most relevant to us is to consider the situation in which players use policies. A \emph{policy of player Max} is a function $\sigma \colon  \Maxvertices \to [n]$ that associates to a given vertex controlled by Max a possible transition, i.e., satisfies $(i,\sigma(i)) \in \edges$ for all $i \in \Maxvertices$. Analogously, a \emph{policy of player Min} is a function $\tau \colon \Minvertices \to [n]$ such that $(i,\tau(i)) \in \edges$ for all $i \in \Minvertices$. Policies are \emph{memoryless} strategies because they do not depend on the history of the game and are \emph{stationary} because they only depend on the current state of the game. If both players play according to policies $(\sigma,\tau)$, then the token reaches some cycle $(i_k, i_{k+1}, \dots, i_{k+l-1})$ of the graph and stays there forever. In particular, the resulting payoff of Max is given by
\[
g_{i_{0}}(\sigma,\tau) = \frac{1}{l}(\payoff_{i_ki_{k+1}} + \dots + \payoff_{i_{k+l-1}i_k}) \, .
\]
In other words, the payoff is the mean weight of the final cycle reached by the token. It turns out that mean-payoff games have a value and a pair of optimal policies.

\begin{theorem}[\cite{liggett_lippman,ehrenfeucht_mycielski}]\label{th:value_mpg}
For any mean-payoff game there exists a \emph{unique} vector $\gameval \in \R^n$ and a (not necessarily unique) pair of policies $(\sigma^*,\tau^*)$ such that for every pair of policies $(\sigma,\tau)$ and every initial state $i \in [n]$ we have
\begin{equation}\label{eq:optimal_policies}
g_i(\sigma,\tau^*) \le \gameval_i \le g_i(\sigma^*,\tau) \, .
\end{equation}
\end{theorem}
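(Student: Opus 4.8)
The plan is to deduce the theorem from the (easier) discounted game, following the classical Shapley / Ehrenfeucht--Mycielski route. For each $\disc \in (0,1)$, introduce the discounted Shapley operator $F_\disc \colon \R^n \to \R^n$ with $(F_\disc x)_i = \max_{(i,j) \in \edges}(\payoff_{ij} + \disc x_j)$ for $i \in \Maxvertices$ and $(F_\disc x)_i = \min_{(i,j) \in \edges}(\payoff_{ij} + \disc x_j)$ for $i \in \Minvertices$. Its components are $\max$/$\min$ of the $\disc$-contractions $x \mapsto \payoff_{ij} + \disc x_j$, so $F_\disc$ is a $\disc$-contraction in $\supnorm{\cdot}$ and has a unique fixed point $x^{(\disc)}$; let $\sigma_\disc$ (resp.\ $\tau_\disc$) be a policy attaining the $\max$ (resp.\ $\min$) at $x^{(\disc)}$. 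A telescoping argument then shows that $(\sigma_\disc, \tau_\disc)$ is a saddle point, among policies, for the discounted payoff $g^{(\disc)}_i(\sigma,\tau) \coloneqq \sum_{t \ge 0}\disc^t \payoff_{i_t i_{t+1}}$: for any Max policy $\sigma$, plugging $\sigma$ and $\tau_\disc$ into the fixed-point relations gives $x^{(\disc)}_{i_t} \ge \payoff_{i_t i_{t+1}} + \disc\, x^{(\disc)}_{i_{t+1}}$ at every step of the resulting play $i = i_0, i_1, i_2, \dots$, whence $x^{(\disc)}_i \ge \sum_{t=0}^{N-1}\disc^t \payoff_{i_t i_{t+1}} + \disc^N x^{(\disc)}_{i_N}$; letting $N \to \infty$ (the last term vanishes) yields $g^{(\disc)}_i(\sigma,\tau_\disc) \le x^{(\disc)}_i$, and symmetrically $x^{(\disc)}_i \le g^{(\disc)}_i(\sigma_\disc,\tau)$, with $x^{(\disc)}_i = g^{(\disc)}_i(\sigma_\disc,\tau_\disc)$.

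Next --- and here finiteness is essential --- fix a sequence $\disc_k \uparrow 1$ and use that there are only finitely many pairs of policies to find one pair $(\sigma^*,\tau^*)$ that is a discounted saddle point at every $\disc_k$. For any fixed pair $(\sigma,\tau)$, every vertex of $\dgraph^{\sigma,\tau}$ has out-degree one, so $g^{(\disc)}(\sigma,\tau) = (I - \disc P^{\sigma,\tau})^{-1} \payoff^{\sigma,\tau}$ is a vector of rational functions of $\disc$, well defined on $[0,1)$. Each of the finitely many inequalities $g^{(\disc)}_i(\sigma,\tau^*) \le g^{(\disc)}_i(\sigma^*,\tau^*) \le g^{(\disc)}_i(\sigma^*,\tau)$ is thus an inequality between rational functions that holds at all $\disc_k$; since a rational function changes sign only finitely often, each holds on a whole interval $(\disc_0, 1)$. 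Consequently $(\sigma^*,\tau^*)$ is a saddle point among policies for all discount factors sufficiently close to $1$.

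Finally, pass to the mean-payoff limit. Set $\gameval_i \coloneqq g_i(\sigma^*,\tau^*)$, the mean weight of the cycle reached from $i$ in $\dgraph^{\sigma^*,\tau^*}$. For any policies $\sigma,\tau$ the play under $(\sigma,\tau)$ from $i$ is eventually periodic, and a direct computation gives $\lim_{\disc \to 1}(1-\disc)\, g^{(\disc)}_i(\sigma,\tau) = g_i(\sigma,\tau)$ --- the transient prefix contributes $O(1-\disc)$ and the cyclic part, a geometric series of period $\ell$, tends to the cycle mean because $\lim_{\disc \to 1}\frac{1-\disc}{1-\disc^\ell} = \frac 1 \ell$. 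Multiplying the saddle-point inequalities of the previous step by $(1-\disc)$ and letting $\disc \to 1$ gives exactly $g_i(\sigma,\tau^*) \le \gameval_i \le g_i(\sigma^*,\tau)$, i.e.\ \eqref{eq:optimal_policies}. Uniqueness of $\gameval$ is then immediate: any valid value vector must satisfy $\gameval_i = g_i(\sigma^*,\tau^*)$ (take $\sigma = \sigma^*$, then $\tau = \tau^*$), and if $\gameval,\gameval'$ are valid with saddle pairs $(\sigma^*,\tau^*),(\sigma',\tau')$ then $\gameval_i \le g_i(\sigma^*,\tau') \le \gameval'_i$ and symmetrically, so $\gameval = \gameval'$.

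The main obstacle is the extraction in the second step: the contraction and telescoping arguments are routine, but it is the combination of \emph{finitely many policies} with the fact that each $\disc \mapsto g^{(\disc)}(\sigma,\tau)$ is a rational, hence sign-rigid, function that lets a single pair $(\sigma^*,\tau^*)$ serve simultaneously for all $\disc$ near $1$, and therefore in the mean-payoff limit; without this one would only get, for each $\disc$, \emph{some} positional saddle. An alternative proof that avoids discounting is the Ehrenfeucht--Mycielski exchange argument (induction on $\sum_i (\deg^+ i - 1)$, changing one out-edge at a time), but it is combinatorially heavier and less in keeping with the discounted-game machinery used later in the paper.
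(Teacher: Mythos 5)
Your proof is correct. Note that the paper itself offers no proof of this statement: it is quoted as a classical result with citations to Liggett--Lippman and Ehrenfeucht--Mycielski, and the discounted-game route you follow (contraction fixed point, positional saddle points for each $\disc$, pigeonhole plus rationality of $\disc \mapsto (I-\disc P^{\sigma,\tau})^{-1}\payoff^{\sigma,\tau}$ to get a single pair optimal for all $\disc$ near $1$, then the $(1-\disc)$-scaled limit) is exactly the Liggett--Lippman argument, and it meshes with the machinery the paper does set out later (\cref{th:shapley,th:liggett_lippman,th:kohlberg}). One wording nit: pigeonhole over the finitely many policy pairs along $\disc_k \uparrow 1$ only gives a pair that is a discounted saddle point at \emph{infinitely many} $\disc_k$, so you should say ``after passing to a subsequence''; the rest of your argument only needs infinitely many points accumulating at $1$, so nothing breaks. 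Also observe that your argument proves the theorem exactly as stated (optimality within the class of policy pairs); the stronger assertion mentioned in the paper's surrounding discussion---optimality against arbitrary history-dependent strategies---would need an extra argument, but it is not part of the statement you were asked to prove.
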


\begin{definition}
  The unique vector $\gameval$ given above is called the \emph{value} of the game. The $i$th coordinate of $\gameval$ is called the \emph{value of state $i$}.
  Any policies $(\sigma^*,\tau^*)$ satisfying \cref{eq:optimal_policies} for all $i,\sigma,\tau$ are called \emph{optimal}. We will call \emph{equilibrium cycle} any cycle which is reached by a pair of optimal policies.
\end{definition}

We should emphasize that a mean-payoff game can have more than one pair of optimal policies, with different equilibrium cycles.

An intuitive interpretation of optimality is as follows: if player Max uses an optimal policy $\sigma^*$, then they are guaranteed to receive at least $\gameval_i$, no matter what Min does. Analogously, if Min uses an optimal policy $\tau^*$, then they are guaranteed that the final payoff is at most $\gameval_i$, no matter what Max does. The theorem above can actually be stated in a stronger way: optimal policies are optimal against \emph{any} possible strategy of the opponent. In other words, by using $\sigma^*$, Max is guaranteed a payoff at least $\gameval_i$ even if Min uses an arbitrarily complicated strategy that depends on the history of the game. See \cite{payoffchapter} for a general introduction to mean-payoff games. 

\begin{remark}
In this paper, we often work with games in which all states have the same value, so that $\gameval = (\gameval_1, \dots, \gameval_1)$ for some $\gameval_1 \in \R$. In this case, it is convenient to refer to $\gameval_1$ as the value of the whole game, even though the value is formally defined as a vector and not as a real number. We follow this convention throughout the paper.
\end{remark}

We can now try to generalize our approach from \cref{sec:one_player} to two-player games. \cref{ex:non_convex,ex:exponential} below will show that \cref{one_player_concavity} does not hold in the two-player setting, and hence the proof of \cref{robust_cycle_one_player} cannot be repeated. So we would like to generalize the proof of \cref{robust_cycle_one_player_second}. However, in order to do so, we will need to keep in mind the numerous differences between the one-player DMDPs and two-player MPGs, which we illustrate in \cref{ex:emerging_policies,ex:unstable_policies_good_approx,ex:non_convex_again,ex:unstable_policies_again}.

\subsection{A two-player mean-payoff game with exponential number of breakpoints}

\begin{figure}[t]
\centering
\begin{tikzpicture}[scale=0.75,>=stealth',max/.style={draw,rectangle,minimum size=0.5cm},min/.style={draw,circle,minimum size=0.5cm},av/.style={draw, circle,fill, inner sep = 0pt,minimum size = 0.2cm}]

\node[min] (min1) at (-3, 0) {$1$};
\node[min] (min2) at (3, 0) {$2$};
\draw[->] (min1) to node[above, font=\small]{$x$} (min2);

\node[min] (min3) at (2,2) {$3$};
\node[min] (min4) at (-2,2) {$4$};

\draw[->] (min2) to (min3);
\draw[->] (min3) to (min4);
\draw[->] (min4) to (min1);

\node[max] (max1) at (0,-2) {$5$};
\draw[->] (min2) to (max1);
\draw[->] (max1)[out=-20,in=-110] to node[right=1ex, font=\small]{$2$} (min2);
\draw[->] (max1) to node[below left, font=\small]{$-3$} (min1);

\end{tikzpicture}
\caption{Example of a mean payoff game with non-convex value function. Nodes controlled by Min are depicted by circles and the node controlled by Max is depicted by a square. Edges without numbers have weights $0$.} \label{fig:nonconvex_value}
\end{figure}
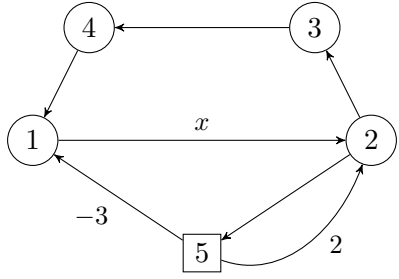

\begin{example}\label{ex:non_convex}
Consider the two-player game depicted in \cref{fig:nonconvex_value}. One can check that the value function of this game is given by
\[
\gameval(x) = \min\bigl\{\frac{x}{4}, \max\{1, \frac{x}{3} - 1\} \bigr\} \, .
\]
In particular, this function is neither concave nor convex. Furthermore, the cycle going through nodes $\{1,2,3,4\}$ is the unique equilibrium cycle for $x < 4$ and $x > 12$, but not for the intermediate values. In particular, the set of all $x$ such that this cycle is an equilibrium cycle is not convex.
\end{example}

\begin{figure}[t]
\centering 
\begin{tikzpicture}[scale=0.75,>=stealth',max/.style={draw,rectangle,minimum size=0.6cm},min/.style={draw,circle,minimum size=0.6cm},av/.style={draw, circle,fill, inner sep = 0pt,minimum size = 0.2cm}]

\node[min] (leftmost) at (-7, 1){$2n$};

\node[min] (leftmin) at (-5, 2){};
\node[max] (leftmax) at (-5, 0){};

\node[min] (min8) at (-3, 2){$8$};
\node[max] (max9) at (-3, 0){$9$};

\node[min] (min6) at (-1, 2){$6$};
\node[max] (max7) at (-1, 0){$7$};

\node[min] (min4) at (1, 2){$4$};
\node[max] (max5) at (1, 0){$5$};

\node[min] (min2) at (3, 2){$2$};
\node[max] (max3) at (3, 0){$3$};

\node[max] (rightmost) at (5, 1){$1$};

\draw[->] (rightmost) to[out =-40, in = 220] (leftmost);
\draw[->] (max3) to (rightmost);
\draw[->] (min2) to (rightmost);

\draw[->] (leftmost) to node[below left,font=\tiny]{1} (leftmax);
\draw[->] (leftmost) to node[above left,font=\tiny]{1} (leftmin);

\draw[->] (max5) to node[below,font=\tiny]{1} (max3);
\draw[->] (max5) to node[above,at start,font=\tiny]{1} (min2);

\draw[->] (min4) to node[above,font=\tiny]{$x$} (min2);
\draw[->] (min4) to node[below,at start,font=\tiny]{2} (max3);

\draw[->] (max7) to node[below,font=\tiny]{$1-\frac{1}{2}$} (max5);
\draw[->] (max7) to node[above,at start,xshift=-7pt,font=\tiny]{$1-\frac{1}{2}$} (min4);

\draw[->] (min6) to node[above,font=\tiny]{$1$} (min4);
\draw[->] (min6) to node[below,at start,font=\tiny]{1} (max5);

\draw[->] (max9) to node[below,font=\tiny]{$1-\frac{1}{4}$} (max7);
\draw[->] (max9) to node[above,at start,xshift=-7pt,font=\tiny]{$1-\frac{1}{4}$} (min6);

\draw[->] (min8) to node[above,font=\tiny]{$1$} (min6);
\draw[->] (min8) to node[below,at start,font=\tiny]{1} (max7);

\node at (-4,2) {\ldots};
\node at (-4,0) {\ldots};

\end{tikzpicture}
\caption{Example of a mean payoff game in which the optimal policies change exponentially many times.} \label{fig:exponential}
\end{figure}
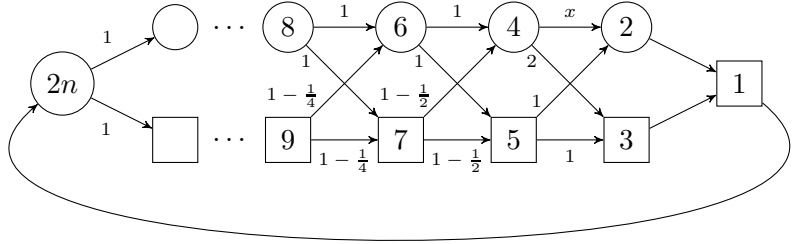

We thank Matthew Maat for sharing with us a variant of the following exponential example.
\begin{example}\label{ex:exponential}
Consider the game depicted in \cref{fig:exponential} with $2n$ states, and $\payoff_{1,2n} = 0$. This game repeats every $n+1$ turns, so the value of the mean-payoff game is equal to $\chi/(n+1)$, where $\chi$ is the value of the game that starts at state $2n$ and in which the aim of Max is to maximize the total payoff before the state $1$ is reached. Hence, the value and optimal policies are found by solving the recurrence
\begin{align*}
y_1 &= 0,  \ y_2 = \payoff_{21}, \ y_{3} = \payoff_{31}, \\
y_{2k} &= \min\{\payoff_{2k,2k-1} + y_{2k-1},\payoff_{2k,2k-2}+y_{2k-2} \} \quad \text{for $2 \le k \le n$},\\
y_{2k+1} &= \max\{\payoff_{2k+1,2k-1} + y_{2k-1},\payoff_{2k+1,2k-2}+y_{2k-2} \} \quad \text{for $2 \le k \le n-1$}.
\end{align*}
We then have $\chi = y_{2n}$. By putting $\payoff_{21} = \payoff_{31} = 0$, $\payoff_{53} = \payoff_{52} = 1$, $\payoff_{43} = 2$, parametrizing the edge $(4,2)$ by setting $\payoff_{42} = x$, and putting the remaining weights as $\payoff_{2k,2k-1} = \payoff_{2k,2k-2} = 1$ and $\payoff_{2k+1,2k-1} = \payoff_{2k+1,2k-1} = 1 - 1/2^{k-2}$, we get exactly the equations of the tropical central path from \cite[Proposition~20]{log_barrier}. In particular, the curve $y_{2n}(x)$  has exponentially many breakpoints when $x \in \interval{0}{2}$.  Furthermore, the $(4,2)$ edge alternates an exponential number of times between being included and excluded from the equilibrium cycle.
\end{example}

\subsection{Further differences between one-player and two-player games}

We now point out three natural two-player analogues of the properties proven in \cref{robust_cycle_one_player,robust_cycle_one_player_second}:
\begin{enumerate}[label=(P-\arabic*),ref=(P-\arabic*)]
\item\label{enum:first}  The game $\dgraph$ has a unique equilibrium cycle, which does not change under any perturbation of the weights $\tilde{\payoff} \in B_{\infty}(\payoff,\delta)$.
\item\label{enum:second} The game $\dgraph$ has a unique equilibrium cycle, and any pair $(\sigma,\tau)$ of optimal policies for $\dgraph$ remains optimal under any perturbation of the weights $\tilde{\payoff} \in B_{\infty}(\payoff,\delta)$.
\item\label{enum:third} The game $\dgraph$ has a unique equilibrium cycle, and if two policies $(\sigma,\tau)$ are optimal for $\dgraph$ with perturbed weights $\tilde{\payoff} \in B_{\infty}(\payoff,\delta)$, they are also optimal for $\dgraph$ with the original weights $\payoff$.
\end{enumerate}
For any single-player game $\dgraph$ where the optimal cycle is unique, a policy is optimal if and only if it leads every vertex into that cycle. So it is not hard to see that the above properties are all equivalent in the single-player case (with \emph{equilibrium cycle} replaced by \emph{optimal cycle}).  In particular, they are all satisfied with high probability thanks to \cref{robust_cycle_one_player,robust_cycle_one_player_second}. The first subtle difficulty in the two-player case arises from the fact that these properties are no longer equivalent.

%
%
%
%
%

\begin{figure}[t]
\centering
\begin{tikzpicture}[scale=0.75,>=stealth',max/.style={draw,rectangle,minimum size=0.5cm},min/.style={draw,circle,minimum size=0.5cm},av/.style={draw, circle,fill, inner sep = 0pt,minimum size = 0.2cm}]

\node[min] (min1) at (-3, 0) {$1$};
\node[max] (max1) at (3, 0) {$2$};

\node[min] (min2) at (0,2) {$3$};

\draw[->] (min1) to node[above, font=\small]{$-10$}  (max1);
\draw[->] (min2) to (max1);
\draw[->] (min1) to (min2);
\draw[->] (max1) to[out=200,in=-20] node[below, font=\small]{$\varepsilon$} (min1);
\draw[->] (max1) to[out=-30,in=30,looseness=8] (max1);

\end{tikzpicture}
\caption{Example of a mean payoff game in which new optimal policies appear after a small perturbation of weights.} \label{fig:emerging_policies}
\end{figure}
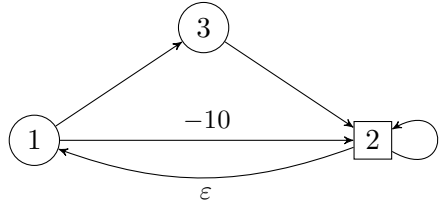

\begin{example}\label{ex:emerging_policies}
Consider the game depicted in \cref{fig:emerging_policies} and let $\payoff$ denote its weights for $0 < \varepsilon < 1$. Note that the policies that use the edges $(1,2)$ and $(2,2)$ are the unique optimal policies. Furthermore, these policies are still optimal for every $\tilde{\payoff}$ such that $\tilde{\payoff} \in B_{\infty}(\payoff,1)$, so the property \labelcref{enum:second} is satisfied. The property \labelcref{enum:first} is also satisfied, because the cycle $(2,2)$ is the only equilibrium cycle for every $\tilde{\payoff} \in B_{\infty}(\payoff,1)$. However, if we replace the weight $\varepsilon$ by $0$, then the policy $(1,3)$ becomes optimal for Min, so the property \labelcref{enum:third} is not satisfied even under a very small perturbation of the weights. Furthermore, note that if $\varepsilon = 0$, then property \labelcref{enum:first} is still satisfied, but the policy using $(1,3)$ stops being optimal for any $\varepsilon >0$, so the property \labelcref{enum:second} is not satisfied.
\end{example}

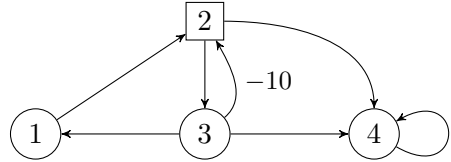
\begin{figure}[t]
\centering
\begin{tikzpicture}[scale=0.75,>=stealth',max/.style={draw,rectangle,minimum size=0.5cm},min/.style={draw,circle,minimum size=0.5cm},av/.style={draw, circle,fill, inner sep = 0pt,minimum size = 0.2cm}]

\node[min] (min1) at (-3, 0) {$1$};
\node[max] (max1) at (0,2) {$2$};

\node[min] (min2) at (0, 0) {$3$};

\node[min] (min3) at (3, 0) {$4$};

\draw[->] (min1) to (max1);
\draw[->] (min2) to (min1);
\draw[->] (max1) to (min2);
\draw[->] (max1) to[out=0,in=90] (min3);
\draw[->] (min3) to[out=-30,in=30,looseness=8] (min3);
\draw[->] (min2) to[out=40,in=-60] node[right, font=\small]{$-10$} (max1);
\draw[->] (min2) to (min3);

\end{tikzpicture}
\caption{Example of a mean payoff game in which optimal policies are not stable under perturbations.} \label{fig:unstable_policies_good_approx}
\end{figure}

\begin{example}\label{ex:unstable_policies_good_approx}
Consider the game depicted in \cref{fig:unstable_policies_good_approx}. Note that the equilibrium cycle is unique. Moreover, the policy of Min that uses the edge $(3,1)$ is optimal, but it stops being optimal if we replace the weight of the edge $(1,2)$ by $\varepsilon > 0$, so the property \labelcref{enum:second} does not hold. However, if we take any $\tilde{\payoff} \in B_{\infty}(\payoff,1)$ and any pair of optimal policies $(\sigma,\tau)$ for $\tilde{\payoff}$, then they are optimal for $\payoff$. Indeed, the unique optimal policy of Max is the one that uses $(2,4)$ and both possible policies of Min are optimal in the original game. Therefore, the property \labelcref{enum:third} is satisfied but the property \labelcref{enum:second} is not.
\end{example}

Another difficulty arises from the fact that in the two-case we lose the ``stability'' of the equilibrium cycle/optimal policy that was used in the proof of \cref{robust_cycle_one_player_second}. Indeed, this proof relies on the fact that if $\cycle$ is optimal for \emph{some} weights $(x,\payoff_{-ij})$ such that $x > Y_{ij}$, then it is optimal for \emph{all} weights $(x,\payoff_{-ij})$ such that $x > Y_{ij}$. This property is no longer satisfied in the two-player case, as shown by the next two examples.

\begin{example}\label{ex:non_convex_again}
Consider once again the game given in \cref{ex:non_convex}. For $x \in \interval[open]{4}{6}$, the equilibrium cycle does not contain the edge with weight $x$. In particular, the infimum of the values of $x$ such that the edge $(1,2)$ does not belong to the equilibrium cycle is $x = 4$. However, the equilibrium cycle changes twice for higher values of $x$, namely for $x = 6$ and $x = 12$.
\end{example}

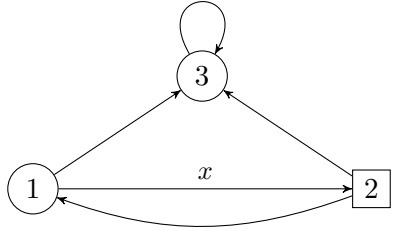
\begin{figure}[t]
\centering
\begin{tikzpicture}[scale=0.75,>=stealth',max/.style={draw,rectangle,minimum size=0.5cm},min/.style={draw,circle,minimum size=0.5cm},av/.style={draw, circle,fill, inner sep = 0pt,minimum size = 0.2cm}]

\node[min] (min1) at (-3, 0) {$1$};
\node[max] (max1) at (3, 0) {$2$};

\node[min] (min2) at (0,2) {$3$};

\draw[->] (min1) to node[above, font=\small]{$x$} (max1);
\draw[->] (max1) to (min2);
\draw[->] (min1) to (min2);
\draw[->] (max1) to[out=200,in=-20] (min1);
\draw[->] (min2) to[out=120,in=60,looseness=8] (min2);

\end{tikzpicture}
\caption{Example of a mean payoff game in which optimal policies change above the natural threshold for $x$.} \label{fig:unstable_policies}
\end{figure}

\begin{example}\label{ex:unstable_policies_again}
Consider the game depicted in \cref{fig:unstable_policies}. The value of this game is equal to $0$ for all $x \in \R$ and the policies that use the edges $(1,3)$ and $(2,3)$ are always optimal. In particular, the infimum of the values of $x$ such that there exists an optimal policy of Min which does not use $(1,2)$ is $x = -\infty$. However, if we look at a pair of policies that use the edges $(1,3)$ and $(2,1)$, then these policies are optimal for some $x > -\infty$ (namely $x \ge 0$), but not for all $x > -\infty$, because they stop being optimal for $x < 0$. Note that, in this example, we vary the payoff of an edge controlled by Min, we do not modify the policy of Min, this policy does not use the edge whose payoff we modify, but an optimal policy of Max may still change depending on $x$.
\end{example}

\subsection{Conclusion}

We do not know which of the properties \labelcref{enum:first}--\labelcref{enum:third} are satisfied for random games. Indeed, in this paper we no not prove \emph{any} of them. This is in contrast to the retracted work~\cite{BorosElbassioniFouzGurvich:2011}, which claimed that the properties \labelcref{enum:second} and \labelcref{enum:third} are satisfied for random games. Instead, we restrict our attention to a special class of optimal policies, namely the \emph{bias-induced} policies. The bias-induced policies are very suitable to the techniques that we used in the proof of \cref{robust_cycle_one_player_second} because they admit an algebraic characterization using the so-called ergodic equation. In particular, this equation allows us to restore the ``stability'' property discussed above. A further advantage of bias-induced policies is that under genericity conditions, the pair of bias-induced policies is unique and coincides with the unique pair of \emph{Blackwell-optimal} policies. Moreover, the bias-induced policies split the weight space $\R^m$ into a union of disjoint open polyhedral cones, just like the optimal cycles split the weight space in the case of DMDPs. This allows us to prove a combination of properties \labelcref{enum:second} and \labelcref{enum:third} restricted to bias-induced policies. In other words, we prove that the following holds with high probability:

\begin{enumerate}
\item[(P-4)]\label{enum:fourth} The bias-induced policies are unique and identical for all $\tilde{\payoff} \in B_{\infty}(\payoff,\delta)$.
\end{enumerate}


\section{Blackwell-optimal and bias-induced policies}\label{sec:ergodic}

\subsection{Discounted games}
A \emph{(deterministic) discounted game} is defined in a very similar way to a mean-payoff game. The only difference is that the payoff received by Max is defined as\footnote{The payoff is normalized by a $1 - \disc$ factor so that for any $\gamma \in \interval[open]{0}{1}$, it is always bounded by the maximum weight, and we can meaningfully consider the limit as $\disc$ goes to $1$.}
\[
(1 - \disc)(\payoff_{i_0i_1} + \disc\payoff_{i_1i_2} +  \disc^2\payoff_{i_2i_3} + \dots) \, ,
\]
where $(i_0,i_1,\dots)$ is the path of the token and $\disc \in \interval[open]{0}{1}$ is the \emph{discount factor}.

\medskip\noindent
Shapley proved that players in discounted games have optimal policies.

\begin{theorem}[\cite{shapley_stochastic}]\label{th:shapley}
Both players in a discounted game have optimal policies. More precisely, the value $\gameval^{(\disc)} \in \R^n$ of the discounted game with discount factor $\disc$ is the unique solution of the equations
\begin{align*}
\gameval^{(\disc)}_i \coloneqq \begin{cases}
\max_{(i,j) \in \edges}\{(1 - \disc)\payoff_{ij} + \disc \gameval^{(\disc)}_j\} &\text{if $i \in \Maxvertices$},\\[0.1cm]
\min_{(i,j) \in \edges}\{(1 - \disc)\payoff_{ij} + \disc \gameval^{(\disc)}_j\} &\text{if $i \in \Minvertices$}.
\end{cases}
\end{align*}
A policy of Max (resp. Min) is optimal if and only if it uses edges that achieve the maxima (resp. minima) in these equalities.
\end{theorem}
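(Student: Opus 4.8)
The plan is to run the standard Banach fixed-point and verification-theorem argument. First I would introduce the \emph{Shapley operator} $\shapley\colon\R^n\to\R^n$ given by $(\shapley v)_i\coloneqq\max_{(i,j)\in\edges}\{(1-\disc)\payoff_{ij}+\disc v_j\}$ for $i\in\Maxvertices$ and $(\shapley v)_i\coloneqq\min_{(i,j)\in\edges}\{(1-\disc)\payoff_{ij}+\disc v_j\}$ for $i\in\Minvertices$, so that the displayed system reads $\gameval^{(\disc)}=\shapley\gameval^{(\disc)}$. Since for any finite family of reals the maps $(a_k)\mapsto\max_k a_k$ and $(a_k)\mapsto\min_k a_k$ are nonexpansive for $\supnorm{\cdot}$, and since rescaling by $\disc$ contracts, one gets $\supnorm{\shapley v-\shapley w}\le\disc\supnorm{v-w}$, so $\shapley$ is a $\disc$-contraction on the complete space $(\R^n,\supnorm{\cdot})$; the Banach fixed-point theorem then yields a unique fixed point, which we name $\gameval^{(\disc)}$. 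This already proves the second sentence of the theorem.

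Next I would prove the verification inequalities. Pick a stationary policy $\sigma^*\colon\Maxvertices\to[n]$ selecting at each $i\in\Maxvertices$ an edge attaining the maximum in $(\shapley\gameval^{(\disc)})_i$, and $\tau^*$ attaining the minima. The claim is that if Max plays $\sigma^*$, then for any (even history-dependent) strategy of Min and any start vertex $i_0$, the resulting discounted payoff is at least $\gameval^{(\disc)}_{i_0}$. Indeed, along the resulting play $i_0,i_1,i_2,\dots$, at a Max vertex $i_t$ we have $i_{t+1}=\sigma^*(i_t)$ and hence $\gameval^{(\disc)}_{i_t}=(1-\disc)\payoff_{i_ti_{t+1}}+\disc\gameval^{(\disc)}_{i_{t+1}}$, while at a Min vertex the min-over-edges bound gives $\gameval^{(\disc)}_{i_t}\le(1-\disc)\payoff_{i_ti_{t+1}}+\disc\gameval^{(\disc)}_{i_{t+1}}$; in both cases the inequality $\gameval^{(\disc)}_{i_t}\le(1-\disc)\payoff_{i_ti_{t+1}}+\disc\gameval^{(\disc)}_{i_{t+1}}$ holds. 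Iterating and using $\disc^N\gameval^{(\disc)}_{i_N}\to 0$ (the weights, hence $\gameval^{(\disc)}$, are bounded and $\disc<1$) gives $\gameval^{(\disc)}_{i_0}\le(1-\disc)\sum_{t\ge0}\disc^t\payoff_{i_ti_{t+1}}$, the discounted payoff of the play. Symmetrically, playing $\tau^*$ guarantees Min a discounted payoff at most $\gameval^{(\disc)}_{i_0}$. Specializing the opponent to a stationary policy yields $g^{(\disc)}_i(\sigma,\tau^*)\le\gameval^{(\disc)}_i\le g^{(\disc)}_i(\sigma^*,\tau)$ for all $\sigma,\tau,i$; taking $\sigma=\sigma^*$, $\tau=\tau^*$ forces $g^{(\disc)}(\sigma^*,\tau^*)=\gameval^{(\disc)}$, so $\gameval^{(\disc)}$ is the value and $(\sigma^*,\tau^*)$ a pair of optimal policies.

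Finally I would establish the ``if and only if''. The ``if'' direction is just the verification argument: any Max policy using only maximum-attaining edges plays the role of $\sigma^*$, hence is optimal (and symmetrically for Min). For ``only if'', suppose $\sigma$ is an optimal Max policy but, at some $i_0\in\Maxvertices$, the edge $(i_0,\sigma(i_0))$ does not attain the maximum, i.e. $(1-\disc)\payoff_{i_0\sigma(i_0)}+\disc\gameval^{(\disc)}_{\sigma(i_0)}<\gameval^{(\disc)}_{i_0}$. Optimality of $\sigma$ and of $\tau^*$ forces $g^{(\disc)}_i(\sigma,\tau^*)=\gameval^{(\disc)}_i$ for every $i$; but the discounted payoff of the stationary pair $(\sigma,\tau^*)$ satisfies the one-step identity $g^{(\disc)}_{i_0}(\sigma,\tau^*)=(1-\disc)\payoff_{i_0\sigma(i_0)}+\disc\,g^{(\disc)}_{\sigma(i_0)}(\sigma,\tau^*)$, which on substituting $g^{(\disc)}=\gameval^{(\disc)}$ contradicts the strict inequality; the Min case is symmetric. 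The only routine point to be careful about is that the discounted payoff of a play is well defined as an absolutely convergent series (the $\payoff_{ij}$ are bounded) and that $\disc^N\gameval^{(\disc)}_{i_N}\to0$; the genuinely game-theoretic content, the two-sided telescoping against the fixed-point equation, is then immediate, and makes any reduction to single-player discounted MDP theory unnecessary.
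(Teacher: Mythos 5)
Your proposal is correct. The paper gives no proof of this statement at all — it is imported as a classical result of Shapley via the citation — and your argument (the $\disc$-contraction of the Shapley operator in the sup norm giving existence and uniqueness of $\gameval^{(\disc)}$, followed by the telescoping verification argument for the ``if'' direction and the one-step identity $g^{(\disc)}_{i_0}(\sigma,\tau^*)=(1-\disc)\payoff_{i_0\sigma(i_0)}+\disc\, g^{(\disc)}_{\sigma(i_0)}(\sigma,\tau^*)$ for the ``only if'' direction) is exactly the standard classical proof, with the one delicate point (that an optimal $\sigma$ played against $\tau^*$ must realize $\gameval^{(\disc)}$ at every state) handled correctly.
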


\begin{definition}
We say that a policy of a given player is \emph{Blackwell optimal} if it is optimal for all discount factors $\disc$ that are sufficiently close to $1$.
\end{definition}

Liggett and Lippman showed that Blackwell optimal policies always exist and are optimal in the mean payoff game.

\begin{theorem}[\cite{liggett_lippman}]\label{th:liggett_lippman}
  Both players have Blackwell-optimal policies. More precisely, there exists $\disc^* \in \interval[open]{0}{1}$ such that if a policy is optimal for some $\disc^* < \disc < 1$, then it is optimal for all $\disc^* < \disc < 1$. Moreover, any Blackwell-optimal policy is optimal in the mean payoff game on the same weighted graph.
    The smallest number $\disc^*$ with this property is called the \emph{Blackwell threshold.}
\end{theorem}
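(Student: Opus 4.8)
The plan is to reduce the statement to an elementary fact about deterministic games: once \emph{both} players fix policies, the game collapses to a single eventually-periodic path, and the discounted value along such a path is an explicit rational function of the discount factor. \textbf{Step 1 (the value of a fixed pair is rational in $\disc$).} For a pair of policies $(\sigma,\tau)$ and a start state $i$, the play is the deterministic path $(i_0,i_1,\dots)$ that runs through a transient segment of some length $k$ and then loops forever around a cycle of some length $\ell$, so the discounted payoff is
\[
g_i^{(\disc)}(\sigma,\tau)=(1-\disc)\Bigl[\sum_{t=0}^{k-1}\disc^{t}\payoff_{i_ti_{t+1}}+\disc^{k}\,\frac{\sum_{s=0}^{\ell-1}\disc^{s}\payoff_{i_{k+s}i_{k+s+1}}}{1-\disc^{\ell}}\Bigr],
\]
a rational function of $\disc$ with no pole on $(0,1)$.

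\textbf{Step 2 (optimality of a pair is a system of polynomial inequalities in $\disc$).} I would show that $(\sigma,\tau)$ is a pair of optimal policies at discount $\disc$ if and only if the vector $g^{(\disc)}(\sigma,\tau)$ is a fixed point of the Shapley operator of \cref{th:shapley}; since $g^{(\disc)}(\sigma,\tau)$ already satisfies the forced equations along $\sigma$ and $\tau$ by definition, this is equivalent to the finite system
\[
g_i^{(\disc)}(\sigma,\tau)\ge(1-\disc)\payoff_{ij}+\disc\, g_j^{(\disc)}(\sigma,\tau)\ \ (i\in\Maxvertices),\qquad g_i^{(\disc)}(\sigma,\tau)\le(1-\disc)\payoff_{ij}+\disc\, g_j^{(\disc)}(\sigma,\tau)\ \ (i\in\Minvertices),
\]
ranging over all edges $(i,j)$. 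The ``only if'' is the edge characterization in \cref{th:shapley}; for ``if'', a Shapley fixed point must equal the unique value $\gameval^{(\disc)}$, whence $\sigma$ and $\tau$ attain the respective extrema and are optimal. Each inequality compares two rational functions of $\disc$.

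\textbf{Step 3 (finitely many breakpoints, then pass to the limit).} Two rational functions that are not identically equal coincide on a finite set; hence each inequality above holds on a finite union of intervals, while an identically-true comparison contributes no breakpoint. Intersecting over the finitely many pairs $(\sigma,\tau)$ and edges produces a finite set $F\subseteq(0,1)$ such that the collection of optimal policy pairs — hence also the set of optimal policies of each player, since a policy is optimal iff it extends to an optimal pair — is constant on each component of $(0,1)\setminus F$. Taking $\disc^{*}\coloneqq\max F$ yields the Blackwell threshold and the first assertion. For the second assertion, the formula in Step~1 gives $\lim_{\disc\uparrow1}g_i^{(\disc)}(\sigma,\tau)=g_i(\sigma,\tau)$, the mean weight of the cycle eventually reached from $i$. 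Fix a Blackwell-optimal pair $(\sigma,\tau)$. For $\disc\in(\disc^{*},1)$ and any policy $\sigma'$ of Max, optimality of $\tau$ makes $\gameval^{(\disc)}$ a coordinatewise super-fixed-point of the operator forcing $\sigma'$ on Max and $\tau$ on Min; iterating that monotone $\disc$-contraction gives $g_i^{(\disc)}(\sigma',\tau)\le\gameval^{(\disc)}_i=g_i^{(\disc)}(\sigma,\tau)$. Letting $\disc\uparrow1$ yields $g_i(\sigma',\tau)\le g_i(\sigma,\tau)$, and symmetrically $g_i(\sigma,\tau)\le g_i(\sigma,\tau')$ for every $\tau'$; so $(\sigma,\tau)$ is a pair of optimal policies for the mean-payoff game (\cref{th:value_mpg}), i.e.\ every Blackwell-optimal policy is mean-payoff optimal.

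\textbf{Main obstacle.} The delicate point is Step~2 — establishing that ``optimal policy pair'' is captured \emph{exactly} by rational inequalities in $\disc$, so that the breakpoint count in Step~3 is genuinely finite — together with being careful about the degenerate case in which one of the rational comparisons is an identity (it simply contributes no breakpoint). The remaining ingredients — the closed form for $g_i^{(\disc)}$, the ``finitely many crossings'' principle, and the Abelian limit as $\disc\uparrow1$ — are routine bookkeeping.
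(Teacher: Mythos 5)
Your proposal is correct, but it is worth noting that the paper does not prove this statement at all: it imports it from Liggett--Lippman, whose argument is formulated for general (stochastic) turn-based games. What you have written is the classical Blackwell-style argument specialized to the deterministic setting, and it holds together: eventual periodicity of the play under a fixed pair makes $g_i^{(\disc)}(\sigma,\tau)$ an explicit rational function of $\disc$ with no pole in $(0,1)$; your Step 2 equivalence is sound because an optimal pair forces $g^{(\disc)}(\sigma,\tau)=\gameval^{(\disc)}$ (the forced affine system has a unique solution) and, conversely, a fixed point of the Shapley operator must equal the unique value, after which \cref{th:shapley} identifies the extremal edges; the breakpoint count is then finite, and the Abel limit together with the saddle-point comparison gives mean-payoff optimality --- here you implicitly use the uniqueness of the value in \cref{th:value_mpg} to conclude $g_i(\sigma,\tau)=\gameval_i$ before reading off \cref{eq:optimal_policies}, and you should also record the one-line observation that a Blackwell-optimal policy of one player extends to a Blackwell-optimal pair (take any Blackwell-optimal policy of the opponent), since the theorem is phrased for single policies; both are routine. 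The trade-off between the two routes: the citation buys the stochastic case as well (and sits naturally next to Kohlberg's expansion, \cref{th:kohlberg}, which the paper uses for the bias), whereas your Step 1 leans on determinism --- for stochastic transitions the rationality of $\disc\mapsto g^{(\disc)}(\sigma,\tau)$ must come from the resolvent $(I-\disc P)^{-1}$ and the limit $\disc\uparrow 1$ from an Abel--Ces\`aro argument for finite Markov chains --- but in exchange your proof is elementary and self-contained for exactly the class of games the paper studies.
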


A more precise result about the asymptotic behavior of discounted games as $\disc$ goes to $1$ follows by applying a theorem of Kohlberg~\cite{kohlberg} to the Shapley operator $\shapley \colon \R^{n} \to \R^{n}$ defined as
\begin{align*}
\shapley(x)_i \coloneqq 
\begin{cases}
\max_{(i,j) \in \edges}\{ \payoff_{ij} + x_j \} &\text{if $i \in \Maxvertices$}, \\
\min_{(i,j) \in \edges}\{ \payoff_{ij} + x_j \} &\text{if $i \in \Minvertices$}.
\end{cases}
\end{align*}

\begin{theorem}[Corollary of \cite{kohlberg}]\label{th:kohlberg}
For $\disc$ sufficiently close to $1$, the value of the discounted game can be developed into a convergent power series. More precisely, there exist $c_{0}, c_1, c_{2}, \ldots \in \R^n$ and $\disc_0 \in \interval[open]{0}{1}$ such that for all $\disc_0 < \disc < 1$ the value of the discounted game is equal to
\[
\gameval^{(\disc)} = c_{0} + (1 - \disc)c_1 + (1 - \disc)^2c_2 + (1 - \disc)^3c_3 + \dots \, .
\]
Furthermore, $c_{0}$ is the value of the mean payoff game, and for all sufficiently large $t > 0$ we have $\shapley(t c_{0} + c_1) = (t+1)c_{0} + c_1$. We call the vector $c_1$ the \emph{Blackwell bias} of the mean-payoff game.
\end{theorem}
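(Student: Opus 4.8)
The plan is to reduce the statement to two classical facts about the Shapley operator $\shapley$, which is piecewise affine and nonexpansive in the sup-norm: first, that the family of discounted values has a convergent expansion in powers of $(1-\disc)$ at $\disc=1$ (here an honest power series rather than a Puiseux series, because the game is deterministic), and second, that $\shapley$ admits an invariant half-line in the direction of the value. Both can be extracted from Kohlberg's work, but I would give a concrete argument tailored to deterministic games, and only mention the Kohlberg route as an alternative.

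For the expansion, the key observation is that for \emph{any} fixed pair of policies $(\sigma,\tau)$ the two players jointly define a successor map $\theta\colon[n]\to[n]$, and the $\disc$-discounted payoff obtained when they play $(\sigma,\tau)$ is $v^{(\disc)}_\theta\coloneqq(1-\disc)(\Id-\disc P_\theta)^{-1}\payoff_\theta$, where $P_\theta$ is the $\{0,1\}$ transition matrix of $\theta$ and $\payoff_\theta=(\payoff_{i\theta(i)})_i$. This is a rational function of $\disc$, and since each of its coordinates is a convex combination of the weights (with coefficients $(1-\disc)\disc^t$) it stays bounded as $\disc\to1$; a bounded rational function is analytic at $\disc=1$, so $v^{(\disc)}_\theta$ admits a convergent Taylor expansion in powers of $(1-\disc)$ near $1$, whose constant term is the vector assigning to each $i$ the mean weight of the cycle that $i$ eventually reaches under $\theta$. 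By \cref{th:shapley}, $\gameval^{(\disc)}$ coincides with $v^{(\disc)}_\theta$ for a pair $(\sigma^*,\tau^*)$ as soon as that pair uses the edges attaining the extrema of $\gameval^{(\disc)}$, since then $\gameval^{(\disc)}$ is a fixed point of the contractive linear map $x\mapsto(1-\disc)\payoff_\theta+\disc P_\theta x$, whose unique fixed point is $v^{(\disc)}_\theta$. By \cref{th:liggett_lippman} we may fix one Blackwell-optimal pair $(\sigma^*,\tau^*)$, optimal for all $\disc$ in some interval $(\disc^*,1)$; taking $\disc_0\in(\disc^*,1)$ large enough that the Taylor expansion of $v^{(\disc)}_{\theta^*}$ converges yields the claimed series $\gameval^{(\disc)}=c_0+(1-\disc)c_1+(1-\disc)^2c_2+\cdots$ on $(\disc_0,1)$. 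Its constant term $c_0$ is the vector of mean weights of the cycles reached under $(\sigma^*,\tau^*)$, which equals the value of the mean-payoff game because Blackwell-optimal policies are mean-payoff optimal (\cref{th:liggett_lippman}).

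For the half-line identity, I would substitute the expansion into the discounted Bellman equation of \cref{th:shapley} and match orders in $\eps\coloneqq1-\disc$. For $i\in\Maxvertices$ this reads $c_{0,i}+\eps c_{1,i}+O(\eps^2)=\max_{(i,j)\in\edges}\{(1-\eps)c_{0,j}+\eps(\payoff_{ij}+c_{1,j})+O(\eps^2)\}$; comparing the $\eps^0$ terms gives $c_{0,i}=\max_{(i,j)\in\edges}c_{0,j}$, and then, restricting to the edges attaining this maximum, comparing the $\eps^1$ terms gives $c_{0,i}+c_{1,i}=\max\{\payoff_{ij}+c_{1,j}\}$ over exactly those edges. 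Since for $t$ large the maximizer of $\payoff_{ij}+tc_{0,j}+c_{1,j}$ is lexicographic — first over the $j$ with $c_{0,j}$ maximal, then over $\payoff_{ij}+c_{1,j}$ — we get $\shapley(tc_0+c_1)_i=tc_{0,i}+(c_{0,i}+c_{1,i})=(t+1)c_{0,i}+c_{1,i}$; the case $i\in\Minvertices$ is dual, and a single $t$ works uniformly since there are finitely many vertices and edges. Equivalently, this is precisely Kohlberg's theorem on invariant half-lines of piecewise-linear nonexpansive maps, applied to $\shapley$, once one identifies the half-line direction with $\lim_{\disc\to1}\gameval^{(\disc)}=c_0$.

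The step I expect to be the main obstacle is the non-ergodic bookkeeping in the last paragraph: one must argue carefully that, as $\disc\to1$, the extremum in the discounted equation is controlled first by $c_0$ and then by $c_1$ in the lexicographic way that matches $\shapley(tc_0+c_1)$ for large $t$, and in particular that the Blackwell-optimal edges are exactly those witnessing both extrema (which is what keeps the $O(\eps^2)$ remainders uniform). In the ergodic setting used in the rest of the paper this difficulty disappears, since all coordinates of $c_0$ collapse to a single scalar $\gameval$ and $c_1$ becomes a genuine solution of the ergodic equation. A secondary point needing care is the justification that no fractional powers of $(1-\disc)$ appear, which is exactly what the finiteness of the policy set and the rationality of each $(\Id-\disc P_\theta)^{-1}$ in $\disc$ provide.
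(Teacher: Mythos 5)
Your proposal is correct, but it follows a genuinely different route from the paper: the paper gives no proof of this statement at all — it is quoted as a direct corollary of Kohlberg's theorem \cite{kohlberg} on invariant half-lines of piecewise affine nonexpansive maps, applied to the Shapley operator $\shapley$. You instead give a self-contained derivation tailored to deterministic games: you fix a Blackwell-optimal pair $(\sigma^*,\tau^*)$ via \cref{th:liggett_lippman}, identify $\gameval^{(\disc)}$ with the unique fixed point $v^{(\disc)}_{\theta^*}=(1-\disc)(\Id-\disc P_{\theta^*})^{-1}\payoff_{\theta^*}$ of the induced affine $\disc$-contraction (via \cref{th:shapley}), obtain the genuine power series in $(1-\disc)$ from boundedness and rationality in $\disc$, identify $c_0$ with the mean-payoff value through the optimality sandwich $g_i(\sigma,\tau^*)\le\gameval_i\le g_i(\sigma^*,\tau)$, and then derive the half-line identity by matching the $\eps^0$ and $\eps^1$ orders in the discounted Bellman equation together with a lexicographic argument for large $t$; all of these steps check out, including the uniformity of the $O(\eps^2)$ remainders, which is automatic since the finitely many coordinate series converge on a common neighborhood of $\disc=1$, so the obstacle you flag is not a real one. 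What your route buys is an elementary, policy-based proof that also explains transparently why no fractional powers of $(1-\disc)$ appear in the deterministic case. What the paper's route buys is generality and economy: Kohlberg's theorem applies to arbitrary piecewise affine nonexpansive operators (hence, e.g., to stochastic turn-based games, where a fixed-policy value is still rational in $\disc$ but the game value generally requires a Puiseux expansion), and it does not take the Blackwell threshold as an input — in much of the literature the existence of the threshold of \cref{th:liggett_lippman} is itself deduced from such expansions, so it is worth stating explicitly that your argument is non-circular only because Liggett–Lippman's result is an independent prior theorem, which is how the paper orders the statements.
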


 We refer the reader to \cite{polyhedra_equiv_mean_payoff,AllamigeonGaubertKatzSkomra:2022}, or \cite[Chapter~6]{skomra_phd} for more information about Kohlberg's theorem and its applications in mean payoff games.

\subsection{Games with constant value}
We now introduce the basic framework of the ``operator approach'' to mean-payoff games. In this work, we focus our attention on mean-payoff games that have the same value for all initial states of the game. Under this assumption, the operator approach studies the following equation.

\begin{definition}\label{def:ergodic}
The \emph{ergodic equation} of a mean-payoff game is the following equation in variables $(\gameval, \bias) \in \R^{n+1}$:
\begin{equation}\label{eq:ergodic}
\begin{cases}
\forall i \in \Maxvertices, \ \gameval + \bias_i = \max_{(i,j) \in \edges}\{\payoff_{ij} + \bias_j\} \, , \\
\forall i \in \Minvertices, \ \gameval + \bias_i = \min_{(i,j) \in \edges}\{\payoff_{ij} + \bias_j\} \, .
\end{cases}
\end{equation}
\end{definition}

For games with constant value, the following theorem can be obtained as a consequence of \cref{th:kohlberg}, see also \cite{gurvich,boros2013canonical} for a different approach.

\begin{theorem}[{e.g., \cite[Section~2.2]{generic_uniqueness}}]\label{ergodic_solvability}
The ergodic equation has a solution $(\gameval,\bias)$ if and only if the value of the MPG does not depend on the initial state. Furthermore, if $(\gameval,\bias)$ is a solution of the ergodic equation, then $\gameval$ is the value of the game. In particular, $\gameval$ is unique.
\end{theorem}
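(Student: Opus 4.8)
The plan is to prove the two implications separately. For "the value is constant $\Rightarrow$ the ergodic equation is solvable" I would deduce everything from Kohlberg's expansion in \cref{th:kohlberg}, exploiting the additive homogeneity of the Shapley operator; for the converse I would run a direct telescoping estimate along trajectories, and then read off the ``furthermore'' (and the uniqueness of $\gameval$) by comparing the resulting inequalities with the characterization of the value in \cref{th:value_mpg}.

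\emph{From constant value to the ergodic equation.} Suppose the value of the MPG does not depend on the initial state, so that in the notation of \cref{th:kohlberg} we have $c_0 = \gameval_1 \ones$ for some $\gameval_1 \in \R$. The operator $\shapley$ is built from $\max$ and $\min$ of expressions $\payoff_{ij}+x_j$, so it satisfies $\shapley(x+\mu\ones)=\shapley(x)+\mu\ones$ for all $\mu\in\R$. Fixing $t$ large enough that $\shapley(tc_0+c_1)=(t+1)c_0+c_1$, we obtain
\[
\shapley(c_1)+t\gameval_1\ones=\shapley(t\gameval_1\ones+c_1)=\shapley(tc_0+c_1)=(t+1)\gameval_1\ones+c_1,
\]
hence $\shapley(c_1)=\gameval_1\ones+c_1$. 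Unwinding the definition of $\shapley$, this is exactly \cref{eq:ergodic} with $\gameval=\gameval_1$ and $\bias=c_1$, so the ergodic equation has a solution.

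\emph{From the ergodic equation to the value.} Conversely, let $(\gameval,\bias)\in\R^{n+1}$ solve \cref{eq:ergodic}, and let $\sigma^*$ (resp.\ $\tau^*$) be the policy of Max (resp.\ Min) selecting at each vertex an edge that achieves the maximum (resp.\ minimum) in \cref{eq:ergodic}. Fix an initial state $i$ and an arbitrary policy $\tau$ of Min, and consider the trajectory $i=i_0,i_1,i_2,\dots$ produced by $(\sigma^*,\tau)$. At a Max vertex $i_k$ we have $\payoff_{i_ki_{k+1}}+\bias_{i_{k+1}}=\gameval+\bias_{i_k}$ by the choice of $\sigma^*$, while at a Min vertex $i_k$ we have $\payoff_{i_ki_{k+1}}+\bias_{i_{k+1}}\ge\gameval+\bias_{i_k}$ because $\gameval+\bias_{i_k}$ is the minimum over outgoing edges. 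Summing $\payoff_{i_ki_{k+1}}\ge\gameval+\bias_{i_k}-\bias_{i_{k+1}}$ over $k=0,\dots,N-1$ telescopes to $\sum_{k<N}\payoff_{i_ki_{k+1}}\ge N\gameval+\bias_{i_0}-\bias_{i_N}$; dividing by $N$ and using that $(\bias_{i_0}-\bias_{i_N})/N\to 0$ since $\bias$ is a fixed vector, we get $g_i(\sigma^*,\tau)\ge\gameval$. The symmetric computation with $\tau^*$ and an arbitrary $\sigma$ gives the reverse bound on the $\limsup$, hence $g_i(\sigma,\tau^*)\le\gameval$. Thus $g_i(\sigma,\tau^*)\le\gameval\le g_i(\sigma^*,\tau)$ for all $\sigma,\tau$ and all $i$; comparing with \cref{eq:optimal_policies}, the constant vector $\gameval\ones$ satisfies the defining property of the value, so by the uniqueness part of \cref{th:value_mpg} it \emph{is} the value of the game. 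In particular the value is constant and the scalar $\gameval$ is uniquely determined.

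The only slightly delicate point is the forward direction: one must be careful when passing between Kohlberg's expansion of the discounted value and the undiscounted operator $\shapley$, the crucial ingredient being the additive homogeneity $\shapley(\cdot+\mu\ones)=\shapley(\cdot)+\mu\ones$, which is precisely what lets the constancy of $c_0$ collapse the identity $\shapley(tc_0+c_1)=(t+1)c_0+c_1$ into the ergodic equation. (One could avoid \cref{th:kohlberg} altogether, instead using the discounted values $\gameval^{(\disc)}$ directly and extracting a convergent subsequence of $\gameval^{(\disc)}-\gameval^{(\disc)}_1\ones$ as $\disc\to 1$, but invoking Kohlberg is shorter.) The converse direction is routine once the telescoping idea is in place; the one thing to watch is that one bounds the $\liminf$ of the averages from below in the first case and from above via the $\limsup$ in the second, so that both bounds genuinely apply to the $\liminf$-payoff appearing in the definition of $g_i$.
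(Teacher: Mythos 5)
Your proposal is correct and takes essentially the same route as the paper: the forward direction is the identical use of Kohlberg's expansion combined with the additive homogeneity of the Shapley operator $\shapley$, and the converse is the same bias-induced-policy argument, with your telescoping along an arbitrary trajectory (plus the $O(1/N)$ bias correction) being a cosmetic variant of the paper's summation of the ergodic inequalities around the final cycle. No gaps.
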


\begin{proof}
Suppose that the mean-payoff game has the same value for all initial states and denote it by $\gameval$. Let $\bias$ be the Blackwell bias of the game. Then, \cref{th:kohlberg} shows that $\shapley(t\gameval(1,\dots,1) + \bias) = (t + 1)\gameval(1,\dots,1) + \bias$. On the other hand, by definition of $\shapley$ we have $\shapley(t\gameval(1,\dots,1) + \bias) = t\gameval(1,\dots,1) + \shapley(\bias)$, so $\shapley(\bias) = \gameval(1,\dots,1) + \bias$ and $(\gameval,\bias)$ solves the ergodic equation. Conversely, suppose that the ergodic equation has a solution $(\gameval,\bias)$. Let $\tau$ be a policy of Min that at each $i \in \Minvertices$ chooses an edge that achieves the minimum in $\gameval + \bias_i = \min_{(i,j) \in \edges}\{\payoff_{ij} + \bias_j\}$. Let $\sigma$ be any policy of Max, and $\cycle$ be any cycle in the graph obtained from using policies $\sigma,\tau$. Then, we have $\gameval + \bias_i \ge \payoff_{ij} + \bias_j$ for every edge $(i,j)$ of $\cycle$. By summing these inequalities, we get that the mean weight of $\cycle$ is not greater than $\gameval$. Hence, $\tau$ guarantees that the payoff of the mean-payoff game is at most $\gameval$. Analogously, we can construct a policy of Max that guarantees that the payoff of the mean-payoff game is at least $\gameval$. Hence, $\gameval$ is the value of the game and this value does not depend on the initial state.
\end{proof}

We refer the reader to~\cite{hochartdominion} for more information about the solvability of the ergodic equation for more general classes of games and to \cite{AllamigeonGaubertKatzSkomra:2022,AkianGaubertNaepelsTerver:2023} for recent algorithmic applications of the operator approach to mean-payoff games and entropy games.

\begin{definition}\label{def:bias}
We say that a vector $\bias$ is a \emph{bias} of the mean-payoff game if $(\gameval,\bias)$ is a solution of the ergodic equation.
\end{definition}

\Cref{ergodic_solvability} shows that a bias exists, because the Blackwell bias is a bias. We note that bias is never unique, because the vector $\alpha(1,\dots,1) + \bias$ is also a bias. In general, a bias is not unique even up to adding a constant (we give an example in \cref{subsec:non-uniqueness}). However, the set of biases is connected. 

\begin{proposition}[{\cite[Theorem~3.10]{generic_uniqueness}}]\label{biases_connected}
The set of biases of a mean payoff game is path connected.
\end{proposition}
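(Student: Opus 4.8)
The plan is to recast the statement operator-theoretically and then produce the path by monotone iteration. By \cref{def:ergodic} and \cref{ergodic_solvability}, a vector $\bias$ is a bias of the game precisely when $\shapley(\bias)=\gameval\ones+\bias$ (here $\gameval\in\R$ is the scalar value and $\ones=\mathbf 1$), that is, when $\bias$ is a fixed point of $\homf:=\shapley-\gameval\ones$. Since $\shapley$ is built from $\min$, $\max$ and additions of constants, $\homf$ is order-preserving, additively homogeneous, hence $\supnorm{\cdot}$-nonexpansive, and piecewise affine. So the set $B$ of biases is $\operatorname{Fix}(\homf)$: it is closed, invariant under adding a constant vector, and a finite union of polyhedra.

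First I would show that any $\bias^1,\bias^2\in B$ have a common upper bound inside $B$. Put $w_0:=\bias^1\vee\bias^2$ (coordinatewise maximum); monotonicity gives $\homf(w_0)\ge\homf(\bias^1)=\bias^1$ and likewise $\homf(w_0)\ge\bias^2$, so $\homf(w_0)\ge w_0$, and inductively $w_{k+1}:=\homf(w_k)$ is nondecreasing. It is bounded above because $\supnorm{w_k-\bias^1}=\supnorm{\homf^k(w_0)-\homf^k(\bias^1)}\le\supnorm{w_0-\bias^1}$. A nondecreasing, bounded sequence converges, and by continuity of $\homf$ its limit $z$ is a fixed point; thus $z\in B$ with $z\ge\bias^1,\bias^2$. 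It therefore suffices to prove that if $v,w\in B$ and $v\le w$, then $v$ and $w$ lie in the same path-component of $B$; applying this to the pairs $(\bias^1,z)$ and $(\bias^2,z)$ and concatenating joins $\bias^1$ to $\bias^2$.

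To connect comparable biases $v\le w$, for $t\in[0,\supnorm{w-v}]$ set $\xi_t:=w\wedge(v+t\ones)$. Then $t\mapsto\xi_t$ is continuous and nondecreasing, $\xi_0=v$, $\xi_{\supnorm{w-v}}=w$, each $\xi_t$ lies in the compact order interval $[v,w]$, and each $\xi_t$ is a supersolution: $\homf(\xi_t)\le\homf(w)\wedge\homf(v+t\ones)=w\wedge(v+t\ones)=\xi_t$. Define $\eta(t)$ to be the largest fixed point of $\homf$ with $\eta(t)\le\xi_t$ --- equivalently $\lim_k\homf^k(\xi_t)$, the iterates decreasing from $\xi_t$, staying in $[v,w]$, and converging to a fixed point that dominates every fixed point $\le\xi_t$ (if $p\in B$, $p\le\xi_t$, then $p=\homf^k(p)\le\homf^k(\xi_t)$). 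Then $\eta$ is nondecreasing with values in $B\cap[v,w]$, and $\eta(0)=v$, $\eta(\supnorm{w-v})=w$ since $v,w$ are themselves fixed points. If $\eta$ is continuous, it is the desired path.

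The crux is the continuity of $\eta$. Upper semicontinuity (in the order sense) is free: if $t_n\to t$ and $\eta(t_n)\to L$, then $L\in B$ ($B$ is closed) and $L\le\lim\xi_{t_n}=\xi_t$, hence $L\le\eta(t)$ by maximality. The reverse bound --- that the largest fixed point below $\xi_t$ cannot jump upward, i.e.\ that the order-theoretic resolvent $\xi\mapsto\lim_k\homf^k(\xi)$ is continuous along this one-parameter family of supersolutions --- is where the genuine work lies, and I expect it to be the main obstacle. (For instance, one checks that a jump at $t_0$ would force $\max_i(\eta(t_0)_i-v_i)=t_0$, confining the bad parameters to a ``diagonal'' set that then has to be ruled out separately.) This can be handled using the piecewise-affine structure of $\homf$ --- the finitely many affine cells make the limit map piecewise affine, hence continuous, on the cone of supersolutions --- or by replacing the iteration with the monotone semiflow $\dot\xi=\homf(\xi)-\xi$ on $[v,w]$ and combining continuous dependence on the initial datum with a Dini/compactness argument. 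A variant of the whole argument would instead connect an arbitrary bias directly to the Blackwell bias $c_1$ of \cref{th:kohlberg}, which lies in $B$, but the same continuity point reappears there.
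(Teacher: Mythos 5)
The paper itself gives no proof of \cref{biases_connected}; it simply cites \cite[Theorem~3.10]{generic_uniqueness}, so your argument has to stand on its own. Its skeleton is sound: identifying the biases with the fixed points of the topical map $\homf=\shapley-\gameval\ones$, producing a common upper bound $z\in B$ of two biases by the increasing iteration from $\bias^1\vee\bias^2$ (monotonicity, additive homogeneity and sup-norm nonexpansiveness are used correctly there), and reducing to joining comparable fixed points $v\le w$ via $\xi_t=w\wedge(v+t\ones)$ and $\eta(t)=\lim_k\homf^k(\xi_t)$, the largest fixed point below $\xi_t$. All of those steps check out.

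The genuine gap is exactly the one you flag and then leave open: the continuity of $t\mapsto\eta(t)$. As submitted, the proof is incomplete — ``this can be handled using the piecewise-affine structure \dots or by replacing the iteration with the monotone semiflow'' is not an argument, and the specific claim that the monotone limit of iterates of a piecewise-affine map is automatically piecewise affine (hence continuous) on the supersolutions would itself need justification. What you missed is that no such machinery is needed: $\eta$ is $1$-Lipschitz for elementary reasons. For $t\le t'$ one has $\xi_{t'}\le\xi_t+(t'-t)\ones$, so $\eta(t')-(t'-t)\ones\le\xi_t$; by additive homogeneity $\eta(t')-(t'-t)\ones$ is again a fixed point of $\homf$, hence $\eta(t')-(t'-t)\ones\le\eta(t)$ by maximality of $\eta(t)$. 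Conversely $\eta(t)$ is a fixed point below $\xi_t\le\xi_{t'}$, so $\eta(t)\le\eta(t')$. Together, $0\le\eta(t')-\eta(t)\le(t'-t)\ones$, i.e.\ $\supnorm{\eta(t')-\eta(t)}\le t'-t$, and the path is continuous; your ``diagonal bad set'' and semiflow detours are unnecessary. With this two-line patch your construction yields a complete, self-contained proof (indeed a Lipschitz path), independent of the proof in \cite{generic_uniqueness}; without it, the central step of the argument is only conjectured.
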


We refer to~\cite{ambitropical_convexity} for more information about the geometry of the set of biases of a mean payoff game. The proof of \cref{ergodic_solvability} also suggests the concept of bias-induced policies. 

\begin{definition}
We say that a policy $\tau$ of Min is \emph{bias induced} if there exists a bias $\bias$ such that $\payoff_{i\tau(i)} + \bias_{\tau(i)} = \min_{(i,j) \in \edges}\{\payoff_{ij} + \bias_j\}$ for all $i \in \Minvertices$. We analogously define a bias-induced policy of Max. 
\end{definition}

We note that since biases are not unique, the bias-induced policies are also not unique in general. \Cref{ergodic_solvability} gives the following corollary.

\begin{corollary}\label{bias_induced_optimality}
All bias-induced policies are optimal. Furthermore, any Blackwell-optimal policies are bias induced.
\end{corollary}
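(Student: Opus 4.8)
The plan is to get the first assertion by a telescoping-over-a-cycle argument identical to the one in the proof of \cref{ergodic_solvability}, and the second assertion by feeding \cref{th:shapley} into the power-series expansion of \cref{th:kohlberg}. Throughout I work in the constant-value setting of this section, so that \cref{ergodic_solvability} applies and the ergodic equation has a scalar $\gameval$.

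\emph{Bias-induced policies are optimal.} Let $\tau$ be a bias-induced policy of Min, witnessed by a bias $\bias$, and let $\sigma$ be an arbitrary policy of Max. Starting from any state $i$, the play under $(\sigma,\tau)$ eventually reaches a cycle $\cycle$, and $g_i(\sigma,\tau)$ is the mean weight of $\cycle$. For each edge $(i',j')$ of $\cycle$ I would argue $\payoff_{i'j'}+\bias_{j'}\le\gameval+\bias_{i'}$: if $i'\in\Minvertices$ then $j'=\tau(i')$ and this is an equality because $\tau$ achieves the minimum in the ergodic equation at $i'$; if $i'\in\Maxvertices$ it is $\le\max_{(i',k)\in\edges}\{\payoff_{i'k}+\bias_k\}=\gameval+\bias_{i'}$. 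Summing $\payoff_{i'j'}\le\gameval+\bias_{i'}-\bias_{j'}$ around $\cycle$, the $\bias$-terms telescope, so the mean weight of $\cycle$ is at most $\gameval$; hence $g_i(\sigma,\tau)\le\gameval_i$ for all $i$ and all $\sigma$. Combined with an optimal pair from \cref{th:value_mpg}, $\tau$ can play the role of $\tau^*$, so $\tau$ is optimal. The argument for bias-induced policies of Max is symmetric.

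\emph{Blackwell-optimal policies are bias-induced.} Let $\sigma$ be a Blackwell-optimal policy of Max. By \cref{th:shapley}, for all $\disc$ sufficiently close to $1$ and all $i\in\Maxvertices$,
\[
\gameval^{(\disc)}_i=(1-\disc)\payoff_{i\sigma(i)}+\disc\,\gameval^{(\disc)}_{\sigma(i)}.
\]
Into both sides I would substitute the expansion $\gameval^{(\disc)}=c_0+(1-\disc)c_1+(1-\disc)^2c_2+\dots$ from \cref{th:kohlberg}, where $c_0=\gameval\,\ones$ is the constant value and $c_1$ is the Blackwell bias. Both sides become convergent power series in $(1-\disc)$ agreeing on an interval, so their coefficients coincide; matching the coefficient of $(1-\disc)^1$ gives $\gameval+(c_1)_i=\payoff_{i\sigma(i)}+(c_1)_{\sigma(i)}$. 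By \cref{ergodic_solvability} the Blackwell bias $c_1$ is a bias, hence satisfies the ergodic equation, so $\payoff_{i\sigma(i)}+(c_1)_{\sigma(i)}=\gameval+(c_1)_i=\max_{(i,j)\in\edges}\{\payoff_{ij}+(c_1)_j\}$; thus the edge $(i,\sigma(i))$ attains the maximum, i.e.\ $\sigma$ is induced by the bias $c_1$. The argument for Min is symmetric, using the $\min$-half of \cref{th:shapley} and of the ergodic equation.

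The main obstacle is the second part: one cannot simply pass to the limit $\disc\to1$ in the Bellman equation, since that only produces an asymptotic relation; obtaining the \emph{exact} ergodic identity for the edge $(i,\sigma(i))$ requires the full convergent expansion of \cref{th:kohlberg} and a comparison of the order-$(1-\disc)$ coefficients, which simultaneously identifies the relevant vector as the Blackwell bias. The only genuine bookkeeping is keeping the $\disc$ and $1-\disc$ factors straight in that expansion; the rest reuses the telescoping identity already established for \cref{ergodic_solvability}.
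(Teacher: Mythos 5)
Your proposal is correct and takes essentially the same approach as the paper: the first part is exactly the telescoping-over-the-equilibrium-cycle argument from the proof of \cref{ergodic_solvability}, and the second part feeds the Shapley equations for the Blackwell-optimal policy into Kohlberg's expansion of $\gameval^{(\disc)}$ around $\disc=1$. The only cosmetic difference is that the paper expands the Shapley \emph{inequality} against every competing edge and divides by $(1-\disc)$ before letting $\disc\to 1$, whereas you match the order-$(1-\disc)$ coefficient of the equality for the chosen edge and then invoke the fact (already established via \cref{th:kohlberg} and \cref{ergodic_solvability}) that the Blackwell bias $c_1$ satisfies the ergodic equation; both routes yield the same conclusion.
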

\begin{proof}
  The first part follows directly from the proof of \cref{ergodic_solvability}. To prove the second part, let $\tau$ be a Blackwell-optimal policy of Min, $\bias$ be the Blackwell bias of the game, $\gameval$ be the value of the mean-payoff game, and $\gameval^{(\disc)}$ be the value of the discounted game for discount factor $\disc$. We then find that, for any $(i,j) \in \edges$ such that $i \in \Minvertices$ and any $\disc$ close to $1$,
\begin{align*}
  \gameval^{(\disc)}_i %
  & = (1-\disc)\payoff_{i \tau(i)} + \disc \gameval_{\tau(i)}^{(\disc)}\tag*{(\cref{th:shapley})}\\
  & = (1 - \disc)\payoff_{i\tau(i)} + \disc \gameval + \disc(1 - \disc)\bias_{\tau(i)} + \disc(1-\disc)^2 c_{2\tau(i)} + \dots  \tag*{(\cref{th:kohlberg})}  \, .
\end{align*}
But also,
\begin{align*}
  \gameval^{(\disc)}_i %
  & \le (1-\disc)\payoff_{i j} + \disc \gameval_{j}^{(\disc)}\tag*{(\cref{th:shapley})}\\
  & = (1 - \disc)\payoff_{ij} + \disc \gameval + \disc(1 - \disc)\bias_j + \disc(1-\disc)^2c_{2j} + \dots \, .  \tag*{(\cref{th:kohlberg})}
\end{align*}
It follows that
\begin{align*}
  (1 - \disc)\payoff_{i\tau(i)} + \disc \gameval + & \disc(1 - \disc)\bias_{\tau(i)} + \disc(1-\disc)^2 c_{2\tau(i)} + \dots\\
  & \le (1 - \disc)\payoff_{ij} + \disc \gameval + \disc(1 - \disc)\bias_j + \disc(1-\disc)^2c_{2j} + \dots \, . 
\end{align*}
By dividing this inequality by $(1 - \disc)$ and going to the limit as $\gamma$ goes to $1$, we get $\payoff_{i\tau(i)} + \bias_{\tau(i)} \le \payoff_{ij} + \bias_j$. Hence $\payoff_{i\tau(i)} + \bias_{\tau(i)} = \min_{(i,j) \in \edges}\{\payoff_{ij} + \bias_j\}$ and $\tau$ is induced by the Blackwell bias. The proof for player Max is analogous.
\end{proof}

We refer the reader to \cite{ambitropical_convexity} for more information about bias-induced policies and, in particular, to \cite[Proposition~7.4]{ambitropical_convexity} for a proof of the fact that bias-induced policies are ``$u$-calibrated''.

\subsection{Examples of uniqueness and non-uniqueness}\label{subsec:non-uniqueness} We here show several examples illustrating that biases and bias-induced policies are not unique in general, but are sometimes. Similar examples can be found in \cite{ambitropical_convexity}. The Blackwell bias is unique, even though the Blackwell-optimal policy is not unique in general (think of a game with multiple zero edges). Nonetheless, \cref{ex:non-unique-bias-induced-policies} shows simultaneously that (1) the Blackwell bias, which induces the Blackwell-optimal policy, can also induce other, non-Blackwell-optimal policies; this can happen even if the Blackwell-optimal policy is itself unique; (2) infinitesimal changes in a bias can keep it a bias but change the set of policies induced by it; also, a game can have two biases that are not the same up to adding a constant to every coordinate; (3) a game can have two different biases, where each of these two biases induces a single policy, but these two bias-induced policies are different; and (4) a game can have optimal policies that are not bias-induced. In contrast, \cref{ex:emerging_policies_again,ex:unstable_policies_good_approx_again} revisit \cref{ex:emerging_policies,ex:unstable_policies_good_approx} to show that bias-induced policies are sometimes unique, and stable under small perturbations.

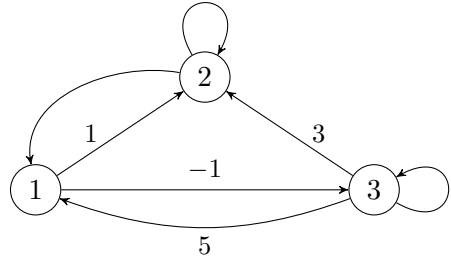
\begin{figure}[t]
\centering
\begin{tikzpicture}[scale=0.75,>=stealth',max/.style={draw,rectangle,minimum size=0.5cm},min/.style={draw,circle,minimum size=0.5cm},av/.style={draw, circle,fill, inner sep = 0pt,minimum size = 0.2cm}]

\node[min] (min1) at (-3, 0) {$1$};
\node[min] (min3) at (3, 0) {$3$};

\node[min] (min2) at (0,2) {$2$};

\draw[->] (min1) to node[above, font=\small]{$-1$} (min3);
\draw[->] (min3) to node[right=1ex, font=\small]{$3$} (min2);
\draw[->] (min1) to node[left=1ex, font=\small]{$1$} (min2);
\draw[->] (min3) to[out=200,in=-20] node[below, font=\small]{$5$} (min1);
\draw[->] (min2) to[out=120,in=60,looseness=8] (min2);
\draw[->] (min3) to[out=-30,in=30,looseness=8] (min3);
\draw[->] (min2) to[out=170,in=100] (min1);

\end{tikzpicture}
\caption{An example of a one-player game with a single Blackwell-optimal policy but multiple bias-induced policies.} \label{fig:blackwell_optimal}
\end{figure}

\begin{example}\label{ex:non-unique-bias-induced-policies}
Consider the one-player game presented in \cref{fig:blackwell_optimal}. The minimum mean weight of any cycle in this game is $\gameval = 0$ and there are two cycles that achieve this value, namely $(2,2)$ and $(3,3)$. The only Blackwell-optimal policy of Min is the one that uses the edges $(1,3),(2,1),(3,3)$. To verify this, it is enough to note that the equations from \cref{th:shapley} are given by
\begin{align*}
\gameval^{(\disc)}_1 &= \min\{(1 - \disc) + \disc\gameval^{(\disc)}_2,  -(1 - \disc) + \disc\gameval^{(\disc)}_3 \},\\
\gameval^{(\disc)}_2 &= \min\{\disc\gameval^{(\disc)}_1,  \disc\gameval^{(\disc)}_2 \},\\
\gameval^{(\disc)}_3 &= \min\{5(1 - \disc) + \disc\gameval^{(\disc)}_1,  3(1 - \disc) + \disc\gameval^{(\disc)}_2, \disc\gameval^{(\disc)}_3\}
\end{align*}
and the (unique) solution is given by
\[
\gameval^{(\disc)}_1 = -(1-\disc), \quad \gameval^{(\disc)}_2 = -(1-\disc)\disc, \quad \gameval^{(\disc)}_3 = 0
\]
for any $\disc > 0$. So the only policy induced by this bias for all $\gamma$ close to $1$ is $(1,3),(2,1),(3,3)$. Moreover, the Blackwell bias of this game is $\bias^* = \lim_{\disc \to 1^-}\frac{\gameval^{(\disc)} - \gameval}{1 - \disc} = (-1,-1,0)$ (by \cref{th:kohlberg}) and the ergodic equation is given by
\begin{align*}
\bias_1 &= \min\{1 + \bias_2,  -1 + \bias_3 \},\\
\bias_2 &= \min\{\bias_1,  \bias_2 \},\\
\bias_3 &= \min\{5 + \bias_1,  3 + \bias_2, \bias_3\}.
\end{align*}
\begin{enumerate}
\item Note that the Blackwell bias induces the unique Blackwell optimal policy $(1,3),(2,1),(3,3)$, but it also induces the policy given by the edges $(1,3),(2,2),(3,3)$, which is not Blackwell optimal.
\item Furthermore, any vector of the form $(-1,y,0)$ for $y \in \interval{-2}{-1}$ is a bias. This bias for $y = -2$ induces the policy $(1,2),(2,2),(3,3)$. Likewise, the vector $(-1,-2,z)$ is a bias for $z \in \interval{0}{1}$. For $z = 1$ it induces the policy $(1,2),(2,2),(3,2)$. 
\item The bias $(-1,-\frac{3}{2},0)$ uniquely induces the policy $(1,3),(2,2),(3,3)$, and the bias $(-1,-2,\frac{1}{2})$ uniquely induces the different policy $(1,2),(2,2),(3,3)$.
\item Finally, note that the policy that uses the edges $(1,2),(2,2),(3,1)$ is optimal but it is not induced by any bias. Indeed, a bias inducing this policy would satisfy $5 + \bias_1 \le 3 + \bias_2$ and $\bias_1 = 1 + \bias_2$ which is impossible. In the same way, the policy that uses the edges $(1,3),(2,2),(3,2)$ is optimal but it is not induced by any bias, because this bias would satisfy $-1 + \bias_3 \le 1 + \bias_2$ and $\bias_3 = 3 + \bias_2$, which is impossible.
\end{enumerate}
\end{example}

The crucial feature of the example above is that the optimal cycle is not unique. We now revisit the examples from \cref{sec:det_mpg} in which the equilibrium cycle is unique.

\begin{example}\label{ex:emerging_policies_again}
Consider once again the example of a game from \cref{ex:emerging_policies} and $\varepsilon = 0$. We claim that the policies using the edges $(1,2),(2,2)$ are the unique bias-induced policies even under a ``large'' perturbation of the weights. Indeed, if $\tilde{\payoff} \in B_{\infty}(\payoff,1)$, then the value of the game is $\gameval = \tilde{\payoff}_{22}$ and the only optimal policy of Max is to use $(2,2)$. Let $\bias$ be any bias. By adding a constant to this bias, we may suppose that $\bias_2 = 0$. Then, the ergodic equation reads
\begin{align*}
\tilde{\payoff}_{22} + \bias_1 &= \min\{\tilde{\payoff}_{12}, \tilde{\payoff}_{13} + \bias_3 \},\\
\tilde{\payoff}_{22} &= \min\{\tilde{\payoff}_{21} + \bias_1, \tilde{\payoff}_{22} \},\\
\tilde{\payoff}_{22} + \bias_3 &= \tilde{\payoff}_{32}.
\end{align*}
Therefore, in any bias-induced policy, at node $1$ player Min chooses an edge that achieves the minimum in $\min\{\tilde{\payoff}_{12}, \tilde{\payoff}_{13} + \tilde{\payoff}_{32} - \tilde{\payoff}_{22} \}$. Since $\tilde{\payoff}_{12} \le -9$ and $\tilde{\payoff}_{13} + \tilde{\payoff}_{32} - \tilde{\payoff}_{22} \ge -3$, we get that Min chooses $(1,2)$ in any bias-induced policy. Hence, contrary to the discussion in \cref{ex:emerging_policies}, the bias-induced policies are not only unique for $\varepsilon = 0$ but they are also stable under perturbation.
\end{example}

\begin{example}\label{ex:unstable_policies_good_approx_again}
Consider once again the example of a game from \cref{ex:unstable_policies_good_approx} and let $\tilde{\payoff} \in B_{\infty}(\payoff,1)$ be a perturbation of the weights. The only optimal policy of Max is to use the edge $(2,4)$. Let $\bias$ be any bias of this game and suppose, by adding a constant, that $\bias_4 = 0$. Then, the ergodic equation reads
\begin{align*}
\tilde{\payoff}_{44} + \bias_1 &= \tilde{\payoff}_{12} + \bias_2,\\
\tilde{\payoff}_{44} + \bias_2 &= \max\{\tilde{\payoff}_{23} + \bias_3,\tilde{\payoff}_{24}\},\\
\tilde{\payoff}_{44} + \bias_3 &= \min\{\tilde{\payoff}_{31} + \bias_1, \tilde{\payoff}_{32}+\bias_2,\tilde{\payoff}_{34}\}.
\end{align*}
Since $\tilde{\payoff}_{31} + \tilde{\payoff}_{12} - \tilde{\payoff}_{44} \ge -3 > -9 \ge \tilde{\payoff}_{32}$, player Min cannot choose the edge $(3,1)$ in any bias-induced policy. Moreover, we have $\bias_2 \ge \bias_3 + \tilde{\payoff}_{44} - \tilde{\payoff}_{32} \ge \bias_3 + 8$ from the third equation. Since $\tilde{\payoff}_{23} - \tilde{\payoff}_{44} + \bias_3 \le 2 + \bias_3$, we get that $\bias_2 = \tilde{\payoff}_{24} - \tilde{\payoff}_{44}$. Since $\tilde{\payoff}_{32} + \tilde{\payoff}_{24} - \tilde{\payoff}_{44} \le -7 < -1 \le \tilde{\payoff}_{34}$, we get that the unique bias-induced policy is the one in which Min uses the edge $(3,2)$. As above, this implies that in this example the bias-induced policies are unique and stable under perturbation.
\end{example}

\subsection{Games on ergodic graphs}

In \cref{sec:one_player} we performed an analysis of random DMDPs. To simplify the situation, we considered only DMDPs on graphs with a single strongly-connected component. For the sake of simplicity, our analysis of MPGs will also restrict the class of graphs under consideration. To do so, we need a condition similar to strong connectivity for MPGs. Since our approach is based on the ergodic equation, we use the following definition.

\begin{definition}
We say that a digraph $\dgraph$ is \emph{ergodic} if the ergodic equation has a solution for every choice of weights $\payoff$. Equivalently, by \cref{ergodic_solvability}, a digraph is ergodic if and only if every MPG on this graph has a value that does not depend on the initial state.
\end{definition}

This notion of ergodicity was studied in \cite{gurvich_lebedev} and extended to wider classes of games in \cite{hochartdominion,boros_gurvich_makino}.\footnote{We also note that the notion of an ``ergodic'' game is defined in many different ways by different authors, depending on the application.} In particular, these papers prove that the ergodic graphs are fully characterized using the notion of dominions. We give this characterization for the sake of completeness.
 
\begin{definition}
A \emph{dominion of Max} is a set of states $\dominion \subseteq [n]$ such that if the game starts in $\dominion$, then Max can force it to stay in $\dominion$. More precisely, $\dominion$ is a dominion if every vertex $\vertex \in \dominion \cap \Maxvertices$ has at least one outgoing edge that goes to $\dominion$ and all outgoing edges of all vertices $\vertex \in \dominion \cap \Minvertices$ go to $\dominion$. We analogously define the notion of a dominion of Min.
\end{definition}

We note that the notion of the dominion depends only on the underlying digraph and not on the particular choice of the weights.

\begin{theorem}[{\cite{gurvich_lebedev}, see also \cite{hochartdominion,boros_gurvich_makino}}]\label{ergodic_graph}
A digraph $\dgraph$ is ergodic if and only if there is no disjoint nonempty sets $\dominion_1, \dominion_2 \subseteq [n]$ such that $\dominion_1$ is a dominion of Max and $\dominion_2$ is a dominion of Min.
\end{theorem}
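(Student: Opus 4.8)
\emph{Proof plan.} I would prove both implications in contrapositive form.

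\textbf{If disjoint dominions exist, $\dgraph$ is not ergodic.} Suppose $\dominion_1$ is a nonempty dominion of Max and $\dominion_2$ a nonempty dominion of Min, with $\dominion_1\cap\dominion_2=\emptyset$. Choose weights $\payoff_{ij}:=1$ whenever $i,j\in\dominion_1$, $\payoff_{ij}:=-1$ whenever $i,j\in\dominion_2$, and $\payoff_{ij}:=0$ otherwise. Starting from any $i\in\dominion_1$, player Max can follow a policy that picks, at every Max-vertex of $\dominion_1$, an outgoing edge into $\dominion_1$ (such an edge exists by the definition of a Max-dominion); since every outgoing edge of every Min-vertex of $\dominion_1$ also leads into $\dominion_1$, the token stays in $\dominion_1$ forever, every traversed edge has weight $1$, and the payoff is $1$. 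Hence $\gameval_i\ge 1$ for all $i\in\dominion_1$, and symmetrically $\gameval_i\le -1$ for all $i\in\dominion_2$. As $\dominion_1,\dominion_2$ are nonempty and disjoint, $\gameval$ is not constant, so by \cref{ergodic_solvability} the graph is not ergodic.

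\textbf{If $\dgraph$ is not ergodic, disjoint dominions exist.} Pick weights $\payoff$ for which $\gameval$ is non-constant, put $M:=\max_i\gameval_i>m:=\min_i\gameval_i$, and set $A:=\{i:\gameval_i=M\}$ and $B:=\{i:\gameval_i=m\}$, which are nonempty and disjoint. I claim $A$ is a Max-dominion and $B$ a Min-dominion; this finishes the proof. The claim rests on two facts about the local behaviour of the value vector, which I would establish first: \emph{(i)} for every edge $(i,j)$ we have $\gameval_i\le\gameval_j$ if $i\in\Minvertices$ and $\gameval_i\ge\gameval_j$ if $i\in\Maxvertices$; and \emph{(ii)} fixing a pair of optimal policies $(\sigma^*,\tau^*)$ from \cref{th:value_mpg}, $\gameval_{\tau^*(i)}=\gameval_i$ for $i\in\Minvertices$ and $\gameval_{\sigma^*(i)}=\gameval_i$ for $i\in\Maxvertices$. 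Statement (ii) follows because optimality forces $g_k(\sigma^*,\tau^*)=\gameval_k$ for all $k$, while the play from $i$ under $(\sigma^*,\tau^*)$ is one edge to $\tau^*(i)$ (resp.\ $\sigma^*(i)$) followed by the play from that vertex, which enters the same final cycle, so the two limiting averages coincide. For (i) with $i\in\Minvertices$ and $(i,j)\in\edges$: Min may play the strategy that first moves $i\to j$ and thereafter follows $\tau^*$; prepending one edge does not change the limiting average, and $\tau^*$ played from $j$ guarantees payoff at most $\gameval_j$ against \emph{any} strategy of Max, so from $i$ this strategy guarantees Min a payoff at most $\gameval_j$; since Max can guarantee at least $\gameval_i$ from $i$ no matter what Min does, we get $\gameval_i\le\gameval_j$. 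The Max case is symmetric.

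Granting (i)–(ii): if $i\in A\cap\Minvertices$ then every outgoing edge $(i,j)$ has $\gameval_j\ge\gameval_i=M$, hence $\gameval_j=M$ and $j\in A$; if $i\in A\cap\Maxvertices$ then $\gameval_{\sigma^*(i)}=M$, so $i$ has an outgoing edge into $A$. Thus $A$ is a Max-dominion, and dually $B$ is a Min-dominion; they are nonempty and disjoint, as required. \textbf{Main obstacle.} The first implication is routine; in the second, the entire content sits in fact (i), the monotonicity of the value along edges. The delicate point there is that the ``detour, then follow $\tau^*$'' strategy is not memoryless, so one cannot stay within the class of policies, and must instead use the strengthened form of \cref{th:value_mpg} — that an optimal policy is optimal even against history-dependent opponents. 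Once (i)–(ii) are available, recognizing that the extreme level sets of $\gameval$ are dominions is immediate.
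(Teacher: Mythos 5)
Your argument is correct, but note that the paper does not prove this statement at all: it quotes the characterization from the literature (Gurvich--Lebedev, and the later treatments in Hochart and Boros--Gurvich--Makino) ``for the sake of completeness,'' so there is no in-paper proof to compare against. What you supply is a self-contained proof, and both halves check out. The easy direction (disjoint dominions $\Rightarrow$ non-ergodic) via the $\pm 1$ weighting is exactly the standard witness: the dominion-staying policies pin the value at $\ge 1$ on $\dominion_1$ and $\le -1$ on $\dominion_2$, and non-constancy of $\gameval$ contradicts ergodicity through \cref{ergodic_solvability}. For the converse, your facts (i)--(ii) amount to the standard dynamic-programming property of the value vector, $\gameval_i = \min_{(i,j)\in\edges}\gameval_j$ at Min vertices and $\gameval_i = \max_{(i,j)\in\edges}\gameval_j$ at Max vertices, from which the extreme level sets of $\gameval$ are immediately seen to be dominions; this is essentially how the cited references argue. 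You correctly identify the one delicate point: the ``detour then follow $\tau^*$'' strategy is history-dependent, so (i) needs optimality of $\tau^*$ against arbitrary (non-memoryless) strategies of the opponent, and the paper does state this strengthened form of \cref{th:value_mpg} in the discussion following it, so your appeal to it is legitimate within the paper's framework. The only cosmetic caveat is that prepending one edge leaves the liminf average unchanged because weights are bounded (finitely many of them), which you implicitly use and which is harmless here.
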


A standard example of an ergodic graph is a complete bipartite graph in which $\Maxvertices, \Minvertices$ form the bipartition. We note that ergodic graphs are very representative for MPGs, because solving MPGs on complete bipartite graphs is as hard as solving MPGs on general graphs.

\begin{theorem}[{\cite[Corollary~19]{ChatterjeeHenzingerKrinningerNanongkai:2014}}]
The problem of solving MPGs is polynomial-time (Turing) reducible to the problem of solving MPGs on complete bipartite graphs.
\end{theorem}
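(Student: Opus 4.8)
The plan is to build a polynomial-time Turing reduction. Given an arbitrary MPG $\dgraph=([n],\edges,\payoff)$ (rescaled so that all weights are integers of polynomial bit-length) and a query state $s$, I would compute $\gameval_s(\dgraph)$ using an oracle that solves MPGs on complete bipartite graphs, and then recover optimal policies with further oracle calls. Three ingredients are needed: a gadget that makes the graph bipartite with bipartition $\Maxvertices\dunion\Minvertices$; a gadget that makes it complete; and the observation that one should only track the \emph{sign} of the game value, recovering the value itself by binary search. The sign-only route is essentially forced: along any walk of a bipartite graph the two sides alternate, so a same-player edge must be subdivided into a path of even length and a cross-player edge into one of odd length — hence no subdivision with a uniform blow-up factor can preserve mean-payoff values, although it does preserve the sign of the total weight of every cycle.

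For the construction, I would fix a rational threshold $\theta$ of polynomial bit-length, let $\dgraph_\theta$ be $\dgraph$ with $\theta$ subtracted from every weight (after clearing the denominator of $\theta$ the weights stay integral), and build $\dgraph'_\theta$ as follows. Put $\Maxvertices$ on the left and $\Minvertices$ on the right; keep every cross-player edge; for every edge $(u,v)$ of $\dgraph_\theta$ with $u,v$ controlled by the same player $P$, replace it by a fresh \emph{hub} vertex controlled by the opponent of $P$ and placed on the opposite side, with an in-edge $u\to\mathrm{hub}$ of weight $\payoff_{uv}-\theta$ and an out-edge $\mathrm{hub}\to v$ of weight $0$; finally add every still-missing edge of the bipartite graph, giving it weight $-W$ if it leaves a Max vertex and $+W$ if it leaves a Min vertex, where $W$ is chosen larger than $n+m$ times the largest absolute weight present. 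The outcome is a complete bipartite MPG with at most $n+m$ vertices, weights of polynomial bit-length, bipartition equal to its Max/Min partition, and every vertex still possessing a ``real'' outgoing edge (one of weight $\ne\pm W$), so the standing assumption that each vertex has an out-edge holds.

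The correctness argument I have in mind is: at a hub, the controlling player's only real move is to the designated vertex $v$; and no cycle using an edge of weight $\pm W$ occurs in optimal play, since a $+W$ edge leaves a Min vertex and Min will not take it (every Min vertex has a real, bounded-weight out-edge), and symmetrically for $-W$ edges leaving Max vertices. Thus a positional strategy in $\dgraph'_\theta$ avoiding $\pm W$ edges corresponds bijectively to a positional strategy in $\dgraph_\theta$ (original choices at original vertices, forced moves at hubs), and the cycle reached from $s$ under a pair of such strategies is just the hub-subdivision of the one reached in $\dgraph_\theta$, with the same total weight (hubs contribute $0$). Invoking the standard fact that, for an MPG, $\gameval_s\ge 0$ iff the first player has a positional strategy under which every cycle reachable from $s$ has non-negative total weight, one then gets $\gameval_s(\dgraph'_\theta)\ge 0 \iff \gameval_s(\dgraph_\theta)\ge 0 \iff \gameval_s(\dgraph)\ge\theta$, and symmetrically with $\le$; so a single oracle call yields $\sign(\gameval_s(\dgraph)-\theta)$. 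Since $\gameval_s(\dgraph)$ is a rational with denominator at most $n$ and polynomially bounded numerator, $\poly(n)$ oracle calls (binary search on $\theta$) pin it down exactly, and repeating over all states recovers the value vector. Optimal positional strategies are then extracted by the usual greedy procedure that needs only value computations: for each Max vertex $u$ in turn, try the out-neighbours $v$ and permanently set $\sigma(u):=v$ for the first $v$ such that deleting all other out-edges of $u$ leaves the value unchanged (such a $v$ exists because an optimal positional strategy does), then recurse on the remaining free Max vertices; symmetrically for Min. Each value check is another run of the binary-search routine, so the whole reduction is polynomial time with polynomially many oracle calls.

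I expect the only genuine difficulty to be conceptual rather than technical: realizing that exact value-preservation is impossible on a complete bipartite graph whose bipartition is the player partition (the parity clash above), so that one must preserve only the sign and binary-search. After that, what remains is careful bookkeeping — sizing $W$ large but of polynomial bit-length, checking the strategy correspondence through the hubs, and the precision analysis of the binary search. To sidestep any delicacy about reachability of cycles in inputs that are not strongly connected, I would first reduce to strongly connected $\dgraph$ by processing strongly connected components bottom-up, which requires no oracle.
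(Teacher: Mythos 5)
Your reduction breaks at its central step, the claim that ``no cycle using an edge of weight $\pm W$ occurs in optimal play.'' In a mean-payoff game a player will happily traverse an arbitrarily expensive edge, because a cost incurred finitely often (off the eventual cycle) does not affect the limiting average, and even on-cycle it can be cancelled by the opponent's dummy edge. Concretely, take Max vertices $a,c$ and Min vertices $b,d$ with edges $a\to b$, $b\to a$, $b\to c$, $c\to b$ of weight $0$ and $c\to d$, $d\to c$ of weight $10$; this graph is strongly connected (so your SCC preprocessing does not help), all edges are cross-player (no hubs needed), and the values are $\gameval_a=\gameval_b=0$, $\gameval_c=\gameval_d=10$. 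Run your construction with threshold $\theta=5$: the shifted weights are $-5$ on the four $0$-edges and $+5$ on the two $10$-edges, and the completion adds exactly $a\to d$ with weight $-W$ and $d\to a$ with weight $+W$. Now let Max play $\sigma'(a)=d$ and $\sigma'(c)=d$. The only cycles reachable from $a$ in the resulting one-player game are $a\to d\to a$ (total $-W+W=0$) and $c\to d\to c$ (total $10$), so $\gameval_a(\dgraph'_\theta)\ge 0$, and the oracle would report $\gameval_a(\dgraph)\ge 5$ — but $\gameval_a(\dgraph)=0$. So the claimed equivalence $\gameval_s(\dgraph'_\theta)\ge 0\iff\gameval_s(\dgraph)\ge\theta$ is false, and the binary search built on it returns wrong values. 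The same phenomenon also infects the hub gadget: the hub is necessarily owned by the opponent (bipartition $=$ control partition), and after completion that opponent gets $\pm W$ escape edges there, so the ``forced move'' through the hub is not forced.

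The conceptual reason is that completing the bipartite graph with high-penalty edges \emph{for both players} lets either player relocate the token anywhere at a one-time cost that vanishes in the mean, so the completed game is effectively ergodicized: its value is (essentially) position-independent and cannot encode the per-vertex values, or even their signs, of the original game. This is exactly why the actual reduction of Chatterjee–Henzinger–Krinninger–Nanongkai (the paper gives no proof of this theorem; it cites their Corollary~19, and sketches the idea in item~1 of its Conclusion) is round-based rather than one-shot: each oracle call on a completed/ergodic game only extracts extremal information about the value vector (e.g., identifies the largest coordinate $\gameval_i$), after which the corresponding vertex is removed — a step the paper notes ``requires some care'' — and one recurses on the smaller game. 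Your binary-search and policy-extraction outer loops are fine, but they would have to sit on top of such an iterative scheme, not on a per-vertex sign-preserving completion, which cannot exist for the reason above.
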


\subsection{Generic uniqueness of bias-induced policies} From now on, we only consider games played on ergodic graphs. In this case, \cite[Theorem~3.2]{generic_uniqueness} shows that the bias becomes unique up to a constant if the weights are generic. In our analysis, we need to establish slightly more precise results than the ones stated in \cite{generic_uniqueness}, so we give complete proofs. Nevertheless, our proofs are based on the same techniques as the ones used in \cite{generic_uniqueness}.

We start by analyzing the biases in very simple graphs that contain only one cycle. The following fundamental lemma characterizes the bias of ``zero-player'' games. (Generalizations appear, e.g., in \cite[Corollary 3.8]{generic_uniqueness} or \cite[Section 8.2.3]{puterman}).

%
%

\begin{lemma}\label{one_cycle_graph}
Suppose that every vertex of $\dgraph$ has exactly one outgoing edge and that $\dgraph$ has only one directed cycle. Then, the bias is unique up to adding a constant to every coordinate. More precisely, if $(\gameval,\bias)$ is a solution of the ergodic equation and $k \in [n]$ is a vertex on the directed cycle, then for every $l \in [n]$ the difference $\bias_l - \bias_k$ is the distance from $l$ to $k$ in the digraph $\dgraph$ with weights $\payoff_{ij} - \gameval$.
\end{lemma}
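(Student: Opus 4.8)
The plan is to exploit the fact that the hypothesis turns $\dgraph$ into a \emph{functional graph}: write $s(i)$ for the unique vertex with $(i,s(i)) \in \edges$, so $i \mapsto s(i)$ is a well-defined map on $[n]$. Since $\dgraph$ has only one directed cycle $\cycle$, iterating $s$ from any vertex $l$ produces a directed walk $l, s(l), s^2(l), \dots$ which, by the pigeonhole principle, must eventually enter $\cycle$ and stay on it. Moreover, because the outgoing edge at every vertex is forced, in \cref{eq:ergodic} both the $\max$ over Max-vertices and the $\min$ over Min-vertices are over a single term, so the ergodic equation collapses to the single family of identities
\[
\gameval + \bias_i = \payoff_{i\, s(i)} + \bias_{s(i)} \qquad (i \in [n]).
\]

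First I would pin down $\gameval$. Summing this identity around the cycle $\cycle = (c_0, c_1, \dots, c_{\ell-1}, c_0)$, the $\bias$-terms telescope and cancel, leaving $\ell\gameval = \sum_{t} \payoff_{c_t c_{t+1}}$; that is, $\gameval$ is the mean weight of $\cycle$. (Alternatively this is immediate from \cref{ergodic_solvability}, since $\cycle$ is the only cycle that any play can reach.) Consequently, in the reweighted digraph with edge weights $w_{ij} \coloneqq \payoff_{ij} - \gameval$, the cycle $\cycle$ has total $w$-weight exactly $0$, and the ergodic equation reads $\bias_i = w_{i\,s(i)} + \bias_{s(i)}$.

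Next I would telescope this along forced walks. Fix the cycle vertex $k$ and an arbitrary $l \in [n]$; the forced walk from $l$ enters $\cycle$ and, since $k$ lies on $\cycle$, visits $k$. Let $l = v_0, v_1, \dots, v_N = k$ be the prefix of this walk up to the first visit to $k$. Iterating $\bias_{v_t} = w_{v_t v_{t+1}} + \bias_{v_{t+1}}$ gives $\bias_l - \bias_k = \sum_{t=0}^{N-1} w_{v_t v_{t+1}}$, the $w$-weight of a directed path from $l$ to $k$. It then remains to identify this number with the distance from $l$ to $k$ in $(\dgraph, w)$. This is the only point requiring a little care: in a functional graph the walk out of $l$ is \emph{unique}, so every directed $l \to k$ path is an initial segment of it ending at some visit of $k$, and any two such segments differ by a whole number of loops around $\cycle$, each of $w$-weight $0$; hence all directed $l \to k$ paths have the same $w$-weight, which is therefore the distance $d_w(l,k)$. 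This proves $\bias_l - \bias_k = d_w(l,k)$.

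Finally, uniqueness up to an additive constant follows at once: if $(\gameval,\bias)$ and $(\gameval,\bias')$ both solve the ergodic equation — the $\gameval$ being forced by \cref{ergodic_solvability} — then $\bias_l - \bias_k = d_w(l,k) = \bias'_l - \bias'_k$ for every $l$, so $\bias - \bias' = (\bias_k - \bias'_k)(1,\dots,1)$. The whole argument is elementary; the one place to be careful is the well-definedness of the distance, which rests precisely on the reweighted cycle having length $0$, a fact that itself comes from summing the ergodic equation around the cycle.
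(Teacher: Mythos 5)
Your proof is correct and follows essentially the same route as the paper's: both collapse the ergodic equation to $\bias_i = (\payoff_{i\,s(i)} - \gameval) + \bias_{s(i)}$ and telescope it along the unique forced path to $k$ (the paper phrases this as a backward induction on the in-tree obtained by deleting the single outgoing edge of $k$). Your explicit verification that the reweighted cycle has total weight zero, which makes the distance from $l$ to $k$ well defined, is a detail the paper leaves implicit, but the underlying argument is the same.
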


\begin{proof}
To start, note that the digraph $\dgraph$ is ergodic, because any dominion of any player contains the cycle of the graph. In particular, by \cref{ergodic_graph}, the ergodic equation on $\dgraph$ always has a solution. Let $k$ be any vertex in the cycle. Let $(k,k')$ be the edge of $\dgraph$ that goes out of $k$ and consider the digraph $\dgraph'$ obtained by removing this edge. Note that the digraph $\dgraph'$ is a tree rooted at $k$. Let $\bias$ be any bias. By adding a constant to every coordinate of $\bias$, we can suppose that $\bias_k = 0$. By definition, for every edge $(i,j)$ of the graph we have $\bias_i = \payoff_{ij} - \lambda + \bias_j$. Hence, we can find all coordinates of $\bias$ by a backward induction starting at $k$, and $\bias_l$ is precisely the distance from $l$ to $k$ in the graph with weights $\payoff_{ij} - \lambda$. This implies that the bias is unique up to adding a constant.
\end{proof}

\begin{definition}[$\Xi$, $\gameval^{\sigma,\tau}$, $\bias^{\sigma,\tau}$]
  We denote by $\Xi$ the set of pairs $(\sigma,\tau)$ of policies such that the induced subgraph $\dgraph^{\sigma,\tau}$ has only one cycle. Given a pair $(\sigma, \tau) \in \Xi$, we denote by $\gameval^{\sigma,\tau} \colon \R^{m} \to \R$ the function that takes as an argument the vector of weights and outputs the value of the MPG played on $\dgraph^{\sigma,\tau}$. Analogously, we denote by $\bias^{\sigma,\tau} \colon \R^{m} \to \R^{n}$ the function that takes as an argument the weight vector and outputs the bias of the MPG played on $\dgraph^{\sigma,\tau}$ that satisfies $(\bias^{\sigma,\tau}(\payoff))_k = 0$, where $k$ is the vertex with the minimal index on the cycle of $\dgraph^{\sigma,\tau}$.
\end{definition}

\begin{lemma}\label{linear_functions}
For every pair $(\sigma,\tau) \in \Xi$, the functions $\gameval^{\sigma,\tau}, \bias^{\sigma,\tau}$ are linear. Furthermore,  for all $\payoff, \tilde{\payoff} \in \R^{m}$ we have $\supnorm{\gameval^{\sigma,\tau}(\payoff) - \gameval^{\sigma,\tau}(\tilde{\payoff})} \le \supnorm{\payoff - \tilde{\payoff}}$ and $\supnorm{\bias^{\sigma,\tau}(\payoff) - \bias^{\sigma,\tau}(\tilde{\payoff})} \le 2n\supnorm{\payoff - \tilde{\payoff}}$.
\end{lemma}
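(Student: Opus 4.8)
The plan is to reduce everything to \cref{one_cycle_graph} applied to the subgraph $\dgraph^{\sigma,\tau}$. First I would observe that, since $\sigma$ prescribes the unique outgoing edge at each Max-vertex and $\tau$ at each Min-vertex, every vertex of $\dgraph^{\sigma,\tau}$ has exactly one outgoing edge, and, by definition of $\Xi$, the subgraph has exactly one directed cycle $\cycle$; so \cref{one_cycle_graph} applies to it. The MPG played on $\dgraph^{\sigma,\tau}$ is a zero-player game in which the token, from any initial state, eventually falls into $\cycle$ and stays there, so its value equals the mean weight of $\cycle$, namely $\gameval^{\sigma,\tau}(\payoff) = \frac{1}{\card{\cycle}}\sum_{(i,j) \in \cycle}\payoff_{ij}$. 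This is a fixed linear functional of $\payoff$ (a convex combination of coordinates), hence linear; and since its coefficients are nonnegative and sum to one, it is $1$-Lipschitz for $\supnorm{\cdot}$, which is the first estimate.

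Next I would treat the bias. Let $k$ be the vertex of smallest index on $\cycle$. For each $l \in [n]$, following the unique chain of outgoing edges from $l$ until $\cycle$ is reached and then continuing around $\cycle$ until $k$ is reached describes a unique directed path $P_l$ from $l$ to $k$ in $\dgraph^{\sigma,\tau}$; this path visits pairwise distinct vertices, hence has at most $n$ edges. \Cref{one_cycle_graph}, applied with $\gameval = \gameval^{\sigma,\tau}(\payoff)$ and the normalization $\bias_k = 0$, gives exactly $(\bias^{\sigma,\tau}(\payoff))_l = \sum_{(i,j) \in P_l}\bigl(\payoff_{ij} - \gameval^{\sigma,\tau}(\payoff)\bigr)$. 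Because $P_l$ is a combinatorial object independent of $\payoff$ and $\gameval^{\sigma,\tau}$ is linear in $\payoff$, each coordinate of $\bias^{\sigma,\tau}$, and hence $\bias^{\sigma,\tau}$ itself, is linear.

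The second estimate would then follow by a one-line computation from this formula: for every $l$,
\begin{align*}
\abs{(\bias^{\sigma,\tau}(\payoff))_l - (\bias^{\sigma,\tau}(\tilde\payoff))_l}
&\le \sum_{(i,j) \in P_l}\Bigl(\abs{\payoff_{ij} - \tilde\payoff_{ij}} + \abs{\gameval^{\sigma,\tau}(\payoff) - \gameval^{\sigma,\tau}(\tilde\payoff)}\Bigr)\\
&\le \card{P_l}\,\bigl(\supnorm{\payoff - \tilde\payoff} + \supnorm{\payoff - \tilde\payoff}\bigr) \le 2n\,\supnorm{\payoff - \tilde\payoff} ,
\end{align*}
using the first estimate and $\card{P_l} \le n$; taking the maximum over $l$ finishes the proof.

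I do not expect a real obstacle here: the lemma is essentially bookkeeping on top of \cref{one_cycle_graph}. The two points that need a little care are recording that the distance-realizing path $P_l$ in \cref{one_cycle_graph} uses at most $n$ edges, and observing that linearity of $\bias^{\sigma,\tau}$ requires no extra argument once linearity of $\gameval^{\sigma,\tau}$ and the explicit distance formula are available.
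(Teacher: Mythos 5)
Your proof is correct and takes essentially the same route as the paper's: both reduce to \cref{one_cycle_graph} on $\dgraph^{\sigma,\tau}$, identify $\gameval^{\sigma,\tau}$ as the mean weight of the unique cycle (hence linear and $1$-Lipschitz in $\supnorm{\cdot}$), and express $(\bias^{\sigma,\tau})_l$ as the weight of the fixed path from $l$ to $k$ with edge weights $\payoff_{ij}-\gameval^{\sigma,\tau}(\payoff)$, giving the $2n$-Lipschitz bound. Your write-up merely spells out the bookkeeping (uniqueness and length of the path $P_l$, and its independence of $\payoff$) that the paper leaves implicit.
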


\begin{proof}
The function $\gameval^{\sigma,\tau}$ is the mean value of the cycle of $\dgraph^{\sigma,\tau}$, so it is linear and satisfies $\supnorm{\gameval^{\sigma,\tau}(\payoff) - \gameval^{\sigma,\tau}(\tilde{\payoff})} \le \supnorm{\payoff - \tilde{\payoff}}$. Furthermore, by \cref{one_cycle_graph}, $(\bias^{\sigma,\tau}(\payoff))_l$ is the length of some path in the graph with weights $\payoff_{ij} - \lambda^{\sigma,\tau}(\payoff)$, so it is also linear and satisfies $\supnorm{\bias^{\sigma,\tau}(\payoff) - \bias^{\sigma,\tau}(\tilde{\payoff})} \le 2n\supnorm{\payoff - \tilde{\payoff}}$.
\end{proof}

\begin{lemma}\label{solution_from_subgraph}
Suppose that $(\gameval, \bias)$ solve the ergodic equation and that the bias $\bias$ induces a pair of policies $(\sigma,\tau) \in \Xi$. Then, $\gameval = \gameval^{\sigma,\tau}(\payoff)$ and $\bias = \bias^{\sigma,\tau}(\payoff) + \alpha(1,\dots,1)$ for some $\alpha \in \R$.
\end{lemma}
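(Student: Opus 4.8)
The plan is to observe that, once $\bias$ induces the pair $(\sigma,\tau)$, the pair $(\gameval,\bias)$ is forced to solve the ergodic equation of the \emph{zero-player} game obtained by restricting $\dgraph$ to the edges selected by $\sigma$ and $\tau$, and then to conclude purely from the uniqueness statement already established in \cref{one_cycle_graph}.

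First I would unwind the hypothesis. Since $\bias$ induces $\sigma$, for every $i \in \Maxvertices$ we have $\payoff_{i\sigma(i)} + \bias_{\sigma(i)} = \max_{(i,j)\in\edges}\{\payoff_{ij}+\bias_j\}$; combining this with the Max-row of \cref{eq:ergodic} yields $\gameval + \bias_i = \payoff_{i\sigma(i)} + \bias_{\sigma(i)}$. The same argument applied to $\tau$ and the Min-rows gives $\gameval + \bias_i = \payoff_{i\tau(i)} + \bias_{\tau(i)}$ for every $i \in \Minvertices$. Writing $\mathrm{succ}(i)$ for $\sigma(i)$ or $\tau(i)$ according to which player owns $i$, the pair $(\gameval,\bias)$ therefore satisfies $\gameval + \bias_i = \payoff_{i\,\mathrm{succ}(i)} + \bias_{\mathrm{succ}(i)}$ for all $i \in [n]$, which is exactly the ergodic equation of the MPG played on $\dgraph^{\sigma,\tau}$ (on that graph every vertex has a unique outgoing edge, so the max and min there each collapse to a single term).

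Next, since $(\sigma,\tau)\in\Xi$ the graph $\dgraph^{\sigma,\tau}$ has exactly one outgoing edge per vertex and exactly one directed cycle, so it satisfies the hypotheses of \cref{one_cycle_graph}. By \cref{ergodic_solvability} applied to the game on $\dgraph^{\sigma,\tau}$, the scalar $\gameval$ is the value of that game, i.e.\ $\gameval = \gameval^{\sigma,\tau}(\payoff)$; equivalently, summing the equalities $\gameval + \bias_i = \payoff_{i\,\mathrm{succ}(i)} + \bias_{\mathrm{succ}(i)}$ around the unique cycle telescopes the $\bias$-terms and leaves $\gameval$ equal to the mean weight of that cycle. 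Moreover \cref{one_cycle_graph} tells us the bias of the game on $\dgraph^{\sigma,\tau}$ is unique up to adding a constant vector; as $\bias$ is such a bias and $\bias^{\sigma,\tau}(\payoff)$ is, by definition, another one (normalized so that its coordinate at the least-indexed cycle vertex is $0$), we get $\bias = \bias^{\sigma,\tau}(\payoff) + \alpha(1,\dots,1)$ for some $\alpha\in\R$.

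I do not expect a genuine obstacle here: the entire content is the first observation that ``bias-induced'' upgrades the max/min ergodic equation to the single-successor equation on $\dgraph^{\sigma,\tau}$, after which \cref{one_cycle_graph} does the rest. The only points to state carefully are that $\Xi$ is precisely the family of policy pairs for which $\dgraph^{\sigma,\tau}$ has a single cycle (so the hypotheses of \cref{one_cycle_graph} hold verbatim), and that $\gameval^{\sigma,\tau}(\payoff)$ is really a scalar — the game on $\dgraph^{\sigma,\tau}$ has constant value because the token is deterministically funnelled into the unique cycle from every starting vertex — which is implicit in the definition of $\gameval^{\sigma,\tau}$ and in \cref{linear_functions}.
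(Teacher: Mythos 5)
Your proposal is correct and is essentially the paper's own argument, just written out in full: the paper's proof simply states that $(\gameval,\bias)$ solves the ergodic equation on $\dgraph^{\sigma,\tau}$ and invokes \cref{one_cycle_graph}, which is exactly the observation you unpack (bias-inducedness collapses the max/min to the single chosen edge) followed by the same appeal to uniqueness of the bias up to a constant and to the value of the single-cycle game. No gap; the extra detail you supply is a faithful expansion of the paper's one-line proof.
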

\begin{proof}
The pair $(\gameval,\bias)$ solves the ergodic equation on $\dgraph^{\sigma,\tau}$, so the claim follows from \cref{one_cycle_graph}.
\end{proof}


\medskip\noindent
The next lemma uses the same perturbation argument as \cite[Proposition 4.5]{generic_uniqueness}.

\begin{lemma}\label{biased_induced_one_cycle}
For every setting of weights there exists a pair $(\sigma, \tau)$ of bias-induced policies such that $(\sigma,\tau) \in \Xi$.
\end{lemma}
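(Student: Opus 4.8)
The plan is to reduce the assertion to a combinatorial question about a single fixed solution of the ergodic equation, and then to settle that question by perturbing the bias, using ergodicity to exclude the degenerate cases. Fix a solution $(\gameval,\bias)$ of \cref{eq:ergodic}, which exists by \cref{ergodic_solvability}, and let $H\subseteq\edges$ be the digraph of \emph{tight} edges: those $(i,j)$ with $i\in\Maxvertices$ attaining the maximum, or $i\in\Minvertices$ attaining the minimum, in \cref{eq:ergodic}. Every vertex has at least one out-edge in $H$, so a pair $(\sigma,\tau)$ is induced by $\bias$ precisely when $\dgraph^{\sigma,\tau}\subseteq H$; and since $\dgraph^{\sigma,\tau}$ is a functional graph (out-degree one at every vertex), its number of cycles equals its number of weakly connected components, so $(\sigma,\tau)\in\Xi$ iff $\dgraph^{\sigma,\tau}$ is weakly connected. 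Walking forward in $H$ from any vertex one eventually cycles, hence enters a terminal (sink) strongly connected component of $H$, so $H$ has at least one such component. If $H$ has a \emph{unique} terminal strongly connected component $S$, then every vertex reaches $S$ in $H$, and we build the desired pair: inside $S$ take a functional subgraph of $H|_S$ with a single cycle (possible since $S$ is strongly connected in $H$), and at every vertex outside $S$ take a tight edge lying on a shortest $H$-path to $S$; the resulting $(\sigma,\tau)$ is bias-induced and $\dgraph^{\sigma,\tau}$ is weakly connected, so $(\sigma,\tau)\in\Xi$. Thus it suffices to exhibit \emph{some} bias whose tight digraph has a unique terminal strongly connected component.

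To produce such a bias we argue by decreasing the number of terminal strongly connected components of $H(\bias)$, mimicking the perturbation of \cite[Proposition~4.5]{generic_uniqueness}. Suppose $H(\bias)$ has at least two terminal components, pick one, $S_1$, and let $A$ be the set of vertices that in $H(\bias)$ reach $S_1$ and no other terminal component; then $A$ is nonempty, proper, and has no outgoing tight edge. Replace $\bias$ by $\bias\pm\delta\ones_A$: reading off \cref{eq:ergodic}, the reduced weight $\payoff_{ij}-\gameval+\bias_j-\bias_i$ is unchanged for edges inside $A$ and inside $A^{c}$, decreases on edges from $A$ to $A^{c}$, and increases on edges from $A^{c}$ to $A$ (and symmetrically in the $-$ direction), so $\bias\pm\delta\ones_A$ stays a solution of the ergodic equation on a maximal interval $[0,\delta^{\pm}]$. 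If $\delta^{+}=\infty$ then $A$ is a dominion of Max and $A^{c}$ a dominion of Min; if $\delta^{-}=\infty$ the roles are reversed; either way two disjoint nonempty dominions of opposite players exist, contradicting ergodicity via \cref{ergodic_graph}. Hence $\delta^{\pm}<\infty$, and at the endpoint a new tight edge appears across the cut $(A,A^{c})$. Choosing the direction so that this new edge is directed \emph{out} of $A$, a vertex of $A$ (which previously reached only $S_1$) now reaches a terminal component distinct from $S_1$, and so do all vertices reaching it, which strictly decreases the potential counting pairs $(S,u)$ with $S$ a terminal component of the tight digraph and $u$ not reaching $S$; if this direction choice cannot be made, one again lands on a dominion obstruction excluded by ergodicity. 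Since $\dgraph$ is finite, iterating drives the potential to $0$, i.e.\ to a bias with a single terminal component, which by the first paragraph yields the required $(\sigma,\tau)\in\Xi$. (Alternatively one may phrase this as a limit: off a finite union of hyperplanes the bias and the induced pair are unique and the pair lies in $\Xi$, and for arbitrary weights one passes to the limit, using the linearity and Lipschitz bounds of \cref{linear_functions} and the finiteness of $\Xi$, with \cref{solution_from_subgraph} identifying the limiting value and bias.)

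The main obstacle is the bookkeeping in the second step: checking that the bias perturbation genuinely remains a solution of the ergodic equation while the chosen potential strictly decreases, and pinning down exactly which ``the perturbation runs to infinity'' patterns force the forbidden pair of dominions. This is where ergodicity enters essentially — exactly as in \cite[Proposition~4.5]{generic_uniqueness} — and it is the only delicate point; the reduction of the first paragraph and the construction of a one-cycle functional subgraph inside the unique terminal component are routine, using \cref{one_cycle_graph} only to describe the resulting zero-player game.
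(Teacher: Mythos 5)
Your first paragraph (reduce the lemma to finding a bias whose tight digraph $H(\bias)$ has a unique terminal strongly connected component, then build a one-cycle functional subgraph inside it and route everything else to it along shortest tight paths) is correct and is a genuinely different starting point from the paper, which instead perturbs the \emph{weights} ($\payoff_{ij}(\varepsilon)=\payoff_{ij}+\varepsilon^{w_{ij}}$, so that all cycle means become distinct, forcing any bias-induced pair of the perturbed game into $\Xi$) and then passes to the limit $\varepsilon\to 0$ along a subsequence with a constant pair, using \cref{linear_functions,solution_from_subgraph}. The problem is your second step, the bias-perturbation $\bias\pm\delta\ones_A$. Your analysis of the infinite cases is right ($\delta^{+}=\infty$ or $\delta^{-}=\infty$ does give disjoint dominions of opposite players, contradicting \cref{ergodic_graph}), but the claim that otherwise one can ``choose the direction so that the new tight edge is directed out of $A$, and if this choice cannot be made one again lands on a dominion obstruction'' is false. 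The obstruction is tight edges from $A^{c}$ \emph{into} $A$: a tight edge at a Max vertex of $A^{c}$ into $A$ forces $\delta^{+}=0$, and a tight edge at a Min vertex of $A^{c}$ into $A$ forces $\delta^{-}=0$, and both can occur simultaneously in an ergodic game. Concretely, take $\Minvertices=\{s_1,s_2,w\}$, $\Maxvertices=\{u\}$, self-loops at $s_1,s_2$ of weight $0$, edges $s_1\to s_2$ and $s_2\to s_1$ of weight $1$, and edges $u\to s_1,u\to s_2,w\to s_1,w\to s_2$ of weight $0$. This graph is ergodic (every Max dominion contains $\{s_1,s_2\}$ and every Min dominion meets it), and $(\gameval,\bias)=(0,0)$ solves the ergodic equation with tight digraph consisting of the two self-loops and the four edges out of $u,w$; its terminal components are $\{s_1\}$ and $\{s_2\}$. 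For $S_1=\{s_1\}$ one gets $A=\{s_1\}$, and $\bias+\delta\ones_A$ breaks the Max equation at $u$ while $\bias-\delta\ones_A$ breaks the Min equation at $w$, for every $\delta>0$; by symmetry the same happens for $S_1=\{s_2\}$. So your procedure stalls at the very first step with no dominion contradiction available (the lemma itself is of course true here: the bias $(1,0,1,0)$ in the order $(s_1,s_2,u,w)$ induces a pair in $\Xi$, but your walk from $\bias=0$ never reaches it). Even when $\delta^{\pm}\in(0,\infty)$, the binding constraint may be an edge from $A^{c}$ into $A$ becoming tight rather than an edge out of $A$, and since crossing tight edges can also \emph{disappear} at the endpoint, the asserted strict decrease of your potential is not established either.

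Your parenthetical alternative is essentially the paper's route, but as stated it is not usable at this point: ``off a finite union of hyperplanes the bias and the induced pair are unique and the pair lies in $\Xi$'' is (part of) \cref{polyhedral_complex}, which is proved \emph{after} and \emph{using} this lemma. What is available, and what the paper actually uses, is the weaker generic fact that when all cycle means are pairwise distinct, every bias-induced pair must lie in $\Xi$ (because all cycles of $\dgraph^{\sigma,\tau}$ have mean equal to the value); combining this with a perturbation of the weights, a constant subsequence of induced pairs as $\varepsilon\to 0$, and the continuity statements of \cref{linear_functions,solution_from_subgraph} gives the lemma. If you want to salvage your bias-perturbation approach, you would need either a different choice of the set to shift (uniform shifts of the exclusive basin of $S_1$ are blocked exactly by tight edges entering it from both players, as above) or a non-uniform perturbation, together with an honest proof that some progress measure decreases; as written, the argument has a genuine gap.
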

\begin{proof}
To start, note that by perturbing the weights slightly we can ensure that all the cycles of the digraph $\dgraph$ have different mean weights. To make this precise, let $w_{ij} \in \N$ be a collection of distinct natural numbers, one for every edge $(i,j)$ of $\dgraph$. Consider the perturbed weights $\payoff_{ij}(\varepsilon) \coloneqq \payoff_{ij} + \varepsilon^{w_{ij}}$ for $\varepsilon > 0$. Consider the digraph $\dgraph$ with weights $\payoff_{ij}(\varepsilon)$. Given two cycles of $\dgraph$, the set of $\varepsilon$ such that these cycles have the same mean is a solution to a nontrivial polynomial equation. In particular, this set is finite. Hence, there exists $\varepsilon_0 > 0$ such that all cycles of $\dgraph$ have different mean weights for all $0 < \varepsilon < \varepsilon_0$. For each $0 < \varepsilon < \varepsilon_0$, let $(\sigma^{(\varepsilon)},\tau^{(\varepsilon)})$ be any pair of bias-induced policies in the game with weights $\payoff_{ij}(\varepsilon)$. Note that $(\sigma^{(\varepsilon)},\tau^{(\varepsilon)}) \in \Xi$. Indeed, since $(\sigma^{(\varepsilon)},\tau^{(\varepsilon)})$ are optimal and $\dgraph$ is ergodic, every cycle of the graph $\dgraph^{\sigma^{(\varepsilon)},\tau^{(\varepsilon)}}$ has the same mean weight. Since all cycles have different mean weights, we get that the cycle is unique. Furthermore, since the set of policies is finite, we can find a sequence $\varepsilon_{\ell} \to 0^{+}$ such that $(\sigma^{(\varepsilon_\ell)},\tau^{(\varepsilon_\ell)})$ are identical for all $\ell$. Let $(\sigma,\tau) \coloneqq (\sigma^{(\varepsilon_1)},\tau^{(\varepsilon_1)}) \in \Xi$. We want to show that $(\sigma,\tau)$ are bias induced in the game with weights $\payoff$. To do so, let $k$ be the smallest index of a vertex in the cycle of $\dgraph^{\sigma,\tau}$ and let $\gameval^{(\ell)}, \bias^{(\ell)}$ be the value of the game with weights $\payoff_{ij}(\varepsilon_\ell)$ and the bias of this game that induces $(\sigma^{(\varepsilon_\ell)},\tau^{(\varepsilon_\ell)})$ and satisfies $\bias^{(\ell)}_k = 0$. By \cref{solution_from_subgraph}, we have $\gameval^{(\ell)} = \gameval^{\sigma,\tau}\bigl(\payoff(\varepsilon_{\ell})\bigr)$ and $\bias^{(\ell)} = \bias^{\sigma,\tau}\bigl(\payoff(\varepsilon_{\ell})\bigr)$ for all $\ell$. Therefore, the continuity of $\gameval^{\sigma,\tau},\bias^{\sigma,\tau}$ proven in \cref{linear_functions} implies that the pair $\bigl(\gameval^{\sigma,\tau}(\payoff),\bias^{\sigma,\tau}(\payoff)\bigr)$ solves the ergodic equation of the game with weights $\payoff$ and that $(\sigma,\tau)$ are induced by $\bias^{\sigma,\tau}(\payoff)$.
\end{proof}

By the above Lemma, if a pair of policies $(\sigma, \tau)$ is the unique pair of bias-induced policies, then we must have $(\sigma,\tau)\in\Xi$. This justifies the following.

\begin{definition}[$\polyh^{\sigma,\tau}$, $\Generic$]\label{def:polyhedral_cells}
  Given a pair of policies $(\sigma,\tau) \in \Xi$ we consider the set
  \begin{align*}
    \polyh^{\sigma,\tau} \coloneqq \{\payoff \in \R^m \colon &\text{$(\sigma,\tau)$ is the only pair of bias-induced policies}\\
                                                             &\text{in the MPG with weights $\payoff$}\}.
  \end{align*}
  We also let $\Generic \coloneqq \bigcup_{(\sigma,\tau)\in\Xi} \polyh^{\sigma,\tau}$.
\end{definition}

We now state the main theorem of this section, which extends \cite[Theorem~3.2]{generic_uniqueness} for deterministic games by showing that each maximum and minimum in the ergodic equation is achieved by a single edge for a generic choice of weights. Furthermore, \Cref{eq:polyhedral_cone} in the proof gives an explicit description of $\polyh^{\sigma,\tau}$ using affine inequalities.

\begin{theorem}[{cf. \cite[Theorem~3.2]{generic_uniqueness}}]\label{polyhedral_complex}
The sets $\polyh^{\sigma,\tau}$ are open polyhedral cones. Moreover, these cones are disjoint and $\R^m \setminus \Generic$ is included in a finite union of hyperplanes. In particular, this set has Lebesgue measure zero. Furthermore, if $\payoff \in \Generic$, then the ergodic equation has a single solution (up to adding a constant to the bias), and each maximum and minimum in the ergodic equation is achieved by a single edge.
\end{theorem}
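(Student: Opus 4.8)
The plan is to give an explicit description of each cone $\polyh^{\sigma,\tau}$ by a finite system of strict homogeneous linear inequalities in the weight vector $\payoff$, and then read off every assertion of the theorem from this description. Concretely, for $(\sigma,\tau)\in\Xi$ I would introduce the set $\mathcal{O}^{\sigma,\tau}$ of all $\payoff\in\R^m$ such that, writing $\gameval\coloneqq\gameval^{\sigma,\tau}(\payoff)$ and $\bias\coloneqq\bias^{\sigma,\tau}(\payoff)$, one has $\payoff_{i\tau(i)}+\bias_{\tau(i)}<\payoff_{ij}+\bias_j$ for every $i\in\Minvertices$ and every $(i,j)\in\edges$ with $j\neq\tau(i)$, and $\payoff_{i\sigma(i)}+\bias_{\sigma(i)}>\payoff_{ij}+\bias_j$ for every $i\in\Maxvertices$ and every $(i,j)\in\edges$ with $j\neq\sigma(i)$. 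By \cref{linear_functions} these are finitely many strict \emph{homogeneous} linear inequalities (indexed by the edges of $\dgraph$ outside $\dgraph^{\sigma,\tau}$), so $\mathcal{O}^{\sigma,\tau}$ is an open polyhedral cone. Moreover, on $\dgraph^{\sigma,\tau}$ the pair $(\gameval,\bias)$ satisfies by construction $\gameval+\bias_i=\payoff_{i\tau(i)}+\bias_{\tau(i)}$ at each $i\in\Minvertices$ and $\gameval+\bias_i=\payoff_{i\sigma(i)}+\bias_{\sigma(i)}$ at each $i\in\Maxvertices$, so the inequalities defining $\mathcal{O}^{\sigma,\tau}$ say exactly that $(\gameval,\bias)$ solves the ergodic equation of $\dgraph$ and that $\bias$ induces $(\sigma,\tau)$ with every maximum and minimum attained by a single edge. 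The heart of the proof is then the identity $\polyh^{\sigma,\tau}=\mathcal{O}^{\sigma,\tau}$.

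For the inclusion $\polyh^{\sigma,\tau}\subseteq\mathcal{O}^{\sigma,\tau}$ I would argue as follows: if $(\sigma,\tau)$ is the unique bias-induced pair at $\payoff$, then by \cref{solution_from_subgraph} the ergodic equation has $(\gameval^{\sigma,\tau}(\payoff),\bias^{\sigma,\tau}(\payoff))$ as its solution (up to an additive constant), and if some minimum at an $i\in\Minvertices$ were attained by a second edge $(i,j_0)$, then redefining $\tau$ to send $i$ to $j_0$ would produce a second bias-induced pair, a contradiction; symmetrically at Max vertices. Hence all the inequalities are strict and $\payoff\in\mathcal{O}^{\sigma,\tau}$. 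The converse inclusion $\mathcal{O}^{\sigma,\tau}\subseteq\polyh^{\sigma,\tau}$ is the crux, and here I would invoke the path-connectedness of the set $\mathcal{B}$ of biases (\cref{biases_connected}). Fix $\payoff\in\mathcal{O}^{\sigma,\tau}$ and let $\mathcal{B}_{\sigma,\tau}\subseteq\mathcal{B}$ be the set of biases that induce $(\sigma,\tau)$: it is closed in $\mathcal{B}$ (it is cut out by non-strict inequalities), nonempty (it contains $\bias^{\sigma,\tau}(\payoff)$, by the previous paragraph's converse observation), and — the key point — open in $\mathcal{B}$, since any bias $\bias'$ close to a $\bias_0\in\mathcal{B}_{\sigma,\tau}$ still satisfies the (shift-invariant) strict inequalities of $\mathcal{O}^{\sigma,\tau}$, which forces $\tau(i)$ (resp.\ $\sigma(i)$) to be the unique minimizer (resp.\ maximizer) of $j\mapsto\payoff_{ij}+\bias'_j$, and as $\bias'$ is a bias this makes $\bias'\in\mathcal{B}_{\sigma,\tau}$. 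Connectedness of $\mathcal{B}$ then gives $\mathcal{B}_{\sigma,\tau}=\mathcal{B}$, so every bias induces $(\sigma,\tau)$; by \cref{solution_from_subgraph} every bias is $\bias^{\sigma,\tau}(\payoff)$ up to an additive constant, and by strictness $(\sigma,\tau)$ is the only pair it induces, i.e.\ $\payoff\in\polyh^{\sigma,\tau}$.

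With $\polyh^{\sigma,\tau}=\mathcal{O}^{\sigma,\tau}$ in hand the theorem is immediate: each $\polyh^{\sigma,\tau}$ is an open polyhedral cone by construction, the cones are pairwise disjoint straight from \cref{def:polyhedral_cells}, and for any $\payoff\in\Generic$ the description of $\mathcal{O}^{\sigma,\tau}$ records that the ergodic equation has a unique solution (up to a constant) with each maximum and minimum achieved by a single edge. For the last claim, given $\payoff\notin\Generic$, \cref{biased_induced_one_cycle} furnishes a bias-induced pair $(\sigma,\tau)\in\Xi$, so with $\bias=\bias^{\sigma,\tau}(\payoff)$ all the non-strict versions of the defining inequalities hold; since $\payoff\notin\polyh^{\sigma,\tau}=\mathcal{O}^{\sigma,\tau}$, at least one is an equality, and because (by \cref{one_cycle_graph}) $\bias^{\sigma,\tau}$ depends only on the weights of the $n$ edges of $\dgraph^{\sigma,\tau}$, the corresponding linear form has coefficient $-1$ on the variable $\payoff_{ij_0}$ of the "second" edge $(i,j_0)\notin\dgraph^{\sigma,\tau}$ and is therefore not identically zero; thus $\payoff$ lies on one of the finitely many hyperplanes indexed by $(\sigma,\tau)\in\Xi$ and an edge outside $\dgraph^{\sigma,\tau}$, a set of Lebesgue measure zero. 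I expect the main obstacle to be precisely the openness of $\mathcal{B}_{\sigma,\tau}$ in $\mathcal{B}$: a purely combinatorial proof of this uniqueness via an "absorbing set / argmax of $\bias-\bias'$" argument is possible but is made awkward by the asymmetry between the player whose optimal edge is already unique and the competing bias $\bias'$, so routing through the connectedness result of \cite{generic_uniqueness} is the cleanest path.
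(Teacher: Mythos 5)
Your proposal is correct and follows essentially the same route as the paper: you identify $\polyh^{\sigma,\tau}$ with the explicit strict-inequality cone (the paper's \cref{eq:polyhedral_cone}), use \cref{solution_from_subgraph} for the easy inclusion, settle the hard inclusion via connectedness of the bias set (\cref{biases_connected}) together with \cref{one_cycle_graph}, and cover $\R^m\setminus\Generic$ by the nontrivial hyperplanes arising from \cref{biased_induced_one_cycle}. The only difference is presentational: you phrase the connectedness step as "the biases inducing $(\sigma,\tau)$ form a nonempty clopen subset of $\mathcal{B}$" (with openness resting on \cref{solution_from_subgraph} plus shift-invariance of the strict inequalities), whereas the paper constructs an explicit tube around the line $\{\bias+\alpha(1,\dots,1)\}$ and a disjoint open cover — the same idea in different packaging.
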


Before giving a proof, let us note that this proposition implies that bias-induced policies are generically unique. In particular, by \cref{bias_induced_optimality}, this means that Blackwell-optimal policies are generically unique and they coincide with the unique pair of bias-induced policies.

\begin{proof}
To show the first part, it is enough to prove that $\polyh^{\sigma,\tau}$ is the polyhedral cone given by
\begin{equation}\label{eq:polyhedral_cone}
\begin{aligned}
\{\payoff \in \R^m \colon &\forall (i,j) \in \edges, i \in \Maxvertices, j \neq \sigma(i), \lambda^{\sigma,\tau}(\payoff) + \bias_i^{\sigma,\tau}(\payoff) - \bias_j^{\sigma,\tau}(\payoff) > \payoff_{ij} \, , \\
&\forall (i,j) \in \edges, i \in \Minvertices, j \neq \tau(i), \lambda^{\sigma,\tau}(\payoff) + \bias_i^{\sigma,\tau}(\payoff) - \bias_j^{\sigma,\tau}(\payoff) < \payoff_{ij} 
\} \, .
\end{aligned}
\end{equation}
Indeed, since $\lambda^{\sigma,\tau}$ and $\bias^{\sigma,\tau}$ are linear by \cref{linear_functions}, the set above is an open polyhedral cone. Suppose that $\payoff \in \polyh^{\sigma,\tau}$ and let $\gameval$ be the value of the game with weights $\payoff$. Let $\bias$ be a bias that induces $(\sigma,\tau)$ and such that $\bias_k = 0$ where $k$ is the smallest index of a vertex in the cycle of $\dgraph^{\sigma,\tau}$. By \cref{solution_from_subgraph} we have $\gameval = \gameval^{\sigma,\tau}(\payoff)$ and $\bias = \bias^{\sigma,\tau}(\payoff)$. Since $(\sigma,\tau)$ are the only policies induced by $\bias$, we have $\gameval + \bias_i > \payoff_{ij} + \bias_j$ for $(i,j) \in \edges$ such that $i \in \Maxvertices$ and $j \neq \sigma(i)$ and analogously $\gameval + \bias_i < \payoff_{ij} + \bias_j$ for $(i,j) \in \edges$ such that $i \in \Minvertices$ and $j \neq \tau(i)$. Hence $\payoff$ satisfies the inequalities of the polyhedral cone given in \cref{eq:polyhedral_cone}.

Conversely, suppose that $\payoff$ is inside the polyhedral cone from \cref{eq:polyhedral_cone} and let $\gameval \coloneqq \gameval^{\sigma,\tau}(\payoff)$, $\bias \coloneqq \bias^{\sigma,\tau}(\payoff)$. By the definition of $\gameval^{\sigma,\tau},\bias^{\sigma,\tau}$, and of this polyhedral cone, the pair $(\lambda,\bias)$ solves the ergodic equation on $\dgraph$ and so $\gameval$ is the value and $\bias$ is a bias of the game on $\dgraph$. Moreover, owing to the strict inequalities, the bias $\bias$ induces $(\sigma,\tau)$ and these are the only policies induced by this bias. Therefore, to show that $(\sigma,\tau)$ is the only pair of bias-induced policies, it is enough to show that $\bias$ is the only bias up to adding a constant. To do so, let $\varepsilon > 0$ be such that
\begin{align*}
&\forall (i,j) \in \edges, i \in \Maxvertices, j \neq \sigma(i), \lambda + \bias_i - \bias_j > \payoff_{ij} + \varepsilon \, , \\
&\forall (i,j) \in \edges, i \in \Minvertices, j \neq \tau(i), \lambda + \bias_i - \bias_j < \payoff_{ij} - \varepsilon \, .
\end{align*}
Let $k$ be the smallest index of the cycle on $\dgraph^{\sigma,\tau}$ so that $\bias_k = 0$. Consider the set $B \subseteq \R^n$ defined as
\[
B \coloneqq \{z \in \R^{n} \colon \forall i, \abs{z_i - z_k - u_i} \le \frac{\varepsilon}{4} \} \, .
\]
Note that $B$ contains the line $\{\bias + \alpha(1,1,\dots,1) \colon \alpha \in \R\}$. Suppose that $v \in B$ is a bias of the game on $\dgraph$. Then, for every $i \in \Maxvertices, j \neq \sigma(i)$ we have $\lambda + v_i - v_j = \lambda + u_i - u_j + (v_i - v_k -u_i) - (v_j - v_k - u_j) > \payoff_{ij}$. Analogously, for every $i \in \Minvertices, j \neq \tau(i)$ we have $\lambda + v_i - v_j  < \payoff_{ij}$. Hence, $v$ is a bias of the game on $\dgraph^{\sigma,\tau}$ and \cref{one_cycle_graph} implies that $v$ is on the line $\{\bias + \alpha(1,1,\dots,1) \colon \alpha \in \R\}$. Therefore, the set of biases of the MPG with weights $\payoff$ is included in the union of two disjoint open sets
\[
(\R^{n} \setminus B) \cup \{z \in \R^{n} \colon \forall i, \abs{z_i - z_k - u_i} < \frac{\varepsilon}{4} \}  \, .
\] 
Since the set of biases is connected by \cref{biases_connected}, it can have a nonempty intersection with only one of these sets. Since $u$ itself is in $B$, all biases are therefore included in $B$ and so they all lie on the line $\{\bias + \alpha(1,1,\dots,1) \colon \alpha \in \R\}$. Therefore, we have shown that the bias is unique up to adding a constant and that $\payoff \in \polyh^{\sigma,\tau}$. This also shows the last part of the claim, that each maximum and minimum in the ergodic equation is achieved by a single edge.

Note that the sets $\polyh^{\sigma,\tau}$ are disjoint by definition, so it suffices to prove that $\R^m \setminus \Generic$ is included in a finite union of hyperplanes. Let $\payoff \in \R^m \setminus \Generic$ be any weight vector and $\lambda$ be the value of the resulting game. Then, \cref{biased_induced_one_cycle} implies that there exists a pair of policies $(\sigma,\tau) \in \Xi$ that are induced by some bias $\bias$. By \cref{solution_from_subgraph}, we have $\lambda = \lambda^{\sigma,\tau}(\payoff)$, $\bias = \bias^{\sigma,\tau}(\payoff) + \alpha(1,\dots,1)$ for some $\alpha \in \R$. In particular, $\payoff$ satisfies the inequalities
\begin{align*}
&\forall (i,j) \in \edges, i \in \Maxvertices, j \neq \sigma(i), \lambda^{\sigma,\tau}(\payoff) + \bias_i^{\sigma,\tau}(\payoff) - \bias_j^{\sigma,\tau}(\payoff) \ge \payoff_{ij} \, , \\
&\forall (i,j) \in \edges, i \in \Minvertices, j \neq \tau(i), \lambda^{\sigma,\tau}(\payoff) + \bias_i^{\sigma,\tau}(\payoff) - \bias_j^{\sigma,\tau}(\payoff) \le \payoff_{ij} \, .
\end{align*}
Since $\payoff \notin \polyh^{\sigma,\tau}$, at least one of these inequalities is satisfied as an equality. Note that each set of the form $\{ \payoff \in \R^{m} \colon \lambda^{\sigma,\tau}(\payoff) + \bias_i^{\sigma,\tau}(\payoff) - \bias_j^{\sigma,\tau}(\payoff) = \payoff_{ij}\}$ is a nontrivial hyperplane because the left-hand side of this equality does not depend of $\payoff_{ij}$ when $j \notin \{\sigma(i),\tau(i)\}$. Hence, the union of these hyperplanes satisfies the claim.
\end{proof}

\medskip\noindent
The inequalities given in \cref{eq:polyhedral_cone} mean that $\payoff \in \polyh^{\sigma,\tau}$ if and only if every maximum and minimum in the ergodic equation is achieved only once, by the edge used in $(\sigma,\tau)$. In particular, this property does not change if we take an edge that is not used in $(\sigma,\tau)$ and decrease or increase its weight, depending on which player controls this edge. This is formalized in the next observation, which gives us a two-player analogue of the ``stability'' property needed in the proof of \cref{robust_cycle_one_player_second}.

\begin{lemma}\label{two_player_stability}
Suppose that $\payoff \in \polyh^{\sigma,\tau}$ for some $(\sigma, \tau) \in \Xi$. If $(i,j) \in \edges$ is such that $i \in \Minvertices$, $j \neq \tau(i)$, then we have $(x,\payoff_{-ij}) \in \polyh^{\sigma,\tau}$ for all $x > \lambda^{\sigma,\tau}(\payoff) + \bias_i^{\sigma,\tau}(\payoff) - \bias_j^{\sigma,\tau}(\payoff)$. Analogously, if $(i,j) \in \edges$ is such that $i \in \Maxvertices$, $j \neq \sigma(i)$, then we have $(x,\payoff_{-ij}) \in \polyh^{\sigma,\tau}$ for all $x < \lambda^{\sigma,\tau}(\payoff) + \bias_i^{\sigma,\tau}(\payoff) - \bias_j^{\sigma,\tau}(\payoff)$.
\end{lemma}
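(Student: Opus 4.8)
The plan is to deduce the lemma directly from the explicit half-space description of $\polyh^{\sigma,\tau}$ obtained inside the proof of \Cref{polyhedral_complex}, namely from \Cref{eq:polyhedral_cone}, which states that $\polyh^{\sigma,\tau}$ is exactly the open polyhedral cone cut out by the strict inequalities $\lambda^{\sigma,\tau}(\payoff) + \bias_{i'}^{\sigma,\tau}(\payoff) - \bias_{j'}^{\sigma,\tau}(\payoff) > \payoff_{i'j'}$ for Max-edges $(i',j')$ with $j'\neq\sigma(i')$, and $<$ for Min-edges with $j'\neq\tau(i')$. The one structural fact I would invoke is that the maps $\lambda^{\sigma,\tau}$ and $\bias^{\sigma,\tau}$ are computed only from the weights carried by the edges of the subgraph $\dgraph^{\sigma,\tau}$: $\lambda^{\sigma,\tau}$ is the mean weight of its unique cycle, and by \Cref{one_cycle_graph} each coordinate of $\bias^{\sigma,\tau}$ is a path length inside $\dgraph^{\sigma,\tau}$ with weights $\payoff_{kl}-\lambda^{\sigma,\tau}$. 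Since $i\in\Minvertices$ and $j\neq\tau(i)$, the edge $(i,j)$ is \emph{not} an edge of $\dgraph^{\sigma,\tau}$, so replacing $\payoff_{ij}$ by any $x$ leaves $\lambda^{\sigma,\tau}$ and $\bias^{\sigma,\tau}$ unchanged; in particular the threshold $t \coloneqq \lambda^{\sigma,\tau}(\payoff)+\bias_i^{\sigma,\tau}(\payoff)-\bias_j^{\sigma,\tau}(\payoff)$ equals $\lambda^{\sigma,\tau}(x,\payoff_{-ij})+\bias_i^{\sigma,\tau}(x,\payoff_{-ij})-\bias_j^{\sigma,\tau}(x,\payoff_{-ij})$ for every $x\in\R$.

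Next I would simply check the inequalities of \Cref{eq:polyhedral_cone} at the weight vector $(x,\payoff_{-ij})$. For every edge $(i',j')\neq(i,j)$ appearing in those inequalities, both sides are literally the same as at $\payoff$ — the left-hand side depends only on $\dgraph^{\sigma,\tau}$, and the right-hand side is $\payoff_{i'j'}$, which is untouched — so the inequality holds because $\payoff\in\polyh^{\sigma,\tau}$. The single edge whose weight changed is the Min-edge $(i,j)$ with $j\neq\tau(i)$, and the corresponding inequality of \Cref{eq:polyhedral_cone} reads $\lambda^{\sigma,\tau}(x,\payoff_{-ij})+\bias_i^{\sigma,\tau}(x,\payoff_{-ij})-\bias_j^{\sigma,\tau}(x,\payoff_{-ij}) < x$, i.e. $t<x$. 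Hence $(x,\payoff_{-ij})$ belongs to the cone of \Cref{eq:polyhedral_cone}, which by \Cref{polyhedral_complex} is exactly $\polyh^{\sigma,\tau}$, if and only if $x>t$, giving the first assertion. The Max case is completely symmetric: for $i\in\Maxvertices$, $j\neq\sigma(i)$, the edge $(i,j)$ contributes the inequality $t > x$, satisfied precisely when $x<t$.

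I do not expect any genuine obstacle here: the statement is a bookkeeping consequence of the explicit polyhedral description of $\polyh^{\sigma,\tau}$ already proved. The only point that warrants care is the observation made in the first paragraph — that perturbing the weight of an edge lying outside $\dgraph^{\sigma,\tau}$ does not alter $\lambda^{\sigma,\tau}$, $\bias^{\sigma,\tau}$, or the threshold $t$ — which is immediate from \Cref{one_cycle_graph} and the definitions of $\lambda^{\sigma,\tau}$ and $\bias^{\sigma,\tau}$. If one preferred to avoid quoting \Cref{eq:polyhedral_cone} and instead argue from scratch, one could note that $\bigl(\lambda^{\sigma,\tau}(\payoff),\bias^{\sigma,\tau}(\payoff)\bigr)$ still solves the ergodic equation with weights $(x,\payoff_{-ij})$ once $x$ lies on the correct side of $t$ (the only constraint that could be violated is the one at $(i,j)$), and that $(\sigma,\tau)$ remains the unique pair of bias-induced policies by repeating the connectedness-of-biases argument from the proof of \Cref{polyhedral_complex}, which again only uses the weights on $\dgraph^{\sigma,\tau}$ together with the strictness margin at the remaining edges.
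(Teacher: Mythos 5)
Your proposal is correct and is essentially the paper's own proof: the paper likewise observes that $\gameval^{\sigma,\tau}$ and $\bias^{\sigma,\tau}$ do not depend on $\payoff_{ij}$ and then reads the claim off the explicit inequality description of $\polyh^{\sigma,\tau}$ in \cref{eq:polyhedral_cone}. Your write-up just spells out this bookkeeping in more detail.
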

\begin{proof}
Note that the functions $\gameval^{\sigma,\tau}, \bias^{\sigma,\tau}$ do not depend on the variable $\payoff_{ij}$. Therefore, the claim is an immediate consequence of the description of $\polyh^{\sigma,\tau}$ given in \cref{eq:polyhedral_cone}.
\end{proof}

%


\subsection{$\polyh^C$ versus $\polyh^{\sigma,\tau}$}
The following two examples illustrate the difference between the partition of $\R^m = \bigcup_C \polyh^C$ into polyhedral cells of \cref{def:polyhedral_cells_one_player} and the partition $\R^m = \bigcup_{\sigma,\tau} \polyh^{\sigma,\tau}$ of \cref{def:polyhedral_cells}. As it turns out, even in the special case (where the comparison makes sense) of single-player games this is not the same partition.

\begin{figure}[t]
\centering
\begin{tikzpicture}[scale=0.75,>=stealth',max/.style={draw,rectangle,minimum size=0.5cm},min/.style={draw,circle,minimum size=0.5cm},av/.style={draw, circle,fill, inner sep = 0pt,minimum size = 0.2cm}]

\node[min] (min1) at (-2, 0) {$1$};
\node[min] (min3) at (2, 0) {$2$};

\draw[->] (min1) to[in=160,out=20] (min3);

\draw[->] (min3) to[out=200,in=-20] (min1);

\draw[->] (min3) to[out=-30,in=30,looseness=8] (min3);
\draw[->] (min1) to[in=150,out=210,looseness=8] (min1);

\end{tikzpicture}
\caption{An example of a one-player game with $3$ policies in $\Xi$.} \label{fig:polyh_complex}
\end{figure}
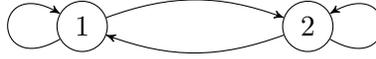

\begin{example}
Consider the one-player DMDP presented in \cref{fig:polyh_complex}. This game has $3$ policies in $\Xi$. Consider the policy given by the edges $(1,2),(2,1)$. The mean payoff of this policy is $\gameval = (\payoff_{12} + \payoff_{21})/2$ and the bias in the graph obtained by fixing these policies is $\bias_1 = 0, \bias_2 = \payoff_{21} - \gameval = (\payoff_{21} - \payoff_{12})/2$. Hence, the polyhedral cone on which these policies are the unique bias-induced policies is given by
\[
\frac{\payoff_{12} + \payoff_{21}}{2} < \payoff_{11}, \quad \frac{\payoff_{12} + \payoff_{21}}{2} < \payoff_{22}.
\]
For the policy given by the edges $(1,1),(2,1)$, the mean payoff is $\gameval = \payoff_{11}$ and the bias is $\bias_1 = 0, \bias_2 = \payoff_{21} - \gameval = \payoff_{21} - \payoff_{11}$. The corresponding polyhedral cone is given by
\[
2\payoff_{11} - \payoff_{21} < \payoff_{12}, \quad \payoff_{11} < \payoff_{22}.
\]
Finally, for the policy given by the edges $(1,2),(2,2)$, the mean payoff is $\gameval = \payoff_{22}$ and the bias is $\bias_1 = \payoff_{12} - \payoff_{22}, \bias_2 = 0$. The corresponding polyhedral cone is given by
\[
\payoff_{22} < \payoff_{11}, \quad 2\payoff_{22} - \payoff_{12} < \payoff_{21}.
\]
In this simple example, the inequalities that we obtained correspond precisely to the condition ``the final cycle of the policy is the only optimal cycle in the DMDP'' and so we got the same partition of $\R^4$ into polyhedral cells as in \cref{polyhedral_cells_one_player}.
\end{example}

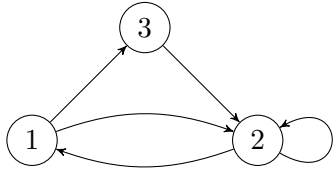
\begin{figure}[t]
\centering
\begin{tikzpicture}[scale=0.75,>=stealth',max/.style={draw,rectangle,minimum size=0.5cm},min/.style={draw,circle,minimum size=0.5cm},av/.style={draw, circle,fill, inner sep = 0pt,minimum size = 0.2cm}]

\node[min] (min1) at (-2, 0) {$1$};
\node[min] (min3) at (2, 0) {$2$};

\node[min] (min2) at (0,2) {$3$};

\draw[->] (min1) to[in=160,out=20] (min3);

\draw[->] (min3) to[out=200,in=-20] (min1);

\draw[->] (min3) to[out=-30,in=30,looseness=8] (min3);

\draw[->] (min1) to (min2);
\draw[->] (min2) to (min3);

\end{tikzpicture}
\caption{An example of a one-player game with $4$ policies in $\Xi$.} \label{fig:polyh_complex_two_ways}
\end{figure}

\begin{example}
Consider the one-player DMDP presented in \cref{fig:polyh_complex_two_ways}. This game has $4$ policies in $\Xi$. Consider the policy given by the edges $(1,2),(2,1)$. The mean payoff of this policy is given by $\gameval = (\payoff_{12} + \payoff_{21})/2$ and the bias for this policy is given by $\bias_1 = 0, \bias_2 = (\payoff_{21} - \payoff_{12})/2, \bias_3 = \payoff_{32} - \payoff_{12}$. Therefore, the corresponding polyhedral cone is given by
\[
\frac{3}{2}\payoff_{12} + \frac{1}{2}\payoff_{21} - \payoff_{32} < \payoff_{13}, \quad \frac{\payoff_{12} + \payoff_{21}}{2} < \payoff_{22} \, .
\]
Note that the second inequality means that the cycle $\{1,2\}$ is better than the cycle $\{2,2\}$, while the first one can be rewritten as $(\payoff_{13} + \payoff_{32} + \payoff_{21})/3 > (\payoff_{12} + \payoff_{21}) / 2$, meaning that the cycle $\{1,2\}$ is better than the cycle $\{1,3,2\}$, similarly to the previous example. However, note that in this example the cycle $\{2,2\}$ can be reached in two different ways. If we consider the policy given by $(1,2),(2,2)$, then the mean payoff is equal to $\gameval = \payoff_{22}$ and the bias for this policy is $\bias_1 = \payoff_{12} - \payoff_{22}, \bias_2 = 0, \bias_{3} = \payoff_{32} - \payoff_{22}$. Hence, the corresponding polyhedral cone $\polyh_1$ is given by
\[
\payoff_{12} + \payoff_{22} - \payoff_{32} < \payoff_{13}, \quad 2\payoff_{22} - \payoff_{12} < \payoff_{21} \, .  
\]
If we consider the policy $(1,3),(2,2)$, which reaches the same cycle, then $\gameval = \payoff_{22}$ and the bias for this policy is $\bias_1 = \payoff_{13} + \payoff_{32} - 2\payoff_{22}, \bias_2 = 0, \bias_{3} = \payoff_{32} - \payoff_{22}$. Hence, the corresponding polyhedral cone $\polyh_2$ is given by
\[
\payoff_{13} + \payoff_{32} - \payoff_{22}  < \payoff_{12}, \quad 3\payoff_{22} - \payoff_{13} - \payoff_{32} < \payoff_{21} \, .  
\]
Note that the cones $\polyh_1$ and $\polyh_2$ are disjoint. Moreover, if $\payoff \in \polyh_1 \cup \polyh_2$, then the cycle $\{2,2\}$ is better for Min that any other cycle. Indeed, for $\polyh_1$ the second inequality means that the cycle $\{2,2\}$ is better than the cycle $\{1,2\}$. Furthermore, the two inequalities together imply that the cycle $\{2,2\}$ is better than $\{1,3,2\}$ because $\payoff_{13} + \payoff_{32} + \payoff_{21} > \payoff_{12} + \payoff_{21} + \payoff_{22} > 3\payoff_{22}$. For $\polyh_2$, the second inequality implies that $\{2,2\}$ is better than $\{1,3,2\}$. Furthermore, the two inequalities together give $\payoff_{12} + \payoff_{21} > 2\payoff_{22}$, meaning that $\{2,2\}$ is better than $\{1,2\}$. If we consider the weights $\payoff_{12} = \payoff_{21} = 0$, $\payoff_{13} = \payoff_{32} = 2$, and $\payoff_{22} = -1$, then $\payoff \in \polyh_1 \setminus \polyh_2$, and if we consider the weights $\payoff_{13} = \payoff_{32} = \payoff_{21} = 0$, $\payoff_{22} = -1$, and $\payoff_{12} = 10$, then $\payoff \in \polyh_2 \setminus \polyh_1$. Hence, if $\payoff \in \polyh_1 \cup \polyh_2$, then $\{2,2\}$ is the only optimal cycle and we have two optimal policies, but only one bias-induced policy. In particular, we get a more precise partition of space into polyhedral cells that the partition from \cref{polyhedral_cells_one_player}.
\end{example}

\section{Smoothed analysis and a condition number}\label{sec:of-threshold_two_players}

We are now ready to perform the smoothed analysis of mean payoff games. We start by introducing our random model. We suppose that the mean payoff game is played on an ergodic graph $\dgraph$. We consider $(\payoff_{ij})$ to be absolutely continuous independent random variables and we denote by $\dens_{ij}$ the density function of $\payoff_{ij}$. We make two assumptions on $\dens_{ij}$. First, we assume that the game is normalized, i.e., $\E(\payoff_{ij}) \in [-1,1]$ for all $i,j$. Second, we assume there exists $\phi > 0$ giving us a bound on the variance: $$\Var(\payoff_{ij}) \le \frac{1}{\phi^2}\, ,$$ and a bound on the density functions: $$\dens_{ij}(y) \le \phi$$ for all $i,j$, and $y$.

\medskip \noindent
This simple random model is general enough to cover the three cases of smoothed analysis, uniform distribution over an interval, and exponentially-distributed weights:

\begin{example}
A standard setting for smoothed analysis is to assume a normal distribution $\payoff_{ij} \sim \Normal(\bar{\payoff}_{ij}, \rho^2)$, where $\bar{\payoff}_{ij}$ are some numbers in $[-1,1]$ and $\rho$ is the common standard deviation. In this case, $\Var(\payoff_{ij}) = \rho^2$ and $\dens_{ij}(y) \le \frac{1}{\rho\sqrt{2\pi}}$, so we can take $\phi \coloneqq 1/\rho$. Similarly, if weights $\payoff_{ij}$ are taken from a uniform distribution, $\payoff_{ij} \in \interval{\bar{\payoff}_{ij} - \frac{\rho}{2}}{\bar{\payoff}_{ij} + \frac{\rho}{2}}$, where $\bar{\payoff}_{ij} \in [-1,1]$, then $\Var(\payoff_{ij}) = \frac{1}{12}\rho^2$ and $\dens_{ij}(y) \le 1/\rho$, so we can take $\phi \coloneqq 1/\rho$. Another model used in the literature~\cite{MathieuWilson:2013} is to suppose that the random weights come from the exponential distribution with mean $1$. In this case, we have $\dens_{ij}(y) \le 1$ and $\Var(\payoff_{ij}) = 1$, so we can take $\phi \coloneqq 1$.
\end{example}

\bigskip\noindent
To study the behavior of the smoothed games, we introduce random variables similar to the ones we have seen in \cref{sec:one_player}. For every $(i,j) \in \edges$ such that $i \in \Minvertices$ we put
\begin{align*}
Z_{ij} = \inf\{x \colon &\text{$(x,\payoff_{-ij}) \in \Generic$ and the MPG with weights $(x,\payoff_{-ij})$ has a pair }\\
&\text{of bias-induced policies $(\sigma,\tau) \in \Xi$ such that $\tau(i) \neq j $}\} \, .
\end{align*}
Analogously, for every $(i,j) \in \edges$ such that $i \in \Maxvertices$ we put
\begin{align*}
Z_{ij} = \sup\{x \colon &\text{$(x,\payoff_{-ij}) \in \Generic$ and the MPG with weights $(x,\payoff_{-ij})$ has a pair }\\
&\text{of bias-induced policies $(\sigma,\tau) \in \Xi$ such that $\sigma(i) \neq j $}\} \, .
\end{align*}
We note that the assumption $(x,\payoff_{-ij}) \in \Generic$ implies that the MPG with weights $(x,\payoff_{-ij})$ has exactly one pair of bias-induced policies. Furthermore, we note that the variables $Z_{ij}$ may have values in $\{-\infty,+\infty\}$. This does not change the fact that we have an analogue of \cref{prob_estimate_one_player_second}.

\begin{lemma}\label{prob_estimate_two_players}
For any $\alpha > 0$ we have
\[
\Prob(\exists ij, \abs{\payoff_{ij} - Z_{ij}} \le \alpha) \le 2\alpha m \phi \, .
\]
\end{lemma}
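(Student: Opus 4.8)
The plan is to follow the proof of \cref{prob_estimate_one_player_second} almost verbatim; in fact the argument is even closer to that of \cref{prob_estimate_one_player}, since no ``stability'' input is needed here. The one structural fact to isolate is that, by construction, $Z_{ij}$ is a function of $\payoff_{-ij}$ alone: the set over which the infimum (for $i \in \Minvertices$) or supremum (for $i \in \Maxvertices$) is taken only involves weights of the form $(x, \payoff_{-ij})$ with $x$ ranging over all of $\R$, so replacing the coordinate $\payoff_{ij}$ by any other value cannot change $Z_{ij}$. Hence, conditioned on $\payoff_{-ij}$, the quantity $Z_{ij}$ is deterministic, and (after checking that $Z_{ij}$ is a genuine measurable, possibly infinite-valued, random variable) we may condition on it and integrate out $\payoff_{ij}$.

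First I would fix $(i,j) \in \edges$ and dispose of the degenerate cases: if $Z_{ij} = -\infty$ or $Z_{ij} = +\infty$, then $\abs{\payoff_{ij} - Z_{ij}} = +\infty$ surely, so $\Prob(\abs{\payoff_{ij} - Z_{ij}} \le \alpha) = 0$. On the event $\{Z_{ij} \in \R\}$, let $\hat{\dens}(\payoff_{-ij}) = \prod_{(i',j') \in \edges \setminus \{(i,j)\}} \dens_{i'j'}(\payoff_{i',j'})$ denote the joint density of $\payoff_{-ij}$. Since $Z_{ij}$ depends only on $\payoff_{-ij}$, Fubini's theorem gives
\begin{align*}
\Prob(\abs{\payoff_{ij} - Z_{ij}} \le \alpha)
&= \int_{\R^{m-1}} \Bigl( \int_{Z_{ij} - \alpha}^{Z_{ij} + \alpha} \dens_{ij}(t) \, dt \Bigr) \hat{\dens}(\payoff_{-ij}) \, d\payoff_{-ij} \\
&\le 2\alpha \phi \int_{\R^{m-1}} \hat{\dens}(\payoff_{-ij}) \, d\payoff_{-ij} = 2\alpha \phi \, ,
\end{align*}
where the inner bound uses $\dens_{ij}(y) \le \phi$ for all $y$, and the outer integral equals $1$. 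Combining the two cases, $\Prob(\abs{\payoff_{ij} - Z_{ij}} \le \alpha) \le 2\alpha\phi$ for every edge $(i,j)$. Taking the union bound over the $m$ edges of $\dgraph$ then yields $\Prob(\exists ij, \abs{\payoff_{ij} - Z_{ij}} \le \alpha) \le 2\alpha m \phi$, as required.

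I do not expect any real obstacle here: all the difficulty of the two-player setting is hidden in the definition of $Z_{ij}$ (and will be exploited in the next lemma, which relates $Z_{ij}$ to the ergodic equation), while the present statement uses only the single fact that $Z_{ij}$ does not depend on $\payoff_{ij}$, together with the density bound $\dens_{ij} \le \phi$. The only point deserving a line of justification is the measurability of $Z_{ij}$, so that the conditioning and the application of Fubini's theorem are legitimate — this is routine, since $Z_{ij}$ is an infimum/supremum of a Borel-definable family.
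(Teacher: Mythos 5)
Your proof is correct and follows essentially the same route as the paper: both exploit only that $Z_{ij}$ is a function of $\payoff_{-ij}$, apply Fubini together with the density bound $\dens_{ij} \le \phi$ on the event $\{Z_{ij} \in \R\}$ (the paper folds your separate $\pm\infty$ case into restricting the outer integral to that event), and finish with a union bound over the $m$ edges.
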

\begin{proof}
Fix an edge $(i,j)$ and note that $Z_{ij}$ depends only on $\payoff_{-ij}$ and not on $\payoff_{ij}$. Let $\hat{\dens}(\payoff_{-ij}) = \prod_{(i',j') \in \edges \setminus \{(i,j)\}} \dens_{i'j'}(\payoff_{i',j'})$. By Fubini's theorem we have
\begin{align*}
\Prob(Z_{ij} - \alpha \le \payoff_{ij} \le Z_{ij} + \alpha) &= \Prob(Z_{ij} \in \R \land Z_{ij} - \alpha \le \payoff_{ij} \le Z_{ij} + \alpha) \\
&= \int_{\{Z_{ij} \in \R\}}\Bigl( \int_{Z_{ij} - \alpha}^{Z_{ij} + \alpha} \dens_{ij}(\payoff_{ij}) d \payoff_{ij} \Bigr) \hat{\dens}(\payoff_{-ij}) d \payoff_{-ij} \\
&\le 2\alpha \phi \Prob(Z_{ij} \in \R) \le 2\alpha \phi \, .
\end{align*}
The union bound finishes the proof.
\end{proof}

\noindent
Furthermore, we can express the values of $Z_{ij}$ in terms of the functions $\gameval^{\sigma,\tau}$,$\bias^{\sigma,\tau}$.

\begin{lemma}\label{threshold_two_players}
Suppose that $\payoff \in \polyh^{\sigma,\tau}$ for some $(\sigma,\tau) \in \Xi$. Then for every $(i,j) \in \edges$ that is not used in $(\sigma,\tau)$ we have $Z_{ij} = \gameval^{\sigma,\tau}(\payoff) + \bias_i^{\sigma,\tau}(\payoff) - \bias_j^{\sigma,\tau}(\payoff)$.
\end{lemma}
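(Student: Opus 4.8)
The plan is to prove the two inequalities $Z_{ij}\le\bar x$ and $Z_{ij}\ge\bar x$ separately, where $\bar x \coloneqq \gameval^{\sigma,\tau}(\payoff)+\bias_i^{\sigma,\tau}(\payoff)-\bias_j^{\sigma,\tau}(\payoff)$. I will spell out the case $i\in\Minvertices$, $j\neq\tau(i)$; the case $i\in\Maxvertices$, $j\neq\sigma(i)$ is obtained by reversing all inequalities on $x$ and replacing $\tau$ by $\sigma$. The one structural fact used repeatedly is that, because the edge $(i,j)$ is absent from $\dgraph^{\sigma,\tau}$, the functions $\gameval^{\sigma,\tau}$ and $\bias^{\sigma,\tau}$ do not depend on the coordinate $\payoff_{ij}$ (as already observed in the proof of \cref{two_player_stability}), so neither does $\bar x$.

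For the bound $Z_{ij}\le\bar x$: since $\payoff\in\polyh^{\sigma,\tau}$ and $j\neq\tau(i)$, the description \cref{eq:polyhedral_cone} gives $\payoff_{ij}>\bar x$, and \cref{two_player_stability} yields $(x,\payoff_{-ij})\in\polyh^{\sigma,\tau}\subseteq\Generic$ for every $x>\bar x$. For such $x$ the MPG with weights $(x,\payoff_{-ij})$ has $(\sigma,\tau)$ as its unique pair of bias-induced policies, with $\tau(i)\neq j$, so $x$ belongs to the set defining $Z_{ij}$; letting $x\downarrow\bar x$ gives $Z_{ij}\le\bar x$.

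For the bound $Z_{ij}\ge\bar x$, I argue by contradiction that no $x<\bar x$ can lie in that set. Suppose $x<\bar x$, $(x,\payoff_{-ij})\in\Generic$, say $(x,\payoff_{-ij})\in\polyh^{\sigma',\tau'}$ with $(\sigma',\tau')\in\Xi$ and $\tau'(i)\neq j$. Since $(i,j)$ is then also absent from $\dgraph^{\sigma',\tau'}$, the number $t\coloneqq\gameval^{\sigma',\tau'}(\payoff)+\bias_i^{\sigma',\tau'}(\payoff)-\bias_j^{\sigma',\tau'}(\payoff)$ does not depend on $\payoff_{ij}$ and hence agrees with the same quantity evaluated at $(x,\payoff_{-ij})$; applying \cref{eq:polyhedral_cone} to $(x,\payoff_{-ij})\in\polyh^{\sigma',\tau'}$ (with $i\in\Minvertices$, $j\neq\tau'(i)$) gives $t<x$. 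Now \cref{two_player_stability} applied to $(\sigma',\tau')$ gives $(x',\payoff_{-ij})\in\polyh^{\sigma',\tau'}$ for every $x'>t$; since $\payoff_{ij}>\bar x>x>t$, this forces $\payoff\in\polyh^{\sigma',\tau'}$. If $(\sigma',\tau')\neq(\sigma,\tau)$ this contradicts disjointness of the cones in \cref{polyhedral_complex}; and if $(\sigma',\tau')=(\sigma,\tau)$ then $t=\bar x$, contradicting $t<x<\bar x$. Hence $Z_{ij}\ge\bar x$, and together with the previous paragraph $Z_{ij}=\bar x$.

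I do not anticipate a real obstacle: the argument is driven entirely by the explicit inequality description \cref{eq:polyhedral_cone} of $\polyh^{\sigma,\tau}$, the disjointness of these cones from \cref{polyhedral_complex}, and the monotone "stability" of \cref{two_player_stability}. The only points demanding attention are keeping the Min/Max inequality directions consistent and noting that the threshold $t$ for $(\sigma',\tau')$ is independent of $\payoff_{ij}$, which is what lets it be computed interchangeably at $(x,\payoff_{-ij})$ and at $\payoff$.
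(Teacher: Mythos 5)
Your proof is correct and follows essentially the same route as the paper's: the upper bound via \cref{two_player_stability} applied to $\polyh^{\sigma,\tau}$, and the lower bound by producing a point (here $\payoff$ itself, the paper's $(\max\{x,\payoff_{ij}\},\payoff_{-ij})$) lying in both cones and invoking the inequality description \cref{eq:polyhedral_cone} together with disjointness from \cref{polyhedral_complex}. The contradiction framing is only a cosmetic difference.
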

\begin{proof}
Fix $(i,j) \in \edges$ such that $i \in \Minvertices$ and $j \neq \tau(i)$. By \cref{two_player_stability}, we have $(x,\payoff_{-ij}) \in \polyh^{\sigma,\tau}$ for all $x > \gameval^{\sigma,\tau}(\payoff) + \bias_i^{\sigma,\tau}(\payoff) - \bias_j^{\sigma,\tau}(\payoff)$. This proves that $Z_{ij} \le \gameval^{\sigma,\tau}(\payoff) + \bias_i^{\sigma,\tau}(\payoff) - \bias_j^{\sigma,\tau}(\payoff)$. To prove the opposite inequality, fix $x \in \R$, suppose that $(x, \payoff_{-ij}) \in \Generic$, and that $\tilde{\sigma},\tilde{\tau}$ are such that $(x,\payoff_{-ij}) \in \polyh^{\tilde{\sigma},\tilde{\tau}}$ and $\tilde{\tau}(i) \neq j$. Then, by applying \cref{two_player_stability} to $\polyh^{\sigma,\tau}$ and $\polyh^{\tilde{\sigma},\tilde{\tau}}$ we get that $\bigl(\max\{x,\payoff_{ij}\},\payoff_{-ij}\bigr) \in \polyh^{\sigma,\tau} \cap \polyh^{\tilde{\sigma},\tilde{\tau}}$. Therefore, we have $(\tilde{\sigma},\tilde{\tau}) = (\sigma,\tau)$ and $x > \gameval^{\sigma,\tau}(\payoff) + \bias_i^{\sigma,\tau}(\payoff) - \bias_j^{\sigma,\tau}(\payoff)$ by the description of $\polyh^{\sigma,\tau}$ from \cref{eq:polyhedral_cone}. In particular, the infimum of all such $x$ also has $Z_{ij} \ge \gameval^{\sigma,\tau}(\payoff) + \bias_i^{\sigma,\tau}(\payoff) - \bias_j^{\sigma,\tau}(\payoff)$. The proof is analogous if $i \in \Maxvertices$.
\end{proof}

Before continuing, let us give an intuitive interpretation of \cref{prob_estimate_two_players,threshold_two_players}. In \cref{polyhedral_complex}, we proved that, for random games, the ergodic equation has a single solution (up to adding a constant to a bias) and that each maximum/minimum of this equation is achieved by a single edge. \Cref{prob_estimate_two_players,threshold_two_players} strengthen this statement: with high probability, the difference between the edge that achieves the maximum/minimum and the ``second best'' edge is large. This leads to the following analogue of \cref{robust_cycle_one_player_second}.

\begin{theorem}\label{robust_bias_policy}
Let $\delta \coloneqq 1/(4n(2n+1)m\phi)$. If the weights $\payoff$ are picked at random, then with high probability the whole ball $B_{\infty}(\payoff, \delta)$ is included in a single polyhedron $\polyh^{\sigma,\tau}$. More formally,
\[
\Prob(\exists (\sigma,\tau), B_{\infty}(\payoff, \delta) \subseteq \polyh^{\sigma,\tau}) \ge 1 - \frac{1}{n} \, .
\]
\end{theorem}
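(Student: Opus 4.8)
The plan is to mirror the proof of \cref{robust_cycle_one_player_second}, now using the two-player machinery that has already been assembled: the ``stability'' lemma \cref{two_player_stability}, the linearity and Lipschitz bounds of \cref{linear_functions}, the threshold identities \cref{threshold_two_players}, and the probabilistic estimate \cref{prob_estimate_two_players}.

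First I would apply \cref{prob_estimate_two_players} with $\alpha \coloneqq \frac{1}{2nm\phi}$, which yields that with probability at least $1-2\alpha m\phi = 1-\frac1n$ we have $\abs{\payoff_{ij}-Z_{ij}} > \alpha$ for every edge $(i,j)\in\edges$. Since also $\payoff\in\Generic$ with probability one by \cref{polyhedral_complex}, it suffices to prove the deterministic statement: if $\payoff\in\polyh^{\sigma,\tau}$ for some $(\sigma,\tau)\in\Xi$ and $\abs{\payoff_{ij}-Z_{ij}}>\alpha$ for all $(i,j)$, then $B_{\infty}(\payoff,\delta)\subseteq\polyh^{\sigma,\tau}$.

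To establish this, fix $\tilde\payoff$ with $\supnorm{\tilde\payoff-\payoff}\le\delta$ and an edge $(i,j)$ not used in $(\sigma,\tau)$; consider the case $i\in\Minvertices$, $j\neq\tau(i)$ (the case $i\in\Maxvertices$ is symmetric, with the inequalities reversed). By the description \cref{eq:polyhedral_cone} of $\polyh^{\sigma,\tau}$, I must show $\gameval^{\sigma,\tau}(\tilde\payoff)+\bias_i^{\sigma,\tau}(\tilde\payoff)-\bias_j^{\sigma,\tau}(\tilde\payoff) < \tilde\payoff_{ij}$. Now \cref{threshold_two_players} gives $Z_{ij} = \gameval^{\sigma,\tau}(\payoff)+\bias_i^{\sigma,\tau}(\payoff)-\bias_j^{\sigma,\tau}(\payoff)$, and since $\payoff\in\polyh^{\sigma,\tau}$ this equals the left-hand side of an inequality in \cref{eq:polyhedral_cone}, so $Z_{ij}<\payoff_{ij}$; combined with the event, $\payoff_{ij} > Z_{ij}+\alpha$. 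By \cref{linear_functions}, $\gameval^{\sigma,\tau}$ is $1$-Lipschitz and $\bias^{\sigma,\tau}$ is $2n$-Lipschitz with respect to $\supnorm{\cdot}$, so the affine form $\payoff'\mapsto\gameval^{\sigma,\tau}(\payoff')+\bias_i^{\sigma,\tau}(\payoff')-\bias_j^{\sigma,\tau}(\payoff')$ is $(4n+1)$-Lipschitz, hence differs by at most $(4n+1)\delta$ between $\payoff$ and $\tilde\payoff$; and $\abs{\tilde\payoff_{ij}-\payoff_{ij}}\le\delta$. Chaining: $\tilde\payoff_{ij} \ge \payoff_{ij}-\delta > Z_{ij}+\alpha-\delta \ge \gameval^{\sigma,\tau}(\tilde\payoff)+\bias_i^{\sigma,\tau}(\tilde\payoff)-\bias_j^{\sigma,\tau}(\tilde\payoff) + \alpha - (4n+2)\delta$. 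The value of $\delta$ is calibrated precisely so that $\alpha = (4n+2)\delta$ (indeed $(4n+2)\cdot\frac{1}{4n(2n+1)m\phi} = \frac{1}{2nm\phi} = \alpha$), which makes the last term vanish and gives the required strict inequality. Running through all unused edges and both players, every defining inequality of $\polyh^{\sigma,\tau}$ in \cref{eq:polyhedral_cone} holds at $\tilde\payoff$, so $\tilde\payoff\in\polyh^{\sigma,\tau}$; since $\tilde\payoff$ was arbitrary in $B_{\infty}(\payoff,\delta)$, the inclusion follows.

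I do not expect a genuine obstacle here: all the conceptual difficulty — in particular recovering the ``stability'' that was free in the one-player case — has been front-loaded into \cref{threshold_two_players} and \cref{polyhedral_complex}. The only point requiring care is the constant bookkeeping: the requirement $2\alpha m\phi\le\frac1n$ forces $\alpha=\frac{1}{2nm\phi}$, and the Lipschitz loss forces $\alpha\ge(4n+2)\delta$, which together pin down $\delta=\frac{1}{4n(2n+1)m\phi}$ as the largest admissible radius for this argument.
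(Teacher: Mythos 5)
Your proposal is correct and follows essentially the same route as the paper's proof: apply \cref{prob_estimate_two_players} with $\alpha = 1/(2nm\phi)$, use \cref{threshold_two_players} to identify $Z_{ij}$ with the affine form $\gameval^{\sigma,\tau}+\bias_i^{\sigma,\tau}-\bias_j^{\sigma,\tau}$, and then use the Lipschitz bounds of \cref{linear_functions} to transfer the strict inequalities of \cref{eq:polyhedral_cone} to every $\tilde\payoff \in B_\infty(\payoff,\delta)$. Your explicit constant accounting ($\alpha = (4n+2)\delta$) just spells out what the paper summarizes as ``by our choice of $\alpha$ and $\delta$.''
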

\begin{proof}
By applying \cref{prob_estimate_two_players} with $\alpha \coloneqq 1/(2nm\phi)$ we get that
\[
\Prob(\forall (i,j) \in \edges, \payoff_{ij} < Z_{ij} - \alpha \lor \payoff_{ij} > Z_{ij} + \alpha) \ge 1 - \frac{1}{n} \, .
\]
Let $\payoff \in \Generic$ be such that this event holds and let $(\sigma,\tau)$ be such that $\payoff \in \polyh^{\sigma,\tau}$. Fix an edge $(i,j) \in \edges$ that is not used in $(\sigma,\tau)$ and suppose that $i \in \Minvertices$. Then, by \cref{threshold_two_players} we have
\[
\alpha + \gameval^{\sigma,\tau}(\payoff) + \bias_i^{\sigma,\tau}(\payoff) - \bias_j^{\sigma,\tau}(\payoff) < \payoff_{ij} \, .
\]
Hence, by \cref{linear_functions} and by our choice of $\alpha$ and $\delta$, for every $\tilde{\payoff} \in B_{\infty}(\payoff, \delta)$ we have $\gameval^{\sigma,\tau}(\tilde{\payoff}) + \bias_i^{\sigma,\tau}(\tilde{\payoff}) - \bias_j^{\sigma,\tau}(\tilde{\payoff}) < \tilde{\payoff}_{ij}$. Analogously, if $i \in \Maxvertices$ then $\gameval^{\sigma,\tau}(\tilde{\payoff}) + \bias_i^{\sigma,\tau}(\tilde{\payoff}) - \bias_j^{\sigma,\tau}(\tilde{\payoff}) > \tilde{\payoff}_{ij}$ for all $\tilde{\payoff} \in B_{\infty}(\payoff, \delta)$. Therefore, $B_{\infty}(\payoff, \delta) \subseteq \polyh^{\sigma,\tau}$ by \cref{eq:polyhedral_cone}.
\end{proof}

As we explain in more detail in \cref{sec:algorithms}, this theorem alone does not directly lead to a fully general and efficient algorithm for random mean payoff games. In order to obtain such an algorithm, we introduce a new condition number for mean payoff games and we show that this condition number is small for random games. We note that a condition number for the value iteration algorithm for MPGs was proposed in \cite{AllamigeonGaubertKatzSkomra:2022}, but this condition number differs from the one below. We leave the problem of comparing these two condition numbers for future work. 

\begin{definition}\label{def:condition-number}
Given $\payoff \in \Generic$, we define a \emph{condition number} of the associated MPG as
\[
\Cond(\payoff) \coloneqq \frac{\max\{\abs{\payoff_{ij} - \gameval} \colon (i,j) \in \edges\}}{\min\{\abs{\payoff_{ij} - \gameval + \bias_j - \bias_i} \colon (i,j) \in \edges, \payoff_{ij} - \gameval + \bias_j - \bias_i \neq 0\} } \, ,
\]
where $\gameval$ is the value of the MPG and $\bias$ is the bias (since $\payoff \in \Generic$, the bias is unique up to a constant).
\end{definition}
We note that this condition number does not change when the weights of the game are multiplied by the same positive constant. Likewise, the condition number does not change if the same constant is added to all weights. Numerous algorithms for MPGs, such as policy iteration algorithms, are also invariant by these two operations.

\begin{remark}
We note that the condition number above is not well defined if the game has only one pair of policies. Such zero-player games are trivial to solve for policy iteration algorithms, so in this case we may simply put $\Cond(\payoff) \coloneqq 1$.
\end{remark}

\begin{theorem}\label{cond_estimate}
Random mean payoff games are well conditioned with high probability. More formally, for every $\varepsilon > 0$ we have $\Prob\bigl(\Cond \ge \frac{8m}{\varepsilon}(\phi + \sqrt{\frac{2m}{\varepsilon}})\bigr) \le \varepsilon$. In particular, $\Prob\bigl(\Cond \ge 8nm(\phi + \sqrt{2nm})\bigr) \le 1/n$.
\end{theorem}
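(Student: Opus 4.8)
The plan is to control the numerator and the denominator of $\Cond(\payoff)$ separately, each on an event of probability at least $1-\varepsilon/2$, and then combine by a union bound. We may assume that some vertex of $\dgraph$ has at least two outgoing edges; otherwise the game has a single pair of policies, $\Cond(\payoff) \coloneqq 1$ by the preceding remark, and the bound is trivial. By \cref{polyhedral_complex}, $\payoff \in \Generic$ almost surely, so the value $\gameval$, a bias $\bias$ (unique up to an additive constant), and $\Cond(\payoff)$ are well defined, and $\payoff$ lies in a unique cone $\polyh^{\sigma,\tau}$ with $(\sigma,\tau) \in \Xi$.

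First I would bound the numerator. Since $\gameval$ is the mean weight of the single cycle of $\dgraph^{\sigma,\tau}$, it is a convex combination of weights, so $\abs{\payoff_{ij} - \gameval} \le \max_{kl}\payoff_{kl} - \min_{kl}\payoff_{kl} \le 2 + 2\max_{kl}\abs{\payoff_{kl} - \E\payoff_{kl}}$, the last inequality using $\E\payoff_{kl} \in [-1,1]$. By Chebyshev's inequality and a union bound over the $m$ edges, $\Prob\bigl(\max_{kl}\abs{\payoff_{kl} - \E\payoff_{kl}} \ge t\bigr) \le m/(\phi^2 t^2)$; taking $t \coloneqq \frac{1}{\phi}\sqrt{2m/\varepsilon}$ makes this at most $\varepsilon/2$, so with probability at least $1-\varepsilon/2$ the numerator of $\Cond(\payoff)$ is at most $2 + \frac{2}{\phi}\sqrt{2m/\varepsilon}$.

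Next I would rewrite the denominator in terms of the thresholds $Z_{ij}$. The key observation is that the set of edges $(i,j)$ with $\payoff_{ij} - \gameval + \bias_j - \bias_i \neq 0$ is exactly the set of edges \emph{not} used by $(\sigma,\tau)$: for an edge used by $(\sigma,\tau)$ the ergodic equation gives the equality $\gameval + \bias_i = \payoff_{ij} + \bias_j$, while for every other edge the strict inequalities of \cref{eq:polyhedral_cone} forbid equality. Moreover, for a non-used edge $(i,j)$, \cref{threshold_two_players} combined with \cref{solution_from_subgraph} gives $Z_{ij} = \gameval^{\sigma,\tau}(\payoff) + \bias^{\sigma,\tau}_i(\payoff) - \bias^{\sigma,\tau}_j(\payoff) = \gameval + \bias_i - \bias_j$ (the additive constant in the bias cancels in the difference), hence $\abs{\payoff_{ij} - \gameval + \bias_j - \bias_i} = \abs{\payoff_{ij} - Z_{ij}}$. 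Therefore the denominator of $\Cond(\payoff)$ equals $\min\{\abs{\payoff_{ij} - Z_{ij}} : (i,j)\text{ not used in }(\sigma,\tau)\} \ge \min_{(i,j) \in \edges}\abs{\payoff_{ij} - Z_{ij}}$, and applying \cref{prob_estimate_two_players} with $\alpha \coloneqq \varepsilon/(4m\phi)$ shows that with probability at least $1-\varepsilon/2$ the denominator exceeds $\varepsilon/(4m\phi)$.

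Combining the two events by a union bound, with probability at least $1-\varepsilon$ we get
\[
\Cond(\payoff) \le \frac{2 + \frac{2}{\phi}\sqrt{2m/\varepsilon}}{\varepsilon/(4m\phi)} = \frac{8m}{\varepsilon}\Bigl(\phi + \sqrt{\tfrac{2m}{\varepsilon}}\Bigr),
\]
which is the first assertion; the second follows by setting $\varepsilon = 1/n$. The hard part here is not probabilistic but structural: establishing that the denominator of $\Cond$ coincides with $\min_{(i,j)}\abs{\payoff_{ij} - Z_{ij}}$, which rests on \cref{threshold_two_players,eq:polyhedral_cone}. Once that identification is in place, the remainder is a routine combination of Chebyshev's inequality, \cref{prob_estimate_two_players}, and the union bound.
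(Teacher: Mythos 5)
Your proposal is correct and follows essentially the same route as the paper: \cref{prob_estimate_two_players} with $\alpha = \varepsilon/(4m\phi)$ controls the denominator via the identification $\abs{\payoff_{ij}-\gameval+\bias_j-\bias_i}=\abs{\payoff_{ij}-Z_{ij}}$ on non-used edges (from \cref{solution_from_subgraph,threshold_two_players}), Chebyshev plus a union bound controls the numerator, and the two events combine to give the stated bound. The only cosmetic difference is that you bound the numerator by the spread $\max_{kl}\payoff_{kl}-\min_{kl}\payoff_{kl}$ around the means, whereas the paper bounds $\supnorm{\payoff}$ and uses $\abs{\payoff_{ij}-\gameval}\le 2\supnorm{\payoff}$; both yield the same constant.
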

\begin{proof}
By applying \cref{prob_estimate_two_players} with $\alpha \coloneqq \varepsilon/(4m\phi)$ we get
\[
\Prob(\exists ij, \abs{\payoff_{ij} - Z_{ij}} \le \alpha) \le \frac{\varepsilon}{2} \, .
\]
Furthermore, since we assume that $\E(\payoff_{ij}) \in [-1,1]$ for all $(i,j) \in \edges$, the Chebyshev inequality gives
\[
\Prob(\abs{\payoff_{ij}} \ge 1 +a) \le \Prob(\abs{\payoff_{ij} - \E(\payoff_{ij})} \ge a) \le \frac{\Var(\payoff_{ij})}{a^2} \le \frac{1}{a^2\phi^2}
\]
for all $a > 0$. Hence, by taking $a = \frac{1}{\phi}\sqrt{\frac{2m}{\varepsilon}}$ and applying the union bound we get
\[
\Prob(\supnorm{\payoff} \ge 1 + \frac{1}{\phi}\sqrt{\frac{2m}{\varepsilon}}) \le \frac{\varepsilon}{2} \, .
\]
In particular,
\begin{equation}\label{eq:event_for_cond}
\Prob(\supnorm{\payoff} < 1 + \frac{1}{\phi}\sqrt{\frac{2m}{\varepsilon}} \land \forall (i,j) \in \edges, \abs{\payoff_{ij} - Z_{ij}} > \alpha) \ge 1 - \varepsilon \, .
\end{equation}
Let $\payoff \in \Generic$ be such that this events holds and $(\sigma,\tau)$ be such that $\payoff \in \polyh^{\sigma,\tau}$. By \cref{solution_from_subgraph,threshold_two_players}, we have $\payoff_{ij} - Z_{ij} = \payoff_{ij} - \gameval + \bias_j - \bias_i$ whenever the right-hand side is not equal to zero, meaning, whenever the edge $(i,j)$ is not in the unique bias-induced policies $(\sigma,\tau)$. Furthermore, $\gameval$ is the mean value of the cycle in $\dgraph^{\sigma,\tau}$, so we have $\abs{\gameval} \le \supnorm{\payoff}$ and $\max\{\abs{\payoff_{ij} - \gameval} \colon (i,j) \in \edges\} \le 2\supnorm{\payoff}$. Hence, we get
\begin{align*}
  \Cond(\payoff) \; %
  & = \frac{\max\{\abs{\payoff_{ij} - \gameval}\}}{\min\{\abs{\payoff_{ij} - \gameval + \bias_j - \bias_i} \colon \text{$(i,j)$ not in $(\sigma,\tau)$}\} } \\
  & \le \frac{2\supnorm{\payoff}}{\min\{\abs{\payoff_{ij} - Z_{ij}} \colon \text{$(i,j)$ not in $(\sigma,\tau)$}\} }\\
  &< \frac{2(1 + \frac{1}{\phi}\sqrt{\frac{2m}{\varepsilon}})}{\alpha} = \frac{8m\phi}{\varepsilon} + 8\sqrt{2}(\frac{m}{\varepsilon})^{3/2} \, . \qedhere
\end{align*}
\end{proof}

\begin{remark}
We note that if $\E(\payoff_{ij}) = 0$ for all $i,j$, then the proof above gives $\Prob(\Cond \ge 8\sqrt{2}(\frac{m}{\varepsilon})^{3/2} ) \le \varepsilon$, independently of $\phi$.
\end{remark}

We will also need an analogue of \cref{cond_estimate} for random weights which are not known exactly. This is given in the next lemma.

\begin{lemma}\label{cond_estimate_approx}
Fix $\varepsilon > 0$ and let $\delta \coloneqq \frac{\varepsilon}{16(n+1)m\phi}$. Then, with probability at least $1 - \varepsilon$, the following three things happen simultaneously: the supremum norm of $\payoff$ is smaller than $1 + \frac{1}{\phi}\sqrt{\frac{2m}{\varepsilon}}$, the ball $B_{\infty}(\payoff, \delta)$ is contained in a single polyhedron $\polyh^{\sigma,\tau}$, and all the games with weights $\tilde{\payoff}$ in this ball are well-conditioned, $\Cond(\tilde{\payoff}) < \frac{16m}{\varepsilon}(\phi + \sqrt{\frac{2m}{\varepsilon}}) + o(1)$. In symbols,
\begin{align*}
\Prob\Bigl(&\supnorm{\payoff} < 1 + \frac{1}{\phi}\sqrt{\frac{2m}{\varepsilon}} \ \land \ \exists (\sigma,\tau), B_{\infty}(\payoff, \delta) \subseteq \polyh^{\sigma,\tau} \\
&\land \forall \tilde{\payoff} \in B_{\infty}(\payoff, \delta), \Cond(\tilde{\payoff}) < \frac{16m}{\varepsilon}(\phi + \sqrt{\frac{2m}{\varepsilon}}) + \frac{1}{n+1} \Bigr) \ge 1 - \varepsilon \, .
\end{align*}
\end{lemma}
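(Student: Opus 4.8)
The plan is to reduce everything to a single ``no near-tie'' event plus the Chebyshev step already used in the proof of \cref{cond_estimate}, and then argue deterministically on that event. Set $\alpha \coloneqq \frac{\varepsilon}{4m\phi}$. By \cref{prob_estimate_two_players} the event $B \coloneqq \{\forall (i,j)\in\edges,\ \abs{\payoff_{ij}-Z_{ij}}>\alpha\}$ has probability at least $1-2\alpha m\phi = 1-\frac{\varepsilon}{2}$, and, exactly as in the proof of \cref{cond_estimate}, the Chebyshev inequality (using $\E(\payoff_{ij})\in[-1,1]$ and $\Var(\payoff_{ij})\le\phi^{-2}$) shows the event $A \coloneqq \{\supnorm{\payoff} < 1 + \frac{1}{\phi}\sqrt{2m/\varepsilon}\}$ has probability at least $1-\frac{\varepsilon}{2}$; also $\payoff\in\Generic$ with probability $1$ by \cref{polyhedral_complex}. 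Hence $\Prob(A\cap B\cap\Generic)\ge 1-\varepsilon$, and it suffices to show that for every $\payoff\in A\cap B\cap\Generic$ the remaining two conjuncts of the lemma hold. Fix such a $\payoff$ and let $(\sigma,\tau)\in\Xi$ be the unique pair with $\payoff\in\polyh^{\sigma,\tau}$. (If the graph admits only one pair of policies there are no non-$(\sigma,\tau)$ edges and the statement is trivial with the convention $\Cond\coloneqq1$ of the remark after \cref{def:condition-number}.)

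For each edge $(i,j)$ not used in $(\sigma,\tau)$ write $h_{ij}(\payoff)\coloneqq \gameval^{\sigma,\tau}(\payoff)+\bias^{\sigma,\tau}_i(\payoff)-\bias^{\sigma,\tau}_j(\payoff)$ and $g_{ij}(\payoff)\coloneqq \payoff_{ij}-h_{ij}(\payoff)$; by \cref{threshold_two_players} we have $Z_{ij}=h_{ij}(\payoff)$, so $\abs{g_{ij}(\payoff)}>\alpha$ on $B$. The key quantitative point is that $h_{ij}$ is $(2n-1)$-Lipschitz in the $\supnorm{\cdot}$-norm, hence $g_{ij}$ is $2n$-Lipschitz (the coordinate $\payoff_{ij}$ contributes the extra $1$ and does not occur in $h_{ij}$). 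This is sharper than the bound $4n+1$ obtained by applying \cref{linear_functions} term by term, and is needed to get the constants in the statement. It follows from the explicit formula of \cref{one_cycle_graph}: $\bias^{\sigma,\tau}_l(\payoff)$ is the sum of $\payoff_e-\gameval^{\sigma,\tau}(\payoff)$ along the unique path from $l$ to the root $k$ in the tree obtained from $\dgraph^{\sigma,\tau}$ by deleting the edge out of $k$; the paths from $i$ and from $j$ to $k$ share a common final segment, which cancels in $\bias^{\sigma,\tau}_i-\bias^{\sigma,\tau}_j$, leaving a signed sum over at most $n-1$ edges together with a multiple of the $1$-Lipschitz mean-weight functional $\gameval^{\sigma,\tau}$ whose coefficient is bounded in absolute value by $n$; a short bookkeeping of the resulting $\ell_1$-norm of coefficients gives $2n-1$. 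Since $2n\delta = \frac{n\varepsilon}{8(n+1)m\phi}<\alpha$, for every $\tilde\payoff\in B_\infty(\payoff,\delta)$ the sign of $g_{ij}(\tilde\payoff)$ agrees strictly with that of $g_{ij}(\payoff)$ for all such $(i,j)$; by the description \cref{eq:polyhedral_cone} this is exactly the assertion $\tilde\payoff\in\polyh^{\sigma,\tau}$. Thus $B_\infty(\payoff,\delta)\subseteq\polyh^{\sigma,\tau}$, which is the second conjunct (this step is the one already carried out in \cref{robust_bias_policy}).

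For the condition-number bound, fix $\tilde\payoff\in B_\infty(\payoff,\delta)\subseteq\polyh^{\sigma,\tau}$. By \cref{solution_from_subgraph,threshold_two_players} the value and bias of the game with weights $\tilde\payoff$ are $\gameval^{\sigma,\tau}(\tilde\payoff)$ and $\bias^{\sigma,\tau}(\tilde\payoff)$ (up to a constant), so the numbers $\tilde\payoff_{ij}-\gameval+\bias_j-\bias_i$ occurring in $\Cond(\tilde\payoff)$ equal $g_{ij}(\tilde\payoff)$ for $(i,j)$ outside $(\sigma,\tau)$ and $0$ for $(i,j)$ inside it; hence the denominator of $\Cond(\tilde\payoff)$ is $\min_{(i,j)\notin(\sigma,\tau)}\abs{g_{ij}(\tilde\payoff)}$, which by $2n$-Lipschitzness and $\abs{g_{ij}(\payoff)}>\alpha$ exceeds $\alpha-2n\delta = \frac{(n+2)\varepsilon}{8(n+1)m\phi}\ge\frac{\varepsilon}{8m\phi}$. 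For the numerator, $\gameval^{\sigma,\tau}(\tilde\payoff)$ is the mean weight of a cycle of $\dgraph^{\sigma,\tau}$, so $\abs{\gameval^{\sigma,\tau}(\tilde\payoff)}\le\supnorm{\tilde\payoff}\le\supnorm{\payoff}+\delta<1+\frac{1}{\phi}\sqrt{2m/\varepsilon}+\delta$, whence $\max_{(i,j)}\abs{\tilde\payoff_{ij}-\gameval^{\sigma,\tau}(\tilde\payoff)}<2\bigl(1+\frac{1}{\phi}\sqrt{2m/\varepsilon}+\delta\bigr)$. Dividing,
\[
\Cond(\tilde\payoff) < \frac{2\bigl(1+\tfrac{1}{\phi}\sqrt{2m/\varepsilon}+\delta\bigr)}{\varepsilon/(8m\phi)} = \frac{16m\phi}{\varepsilon} + \frac{16m}{\varepsilon}\sqrt{\tfrac{2m}{\varepsilon}} + \frac{16m\phi}{\varepsilon}\delta = \frac{16m}{\varepsilon}\Bigl(\phi+\sqrt{\tfrac{2m}{\varepsilon}}\Bigr) + \frac{1}{n+1},
\]
using $\frac{16m\phi}{\varepsilon}\delta=\frac{1}{n+1}$ by the choice $\delta=\frac{\varepsilon}{16(n+1)m\phi}$. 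This is the third conjunct.

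The main obstacle is precisely the refined $(2n-1)$-Lipschitz estimate for $h_{ij}=\gameval^{\sigma,\tau}+\bias^{\sigma,\tau}_i-\bias^{\sigma,\tau}_j$: the crude bound from \cref{linear_functions} costs an extra factor of order $n$ in the denominator, which would not match the $\frac{16m}{\varepsilon}(\phi+\sqrt{2m/\varepsilon})$ in the statement. Establishing it requires going back to the path description of $\bias^{\sigma,\tau}$ from \cref{one_cycle_graph} and being careful that, after the common tail of the two root-paths cancels, the surviving coefficient of $\gameval^{\sigma,\tau}$ is still $O(n)$. Everything else is a mechanical assembly of \cref{prob_estimate_two_players,threshold_two_players,solution_from_subgraph}, the description \cref{eq:polyhedral_cone}, and the Chebyshev step of \cref{cond_estimate}.
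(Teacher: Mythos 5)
Your proof is correct and follows essentially the same route as the paper's: the same high-probability event (applying \cref{prob_estimate_two_players} with $\alpha=\varepsilon/(4m\phi)$ together with the Chebyshev step of \cref{cond_estimate}), the same identification of the condition-number denominator through \cref{solution_from_subgraph,threshold_two_players} and \cref{eq:polyhedral_cone}, and the same Lipschitz-perturbation computation dividing $2\supnorm{\tilde\payoff}$ by the perturbed gap. The only divergence is the Lipschitz constant for $\payoff_{ij}-\gameval^{\sigma,\tau}(\payoff)+\bias_i^{\sigma,\tau}(\payoff)-\bias_j^{\sigma,\tau}(\payoff)$: the paper uses $2n+2$, which with $\delta=\alpha/(4(n+1))$ already produces exactly the stated constants, so your sharper bound $2n$ is not strictly needed — but your path-cancellation derivation is a valid and in fact more explicit justification, since a purely term-by-term use of \cref{linear_functions} would only give $4n+2$.
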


\begin{proof}
As in the proof of \cref{cond_estimate} we fix $\alpha \coloneqq \varepsilon/(4m\phi)$, so $\delta = \frac{\alpha}{4(n+1)}$, and we let $\payoff \in \Generic$ be such that the event \cref{eq:event_for_cond} holds. Let $(\sigma,\tau)$ be such that $\payoff \in \polyh^{\sigma,\tau}$. Furthermore, let $\tilde{\payoff} \in B_{\infty}(\payoff,\delta)$. As in the proof of \cref{robust_bias_policy}, we have $\tilde{\payoff} \in \polyh^{\sigma,\tau}$. Even more, we have chosen the parameter $\delta$ in such a way that for every edge $(i,j)$ that is not used in $(\sigma,\tau)$, by \cref{linear_functions}, we get 
\begin{align*}
  \abs{\tilde{\payoff}_{ij} - \gameval^{\sigma,\tau}(\tilde{\payoff}) + \bias_i^{\sigma,\tau}(\tilde{\payoff}) - \bias_j^{\sigma,\tau}(\tilde{\payoff})} %
  & \ge \abs{{\payoff}_{ij} - \gameval^{\sigma,\tau}(\payoff) + \bias_i^{\sigma,\tau}(\payoff) - \bias_j^{\sigma,\tau}(\payoff)} - (2n + 2)\supnorm{\payoff - \tilde{\payoff}}\\
  & = \abs{{\payoff}_{ij} - Z_{ij}} - 2(n + 1)\supnorm{\payoff - \tilde{\payoff}}\\
  & \ge \abs{{\payoff}_{ij} - Z_{ij}} - 2(n + 1) \delta\\
  & = \abs{{\payoff}_{ij} - Z_{ij}} - \frac{\alpha}{2}\\
  & > \frac{\alpha}{2} \, .
\end{align*}
Moreover, since $\supnorm{\payoff} < 1 + \frac{1}{\phi}\sqrt{\frac{2m}{\varepsilon}}$, we get $\supnorm{\tilde{\payoff}} < 1 + \frac{1}{\phi}\sqrt{\frac{2m}{\varepsilon}} + \delta$. Hence, as in the proof of \cref{cond_estimate} we get
\[
\Cond(\tilde{\payoff}) < \frac{2(1 + \frac{1}{\phi}\sqrt{\frac{2m}{\varepsilon}} + \delta)}{\alpha/2} = \frac{16m\phi}{\varepsilon} + 16\sqrt{2}(\frac{m}{\varepsilon})^{3/2} + \frac{1}{n+1} \, . \qedhere
\]
\end{proof}


\section{An efficient algorithm for random MPGs}\label{sec:algorithms}

We can now present our algorithms to solve MPGs in smoothed polynomial time. To start, let us note that smoothed analysis can be done in at least two computational models. In the real model of computation, we deal with random weights that are given exactly (as real numbers). The task is to solve such random games in polynomial time, using an algorithm that can be implemented in the real model of computation.\footnote{Here we can choose any reasonable model, e.g. Blum-Shub-Smale, or real arithmetic RAMs. One needs to be able to do arithmetic operations (including division) and branch on comparisons.} In the oracle model of computation, the user has access to an oracle that outputs the random weights bit-by-bit. The task of the user is then to solve such random games in polynomial time on a standard Turing machine. We note that it is impossible to solve arbitrary MPGs with real weights in the oracle model. Indeed, it is easy to construct examples of MPGs with real weights in which the optimal policies are revealed only after acquiring an arbitrarily high number of bits. Therefore, our aim in the oracle model is to propose an algorithm that stops with probability $1$ and has a polynomial smoothed complexity.

The oracle model was also studied in the retracted work \cite{BorosElbassioniFouzGurvich:2011}. In this work, the authors propose to solve random MPGs using the following strategy: we use the oracle to get $O(\log(n\phi))$ bits of the weights, find optimal policies of the resulting ``truncated'' MPG using any pseudopolynomial-time algorithm, and we claim that these policies are also optimal in the original game with high probability. The correctness of this approach relies on the property \labelcref{enum:third}. However, as we discussed earlier, we were not able to show that \labelcref{enum:third} is true for random games, because our analysis only applies to \emph{bias-induced} policies, and not to all policies in general. This naturally suggests the following modification of the strategy of \cite{BorosElbassioniFouzGurvich:2011}: we use the oracle to get $O(\log(n\phi))$ bits of the weights, find bias-induced (or Blackwell-optimal) policies of the resulting ``truncated'' MPG using some pseudopolynomial-time algorithm, and we claim that these policies are also optimal in the original game with high probability. In order to execute this strategy, one needs a pseudopolynomial-time algorithm that finds bias-induced policies. We are aware of one such algorithm, namely the Gurvich-Karzanov-Khachiyan (GKK) algorithm \cite{gurvich}, which was adapted to be pseudopolynomial-time by Pisaruk \cite{pisaruk1999mean} (and further adapted to the stochastic case in \cite{boros_gurvich_makino}). Hence, using the GKK algorithm, we can obtain the smoothed complexity result in the oracle model. However, in the real model of computation we work with exact real weights, and we could adapt the GKK algorithm to the real model, but it is not clear how to upper bound its running time. Indeed, the pseudopolynomial time bounds of this algorithm depend crucially on the bitlength of the input weights\footnote{There exists a combinatorial upper-bound on the running time of the GKK algorithm \cite{ohlmann2022gkk}, and it does not depend on the bitlength of the input weights, but this bound is exponential in $n$.} Also, the GKK algorithm does not handle discounted games. So, in order to give an algorithm that works in all of these cases, we proceed in the opposite direction: we propose an algorithm that runs in polynomial smoothed time in the real model of computation, and then we show that the same algorithm still gives a correct answer if we run it on a ``truncated'' MPG with $O(\log(n\phi))$ bits of precision.


\subsection{Upper bound on the Blackwell threshold}
In order to find bias-induced policies in the real model of computation, one can use policy iteration algorithms. For instance, the two-player analogue of Howard's algorithm developed in~\cite{CochetGaubertGunawardena:1999,gaubert1998duality,dhingra2006how} outputs a pair bias-induced policies and the lexicographic policy iteration~\cite{kallenberg2002handbook,hordijk2002handbook} outputs Blackwell-optimal policies. The issue with these algorithms is that we do not currently have any upper bound on their running time that is significantly better than the naive bound obtained by enumerating all policies. To overcome this issue, we propose a different variant of the policy iteration method that we present in \cref{alg:increasing_discount_mpg}. In this variant, the algorithm keeps track of a discount factor that is allowed to evolve over time. In particular, two policies can occur multiple times during the execution of this algorithm, but for different values of the discount factor. The advantage of our approach is that for each discount rate we can use the strongly-polynomial upper bounds for the running time of the policy iteration method in discounted games \cite{ye2011simplex,hansen2013strategy}. In order to solve mean-payoff games, our algorithm relies on the existence of the \emph{Blackwell threshold} discussed in \cref{th:liggett_lippman}: for any MPG there exists a number $\disc^* \in \interval[open]{0}{1}$ such that if a policy is optimal for any $\disc^* < \disc < 1$, then it is optimal for all $\disc^* < \disc < 1$. It turns out that the condition number $\Cond$ which we introduced in the previous section controls the Blackwell threshold, in the sense that well-conditioned MPGs have low Blackwell threshold.

\begin{theorem}\label{conditioned_discount}
Suppose that $\payoff \in \polyh^{\sigma,\tau}$ and fix $1 > \disc > 1 - \frac{1}{6n^2\Cond(\payoff)}$. Then, $(\sigma,\tau)$ is the unique pair of optimal policies in the discounted game with discount factor $\disc$. In other words, the Blackwell threshold is bounded by $\gamma^* \le 1 - \frac{1}{6n^2\Cond(\payoff)}$.
\end{theorem}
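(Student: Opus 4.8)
The plan is to compare the exact discounted value obtained when both players follow $(\sigma,\tau)$ with the candidate $(1-\disc)\bias$, and to show they agree up to a second‑order error small enough that the ergodic gaps are not overcome. Since subtracting $\gameval$ from every weight changes neither $\Cond(\payoff)$ nor the optimal policies of any discounted game (it only shifts the discounted value by $\gameval$), assume $\gameval=0$. As $\payoff\in\polyh^{\sigma,\tau}\subseteq\Generic$, by \cref{polyhedral_complex} the bias is unique up to an additive constant, so I may use different normalizations of it interchangeably; the quantities $\bias_j-\bias_i$ that enter the argument are normalization‑free. Write $\rho(i):=\sigma(i)$ for $i\in\Maxvertices$ and $\rho(i):=\tau(i)$ for $i\in\Minvertices$; since $\bias$ induces $(\sigma,\tau)$, the ergodic equation gives $\payoff_{i\rho(i)}+\bias_{\rho(i)}=\bias_i$ for every $i$, while for each edge $(i,j)$ not used in $(\sigma,\tau)$ the gap $(\payoff_{i\rho(i)}-\payoff_{ij})+(\bias_{\rho(i)}-\bias_j)=\bias_i-\payoff_{ij}-\bias_j$ (and its negative for $i\in\Minvertices$) is at least $\mu:=\min\{\,\abs{\payoff_{ij}+\bias_j-\bias_i}:(i,j)\in\edges,\ \payoff_{ij}+\bias_j-\bias_i\ne0\,\}$, which for $\gameval=0$ is the denominator of $\Cond(\payoff)$. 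Let $\tilde v^{(\disc)}$ be the unique fixed point of the linear operator $v\mapsto\bigl((1-\disc)\payoff_{i\rho(i)}+\disc v_{\rho(i)}\bigr)_i$, i.e. the discounted value on the single‑cycle graph $\dgraph^{\sigma,\tau}$. By \cref{th:shapley} it suffices to prove that, for $\disc$ in the stated range, $\tilde v^{(\disc)}$ solves the discounted Bellman equation on $\dgraph$ with each maximum and minimum attained, strictly, only by the edge of $(\sigma,\tau)$: then $\tilde v^{(\disc)}=\gameval^{(\disc)}$ and $(\sigma,\tau)$ is the unique optimal pair, and the Blackwell‑threshold bound follows since $(\sigma,\tau)$ is then the optimal pair for the whole interval $\bigl(1-\tfrac1{6n^2\Cond(\payoff)},1\bigr)$ (\cref{th:liggett_lippman}).

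The technical heart is an error bound for $(1-\disc)\bias$. Here I re‑centre $\bias$ so that its average over the vertices of the unique cycle $\cycle$ of $\dgraph^{\sigma,\tau}$ vanishes, $\sum_{c\in\cycle}\bias_c=0$; this does not affect the relation $\payoff_{i\rho(i)}+\bias_{\rho(i)}=\bias_i$. Set $\epsilon:=\tilde v^{(\disc)}-(1-\disc)\bias$. Substituting into the fixed‑point equation and using that relation gives the recursion $\epsilon_i=-(1-\disc)^2\bias_{\rho(i)}+\disc\,\epsilon_{\rho(i)}$, hence
\[
\epsilon_i=-(1-\disc)^2\sum_{t\ge0}\disc^{\,t}\bias_{\rho^{t+1}(i)}\,.
\]
The orbit $\rho^t(i)$ reaches $\cycle$ within $n$ steps and then runs around it with period $L:=\card{\cycle}$. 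The transient part has at most $n$ terms, each at most $\supnorm{\bias}$, contributing $O\bigl(n(1-\disc)^2\supnorm{\bias}\bigr)$. In the periodic tail, summing in blocks of length $L$ produces a factor $\disc^{t_0}/(1-\disc^L)$ times $\sum_{s=0}^{L-1}\disc^s\bias_{c_{\pi(s)}}$ over a cyclic relabelling $\pi$ of $\cycle$; since $\sum_{c\in\cycle}\bias_c=0$ I may replace $\disc^s$ by $\disc^s-1$, and $\abs{\disc^s-1}\le s(1-\disc)\le L(1-\disc)$ bounds this inner sum by $O\bigl(L^2(1-\disc)\supnorm{\bias}\bigr)$, while $1/(1-\disc^L)\le 2/\bigl((1-\disc)L\bigr)$ — this last bound holds because the hypothesis already forces $1-\disc\le\tfrac1{2n}$ (one checks $\Cond(\payoff)\ge\tfrac1{2n+1}$, using $\abs{\bias_j-\bias_i}\le 2nM$ with $M$ the numerator of $\Cond$, so that $\tfrac1{6n^2\Cond(\payoff)}\le\tfrac1{2n}$). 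Hence the tail is also $O\bigl(n(1-\disc)^2\supnorm{\bias}\bigr)$, and $\supnorm{\epsilon}=O\bigl(n(1-\disc)^2\supnorm{\bias}\bigr)$. I expect this to be the main obstacle: without re‑centring $\bias$ on the cycle the estimate only gives $O\bigl((1-\disc)\supnorm{\bias}\bigr)$, which is the same order as the candidate itself and useless for what follows.

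It then remains to feed this into the Bellman comparison. Fix $i\in\Maxvertices$ and $j\ne\sigma(i)$ (the $\Minvertices$ case is symmetric). Using $\tilde v^{(\disc)}=(1-\disc)\bias+\epsilon$, the difference between the $\sigma(i)$‑branch and the $j$‑branch of the maximum is
\[
(1-\disc)\bigl[(\payoff_{i\sigma(i)}-\payoff_{ij})+\disc(\bias_{\sigma(i)}-\bias_j)\bigr]+\disc(\epsilon_{\sigma(i)}-\epsilon_j)\,,
\]
and since $(\payoff_{i\sigma(i)}-\payoff_{ij})+(\bias_{\sigma(i)}-\bias_j)\ge\mu$, this is at least $(1-\disc)\mu-(1-\disc)^2\abs{\bias_{\sigma(i)}-\bias_j}-2\supnorm{\epsilon}=(1-\disc)\bigl(\mu-O(n(1-\disc)\supnorm{\bias})\bigr)$. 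By \cref{one_cycle_graph} each $\bias_l$ is the length of a path of at most $n$ edges in $\dgraph^{\sigma,\tau}$ with weights $\payoff_{kl}-\gameval$, so $\supnorm{\bias}=O(nM)$; substituting and recalling $\Cond(\payoff)=M/\mu$, the bracket is positive as soon as $1-\disc<\tfrac1{6n^2\Cond(\payoff)}$, the constant $6$ coming from tracking the implied constants carefully (routine arithmetic). Thus every non‑$(\sigma,\tau)$ edge is strictly dominated, $\tilde v^{(\disc)}$ solves the discounted Bellman equation with $(\sigma,\tau)$ as the unique argmax/argmin, and the theorem — including $\gamma^*\le 1-\tfrac1{6n^2\Cond(\payoff)}$ — follows.
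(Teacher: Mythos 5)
Your proof is correct, and it takes a genuinely different route from the paper's. The paper argues via best responses: it fixes $\sigma$, takes an arbitrary optimal response $\tau'$ of Min in the discounted game, and rules out any deviation $\tau'(i)\neq\tau(i)$ by a two-case analysis (deviating vertex on a cycle of $\dgraph^{\sigma,\tau'}$, handled through the mean weight of that cycle together with the Zwick--Paterson estimate of \cref{discount_to_mean}; deviating vertex on the transient path, handled by explicit geometric-series computations along the two paths to the common cycle), and then symmetrizes. You instead prove a quantitative first-order expansion of the \emph{zero-player} discounted value on $\dgraph^{\sigma,\tau}$: after re-centering the bias to have zero mean on the unique cycle, $\tilde v^{(\disc)}=(1-\disc)\bias+\epsilon$ with $\supnorm{\epsilon}=O\bigl(n(1-\disc)^2\supnorm{\bias}\bigr)$, and then verify directly that $\tilde v^{(\disc)}$ satisfies the two-player Shapley equations of \cref{th:shapley} with each maximum/minimum attained strictly and only by the $(\sigma,\tau)$-edge, since the ergodic gap $(1-\disc)\mu$ dominates the $(1-\disc)^2$-order errors. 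This avoids the case analysis and does not use \cref{discount_to_mean} at all, at the cost of the cycle-re-centering trick and the tail estimate $1/(1-\disc^L)\le 2/((1-\disc)L)$; your observation that the naive bound $O((1-\disc)\supnorm{\bias})$ is useless, and that the hypothesis forces $1-\disc\le\tfrac1{2n}$ via $\Cond(\payoff)\ge\tfrac1{2n+1}$, are both correct and are exactly the points where care is needed. I checked the constants you left implicit: with $\supnorm{\epsilon}\le(n-1)(1-\disc)^2\supnorm{\bias}$, $\abs{\bias_{a}-\bias_{b}}\le 2(n-1)M$ (via \cref{one_cycle_graph}) and $\supnorm{\bias}\le 2(n-1)M$ after re-centering, the branch gap is positive once $(1-\disc)\cdot 2(n-1)(2n-1)\,\Cond(\payoff)<1$, and $2(n-1)(2n-1)\le 6n^2$, so your argument indeed delivers the stated threshold $1-\tfrac1{6n^2\Cond(\payoff)}$ (in fact slightly better), with the degenerate zero-player case covered by the paper's convention $\Cond(\payoff)=1$. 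In short: a correct, self-contained perturbative argument, essentially a hands-on truncation of Kohlberg's expansion for the one-cycle subgraph, where the paper leans on \cref{discount_to_mean} as a black box.
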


The proof of \cref{conditioned_discount} relies on the following estimate.

\begin{proposition}[{\cite[Theorem~5.2]{zwick_paterson}}]\label{discount_to_mean}
If $\gameval^{(\disc)}$ denotes the value of a discounted game with discount factor $\disc$ and $\gameval$ denotes the value of the mean payoff game with the same weights, then $\supnorm{\gameval^{(\disc)} -  \gameval} \le 2n(1 - \disc)\supnorm{\payoff}$.
\end{proposition}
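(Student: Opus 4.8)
The statement is invariant under adding a constant to all weights: this changes the optimal policies of no discounted game (it shifts every trajectory's discounted payoff by the same constant), it does not change membership in $\polyh^{\sigma,\tau}$ (by \cref{eq:polyhedral_cone}: both $\gameval$ and $\lambda^{\sigma,\tau}$ shift by the constant while $\bias^{\sigma,\tau}$ is unchanged), and it does not change $\Cond(\payoff)$. So I may assume $\gameval=0$. Write $N:=\max_{(i,j)\in\edges}\abs{\payoff_{ij}}$ and $\mu:=\min\{\abs{\payoff_{ij}+\bias_j-\bias_i}\colon (i,j)\in\edges\text{ not used by }(\sigma,\tau)\}$, so $\Cond(\payoff)=N/\mu$. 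Fix the (generically unique) bias $\bias$, normalized so that the sum of its coordinates over the vertices of the unique cycle of $\dgraph^{\sigma,\tau}$ is zero; by \cref{one_cycle_graph,linear_functions} this gives $\supnorm{\bias}=O(nN)$. By \cref{eq:polyhedral_cone}, the reduced weights $\payoff'_{ij}:=\payoff_{ij}+\bias_j-\bias_i$ vanish on the edges used by $(\sigma,\tau)$ and satisfy $\payoff'_{ij}\le-\mu$ at Max vertices and $\payoff'_{ij}\ge\mu$ at Min vertices on every other edge.

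Next I pass to an auxiliary discounted game. The potential transformation $\payoff_{ij}\mapsto\payoff^{\disc}_{ij}:=\payoff_{ij}+\disc\bias_j-\bias_i$ changes the normalized discounted payoff of every trajectory starting at $i_0$ by the constant $-(1-\disc)\bias_{i_0}$ (telescoping), hence preserves the optimal policies of the discounted game at factor $\disc$; it therefore suffices to show $(\sigma,\tau)$ is the unique pair of optimal policies of the game with weights $\payoff^{\disc}$. Observe $\payoff^{\disc}_{ij}=\payoff'_{ij}-(1-\disc)\bias_j$, so $\supnorm{\payoff^{\disc}-\payoff'}=(1-\disc)\supnorm{\bias}=O(nN(1-\disc))$, and in particular the combined policy $\pi:=\sigma\cup\tau\colon[n]\to[n]$ uses only edges of $\payoff^{\disc}$-weight $-(1-\disc)\bias_j$. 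Consequently the discounted value $v$ of the fixed-policy (zero-player) game on $\dgraph^{\sigma,\tau}$ with weights $\payoff^{\disc}$ is, along the eventually periodic $\pi$-trajectory $i=i_0,i_1,\dots$,
\[
v_i=(1-\disc)\sum_{t\ge0}\disc^t\payoff^{\disc}_{i_ti_{t+1}}=-(1-\disc)^2\sum_{t\ge0}\disc^t\bias_{i_{t+1}} .
\]

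I then bound $\supnorm{v}$. The transient part of the sum is $O(n\supnorm{\bias})=O(n^2N)$; for the periodic tail of length $c\le n$, the cycle-average normalization gives $\sum_{s=0}^{c-1}\disc^s\bias_{(\mathrm{cyc},s)}=\sum_{s=0}^{c-1}(\disc^s-1)\bias_{(\mathrm{cyc},s)}=O\!\bigl(c^2(1-\disc)\supnorm{\bias}\bigr)$, and since $\disc^n\ge 5/6$ in the claimed range we have $1/(1-\disc^c)\le\tfrac{6}{5c(1-\disc)}$, so the tail is $O(c^2\supnorm{\bias})=O(n^2N)$. (Alternatively one reaches the same conclusion via \cref{discount_to_mean} applied to the zero-player game on $\dgraph^{\sigma,\tau}$ with weights $\payoff^{\disc}$, whose mean-payoff value equals the $\payoff^{\disc}$-mean of that cycle, which is $0$ by the normalization, and then by comparing that zero-player game with the mean-payoff game itself.) Altogether $\supnorm{v}\le Cn^2N(1-\disc)^2$ for an explicit constant $C$.

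Finally I verify that $v$ satisfies the discounted Shapley equation for $\payoff^{\disc}$ with every maximum and minimum attained strictly by the edge of $(\sigma,\tau)$. By construction $v_i=(1-\disc)\payoff^{\disc}_{i\pi(i)}+\disc v_{\pi(i)}$, so it remains to check, at a Max vertex $i$ and $j\neq\sigma(i)$, that
\[
(1-\disc)\payoff^{\disc}_{i\sigma(i)}+\disc v_{\sigma(i)}-\bigl((1-\disc)\payoff^{\disc}_{ij}+\disc v_j\bigr)\ \ge\ (1-\disc)\mu-O\!\bigl(n^2N(1-\disc)^2\bigr)\ >\ 0 ,
\]
using $\payoff^{\disc}_{i\sigma(i)}=-(1-\disc)\bias_{\sigma(i)}$, $\payoff^{\disc}_{ij}\le-\mu-(1-\disc)\bias_j$, $\supnorm{v}=O(n^2N(1-\disc)^2)$ and $\supnorm{\bias}=O(nN)$; the Min case is symmetric. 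After optimizing constants this positivity holds exactly for $1-\disc<\tfrac{1}{6n^2\Cond(\payoff)}$. Since the discounted Shapley operator is a $\supnorm{\cdot}$-contraction with a unique fixed point, the strict inequalities force $v=\gameval^{(\disc)}$ and identify $(\sigma,\tau)$ as the unique pair attaining the extrema, hence by \cref{th:shapley} the unique pair of optimal policies of $\payoff^{\disc}$ and therefore of the original game; the stated bound on the Blackwell threshold follows. The main obstacle is the estimate $\supnorm{v}=O(n^2N(1-\disc)^2)$: a naive contraction/perturbation comparison of $\payoff^{\disc}$ with $\payoff'$ only yields the first-order bound $O(nN(1-\disc))$, which is too weak to beat the margin $(1-\disc)\mu$ surviving the potential transformation, and extracting the extra factor $1-\disc$ is precisely where one must use that the equilibrium cycle has mean weight exactly $\gameval$ together with the cycle-average normalization of the bias.
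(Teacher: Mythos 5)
Your write-up does not prove the statement under review. \Cref{discount_to_mean} is the Zwick--Paterson estimate $\supnorm{\gameval^{(\disc)}-\gameval}\le 2n(1-\disc)\supnorm{\payoff}$: an unconditional bound valid for every discount factor $\disc\in\interval[open]{0}{1}$ and every weight vector, with no genericity assumption, no condition number, and no restriction on $\disc$. The paper does not reprove it but imports it from \cite{zwick_paterson}; the standard argument fixes a pair of optimal policies, compares the normalized discounted payoff of the resulting eventually-periodic play with the mean weight of its limit cycle (the transient prefix and the discounting of the cycle each cost at most $n(1-\disc)\supnorm{\payoff}$), and then removes the dependence on the chosen policies by an exchange argument. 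What you have written is instead a proof of \cref{conditioned_discount}, the Blackwell-threshold bound $\disc^*\le 1-\frac{1}{6n^2\Cond(\payoff)}$: you assume $\payoff\in\polyh^{\sigma,\tau}$, normalize $\gameval=0$, introduce $\mu$ and $\Cond(\payoff)=N/\mu$, and conclude uniqueness of the optimal policies for $\disc$ in that range. None of these hypotheses or conclusions appear in \cref{discount_to_mean}.

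The mismatch is not repairable by relabeling. Your argument only controls the value of the \emph{transformed} game under the assumptions $\payoff\in\Generic$, constant value, ergodicity, and $\disc^n\ge 5/6$, so at best it would recover the estimate in a special case; and your parenthetical ``alternatively one reaches the same conclusion via \cref{discount_to_mean} applied to the zero-player game on $\dgraph^{\sigma,\tau}$'' invokes the very proposition to be proved, which would be circular. For what it is worth, read as a proof of \cref{conditioned_discount} your route --- the potential transformation $\payoff_{ij}\mapsto\payoff_{ij}+\disc\bias_j-\bias_i$ followed by verifying strict inequalities in the Shapley equation and appealing to contraction --- is a reasonable alternative to the paper's two-case analysis of a deviating best response, but that is a different theorem from the one you were asked to prove.
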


\begin{proof}[Proof of \cref{conditioned_discount}]
Let $\gameval$ be the value of the mean payoff game. Consider the modified game with payoffs $\hat{\payoff}_{ij} \coloneqq \payoff_{ij} - \gameval$. Note that this modification does not change the optimal policies, neither in the mean payoff game nor in the discounted game. Furthermore, the value of the modified mean payoff game is equal to $0$ and we have $\Cond(\payoff) = \frac{\supnorm{\hat{\payoff}}}{\beta}$, where $\beta \coloneqq \min\{\abs{\hat{\payoff}_{ij} + \bias_j - \bias_i} \colon (i,j) \in \edges, \hat{\payoff}_{ij} + \bias_j - \bias_i \neq 0\}$ and $\bias$ is the bias such that $u_k = 0$ where $k$ is the smallest index of a vertex in the cycle of $\dgraph^{\sigma,\tau}$. Suppose that player Max plays according to $\sigma$ in the discounted game. Let $\tau'$ be a policy of Min that is an optimal response to $\sigma$. We will show that $\tau' = \tau$. Suppose that this is not the case and let $i \in \Minvertices$ be such that $\tau'(i) \neq \tau(i)$. Consider two cases.

First case: the vertex $i$ is on a cycle in the graph $\dgraph^{\sigma,\tau'}$. By \cref{discount_to_mean}, the value of the discounted game at $i$ is at most $2n(1-\disc)\supnorm{\hat{\payoff}}$. Since $\tau'$ is an optimal response to $\sigma$, the value of $i$ in the discounted game on $\dgraph^{\sigma,\tau'}$ is at most $2n(1-\disc)\supnorm{\hat{\payoff}}$. On the other hand, let $i = i_0, i_1, \dots, i_l$ denote the cycle in $\dgraph^{\sigma,\tau'}$ that contains $i$. Since $i_1 \neq \tau(i)$,
we have $\hat{\payoff}_{i_0i_1} + \bias_{i_1} - \bias_{i_0} \ge \beta$.
Moreover, for all $p \ge 1$ we have $\hat{\payoff}_{i_{p}i_{p+1}} + \bias_{i_{p+1}} - \bias_{i_p} \ge 0$: if $i_p\in\Maxvertices$, the left-hand side is $0$ because $\sigma$ is induced by $u$, and if $i_p\in\Minvertices$ it is either $0$ if $i_{p+1} = \tau(i_p)$ or $\ge \beta$ if $i_{p+1} \neq \tau(i_p)$.
%
%
%
Hence, the mean weight of this cycle is at least $\beta/n$ and \cref{discount_to_mean} implies that the value of $i$ in the discounted game on $\dgraph^{\sigma,\tau'}$ is at least $\frac{\beta}{n} - 2n(1 - \disc)\supnorm{\hat{\payoff}}$. But by assumption $\Cond(r) < \frac{1}{6 n^2 (1-\disc)}$, hence
  \begin{align*}
    \frac{\beta}{n} - 2n(1 - \disc)\supnorm{\hat{\payoff}} %
    & = \frac{\supnorm{\hat{\payoff}}}{\Cond(r) n} - 2n(1 - \disc)\supnorm{\hat{\payoff}}\\
    & > 6 n (1-\disc) \supnorm{\hat{\payoff}} - 2n(1 - \disc)\supnorm{\hat{\payoff}}\\
    & > 2n(1-\disc)\supnorm{\hat{\payoff}} \, ,
  \end{align*}
and we get a contradiction. Thus $\tau'(i) = \tau(i)$.

Second case: the vertex $i$ is not on a cycle in the graph $\dgraph^{\sigma,\tau'}$. By the first case, every cycle in the graph $\dgraph^{\sigma,\tau'}$ appears in $\dgraph^{\sigma,\tau}$, and so $\dgraph^{\sigma,\tau'}$ (which must contain at least one cycle) contains exactly one cycle, namely the cycle of $\dgraph^{\sigma,\tau}$. Let $k$ be the vertex on this cycle with the smallest index. Furthermore, let $i= i_0, i_1, \dots, i_{l} = k$ be the path from $i$ to $k$ in $\dgraph^{\sigma,\tau}$ and $i= i'_0, i'_1, \dots, i'_{l'} = k$ be the path from $i$ to $k$ in $\dgraph^{\sigma,\tau'}$. By \cref{th:shapley} and since $\hat{\payoff}_{ij} = \payoff_{ij} - \lambda = u_i - u_j$ (\cref{def:bias,def:ergodic}) the value of $i$ in the discounted game on $\dgraph^{\sigma,\tau}$ is equal to
\begin{align*}
&(1 - \disc)\bigl( (\bias_{i_0} - \bias_{i_1}) + \disc(\bias_{i_1} - \bias_{i_2}) + \dots \disc^{l-1}(\bias_{i_{l-1}} - \bias_{i_l}) \bigr) + \disc^l \chi = \\
&\disc^l\chi +(1 - \disc)\bias_{i} - (1 - \disc)^2\sum_{q = 1}^{l - 1}\disc^{q-1}\bias_{i_{q}} \, ,
\end{align*}
where $\chi$ is the value of $k$. Analogously, the value of $i$ in the game on $\dgraph^{\sigma,\tau'}$ is at least
\begin{align*}
&(1 - \disc)\beta + (1 - \disc)\bigl( (\bias_{i'_0} - \bias_{i'_1}) + \disc(\bias_{i'_1} - \bias_{i'_2}) + \dots \disc^{l'-1}(\bias_{i'_{l'-1}} - \bias_{i'_{l'}}) \bigr) + \disc^{l'} \chi = \\
&\disc^{l'} \chi + (1 - \disc)\beta + (1 - \disc)\bias_{i} - (1 - \disc)^2\sum_{q = 1}^{l' - 1}\disc^{q-1}\bias_{i'_{q}} \, .
\end{align*}
Therefore (dividing by $1 - \disc$ and subtracting the term $\frac{1}{1-\disc} \chi$) to get a contradiction it is enough to prove that
\begin{equation}\label{eq:disc_on_path}
\beta + \frac{\disc^{l'} -1}{1 - \disc} \chi - (1 - \disc)\sum_{q = 1}^{l' - 1}\disc^{q-1}\bias_{i'_{q}}  > \frac{\disc^l - 1}{1 - \disc} \chi - (1 - \disc)\sum_{q = 1}^{l - 1}\disc^{q-1}\bias_{i_{q}} \, .
\end{equation}
To do so, note that by \cref{discount_to_mean} we have $\abs{\chi} \le 2n(1-\disc)\supnorm{\hat{\payoff}}$, since $\tau'$ is an optimal response to $\sigma$. Furthermore, $\abs{\frac{\disc^l - 1}{1 - \disc}} = 1 + \disc + \dots + \disc^{l-1} < n$ and $\supnorm{\bias} \le n\supnorm{\hat{\payoff}}$ by \cref{one_cycle_graph}. Hence,
\begin{align*}
\left\lvert\frac{\disc^l - 1}{1 - \disc}\chi - (1 - \disc)\sum_{q = 1}^{l - 1}\disc^{q-1}\bias_{i_{q}}\right\rvert &\le 2n^2(1-\disc)\supnorm{\hat{\payoff}} + (1 - \disc)n^2\supnorm{\hat{\payoff}} \\
&= 3n^2(1-\disc)\supnorm{\hat{\payoff}} < \frac{1}{2}\beta \, .
\end{align*}
The same is true if we replace $l$ by $l'$ and so \cref{eq:disc_on_path} holds, giving the desired contradiction.

Hence, we have shown that $\tau$ is the unique best response policy to $\sigma$ in the discounted game. Analogously, $\sigma$ is the unique best response policy to $\tau$. This shows that $(\sigma,\tau)$ are optimal. Moreover, they are the only pair of optimal policies. Indeed, if we denote by $\xi$ the value of the discounted game, then any optimal policy $\tau'$ of Min must guarantee a payoff at most $\xi$ against $\sigma$, so $\tau'$ is not worse that $\tau$ against $\sigma$. This implies that $\tau' = \tau$ because $\tau$ is the unique best response to $\sigma$. The same argument holds for $\sigma$.
\end{proof}

\begin{figure}[t]
\centering
\begin{footnotesize}
  \begin{algorithmic}[1]
    \Procedure{$\SwitchMax$}{$\sigma,\tau,\disc$}
    \State $\gameval \gets \text{value of the zero-player game with fixed $(\sigma,\tau,\disc)$}$
    \For{$i \in \Maxvertices$}
    	\If{$\gameval_i \neq \max_{(i,j) \in \edges}\{(1 - \disc) \payoff_{ij} + \disc \gameval_j\} $}
		\State $k \gets \argmax_{(i,j) \in \edges}\{(1 - \disc) \payoff_{ij} + \disc \gameval_j\}$ \Comment{$k$ is any state in $\argmax$}
		\State $\sigma(i) \gets k$
	\EndIf
    \EndFor
    \State \Return $\sigma$
\EndProcedure
\end{algorithmic}
\vspace*{0.1cm}
  \begin{algorithmic}[1]
    \Procedure{$\SwitchMin$}{$\sigma,\tau,\disc$}
    \State $\gameval \gets \text{value of the zero-player game with fixed $(\sigma,\tau,\disc)$}$
    \For{$i \in \Minvertices$}
    	\If{$\gameval_i \neq \min_{(i,j) \in \edges}\{(1 - \disc) \payoff_{ij} + \disc \gameval_j\} $}
		\State $k \gets \argmin_{(i,j) \in \edges}\{(1 - \disc) \payoff_{ij} + \disc \gameval_j\}$ \Comment{$k$ is any state in $\argmin$}
		\State $\tau(i) \gets k$
	\EndIf
    \EndFor
    \State \Return $\tau$
\EndProcedure
\end{algorithmic}
\end{footnotesize}
\caption{Switching procedures used in policy iteration.} 
\label{alg:switches}
\end{figure}

\subsection{Algorithms in the real model}

Our main algorithm, working in the real model of computation, comes from combining the bound on the Blackwell threshold from \cref{conditioned_discount} with the estimate of condition number in \cref{cond_estimate} and the bounds on the number of iterations executed by the policy iteration algorithm equipped with the greedy all-switches rule in discounted games. To give a full presentation of the algorithm, we recall the details of the policy iteration algorithm for discounted games with fixed discount rate $\disc$, see, e.g., the survey \cite{payoffchapter} for more discussion. This algorithm relies on two switching subroutines, $\SwitchMax(\sigma,\tau,\disc)$ and $\SwitchMin(\sigma,\tau,\disc)$, whose pseudocodes are given in \cref{alg:switches}. Roughly speaking, the procedure $\SwitchMax$ takes $\sigma$ as a candidate policy of player Max, computes the value of the zero-player discounted game when $(\sigma,\tau)$ are fixed, and checks if this value satisfies the equations from \cref{th:shapley} when applied to the one-player game obtained by fixing only $\tau$. If this is not the case, then $\SwitchMax$ improves $\sigma$ by switching some of the edges using these equations as a guideline. There are many possible ways to switch the edges, but the greedy all-switches rule given in \cref{alg:switches} is perhaps the most natural: we perform a switch for every state in which the equations from \cref{th:shapley} are not satisfied, and we do it in a greedy manner, by going to an edge that achieves the maximum in this equation. Intuitively, this should improve the value that Max obtains against $\tau$, and it can be proven that this is indeed the case. We repeat the procedure until we cannot make any more switches, which means that we have found an optimal response to $\tau$. The procedure $\SwitchMin$ works in the same way. Solving two-player games proceeds by performing switches at two levels: we start with an initial policy $(\sigma_0,\tau_0)$, we find the optimal response $\sigma_1$ to $\tau_0$ using policy iteration for player Max, then we perform a single switch for player Min to get $\tau_1$, we find an optimal response $\sigma_2$ to $\tau_1$, preform a single switch for player Min to get $\tau_3$ etc. The pseudocode of this procedure is given in \cref{alg:policy_iteration}. We use the following theorem, which improves the earlier bounds of \cite{ye2011simplex,hansen2013strategy}.

\begin{figure}[t]
\centering
\begin{footnotesize}
  \begin{algorithmic}[1]
    \Procedure{$\DiscPI$}{$\sigma,\tau,\disc$} \Comment{$(\sigma,\tau)$ are any initial policies}
    \State $\sigma' \gets \sigma$, $\tau' \gets \tau$
    \Repeat
         \State $\tau \gets \tau'$
   	 \Repeat
        		\State $\sigma \gets \sigma'$
    		\State $\sigma' \gets \SwitchMax(\sigma,\tau,\disc)$
	  \Until{$\sigma = \sigma'$} \Comment{$\sigma$ is the best response to $\tau$}
	  \State $\tau' \gets \SwitchMin(\sigma,\tau,\disc)$
    \Until{$\tau = \tau' $}
    \State \Return $(\sigma, \tau)$ \Comment{$(\sigma,\tau)$ are optimal}
\EndProcedure
\end{algorithmic}
\end{footnotesize}
\caption{Policy iteration algorithm for discounted games.} 
\label{alg:policy_iteration}
\end{figure}

\begin{theorem}[\cite{AkianGaubert:2013}]\label{discount_iter_bound}
The policy iteration algorithm equipped with the greedy all-switches rule (\cref{alg:policy_iteration}) is correct and halts after performing $O\bigl((\frac{m}{1- \disc})^2 \log^2 \frac{1}{1 - \disc}\bigr)$ switches.
\end{theorem}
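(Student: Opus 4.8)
The plan is to analyze \cref{alg:policy_iteration} at its two nested levels. With Min's policy $\tau$ frozen, the inner loop is exactly Howard's all-switches policy iteration for the one-player discounted game obtained by fixing $\tau$; the outer loop is itself a Howard-style iteration on Min's policy, in which each ``policy evaluation'' is the Max best response computed by the inner loop. I would bound the number of improvement steps at each level by a Ye-type edge-elimination argument and then multiply the two bounds. Correctness comes essentially for free: the algorithm halts only when $\SwitchMax(\sigma,\tau,\disc)=\sigma$ and $\SwitchMin(\sigma,\tau,\disc)=\tau$, i.e. when the value of the zero-player game fixed by $(\sigma,\tau,\disc)$ simultaneously satisfies the Max-equations of \cref{th:shapley} for the game with $\tau$ fixed and the Min-equations for the game with $\sigma$ fixed, which by \cref{th:shapley} says precisely that $(\sigma,\tau)$ are optimal; termination follows from strict value improvement at every switch together with the finiteness of the policy set.

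\emph{Inner loop.} Fix $\tau$, let $v_k$ be the value vector after $k$ inner Howard steps, and let $v^\star$ be the Max best-response value. Standard facts about discounted policy iteration give coordinatewise monotonicity $v_0\le v_1\le\cdots\le v^\star$, the fact that one improvement step dominates a single discounted Bellman backup, and hence the $\disc$-contraction $\supnorm{v^\star-v_k}\le\disc^{\,k}\supnorm{v^\star-v_0}$. The crucial input is the elimination lemma of Ye, in the form established for Howard's rule by Hansen--Miltersen--Zwick and refined by Scherrer: for an edge $(i,j)$ with $j\ne\sigma^\star(i)$, whenever the improved policy at step $k$ uses $(i,j)$ the associated ``advantage'' is at least a fixed fraction of the current gap $\supnorm{v^\star-v_k}$, so once the gap has contracted sufficiently --- which takes $O\!\big(\tfrac{1}{1-\disc}\log\tfrac{1}{1-\disc}\big)$ steps --- that edge can never be selected again. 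With at most $m$ edges this caps the inner loop at $O\!\big(\tfrac{m}{1-\disc}\log\tfrac{1}{1-\disc}\big)$ steps.

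\emph{Outer loop and conclusion.} Let $\tau_0,\tau_1,\dots$ be the Min policies produced by the successive \SwitchMin{} calls, and for each $t$ let $w_t$ be the value of the discounted game in which Max best-responds to $\tau_t$ (the vector available just before the $t$-th \SwitchMin{} call). Using the zero-sum structure one shows that $(w_t)$ is coordinatewise monotone, $w_0\ge w_1\ge\cdots\ge\gameval^{(\disc)}$, with strict decrease at every coordinate switched by \SwitchMin{}, and that it contracts toward the game value $\gameval^{(\disc)}$ by a factor $\disc$ per outer step; thus the outer loop is again a Howard-type iteration to which the same elimination argument applies, bounding the number of outer iterations by $O\!\big(\tfrac{m}{1-\disc}\log\tfrac{1}{1-\disc}\big)$. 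Each outer iteration costs one inner run of $O\!\big(\tfrac{m}{1-\disc}\log\tfrac{1}{1-\disc}\big)$ switches plus one \SwitchMin{} step, so the total number of switches is $O\!\big(\big(\tfrac{m}{1-\disc}\big)^2\log^2\tfrac{1}{1-\disc}\big)$.

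\emph{Main obstacle.} The delicate point throughout is that the rule is the \emph{greedy all-switches} (Howard) rule, which changes many edges at once, whereas Ye's elimination lemma is cleanest for single-switch pivoting; one must verify that a whole block of simultaneous switches still produces the geometric contraction of the gap \emph{and} still supports the conclusion ``an abandoned edge stays abandoned'' within $O\!\big(\tfrac{1}{1-\disc}\log\tfrac{1}{1-\disc}\big)$ steps --- this is precisely what the Hansen--Miltersen--Zwick and Scherrer refinements provide at the one-player level. A secondary subtlety is the interplay of the two loops: after a \SwitchMin{} step the inner Max loop is restarted from scratch, so one must check that re-optimizing Max's response does not disturb the monotonicity and $\disc$-contraction of $(w_t)$ on which the outer elimination count rests --- here one uses that, in a zero-sum discounted game, the Max best-response value against $\tau$ is a well-defined monotone function of $\tau$, so the outer iteration genuinely behaves like Howard's algorithm on a one-sided problem.
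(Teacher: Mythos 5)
Your proposal is correct in structure and is essentially the route the paper takes: the paper gives no independent proof of \cref{discount_iter_bound}, it imports the bound $O\bigl(\frac{m}{1-\disc}\log\frac{1}{1-\disc}\bigr)$ on one player's switches from \cite[Theorem~5]{AkianGaubert:2013} and, in the remark that follows the theorem, applies the same bound to every inner run of Max best-response computation, multiplying the two to get the squared bound --- exactly your ``outer $\times$ inner'' composition, with correctness read off from \cref{th:shapley} at termination just as you do. The only real difference is that you re-derive the per-level bound (monotonicity, $\disc$-contraction of the value gap, and a Ye-type edge-elimination argument for the all-switches rule, at both the one-player and the Min-versus-best-responding-Max level) instead of treating it as a black box; this is where all the technical weight sits, and it is legitimate since the results you invoke exist in the literature. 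One quantitative caveat: the Hansen--Miltersen--Zwick two-player analysis you lean on for the outer loop gives $O\bigl(\frac{m}{1-\disc}\log\frac{m}{1-\disc}\bigr)$ iterations, i.e.\ an extra $\log m$ inside the logarithm, so composing it naively yields $\log^2\frac{m}{1-\disc}$ rather than the stated $\log^2\frac{1}{1-\disc}$; to land exactly on the theorem's bound you need the sharper per-level estimate of Akian--Gaubert (or a Scherrer-type refinement), which is precisely what the paper cites.
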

\begin{remark}
We note that the bound in \cref{discount_iter_bound} counts the switches of both players. The original bound of $O\bigl(\frac{m}{1- \disc} \log \frac{1}{1 - \disc}\bigr)$ stated in \cite[Theorem~5]{AkianGaubert:2013} counts only the switches of player Min. Since we can apply the same bound of \cite[Theorem~5]{AkianGaubert:2013} to all the times that policy iteration is used to find the optimal response of Max to the current policy of Min, we get the squared bound of \cref{discount_iter_bound}.
\end{remark}
\begin{remark}
We note that finding the value of a zero-player discounted game (which is necessary to perform $\SwitchMax$ and $\SwitchMin$) can be done by solving a linear system of equations given in \cref{th:shapley}.
\end{remark}

\begin{figure}[t]
\centering
\begin{footnotesize}
  \begin{algorithmic}[1]
    \Procedure{$\IncDiscPI$}{$\sigma,\tau$} \Comment{$(\sigma,\tau)$ are any initial policies}
    \State $\disc \gets 0$
    \Repeat
         \State $\disc \gets (1 + \disc)/2$
         \State $(\sigma, \tau) \gets \DiscPI(\sigma,\tau,\disc)$
         \State\label{line:value-blackwell-bias} $(\gameval, \bias) \gets $ value and Blackwell bias of the zero-player mean-payoff game with fixed $(\sigma,\tau)$ \Comment{\textit{if the value of the zero-player game is not constant for all states, pick any $(\gameval,\bias)$}}
      \Until $(\gameval,\bias)$ solves the ergodic equation of the two-player mean-payoff game
      \State \Return $(\sigma, \tau)$ \Comment{$(\sigma,\tau)$ are optimal in the mean-payoff game}
\EndProcedure
\end{algorithmic}
\end{footnotesize}
\caption{Policy iteration algorithm for mean-payoff games with polynomial smoothed complexity.} 
\label{alg:increasing_discount_mpg}
\end{figure}

We can now give our main algorithm for mean-payoff games. Following \cite{beier2004typical,roglin2007smoothed}, we say that an algorithm has \emph{polynomial smoothed complexity} if there exists a polynomial $\poly(x_1,x_2,x_3,x_4)$ such that for all $\varepsilon \in \interval[open left]{0}{1}$ the probability that running time of the algorithm exceeds $\poly(n,m,\phi,\frac{1}{\varepsilon})$ is at most $\varepsilon$. Our algorithm with polynomial smoothed complexity is then given in \cref{alg:increasing_discount_mpg}. This algorithm simply uses the discounted policy iteration with increasing discount until it finds a pair of optimal policies. 

\begin{theorem}\label{main_algo}
The algorithm from \cref{alg:increasing_discount_mpg} halts for every input and outputs a pair of optimal polices in the mean-payoff game. Furthermore, it has polynomial smoothed complexity.
\end{theorem}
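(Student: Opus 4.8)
The plan is to prove the three assertions in turn: (a) whenever the repeat-loop of \cref{alg:increasing_discount_mpg} exits, the returned pair $(\sigma,\tau)$ is optimal in the mean-payoff game; (b) the loop exits on every input; (c) on a high-probability event the number of arithmetic operations is $\poly(n,m,\phi,1/\varepsilon)$.

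For (a), suppose the loop exits with discount $\disc$, policies $(\sigma,\tau)$, and pair $(\gameval,\bias)$. The exit condition forces $\gameval+\bias_i=\max_{(i,j)\in\edges}\{\payoff_{ij}+\bias_j\}$ for $i\in\Maxvertices$ and the analogous minimum for $i\in\Minvertices$. In the non-degenerate case $(\gameval,\bias)$ is the value and Blackwell bias of the zero-player game on $\dgraph^{\sigma,\tau}$, hence it solves the ergodic equation of that zero-player game (\cref{ergodic_solvability}), giving $\gameval+\bias_i=\payoff_{i\sigma(i)}+\bias_{\sigma(i)}$ on $\Maxvertices$ and $\gameval+\bias_i=\payoff_{i\tau(i)}+\bias_{\tau(i)}$ on $\Minvertices$; comparing with the exit condition shows that the edges used by $\sigma,\tau$ attain the extrema of the two-player ergodic equation, so $(\sigma,\tau)$ is bias-induced and therefore optimal by \cref{bias_induced_optimality} (and $\gameval$ is the game value by \cref{ergodic_solvability}). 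In the ``pick any'' branch (non-constant zero-player value) one takes $(\gameval,\bias)$ to be any pair failing the two-player ergodic equation — possible, since its solution set is a proper subset of $\R^{n+1}$, and checkable — so the loop cannot exit there; and a non-constant zero-player value is anyway incompatible with $(\sigma,\tau)$ being an optimal pair, so no optimum is missed.

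For (b), by \cref{th:liggett_lippman} there is a Blackwell threshold $\disc^*<1$, and each call to $\DiscPI$ halts by \cref{discount_iter_bound}. After $T$ outer iterations the discount equals $1-2^{-T}\to 1$, so eventually $\disc>\disc^*$; by \cref{th:shapley,discount_iter_bound} the call $\DiscPI(\sigma,\tau,\disc)$ then returns a pair optimal in the $\disc$-discounted game, hence optimal for all discount factors in $(\disc^*,1)$, hence Blackwell-optimal. For such a pair and all $\disc$ close to $1$, the edges of $\sigma$ (resp.\ $\tau$) attain the maxima (resp.\ minima) in Shapley's equations, so the two-player discounted value $\gameval^{(\disc)}$ solves the Shapley equations of the zero-player $\disc$-discounted game on $\dgraph^{\sigma,\tau}$; by uniqueness of discounted values it equals that zero-player value, so the two power series of \cref{th:kohlberg} agree, whence the zero-player game on $\dgraph^{\sigma,\tau}$ has constant value equal to the game value $\gameval$ and Blackwell bias equal to the two-player Blackwell bias $\bias^*$. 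Since $(\gameval,\bias^*)$ solves the two-player ergodic equation (\cref{th:kohlberg,ergodic_solvability}), the exit condition is met, so the algorithm halts and, by \cref{th:liggett_lippman} and part (a), returns an optimal mean-payoff pair.

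For (c), set $\Delta_0:=\tfrac{8m}{\varepsilon}(\phi+\sqrt{2m/\varepsilon})$ and condition on $E=\{\payoff\in\Generic\}\cap\{\Cond(\payoff)\le\Delta_0\}$, which has $\Prob(E)\ge 1-\varepsilon$ by \cref{cond_estimate} together with $\R^m\setminus\Generic$ being null (\cref{polyhedral_complex}). On $E$ we have $\payoff\in\polyh^{\sigma,\tau}$ for some $(\sigma,\tau)\in\Xi$; once the outer loop reaches $\disc=1-2^{-T}$ with $2^{-T}<\tfrac{1}{6n^2\Cond(\payoff)}$ — i.e.\ after $O(\log(n\Delta_0))=O(\log(nm\phi/\varepsilon))$ outer iterations — \cref{conditioned_discount} says the $\disc$-discounted game has a unique optimal pair, which $\DiscPI$ returns (\cref{th:shapley,discount_iter_bound}) and which is Blackwell-optimal (\cref{polyhedral_complex}), so by the argument of part (b) the loop exits. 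Each such call uses $\disc=1-2^{-T}$ with $2^{T}=O(n^2\Delta_0)=\poly(n,m,\phi,1/\varepsilon)$, hence performs $O((m2^{T})^2T^2)=\poly(n,m,\phi,1/\varepsilon)$ switches by \cref{discount_iter_bound}, and each switch, each evaluation of the value/Blackwell-bias line, and the exit test cost $\poly(n,m)$ arithmetic operations (solving the relevant linear systems, cf.\ the remarks after \cref{discount_iter_bound}). Summing over the $O(\log(nm\phi/\varepsilon))$ outer iterations gives $\poly(n,m,\phi,1/\varepsilon)$ operations on $E$; off $E$ the algorithm still halts by part (b). This is exactly the notion of polynomial smoothed complexity from the definition preceding \cref{main_algo}.

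The step I expect to be the main obstacle is the one used in parts (b) and (c) to guarantee that the exit test actually fires once $(\sigma,\tau)$ becomes Blackwell-optimal — namely the identification of the Blackwell bias of the zero-player game on $\dgraph^{\sigma,\tau}$ with the Blackwell bias of the two-player game — together with the careful treatment of the degenerate ``pick any'' branch; the remainder is a recombination of \cref{conditioned_discount,cond_estimate,discount_iter_bound,ergodic_solvability} with an elementary geometric-series count.
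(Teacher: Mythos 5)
Your proposal is correct and follows essentially the same route as the paper's proof: correctness via the exit condition certifying that $(\sigma,\tau)$ is bias-induced (hence optimal by \cref{bias_induced_optimality}), termination via the Blackwell threshold of \cref{th:liggett_lippman} together with the identification of the zero-player Blackwell bias on $\dgraph^{\sigma,\tau}$ with the two-player one via \cref{th:kohlberg}, and the smoothed bound by combining \cref{conditioned_discount}, \cref{cond_estimate}, and \cref{discount_iter_bound}. You merely spell out two points the paper leaves terse — the Kohlberg-series argument identifying the biases and the degenerate ``pick any'' branch — and both are handled correctly.
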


Before giving a proof, we make a remark about the implementation of the algorithm.

\begin{remark}
Computing the value and the Blackwell bias of a zero-player stochastic game (line \ref{line:value-blackwell-bias} of the procedure {\sc IncreasingDiscountPI}) can be done by finding the recurrent classes of the underlying Markov chain and solving a linear system~\cite[Section~8.2.3]{puterman}. In the case of deterministic games, this procedure simplifies in such a way that it can be done in linear complexity.
\end{remark}

\begin{proof}[Proof of \cref{main_algo}]
To prove that this algorithm is correct, suppose that it halts. Then, the policies $(\sigma,\tau)$ are induced by $\bias$, so $(\sigma,\tau)$ are optimal by \cref{bias_induced_optimality}. To prove that the algorithm halts on every input, note that after sufficiently many iterations of the main loop, the discount rate $\disc$ becomes greater than the Blackwell threshold whose existence is given in \cref{th:liggett_lippman}. Hence, the policies $(\sigma,\tau)$ found by policy iteration at that discount rate are Blackwell optimal and $\gameval$ is the value of the two-player game. Furthermore, the definition of the Blackwell bias from \cref{th:kohlberg} implies that the Blackwell bias of the two-player game is the same as the Blackwell bias of the zero-player game obtained by fixing $(\sigma,\tau)$, so $\bias$ is this Blackwell bias. In particular, the algorithm halts and outputs $(\sigma,\tau)$. To prove that the algorithm has polynomial smoothed complexity, we bound the number of switches it performs. Let $\payoff \in \Generic$ and denote $\Cond \coloneqq \max\{1,\Cond(\payoff)\}$. As proven above, the algorithm stops as soon as $\disc$ is greater than the Blackwell threshold, which, by \cref{conditioned_discount}, happens for $\disc > 1 - \frac{1}{6n^2\Cond}$. In particular, the algorithm makes $O\bigl(\log(n \Cond)\bigr)$ updates of the discount factor. Combining this with the bound of \cref{discount_iter_bound}, we get that the algorithm performs
\[
O\Bigl(\log(n \Cond) (m n^2 \Cond)^2 \log^2(n^2\Cond) \Bigr) = O\bigl(n^4m^2\Cond^2\log^3(n\Cond)\bigr)
\]
switches and updates of $\disc$. If we fix $\varepsilon > 0$, then \cref{cond_estimate} shows that we have $\Cond(r) < \frac{8m}{\varepsilon}(\phi + \sqrt{\frac{2m}{\varepsilon}})$ with probability at least $1 - \varepsilon$. In particular,  $\Cond \le 13(\frac{m}{\varepsilon})^{3/2}(\phi + 1)$, so the total number of switches and updates of $\disc$ is $O\bigl(n^4m^5(\phi + 1)^2\varepsilon^{-3} \log^3(\frac{nm(\phi+1)}{\varepsilon})\bigr)$. Every switch and update of $\disc$ can be done in polynomial complexity, so the algorithm has polynomial smoothed complexity. 
\end{proof}

\begin{figure}[t]
\centering
\begin{footnotesize}
  \begin{algorithmic}[1]
    \Procedure{$\IncDiscPItwo$}{$\sigma,\tau,\bar{\disc}$} \Comment{$(\sigma,\tau)$ are any initial policies, $\bar{\disc}$ is the target discount}
    \State $\disc \gets 0$
    \Repeat
         \State $\disc \gets \min\{(1 + \disc)/2, \bar{\disc}\}$
         \State $(\sigma, \tau) \gets \DiscPI(\sigma,\tau,\disc)$
          \State $\sigma' \gets \SwitchMax(\sigma,\tau,\bar{\disc})$
          \State $\tau' \gets \SwitchMin(\sigma,\tau,\bar{\disc})$
      \Until $\sigma' = \sigma$ and $\tau' = \tau$
      \State \Return $(\sigma, \tau)$ \Comment{$(\sigma,\tau)$ are optimal in the discounted game}
\EndProcedure
\end{algorithmic}
\end{footnotesize}
\caption{Policy iteration algorithm for discounted games with polynomial smoothed complexity.} 
\label{alg:increasing_discount_disc}
\end{figure}

The analogous algorithm for discounted games (\cref{alg:increasing_discount_disc}) is obtained by making a simple modification of the previous algorithm.

\begin{theorem}\label{main_algo_disc}
The algorithm from \cref{alg:increasing_discount_disc} halts for every input and outputs a pair of optimal polices in the discounted game with discount rate $\bar{\disc}$. Furthermore, it has polynomial smoothed complexity.
\end{theorem}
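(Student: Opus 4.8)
The plan is to mirror the three-part structure used in the proof of \cref{main_algo} — correctness, halting, smoothed complexity — with the stopping criterion of \cref{alg:increasing_discount_disc}, namely a round of $\SwitchMax$/$\SwitchMin$ at the target discount $\bar{\disc}$, replacing the ergodic-equation test. For \emph{correctness}, suppose \IncDiscPItwo halts and returns $(\sigma,\tau)$; then at the last iteration $\SwitchMax(\sigma,\tau,\bar{\disc}) = \sigma$ and $\SwitchMin(\sigma,\tau,\bar{\disc}) = \tau$, which by the definitions in \cref{alg:switches} says that the value $\gameval$ of the zero-player $\bar{\disc}$-discounted game with $(\sigma,\tau)$ fixed is a fixed point of the Shapley operator of the $\bar{\disc}$-discounted game; by the uniqueness part of \cref{th:shapley}, $\gameval = \gameval^{(\bar{\disc})}$ and $(\sigma,\tau)$ achieve the extrema, hence $(\sigma,\tau)$ is an optimal pair in the $\bar{\disc}$-discounted game. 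For \emph{halting}, the update $\disc \gets \min\{(1+\disc)/2,\bar{\disc}\}$ produces $\disc_k = \min\{1-2^{-k},\bar{\disc}\}$, so $\disc_k = \bar{\disc}$ once $k \ge \lceil\log_2\tfrac{1}{1-\bar{\disc}}\rceil$; and once $\disc = \bar{\disc}$, $\DiscPI(\cdot,\cdot,\bar{\disc})$ returns an optimal pair for the $\bar{\disc}$-discounted game (correctness part of \cref{discount_iter_bound}), for which the two subsequent switching calls perform no switch, so the loop exits.

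The \emph{smoothed complexity} bound is the main point and the only place using \cref{conditioned_discount,cond_estimate}. Fix $\varepsilon\in\interval[open left]{0}{1}$; by \cref{polyhedral_complex} we may assume $\payoff\in\polyh^{\sigma^*,\tau^*}$ for a unique $(\sigma^*,\tau^*)\in\Xi$ (the zero-player case being trivial, cf.\ the remark after \cref{def:condition-number}), and set $\Cond := \max\{1,\Cond(\payoff)\}$. Let $k^*$ be the first index with $\disc_{k^*} = \bar{\disc}$ or $\disc_{k^*} > 1-\tfrac{1}{6n^2\Cond}$; from $\disc_k = \min\{1-2^{-k},\bar{\disc}\}$ one checks $k^* = O(\log(n\Cond))$ and $\disc_k \le 1-\tfrac{1}{24n^2\Cond}$ for all $k \le k^*$. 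The loop exits at iteration $k^*$: if $\disc_{k^*}=\bar{\disc}$ this is the halting argument; otherwise $\disc_{k^*} > 1-\tfrac{1}{6n^2\Cond}$, so $\DiscPI(\cdot,\cdot,\disc_{k^*})$ returns $(\sigma^*,\tau^*)$ by \cref{conditioned_discount}, and since $\bar{\disc}\ge\disc_{k^*} > 1-\tfrac{1}{6n^2\Cond}$, the same theorem at $\bar{\disc}$ shows $(\sigma^*,\tau^*)$ is optimal for the $\bar{\disc}$-discounted game, so the switching calls detect nothing. Hence the algorithm runs $O(\log(n\Cond))$ iterations, each running $\DiscPI$ at a discount with $\tfrac{1}{1-\disc}\le 24n^2\Cond$ — costing $O(m^2n^4\Cond^2\log^2(n\Cond))$ switches by \cref{discount_iter_bound} — plus two $\poly(n,m)$-time switching calls, for a total of $O(m^2n^4\Cond^2\log^3(n\Cond))$ polynomial-time real-model operations. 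Plugging in \cref{cond_estimate}, which gives $\Cond < \tfrac{8m}{\varepsilon}(\phi+\sqrt{2m/\varepsilon})$ with probability at least $1-\varepsilon$, the running time is polynomial in $n,m,\phi,\tfrac1\varepsilon$ with probability at least $1-\varepsilon$, which is polynomial smoothed complexity in the sense defined above.

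The \emph{main obstacle} is the complexity analysis when $\bar{\disc}$ is exponentially close to $1$, so that $\tfrac{1}{1-\bar{\disc}}$ is exponential in the input length: running policy iteration directly at $\bar{\disc}$ as in \cref{alg:policy_iteration} would then take exponential time. The doubling schedule together with \cref{conditioned_discount} circumvents this — at a moderate discount $1-\tfrac{1}{\poly(n,\Cond)}$ policy iteration already finds the Blackwell-optimal policies $(\sigma^*,\tau^*)$, and a single cheap round of $\SwitchMax$/$\SwitchMin$ at $\bar{\disc}$ certifies their optimality there, so policy iteration is never run at a discount near $\bar{\disc}$. Checking that this early-stopping criterion fires precisely at iteration $k^*$, and that every discount factor visited stays bounded away from $1$ by $\tfrac{1}{\poly(n,\Cond)}$, is the crux; the remaining steps are a routine adaptation of the proof of \cref{main_algo}.
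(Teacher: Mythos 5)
Your proposal is correct and follows essentially the same route as the paper's proof: correctness via the Shapley equations once the $\SwitchMax$/$\SwitchMin$ round at $\bar{\disc}$ makes no switch, halting because $\disc$ eventually equals $\bar{\disc}$, and smoothed complexity because the loop already exits once $\disc$ either reaches $\bar{\disc}$ or exceeds the Blackwell threshold bounded by \cref{conditioned_discount}, with the iteration count then controlled as in \cref{main_algo} via \cref{discount_iter_bound,cond_estimate}. Your write-up merely spells out explicitly (with the bound $\frac{1}{1-\disc}\le O(n^2\Cond)$ for every visited discount) what the paper compresses into ``by the same reasoning as in the proof of \cref{main_algo}''.
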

\begin{proof}
To show that the algorithm is correct, note that it halts only when $\sigma' = \sigma$ and $\tau' = \tau$, which implies that $(\sigma,\tau)$ are optimal in the discounted game by \cref{th:shapley}. To prove that it halts on every input, note that we always have $\disc \le \bar{\disc}$. Furthermore, after sufficiently many updates of $\disc$, we either have $\disc = \bar{\disc}$ or $\disc$ is greater than the Blackwell threshold. In both cases, the pair of policies returned by $\DiscPI$ is optimal in the game with discount rate $\bar{\disc}$. Since the algorithm halts as soon as $\disc$ exceeds the Blackwell threshold, it has polynomial smoothed complexity by the same reasoning as in the proof of \cref{main_algo}.
\end{proof}

\subsection{Algorithms in the oracle model}

We now move on to the oracle model of computation. To begin, we consider the case of mean-payoff games. In the oracle model, we do not have access to the exact values of the random weights and we can only have access to truncated weights. Hence, we need a procedure that certifies that a pair of policies found by solving truncated games is optimal in the game with untruncated, real weights. Such a procedure was already proposed in \cite{BorosElbassioniFouzGurvich:2011}, and we give a variant of this method below.

\begin{lemma}\label{approx_mpg}
Fix $\payoff \in \R^m$ and let $\tilde{\payoff} \in \R^m$, $\varepsilon > 0$ be such that $\supnorm{\payoff - \tilde{\payoff}} \le \varepsilon$. Let $(\sigma,\tau) \in \Xi$ and let $\edges^{\sigma,\tau} \subseteq \edges$ denote the edges used by $(\sigma,\tau)$. Consider the vectors $\payoff^{(1)}_{ij}, \payoff^{(2)}_{ij} \in \R^m$ defined as
\[
\payoff^{(1)}_{ij} \coloneqq \begin{cases}
\tilde{\payoff}_{ij} &\text{if $(i,j) \in \edges^{\sigma,\tau}$},\\
\tilde{\payoff}_{ij} - 2n\varepsilon &\text{otherwise},\\
\end{cases}
\qquad
\payoff^{(2)}_{ij} \coloneqq \begin{cases}
\tilde{\payoff}_{ij} &\text{if $(i,j) \in \edges^{\sigma,\tau}$},\\
\tilde{\payoff}_{ij} + 2n\varepsilon &\text{otherwise}.\\
\end{cases}
\]
If $(\sigma,\tau)$ are optimal in the mean-payoff games with weights $\payoff^{(1)}$ and $\payoff^{(2)}$, then they are optimal in the mean-payoff game with weights $\payoff$.
\end{lemma}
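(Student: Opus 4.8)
The plan is to relate optimality of $(\sigma,\tau)$ in the game with weights $\payoff$ to optimality in the two bracketing games $\payoff^{(1)},\payoff^{(2)}$, using the fact that in the games $\payoff^{(k)}$ the only edges changed are those \emph{not} used by $(\sigma,\tau)$. First I would recall, from Theorem~\ref{th:value_mpg} (and the operator characterization), that $(\sigma,\tau)$ is optimal in an MPG with weights $\payoff$ if and only if, writing $\gameval$ for the value and $\bias$ for the bias of the zero-player game $\dgraph^{\sigma,\tau}$ with weights $\payoff$, one has the ergodic-equation inequalities
\[
\forall\, i\in\Maxvertices,\ (i,j)\in\edges:\ \gameval+\bias_i\ \ge\ \payoff_{ij}+\bias_j,\qquad
\forall\, i\in\Minvertices,\ (i,j)\in\edges:\ \gameval+\bias_i\ \le\ \payoff_{ij}+\bias_j,
\]
with equality on the edges used by $(\sigma,\tau)$; indeed, if these hold then the policy-comparison argument from the proof of Theorem~\ref{ergodic_solvability} shows $(\sigma,\tau)$ is optimal, and conversely optimality of a pair $(\sigma,\tau)\in\Xi$ forces $(\gameval,\bias)$ to solve the ergodic equation of $\dgraph$ (again via Theorem~\ref{ergodic_solvability}, since every cycle reachable under optimal play has mean weight $\gameval$). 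The key structural observation is that the functions $\gameval^{\sigma,\tau}$ and $\bias^{\sigma,\tau}$ depend only on the weights of edges in $\edges^{\sigma,\tau}$ (Definition of $\Xi$ and Lemma~\ref{one_cycle_graph}), and since $\payoff,\payoff^{(1)},\payoff^{(2)}$ agree on $\edges^{\sigma,\tau}$, all three games have the same value $\gameval\coloneqq\gameval^{\sigma,\tau}(\payoff)$ and the same bias $\bias\coloneqq\bias^{\sigma,\tau}(\payoff)$ for the subgraph $\dgraph^{\sigma,\tau}$, and the equalities on $\edges^{\sigma,\tau}$ hold automatically.

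So the only thing to check is the inequalities on the edges $(i,j)\notin\edges^{\sigma,\tau}$. Fix such an edge with $i\in\Minvertices$ (the Max case is symmetric). Since $(\sigma,\tau)$ is optimal in the game with weights $\payoff^{(1)}$, we have $\gameval+\bias_i\le \payoff^{(1)}_{ij}+\bias_j=\tilde\payoff_{ij}-2n\varepsilon+\bias_j$. Because $\supnorm{\payoff-\tilde\payoff}\le\varepsilon$ we get $\tilde\payoff_{ij}\le\payoff_{ij}+\varepsilon$, hence $\gameval+\bias_i\le\payoff_{ij}+\varepsilon-2n\varepsilon+\bias_j\le\payoff_{ij}+\bias_j$ (using $n\ge1$, so $2n\varepsilon-\varepsilon\ge\varepsilon\ge0$; in fact only $2n\varepsilon\ge\varepsilon$ is needed). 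That is exactly the required Min-inequality for $\payoff$. Symmetrically, for $(i,j)\notin\edges^{\sigma,\tau}$ with $i\in\Maxvertices$, optimality of $(\sigma,\tau)$ in the game $\payoff^{(2)}$ gives $\gameval+\bias_i\ge\tilde\payoff_{ij}+2n\varepsilon+\bias_j\ge\payoff_{ij}-\varepsilon+2n\varepsilon+\bias_j\ge\payoff_{ij}+\bias_j$. Thus the full set of ergodic-equation (in)equalities for $\dgraph$ with weights $\payoff$ is satisfied by $(\gameval,\bias)$, with equalities exactly on the edges of $(\sigma,\tau)$; by the criterion recalled above, $(\sigma,\tau)$ is optimal in the MPG with weights $\payoff$. (It is worth noting that the $2n$ slack is chosen generously — any factor $\ge1$ would suffice for this argument — presumably because the same bracketing vectors will later be fed to a policy-iteration subroutine that needs $\payoff^{(1)},\payoff^{(2)}$ to land in the \emph{interior} of $\polyh^{\sigma,\tau}$, not merely on its boundary; but for the statement as given, the weaker slack already works.)

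The main subtlety — the only place one must be slightly careful — is the direction of the argument: we do \emph{not} get to assume $(\sigma,\tau)\in\Generic$ or anything about uniqueness of the bias for $\payoff$; we only know $(\sigma,\tau)\in\Xi$ and that it is optimal for $\payoff^{(1)}$ and $\payoff^{(2)}$. So I must be sure that optimality of a pair in $\Xi$ really yields a genuine solution $(\gameval^{\sigma,\tau},\bias^{\sigma,\tau})$ of the ergodic equation (not just of the subgraph's equation). This follows because an optimal pair $(\sigma,\tau)$ certifies value $\gameval$ for the full game via the strategy-comparison argument of Theorem~\ref{ergodic_solvability}: $\tau$ guarantees payoff $\le\gameval$ and $\sigma$ guarantees $\ge\gameval$ against any opponent strategy, and the concrete inequalities witnessing this are precisely the ergodic-equation inequalities with $(\gameval,\bias)=(\gameval^{\sigma,\tau}(\payoff^{(k)}),\bias^{\sigma,\tau}(\payoff^{(k)}))$ — and these coincide with $(\gameval^{\sigma,\tau}(\payoff),\bias^{\sigma,\tau}(\payoff))$ since the data on $\edges^{\sigma,\tau}$ is shared. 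Once this is pinned down the rest is the two one-line weight estimates above, and the proof is complete.
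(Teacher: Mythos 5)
There is a genuine gap, and it sits exactly at the point you flag as ``the only place one must be slightly careful''. Your argument needs the implication: if a pair $(\sigma,\tau)\in\Xi$ is optimal in the MPG with weights $\payoff^{(k)}$, then $\bigl(\gameval^{\sigma,\tau}(\payoff^{(k)}),\bias^{\sigma,\tau}(\payoff^{(k)})\bigr)$ satisfies the ergodic inequalities of the \emph{full} graph with those weights. This is false: bias-induced policies are optimal (\cref{bias_induced_optimality}), but optimal policies need not be induced by any bias, let alone by the subgraph bias $\bias^{\sigma,\tau}$. Item (4) of \cref{ex:non-unique-bias-induced-policies} is a concrete counterexample: the policy using $(1,2),(2,2),(3,1)$ lies in $\Xi$ and is optimal, yet its subgraph bias is $(1,0,6)$ and the ergodic inequality fails at vertex $3$, since $\gameval+\bias_3=6>3=\payoff_{32}+\bias_2$. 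Your justification of the converse (``the concrete inequalities witnessing this are precisely the ergodic-equation inequalities with $(\gameval^{\sigma,\tau},\bias^{\sigma,\tau})$'') is circular: optimality of the pair need not be certified by that particular bias, or by any bias at all. Since the lemma assumes only optimality in the two bracketing games (not that the ergodic-equation check of the algorithm has passed), your derivation of the off-policy inequalities does not go through, and indeed your intermediate target (that $(\gameval^{\sigma,\tau}(\payoff),\bias^{\sigma,\tau}(\payoff))$ solves the ergodic inequalities for $\payoff$) is strictly stronger than the lemma's conclusion. The paper's proof avoids the ergodic equation entirely: it argues by contraposition, fixes one policy, takes a profitable deviation of the other player in the game with weights $\payoff$, and compares cycle means, noting that the deviating play ends in a cycle containing an off-policy edge whose weight is shifted by $2n\eps$ in $\payoff^{(2)}$ (resp.\ $\payoff^{(1)}$), so its mean shifts by at least $2\eps$, which beats the at most $\eps$ change coming from replacing $\payoff$ by $\tilde{\payoff}$; hence the deviation remains profitable in the bracketing game, contradicting optimality there.

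A secondary error compounds this: $\payoff$ does \emph{not} agree with $\payoff^{(1)},\payoff^{(2)}$ on $\edges^{\sigma,\tau}$ --- it is $\tilde{\payoff}$ that does, while $\payoff$ agrees only up to $\eps$. Consequently $\gameval^{\sigma,\tau}(\payoff)$ and $\bias^{\sigma,\tau}(\payoff)$ differ from the corresponding quantities for the bracketing games by up to $\eps$ and $2n\eps$ respectively (\cref{linear_functions}), so even under the stronger hypothesis that $(\sigma,\tau)$ is bias-induced in both bracketing games (which is what \cref{alg:increasing_discount_mpg_approx} actually verifies), your one-line estimates would have to absorb these shifts in $\gameval$ and in $\bias_i-\bias_j$. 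Your parenthetical claim that ``any factor $\ge 1$ would suffice'' is an artifact of this mistake: in the paper's cycle-mean argument the factor $2n$ is precisely what guarantees that a deviating cycle of length at most $n$ gains at least $2n\eps/n=2\eps$ in mean weight, enough to dominate the truncation error $\eps$.
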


\begin{proof}
Suppose that $\sigma$ is not a best response (of the maximizing player) to $\tau$ in the game with weights $\payoff$. Then, there exists $i \in [n]$ and a policy $\tilde{\sigma}$ such that $\tilde{\gameval} > \gameval$, where $\tilde{\gameval}$ is the value of the game obtained by fixing $(\tilde{\sigma},\tau)$ and starting at $i$ and $\gameval$ is the value of the game obtained by fixing $(\sigma,\tau)$ and starting at $i$. In particular, the game given by $(\tilde{\sigma},\tau)$ ends in a different cycle than the cycle of $\dgraph^{\sigma,\tau}$. Note that, since $\supnorm{\payoff - \tilde{\payoff}} < \varepsilon$, the value of the game given by $(\sigma,\tau)$ with weights $\payoff^{(2)}$ is at most $\gameval + \varepsilon$. Also, the value of the game given by $(\tilde{\sigma},\tau)$ with weights $\payoff^{(2)}$ is at least $(\tilde{\gameval} - \varepsilon) + 2\varepsilon = \tilde{\gameval} + \varepsilon > \gameval + \varepsilon$, because the final cycle of this game contains at least one edge with weight $\tilde{\payoff}_{kl} + 2n\varepsilon$. Hence, $\sigma$ is not a best response to $\tau$ in the game with weights $\payoff^{(2)}$. We analogously prove that if $\tau$ is not a best response (of the minimizing player) to $\sigma$ in the game with weights $\payoff$, then $\tau$ is not a best response to $\sigma$ in the game with weights $\payoff^{(1)}$.
\end{proof}

\begin{figure}[t]
\centering
\begin{footnotesize}
  \begin{algorithmic}[1]
    \Procedure{$\IncDiscPIor$}{$\sigma,\tau$} \Comment{$(\sigma,\tau)$ are any initial policies}
    \State $\disc \gets 0$
    \State $\varepsilon \gets 1$
    \Repeat
         \State $\disc \gets (1 + \disc)/2$
         \State $\varepsilon \gets \varepsilon/2$
         \State $\tilde{\payoff} \gets$ weights such that $\supnorm{\payoff - \tilde{\payoff}} \le \varepsilon$  \Comment{$\tilde{\payoff}$ is obtained using the oracle}
         \State $(\sigma, \tau) \gets \DiscPI(\sigma,\tau,\disc)$ \Comment{We run policy iteration with weights $\tilde{\payoff}$}
         \If{$(\sigma,\tau) \not\in \Xi$}
         \State \textbf{continue}
         \EndIf
         \State $\payoff^{(1)}, \payoff^{(2)} \gets$ weights as in \cref{approx_mpg} for $(\sigma,\tau,\tilde{\payoff},\varepsilon)$
         \State $(\gameval^{(1)}, \bias^{(1)}) \gets $ value and Blackwell bias of the zero-player game given by $(\sigma,\tau, \payoff^{(1)})$
           \State $(\gameval^{(2)}, \bias^{(2)}) \gets $ value and Blackwell bias of the zero-player game given by $(\sigma,\tau, \payoff^{(2)})$
      \Until $(\gameval^{(1)}, \bias^{(1)})$ and $(\gameval^{(2)}, \bias^{(2)})$ solve the ergodic equations of the two-player games with weights $\payoff^{(1)}$ and $\payoff^{(2)}$ respectively
      \State \Return $(\sigma, \tau)$ \Comment{$(\sigma,\tau)$ are optimal in the mean-payoff game}
\EndProcedure
\end{algorithmic}
\end{footnotesize}
\caption{Policy iteration algorithm for mean-payoff games with polynomial smoothed complexity in the oracle model.} 
\label{alg:increasing_discount_mpg_approx}
\end{figure}

Using the lemma above, we can now adapt the algorithm from \cref{alg:increasing_discount_mpg} to the oracle model. This is given in \cref{alg:increasing_discount_mpg_approx}.

\begin{theorem}\label{main_algo_mpg_aprox}
The algorithm from \cref{alg:increasing_discount_mpg_approx} halts with probability $1$ and outputs a pair of optimal polices in the mean-payoff game. Furthermore, it has polynomial smoothed complexity.
\end{theorem}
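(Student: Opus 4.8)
The plan is to verify the three assertions of the theorem in turn: that a halting run outputs optimal policies, that the algorithm halts with probability $1$, and that it runs in polynomial smoothed time. For correctness, suppose the algorithm stops at some iteration, with truncated weights $\tilde{\payoff}$ satisfying $\supnorm{\payoff-\tilde{\payoff}}\le\varepsilon$ for the current precision $\varepsilon$, returning $(\sigma,\tau)\in\Xi$, the auxiliary weights $\payoff^{(1)},\payoff^{(2)}$ of \cref{approx_mpg}, and pairs $(\gameval^{(k)},\bias^{(k)})$ solving the ergodic equations of the two-player games with weights $\payoff^{(k)}$, $k=1,2$. Since $(\gameval^{(k)},\bias^{(k)})$ is the value and Blackwell bias of the \emph{zero-player} game obtained by fixing $(\sigma,\tau)$ with weights $\payoff^{(k)}$, the edges used by $(\sigma,\tau)$ attain equality in that zero-player ergodic equation; that $(\gameval^{(k)},\bias^{(k)})$ \emph{also} solves the full two-player ergodic equation says exactly that these edges attain the maxima at $\Maxvertices$ and the minima at $\Minvertices$, i.e.\ $(\sigma,\tau)$ is bias-induced, hence optimal (\cref{bias_induced_optimality}), for $\payoff^{(k)}$. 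Being optimal for both $\payoff^{(1)}$ and $\payoff^{(2)}$, it is optimal for $\payoff$ by \cref{approx_mpg}.

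For almost-sure halting, I would first invoke \cref{polyhedral_complex}: as $\R^m\setminus\Generic$ has measure zero, almost surely $\payoff\in\polyh^{\sigma^*,\tau^*}$ for a unique $(\sigma^*,\tau^*)\in\Xi$, and since this is an open cone there is $\delta_0>0$ with $B_\infty(\payoff,\delta_0)\subseteq\polyh^{\sigma^*,\tau^*}$; shrinking $\delta_0$ if needed, $\Cond$ is bounded on this ball by some $\Cond_0<\infty$, because its numerator is trivially bounded while the nonzero terms in its denominator — which by \cref{eq:polyhedral_cone} are precisely the edges unused by $(\sigma^*,\tau^*)$ — stay bounded away from $0$ by the strict inequalities of the cone and the continuity of $\gameval^{\sigma^*,\tau^*},\bias^{\sigma^*,\tau^*}$ (\cref{linear_functions}). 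Then at any iteration $t$ with $(2n+1)2^{-t}\le\delta_0$ and $1-2^{-t}>1-\frac1{6n^2\Cond_0}$, one has $\supnorm{\payoff-\tilde{\payoff}}\le2^{-t}$ and $\supnorm{\payoff-\payoff^{(k)}}\le(2n+1)2^{-t}\le\delta_0$, so $\tilde{\payoff},\payoff^{(1)},\payoff^{(2)}\in\polyh^{\sigma^*,\tau^*}$; by \cref{conditioned_discount} (using $\Cond(\tilde{\payoff})\le\Cond_0$), $\DiscPI$ returns the unique optimal pair $(\sigma^*,\tau^*)\in\Xi$, so the \textbf{continue} branch is skipped, and because $\payoff^{(k)}\in\polyh^{\sigma^*,\tau^*}$ the zero-player value and Blackwell bias coincide up to a constant with $\gameval^{\sigma^*,\tau^*}(\payoff^{(k)}),\bias^{\sigma^*,\tau^*}(\payoff^{(k)})$, which solve the two-player ergodic equations by \cref{eq:polyhedral_cone}; hence the run stops. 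As $2^{-t}\downarrow0$ and $1-2^{-t}\uparrow1$ along the run, such a $t$ is reached after finitely many iterations, so the algorithm halts whenever $\payoff\in\Generic$, i.e.\ almost surely.

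For the smoothed bound, fix $\varepsilon\in(0,1]$ and condition on the probability-$\ge1-\varepsilon$ event $\mathcal E$ of \cref{cond_estimate_approx}, on which $\supnorm{\payoff}<1+\tfrac1\phi\sqrt{2m/\varepsilon}$, the ball $B_\infty(\payoff,\delta)$ with $\delta=\tfrac{\varepsilon}{16(n+1)m\phi}$ lies in a single $\polyh^{\sigma^*,\tau^*}$, and every $\tilde{\payoff}$ in it has $\Cond(\tilde{\payoff})<\Cond_0:=\tfrac{16m}{\varepsilon}(\phi+\sqrt{2m/\varepsilon})+1$. Then the argument of the previous paragraph applies with these explicit $\delta_0=\delta$ and $\Cond_0$, and its two conditions on $t$ hold once $t\ge T:=O(\log\tfrac{nm\phi}{\varepsilon})$, since $\delta^{-1}$ and $\Cond_0$ are polynomial in $n,m,\phi,1/\varepsilon$; so on $\mathcal E$ the algorithm halts within $T$ iterations. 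In each iteration $t\le T$ the dominant cost is the call $\DiscPI(\cdot,\cdot,1-2^{-t})$, which by \cref{discount_iter_bound} performs $O((m2^t)^2\log^2(2^t))=\poly(n,m,\phi,1/\varepsilon)$ switches (using $2^t\le2^T=\poly(n,m,\phi,1/\varepsilon)$), each a rational linear-system solve of polynomial bit-length; querying $O(tm+\log\supnorm{\payoff})$ bits, forming $\payoff^{(1)},\payoff^{(2)}$, testing $(\sigma,\tau)\in\Xi$, and computing/checking the zero- and two-player ergodic data are likewise polynomial-time on the truncated data. Summing over the $T$ iterations gives total time $\poly(n,m,\phi,1/\varepsilon)$ on $\mathcal E$, so the running time exceeds this polynomial with probability at most $\Prob(\mathcal E^{\mathsf{c}})\le\varepsilon$, which is polynomial smoothed complexity (mirroring the argument already used for \cref{main_algo}).

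I expect the main obstacle to be the almost-sure-halting step, specifically the claim that the \emph{two} auxiliary instances $\payoff^{(1)},\payoff^{(2)}$ of \cref{approx_mpg} — obtained by shifting the unused edges by $\mp2n\varepsilon$ — both land in the \emph{same} open cone $\polyh^{\sigma^*,\tau^*}$ as the precision improves; this works only because that shift is of the same order $O(n\varepsilon)$ as the oracle error, so $\supnorm{\payoff-\payoff^{(k)}}\to0$, and it must be combined with an upper bound on the Blackwell threshold that is \emph{uniform} over the cone, supplied by \cref{conditioned_discount} together with the boundedness of $\Cond$ on a ball inside the cone (made quantitative on $\mathcal E$ by \cref{cond_estimate_approx}). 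The remaining verifications are bookkeeping that parallels \cref{main_algo}.
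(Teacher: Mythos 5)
Your proposal is correct and follows essentially the same route as the paper's proof: correctness via the bias-induced characterization plus \cref{approx_mpg}, almost-sure halting via openness of the cone $\polyh^{\sigma^*,\tau^*}$, local boundedness of $\Cond$, and \cref{conditioned_discount}, and the smoothed bound via \cref{cond_estimate_approx} combined with \cref{discount_iter_bound}. Your write-up is merely somewhat more explicit in spelling out why the stopping test certifies that $(\sigma,\tau)$ is bias-induced and in bounding the per-iteration cost, but these are the same steps the paper carries out.
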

\begin{proof}
To prove that the algorithm is correct, note that when it halts, then $(\sigma,\tau)$ belong to $\Xi$ and are bias-induced in the games with weights $\payoff^{(1)},\payoff^{(2)}$ so they are optimal in these games by \cref{bias_induced_optimality} and in the game with weights $\payoff$ by \cref{approx_mpg}. To prove that the algorithm halts with probability one, suppose that $\payoff \in \Generic$ and let $(\bar{\sigma},\bar{\tau}) \in \Xi$ be such that $\payoff \in \polyh^{\bar{\sigma},\bar{\tau}}$. Since $\polyh^{\bar{\sigma},\bar{\tau}}$ is open, there exists $\delta > 0$ such that $B_{\infty}(\payoff,\delta) \subseteq \polyh^{\bar{\sigma},\bar{\tau}}$ and such that the function $\hat{\payoff} \to \Cond(\hat{\payoff})$ is continuous on $B_{\infty}(\payoff,\delta)$. In particular, this function is bounded by some $M > 0$ on this ball. After sufficiently many updates of $\varepsilon$ the vectors $\payoff, \tilde{\payoff}, \payoff^{(1)},\payoff^{(2)}$ belong to the ball $B_{\infty}(\payoff,\delta)$ and the discount factor $\disc$ is above $1 - \frac{1}{6n^2M}$. Hence, it is above the Blackwell threshold for the game with weights $\tilde{\payoff}$ by \cref{conditioned_discount}. Since $\bar{\sigma},\bar{\tau}$ are the unique bias-induced policies in the games with weights $\tilde{\payoff},\payoff^{(1)},\payoff^{(2)}$, they are the unique Blackwell-optimal policies in these games by \cref{bias_induced_optimality}. Hence, $\DiscPI$ will output $\bar{\sigma},\bar{\tau}$ and the vectors $\bias^{(1)}$ and $\bias^{(2)}$ will be the Blackwell biases of games with weights $\payoff^{(1)},\payoff^{(2)}$. In particular, the algorithm will halt. To show that the algorithm has polynomial smoothed complexity, fix $\rho > 0$ and let $\delta \coloneqq \rho/(16(2n+1)m\phi)$. By \cref{cond_estimate_approx}, with probability at least $1 - \rho$, we have $\supnorm{\payoff} < 1 + \frac{1}{\phi}\sqrt{\frac{2m}{\rho}}$, the whole ball $B_{\infty}(\payoff, \delta)$ is in a single polyhedron $\polyh^{\bar{\sigma},\bar{\tau}}$, and $\Cond(\hat{\payoff}) < \frac{16m}{\rho}(\phi + \sqrt{\frac{2m}{\rho}}) + \frac{1}{n+1}$ for any $\hat{\payoff}$ in this ball. In particular, by \cref{conditioned_discount} after $O\bigl(\log(\frac{nm(\phi+1)}{\rho})\bigr)$ updates of $\disc$ and $\varepsilon$ we get $\tilde{\payoff},\payoff^{(1)},\payoff^{(2)} \in B_{\infty}(\payoff, \delta)$ and $\disc$ is above the Blackwell threshold for the game with weights $\tilde{\payoff}$. In particular, the algorithm will stop in $O\bigl(\log(\frac{nm(\phi+1)}{\rho})\bigr)$ updates of $\varepsilon$ and $\disc$. We bound the complexity of this algorithm as in the proof of \cref{main_algo}. Furthermore, we note that the algorithm makes $O\bigl(m\log(\frac{nm(\phi+1)}{\rho})\bigr)$ calls to the oracle that provides the bits of weights $\payoff$. (The inequality $\supnorm{\payoff} < 1 + \frac{1}{\phi}\sqrt{\frac{2m}{\rho}}$ ensures that the integer part of any weight has $O\bigl(\log(\frac{m(\phi+1)}{\rho})\bigr)$ bits.) Hence, the algorithm has polynomial smoothed complexity.
\end{proof}

To finish, let us discuss the problem of solving discounted games in the oracle model. This is a more nuanced task than solving the mean-payoff games. Indeed, for discounted games we need to pay special attention to the case in which the discount factor is below the Blackwell threshold. This is not a problem in the real model of computation, since we can just solve such games exactly, but it brings additional technical difficulties in the oracle model. We have to first specify how the random weights and the discount factor are chosen. Here, we suppose that the ``adversary'' (the person creating the input) first chooses the discount factor, which is given explicitly, and then picks weights of the graph at random (and the bits of the weights are given by an oracle). We note that this model is easier to analyze than the inverse model, in which the adversary can pick the discount factor after the weights are fixed. Indeed, in the latter situation, the adversary could choose a discount rate at which the optimal policy is about to change, so that two policies have a very similar resulting payoff. It is not clear if such a situation could be handled in the oracle model. If we assume that the discount factor is chosen first, then we still need to show that by truncating the weights to $O(\log(n\phi))$ bits we recover the optimal policies. For a fixed discount factor $\disc$, this can be proven by repeating the analysis that we did in the mean payoff case, using the equation from \cref{th:shapley} instead of the ergodic equation. More precisely, for each pair of policies $(\sigma,\tau)$ we define the sets

\begin{align*}
\polyh^{\sigma, \tau}_{\disc} \coloneqq \{\payoff \in \R^m \colon &(\sigma,\tau) \text{ are the unique optimal policies in the discounted game } \\
&\text{with weights $\payoff$ and discount factor $\disc$} \} \, .
\end{align*}

We prove that the sets $\polyh^{\sigma, \tau}_{\disc}$ are open and that the set $\R^m \setminus (\cup_{\sigma,\tau} \polyh^{\sigma,\tau}_{\disc})$ has Lebesgue measure zero. Then, by introducing the discounted analogues of the variables $Z_{ij}$, we prove that the difference between any edge that achieves the maximum/minimum in the equations of \cref{th:shapley} and the ``second best'' edge is large. Furthermore, we note that a value of a discounted zero-player game is $1$-Lipschitz in the weights of the graph. (This is a discounted analogue of the estimates from \cref{linear_functions}.) Putting all of these facts together, we get the following lemma, which is an analogue of \cref{robust_bias_policy}.

\begin{lemma}\label{robust_discounted_policy}
There exists a polynomial $\poly(\cdot)$ such that for every $\varepsilon > 0$ we have
\[
\Prob\biggl(\exists (\sigma, \tau), B_{\infty}\Bigr(\payoff, 1/\poly\bigl(n,m,\phi,\varepsilon^{-1},(1-\disc)^{-1}\bigr) \Bigr) \subseteq \polyh^{\sigma, \tau}_{\disc} \biggr) \ge 1 - \varepsilon \, .
\]
\end{lemma}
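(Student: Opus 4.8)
\textbf{Proof proposal for \cref{robust_discounted_policy}.}
The plan is to mimic the mean-payoff argument that led to \cref{robust_bias_policy}, replacing the ergodic equation by the Shapley equation of \cref{th:shapley} and the linear functions $\gameval^{\sigma,\tau},\bias^{\sigma,\tau}$ by the value of a zero-player \emph{discounted} game. Fix the discount factor $\disc$. For an arbitrary pair of policies $(\sigma,\tau)$ (here no restriction to $\Xi$ is needed, since every pair induces a zero-player discounted game whose value exists because $\disc<1$), write $\gameval^{(\disc),\sigma,\tau}\colon\R^m\to\R^n$ for the value of that zero-player game; it is linear in the weight vector and, by the discounted analogue of \cref{linear_functions} recalled above, satisfies $\supnorm{\gameval^{(\disc),\sigma,\tau}(\payoff)-\gameval^{(\disc),\sigma,\tau}(\tilde\payoff)}\le\supnorm{\payoff-\tilde\payoff}$. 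Since the edges used by $(\sigma,\tau)$ always yield equalities in the equations of \cref{th:shapley}, the uniqueness part of \cref{th:shapley} shows that $\payoff\in\polyh^{\sigma,\tau}_\disc$ if and only if
\begin{align*}
&\forall (i,j)\in\edges,\ i\in\Maxvertices,\ j\neq\sigma(i):\ \gameval^{(\disc),\sigma,\tau}_i(\payoff)>(1-\disc)\payoff_{ij}+\disc\,\gameval^{(\disc),\sigma,\tau}_j(\payoff)\,,\\
&\forall (i,j)\in\edges,\ i\in\Minvertices,\ j\neq\tau(i):\ \gameval^{(\disc),\sigma,\tau}_i(\payoff)<(1-\disc)\payoff_{ij}+\disc\,\gameval^{(\disc),\sigma,\tau}_j(\payoff)\,.
\end{align*}
This is the discounted analogue of \cref{eq:polyhedral_cone}; it exhibits $\polyh^{\sigma,\tau}_\disc$ as an open polyhedron, and the $\polyh^{\sigma,\tau}_\disc$ are pairwise disjoint by definition. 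Moreover, for any $\payoff$ a pair $(\sigma,\tau)$ of optimal policies exists (\cref{th:shapley}) and satisfies the corresponding \emph{weak} inequalities, so $\R^m\setminus\bigcup_{\sigma,\tau}\polyh^{\sigma,\tau}_\disc$ is contained in the finite union of the sets $\{\payoff:\gameval^{(\disc),\sigma,\tau}_i(\payoff)=(1-\disc)\payoff_{ij}+\disc\,\gameval^{(\disc),\sigma,\tau}_j(\payoff)\}$ over all $(\sigma,\tau)$ and all $(i,j)$ not used by $(\sigma,\tau)$; each such set is a genuine hyperplane because its left-hand side does not depend on $\payoff_{ij}$ when $j\notin\{\sigma(i),\tau(i)\}$. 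In particular this set has Lebesgue measure zero.

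Next I transport \cref{two_player_stability,threshold_two_players}. Because the inequality attached to a non-used edge $(i,j)$ is the only one involving $\payoff_{ij}$ and its severity is monotone in $\payoff_{ij}$, we get at once: if $\payoff\in\polyh^{\sigma,\tau}_\disc$ and $i\in\Minvertices$, $j\neq\tau(i)$, then $(x,\payoff_{-ij})\in\polyh^{\sigma,\tau}_\disc$ for every $x>W^{\sigma,\tau}_{ij}(\payoff):=\bigl(\gameval^{(\disc),\sigma,\tau}_i(\payoff)-\disc\,\gameval^{(\disc),\sigma,\tau}_j(\payoff)\bigr)/(1-\disc)$, and symmetrically for $i\in\Maxvertices$ with $x<W^{\sigma,\tau}_{ij}(\payoff)$ (note $W^{\sigma,\tau}_{ij}$ depends only on $\payoff_{-ij}$). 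Introduce the discounted thresholds: for $i\in\Minvertices$, $Z^{(\disc)}_{ij}:=\inf\{x\colon (x,\payoff_{-ij})\text{ lies in some }\polyh^{\tilde\sigma,\tilde\tau}_\disc\text{ with }\tilde\tau(i)\neq j\}$, and a $\sup$ over $\tilde\sigma(i)\neq j$ when $i\in\Maxvertices$, with values in $\R\cup\{\pm\infty\}$. Exactly as in \cref{threshold_two_players} — using two-sided stability applied at the point $(\max\{x,\payoff_{ij}\},\payoff_{-ij})$ together with disjointness of the $\polyh^{\bullet}_\disc$ to force $(\tilde\sigma,\tilde\tau)=(\sigma,\tau)$, and then reading off $x>W^{\sigma,\tau}_{ij}$ from the perturbation-invariant description above — one obtains $Z^{(\disc)}_{ij}=W^{\sigma,\tau}_{ij}(\payoff)$ whenever $\payoff\in\polyh^{\sigma,\tau}_\disc$ and $(i,j)$ is not used by $(\sigma,\tau)$. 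Finally, since $Z^{(\disc)}_{ij}$ depends on $\payoff$ only through $\payoff_{-ij}$, Fubini together with the density bound $\dens_{ij}\le\phi$ gives, as in \cref{prob_estimate_two_players} (restricting the integral to the event $\{Z^{(\disc)}_{ij}\in\R\}$), that $\Prob(\exists (i,j),\ \abs{\payoff_{ij}-Z^{(\disc)}_{ij}}\le\alpha)\le 2\alpha m\phi$ for every $\alpha>0$.

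To conclude, fix $\varepsilon>0$, set $\alpha:=\varepsilon/(2m\phi)$ and $\delta:=(1-\disc)\alpha/3=(1-\disc)\varepsilon/(6m\phi)$, which is of the form $1/\poly\bigl(n,m,\phi,\varepsilon^{-1},(1-\disc)^{-1}\bigr)$. With probability at least $1-\varepsilon$ (and, after intersecting with the measure-one set where the game has a unique optimal pair, with the same probability) we have both $\abs{\payoff_{ij}-Z^{(\disc)}_{ij}}>\alpha$ for all $(i,j)$ and $\payoff\in\polyh^{\sigma,\tau}_\disc$ for a unique $(\sigma,\tau)$. On this event, for a non-used edge $(i,j)$ with $i\in\Minvertices$ the identity $Z^{(\disc)}_{ij}=W^{\sigma,\tau}_{ij}(\payoff)$ and $\payoff_{ij}>Z^{(\disc)}_{ij}+\alpha$ give $(1-\disc)\payoff_{ij}+\disc\,\gameval^{(\disc),\sigma,\tau}_j(\payoff)-\gameval^{(\disc),\sigma,\tau}_i(\payoff)>(1-\disc)\alpha$; since $\gameval^{(\disc),\sigma,\tau}$ is $1$-Lipschitz in $\supnorm{\cdot}$, passing to any $\tilde\payoff\in B_\infty(\payoff,\delta)$ changes the left-hand side by at most $(1-\disc)\delta+\disc\delta+\delta=2\delta<(1-\disc)\alpha$, so the strict inequality survives; the case $i\in\Maxvertices$ is symmetric. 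Thus every defining inequality of $\polyh^{\sigma,\tau}_\disc$ holds throughout $B_\infty(\payoff,\delta)$, i.e.\ $B_\infty(\payoff,\delta)\subseteq\polyh^{\sigma,\tau}_\disc$, which is the claim. The only genuinely delicate points are the verification that the strict-inequality description of $\polyh^{\sigma,\tau}_\disc$ is correct and that its complement is a finite union of \emph{nontrivial} hyperplanes (both resting on the zero-player discounted value being linear and independent of each non-used weight), and the reverse threshold inequality $Z^{(\disc)}_{ij}\ge W^{\sigma,\tau}_{ij}(\payoff)$, where the $\max\{x,\payoff_{ij}\}$ trick must be used to identify the polyhedron before reading off the inequality; everything else is a routine rerun of the proofs in \cref{sec:of-threshold_two_players} with the Shapley equation in place of the ergodic equation.
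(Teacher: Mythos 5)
Your proposal is correct and follows essentially the same route the paper indicates: it is exactly the sketch given just before the lemma (strict-inequality description of $\polyh^{\sigma,\tau}_{\disc}$ via the Shapley equation, discounted analogues of the thresholds $Z_{ij}$ with the $\max\{x,\payoff_{ij}\}$ stability trick, the Fubini estimate, and $1$-Lipschitzness of the zero-player discounted value), carried out in full detail. The quantitative choice $\delta=(1-\disc)\varepsilon/(6m\phi)$ and the $2\delta<(1-\disc)\alpha$ perturbation bound are sound, so nothing is missing.
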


We moreover have a discounted analogue of \cref{approx_mpg}.

\begin{lemma}\label{approx_disc}
Fix $\payoff \in \R^m, \disc \in \interval[open]{0}{1}$, and let $\tilde{\payoff} \in \R^m$, $\varepsilon > 0$ be such that $\supnorm{\payoff - \tilde{\payoff}} \le \varepsilon$. Denote $\delta \coloneqq \frac{2\varepsilon}{(1 - \disc)\disc^n}$, let $(\sigma,\tau)$ be any policies, and let $\edges^{\sigma,\tau}$ denote the edges used by $(\sigma,\tau)$. Consider the vectors $\payoff^{(1)}_{ij}, \payoff^{(2)}_{ij} \in \R^m$ defined as
\[
\payoff^{(1)}_{ij} \coloneqq \begin{cases}
\tilde{\payoff}_{ij} &\text{if $(i,j) \in \edges^{\sigma,\tau}$},\\
\tilde{\payoff}_{ij} - \delta &\text{otherwise},\\
\end{cases}
\qquad
\payoff^{(2)}_{ij} \coloneqq \begin{cases}
\tilde{\payoff}_{ij} &\text{if $(i,j) \in \edges^{\sigma,\tau}$},\\
\tilde{\payoff}_{ij} + \delta &\text{otherwise}.\\
\end{cases}
\]
If $(\sigma,\tau)$ are optimal in the discounted games with weights $\payoff^{(1)}$ and $\payoff^{(2)}$, then they are optimal in the discounted game with weights $\payoff$.
\end{lemma}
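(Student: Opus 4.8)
The plan is to imitate the proof of \cref{approx_mpg}, replacing the ``eventual cycle'' argument --- which is special to mean-payoff games --- by a ``first $n$ steps'' argument tailored to the discounted value. I would argue by contraposition and treat the two players symmetrically. Suppose $(\sigma,\tau)$ is not optimal in the discounted game with weights $\payoff$, say $\sigma$ is not a best response of Max against $\tau$ (the case where $\tau$ is not a best response of Min against $\sigma$ is analogous, with $\payoff^{(1)}$ playing the role of $\payoff^{(2)}$). Fixing $\tau$ turns the discounted game into an MDP for Max, which has a best response $\tilde{\sigma}$ that is optimal at every state; hence there is a state $i$ at which the zero-player discounted value obtained by fixing $(\tilde{\sigma},\tau)$ and starting at $i$, call it $\hat{v}$, strictly exceeds the value $v$ obtained by fixing $(\sigma,\tau)$ and starting at $i$. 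The goal is to show that then $(\sigma,\tau)$ is also not optimal in the discounted game with weights $\payoff^{(2)}$, contradicting the hypothesis.

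The first ingredient is a purely combinatorial observation about the path traced from $i$ by the fixed policies $(\tilde{\sigma},\tau)$: it must use an edge outside $\edges^{\sigma,\tau}$ at some time $t_0 \le n-1$. Indeed, each vertex of $\dgraph$ has exactly one outgoing edge in $\edges^{\sigma,\tau}$, so if the first $n$ edges along this path all belonged to $\edges^{\sigma,\tau}$, then --- considering a repeated vertex among the first $n+1$ vertices of the path --- the path would in fact stay inside $\edges^{\sigma,\tau}$ forever and therefore coincide with the $(\sigma,\tau)$-path from $i$, forcing $\hat{v}=v$, a contradiction. The second ingredient is the discounted analogue of the Lipschitz bound from \cref{linear_functions}: the discounted value of a zero-player game equals $(1-\disc)\sum_{t\ge 0}\disc^t$ times the weight along the traversed infinite path, and $(1-\disc)\sum_{t\ge 0}\disc^t = 1$, so changing the weights along the traversed path by at most $\eps$ changes the value by at most $\eps$; moreover a single edge whose weight is raised by $\delta$ and which is used at time $t_0$ increases the value by exactly $(1-\disc)\disc^{t_0}\delta$.

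Combining the two ingredients: the $(\sigma,\tau)$-path from $i$ uses only edges of $\edges^{\sigma,\tau}$, on which $\payoff^{(2)} = \tilde{\payoff}$ differs from $\payoff$ by at most $\eps$, so the zero-player value of $(\sigma,\tau)$ at $i$ computed with $\payoff^{(2)}$ is at most $v + \eps$. On the other hand $\payoff^{(2)}_{kl} \ge \payoff_{kl} - \eps$ for every edge $(k,l)$, with the extra gain $\payoff^{(2)}_{kl} = \tilde{\payoff}_{kl} + \delta \ge \payoff_{kl} - \eps + \delta$ on the edge used at time $t_0$; hence the zero-player value of $(\tilde{\sigma},\tau)$ at $i$ computed with $\payoff^{(2)}$ is at least $\hat{v} - \eps + (1-\disc)\disc^{t_0}\delta \ge \hat{v} - \eps + (1-\disc)\disc^{n}\delta = \hat{v} + \eps$, using $t_0 \le n-1$ and the choice $\delta = 2\eps/((1-\disc)\disc^n)$. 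Since $\hat{v} > v$, the policy $\sigma$ is not a best response against $\tau$ in the discounted game with weights $\payoff^{(2)}$, so $(\sigma,\tau)$ is not optimal there. The Min case is identical with all inequalities reversed and $\payoff^{(1)}$ in place of $\payoff^{(2)}$.

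The main obstacle, relative to \cref{approx_mpg}, is precisely the passage from mean-payoff to discounted: in the mean-payoff case one may afford an $O(\eps)$ perturbation per edge because the deviating play loops forever in a cycle carrying a perturbed edge, so the effect accumulates; in the discounted case only finitely much weight --- concentrated near time $0$ --- is relevant, which forces one both to localize the deviation to the first $n$ steps and to amplify the perturbation by the factor $1/((1-\disc)\disc^n)$ so that a single boosted edge at time at most $n-1$ outweighs the whole cumulative $\eps$-error along the path. Once these two points are settled, all the estimates are elementary.
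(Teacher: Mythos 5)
Your proof is correct, and it follows exactly the route the paper intends: the paper states this lemma without proof as the "discounted analogue" of \cref{approx_mpg}, and your argument is the natural completion of that analogy — contraposition on best responses as in \cref{approx_mpg}, with the eventual-cycle argument replaced by the observation that a profitable deviation must leave $\edges^{\sigma,\tau}$ within the first $n$ steps, and with the boosted edge's contribution $(1-\disc)\disc^{t_0}\delta \ge 2\varepsilon$ outweighing the cumulative $\varepsilon$-error, which is precisely what the stated choice $\delta = \frac{2\varepsilon}{(1-\disc)\disc^{n}}$ is designed for. The only cosmetic point is that a boosted edge used several times along the path increases the value by \emph{at least} (not exactly) $(1-\disc)\disc^{t_0}\delta$, but you only use the inequality in the right direction, so nothing breaks.
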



\begin{figure}[t]
\centering
\begin{footnotesize}
  \begin{algorithmic}[1]
    \Procedure{$\IncDiscPIortwo$}{$\sigma,\tau,\bar{\disc}$} \Comment{$(\sigma,\tau)$ are any initial policies, $\bar{\disc}$ is the target discount}
    \State $\disc \gets 0$
    \State $\varepsilon \gets 1$
    \Repeat
         \State $\disc \gets \min\{(1 + \disc)/2, \bar{\disc}\}$
         \State $\varepsilon \gets \varepsilon/2$
         \State $\tilde{\payoff} \gets$ weights such that $\supnorm{\payoff - \tilde{\payoff}} \le \varepsilon$
         \State $(\sigma, \tau) \gets \DiscPI(\sigma,\tau,\disc)$ \Comment{We run policy iteration with weights $\tilde{\payoff}$}
         \State $\payoff^{(1)}, \payoff^{(2)} \gets$ weights as in \cref{approx_disc} for $(\sigma,\tau,\tilde{\payoff},\bar{\disc},\varepsilon)$
      \Until $(\sigma,\tau)$ are optimal in the games with weights $\payoff^{(1)}, \payoff^{(2)}$ and the discount factor $\bar{\disc}$
      \State \Return $(\sigma, \tau)$ \Comment{$(\sigma,\tau)$ are optimal in the discounted game}
\EndProcedure
\end{algorithmic}
\end{footnotesize}
\caption{Policy iteration algorithm for discounted games with polynomial smoothed complexity in the oracle model.} 
\label{alg:increasing_discount_disc_approx}
\end{figure}

With these lemmas, we can prove that the algorithm from \cref{alg:increasing_discount_disc_approx} solves discounted games in polynomial smoothed complexity in the oracle model.

\begin{theorem}\label{main_algo_disc_approx}
The algorithm from \cref{alg:increasing_discount_disc_approx} halts with probability $1$ and outputs a pair of optimal polices in the discounted game. Furthermore, it has polynomial smoothed complexity.
\end{theorem}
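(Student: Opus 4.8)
The plan is to mirror the proofs of \cref{main_algo_disc} (real model) and \cref{main_algo_mpg_aprox} (oracle model, mean-payoff case), now using the discounted analogues \cref{approx_disc} and \cref{robust_discounted_policy} in place of \cref{approx_mpg} and \cref{robust_bias_policy}. There are three things to establish: correctness whenever the algorithm halts, termination with probability one, and polynomial smoothed complexity. For correctness, note that the loop of \IncDiscPIortwo\ exits only once the current pair $(\sigma,\tau)$ is optimal in the discounted games with weights $\payoff^{(1)}$ and $\payoff^{(2)}$ at discount $\bar{\disc}$ — a decidable condition, verified by solving the two associated zero-player discounted games and testing the equalities of \cref{th:shapley}. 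By \cref{approx_disc}, applied with the value of $\varepsilon$ from the last iteration (for which $\supnorm{\payoff-\tilde{\payoff}}\le\varepsilon$), this certifies that $(\sigma,\tau)$ is optimal in the discounted game with the true weights $\payoff$ and discount $\bar{\disc}$.

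For termination, observe that with probability one $\payoff$ lies in some open cell $\polyh^{\hat{\sigma},\hat{\tau}}_{\bar{\disc}}$, since (as asserted just before \cref{robust_discounted_policy}) the complement of $\bigcup_{\sigma,\tau}\polyh^{\sigma,\tau}_{\bar{\disc}}$ is a finite union of hyperplanes; fix $\delta>0$ with $B_{\infty}(\payoff,\delta)\subseteq\polyh^{\hat{\sigma},\hat{\tau}}_{\bar{\disc}}$. Because $\disc$ is updated by $\disc\gets\min\{(1+\disc)/2,\bar{\disc}\}$, after $O(\log\tfrac{1}{1-\bar{\disc}})$ iterations one has $\disc=\bar{\disc}$; and since $\supnorm{\payoff-\payoff^{(k)}}\le\varepsilon\bigl(1+\tfrac{2}{(1-\bar{\disc})\bar{\disc}^{n}}\bigr)$ for $k=1,2$ by construction, after finitely many further halvings of $\varepsilon$ both $\tilde{\payoff}$ and $\payoff^{(1)},\payoff^{(2)}$ lie in $B_{\infty}(\payoff,\delta)$. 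At such an iteration $\DiscPI(\sigma,\tau,\bar{\disc})$ run on $\tilde{\payoff}\in\polyh^{\hat{\sigma},\hat{\tau}}_{\bar{\disc}}$ returns the unique optimal pair $(\hat{\sigma},\hat{\tau})$, and as $\payoff^{(1)},\payoff^{(2)}\in\polyh^{\hat{\sigma},\hat{\tau}}_{\bar{\disc}}$ the exit condition is satisfied; hence the algorithm halts.

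For the complexity, fix $\rho\in\interval[open left]{0}{1}$. Combining \cref{robust_discounted_policy} with the mean-payoff results \cref{robust_bias_policy,cond_estimate_approx}, there is a polynomial $\poly$ such that, with probability at least $1-\rho$: the ball $B_{\infty}(\payoff,\delta_0)$ with $\delta_0:=1/\poly\bigl(n,m,\phi,\rho^{-1},(1-\bar{\disc})^{-1}\bigr)$ lies in a single discounted cell $\polyh^{\hat{\sigma},\hat{\tau}}_{\bar{\disc}}$ and in a single mean-payoff cell $\polyh^{\bar{\sigma},\bar{\tau}}$, and $\supnorm{\payoff}$ together with the condition numbers $\Cond(\tilde{\payoff}),\Cond(\payoff^{(1)}),\Cond(\payoff^{(2)})$ of the perturbed mean-payoff games are bounded by $\poly(n,m,\phi,\rho^{-1})$. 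I then split into two regimes. If $\bar{\disc}\le 1-\tfrac{1}{6n^2\Cond}$ for this bound $\Cond$ on the condition numbers, then $\tfrac{1}{1-\bar{\disc}}\le 6n^2\Cond=\poly$, so after $O(\log\poly)$ updates $\disc$ reaches $\bar{\disc}$ and each call to $\DiscPI$ performs $O\bigl((\tfrac{m}{1-\bar{\disc}})^2\log^2\tfrac{1}{1-\bar{\disc}}\bigr)=\poly$ switches by \cref{discount_iter_bound}, after which the algorithm halts as in the termination argument. If instead $\bar{\disc}>1-\tfrac{1}{6n^2\Cond}$, then once $\disc$ exceeds $1-\tfrac{1}{6n^2\Cond}$ — which happens after $O(\log(n\Cond))=O(\log\poly)$ updates, at which point $\tfrac{1}{1-\disc}\le 6n^2\Cond=\poly$ — the call $\DiscPI(\cdot,\cdot,\disc)$ returns the Blackwell-optimal pair of the game with weights $\tilde{\payoff}$, which by \cref{conditioned_discount} and \cref{bias_induced_optimality} together with $\tilde{\payoff}\in\polyh^{\bar{\sigma},\bar{\tau}}$ equals $(\bar{\sigma},\bar{\tau})=(\hat{\sigma},\hat{\tau})$. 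From that iteration on the carried pair stays $(\bar{\sigma},\bar{\tau})$, which is optimal for every subsequent pair $(\tilde{\payoff},\disc)$ (since $\disc$ stays above the Blackwell threshold of $\tilde{\payoff}$ and $\tilde{\payoff}\in\polyh^{\bar{\sigma},\bar{\tau}}$), so each later $\DiscPI$ call terminates after $O(1)$ switch-free rounds — even when $\disc$ grows up to $\bar{\disc}$ with $\tfrac{1}{1-\bar{\disc}}$ large — and the loop exits after at most $O\bigl(n\log\tfrac{1}{\bar{\disc}}+\log\tfrac{1}{1-\bar{\disc}}+\log\poly\bigr)$ total halvings of $\varepsilon$, once $\payoff^{(1)},\payoff^{(2)}$ enter the cell. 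In both regimes the number of iterations, switches, linear-system solves and oracle queries is polynomial in $n$, $m$, $\phi$, $\rho^{-1}$ and the bit-length of $\bar{\disc}$, and each step is polynomial-time; hence the algorithm has polynomial smoothed complexity.

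The delicate part — the main obstacle — is precisely this complexity bookkeeping in the regime where $\bar{\disc}$ is exponentially close to $1$: there the $\varepsilon$-loop must be run for polynomially many iterations (because of the widening factor $2/((1-\bar{\disc})\bar{\disc}^{n})$ in \cref{approx_disc}) during which $\disc$ is pushed all the way up to $\bar{\disc}$, so one must argue — via the fact that the carried policy pair has already locked onto the Blackwell-optimal $(\bar{\sigma},\bar{\tau})$ — that the $\DiscPI$ calls do not blow up. This argument rests on \cref{robust_discounted_policy}, whose proof requires the discounted analogues of \cref{polyhedral_complex} (openness of the cells $\polyh^{\sigma,\tau}_{\disc}$ and the measure-zero complement), of the threshold variables $Z_{ij}$ and \cref{prob_estimate_two_players,threshold_two_players}, and of the Lipschitz estimate of \cref{linear_functions} for values of zero-player discounted games; that is the genuine technical heart of the statement.
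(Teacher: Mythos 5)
Your proposal is correct and takes essentially the same route as the paper's proof: correctness via \cref{approx_disc}, termination with probability one by waiting until $\disc=\bar{\disc}$ and $\tilde{\payoff},\payoff^{(1)},\payoff^{(2)}$ fall into the open cell $\polyh^{\hat{\sigma},\hat{\tau}}_{\bar{\disc}}$, and the same two-regime complexity analysis (either $\disc$ reaches $\bar{\disc}$ while $\frac{1}{1-\bar{\disc}}$ is polynomially bounded by the condition number, or the iteration locks onto the Blackwell-optimal pair $(\bar{\sigma},\bar{\tau})$ via \cref{conditioned_discount}, after which $\DiscPI$ performs no further switches while $\varepsilon$ shrinks until the certificate of \cref{approx_disc} applies). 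The remaining discrepancies are only bookkeeping (e.g.\ $1-\rho$ versus $1-2\rho$ from the union bound, and your slightly more explicit count of $\varepsilon$-halvings), so the argument matches the paper's.
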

\begin{proof}
To show that the algorithm is correct, note that it halts only when $\sigma, \tau$ are optimal in games with weights $\payoff^{(1)},\payoff^{(2)}$, which by \cref{approx_disc} implies that they are optimal in the discounted game with weights $\payoff$ and discount rate $\bar{\disc}$. To prove that the algorithm halts with probability one, suppose that $\payoff \in \Generic \cap (\cup \polyh^{\sigma,\tau}_{\bar{\disc}})$ and let $\bar{\sigma},\bar{\tau},\hat{\sigma},\hat{\tau}$ be such that $\payoff \in \polyh^{\bar{\sigma},\bar{\tau}} \cap \polyh_{\bar{\disc}}^{\hat{\sigma},\hat{\tau}}$. Furthermore, let $\delta,M > 0$ be such that the ball $B_{\infty}(\payoff, \delta)$ is included in $\polyh^{\bar{\sigma},\bar{\tau}}$ and $\Cond(\check{\payoff}) \le M$ for all $\check{\payoff} \in B_{\infty}(\payoff, \delta)$. (Such a ball exists as in the proof of \cref{main_algo_mpg_aprox}.) After sufficiently many updates of $\disc$ and $\varepsilon$ we will have $\disc = \bar{\disc}$ and $\tilde{\payoff},\payoff^{(1)},\payoff^{(2)} \in \polyh_{\bar{\disc}}^{\hat{\sigma},\hat{\tau}}$, because the set $\polyh_{\bar{\disc}}^{\hat{\sigma},\hat{\tau}}$ is open. Then, the algorithm will stop and output $\hat{\sigma},\hat{\tau}$. This already proves that the algorithm halts with probability one. However, this is not the only situation in which the algorithm halts. The second important situation arises when $\disc \ge 1 - \frac{1}{6M^2}$ and $\varepsilon$ is such that $\tilde{\payoff},\payoff^{(1)},\payoff^{(2)} \in B_{\infty}(\payoff, \delta)$. In this situation, $\bar{\sigma},\bar{\tau}$ are the only Blackwell-optimal policies of the games with weights $\tilde{\payoff},\payoff^{(1)},\payoff^{(2)}$, and, by \cref{conditioned_discount}, $\disc$ is above the Blackwell threshold for all of these games. Since $\disc \le \bar{\disc}$, the algorithm will halt in this situation and output $\bar{\sigma},\bar{\tau}$. To prove that the algorithm has polynomial smoothed complexity, fix $\rho > 0$ and let $\xi \coloneqq 1/\poly\bigl(n,m,\phi,\rho^{-1},(1-\bar{\disc})^{-1}\bigr)$, where $\poly$ is the polynomial from \cref{robust_discounted_policy} and let $\delta \coloneqq \rho/(16(2n+1)m\phi)$. By \cref{robust_discounted_policy} and \cref{cond_estimate_approx}, with probability at least $1 - 2\rho$ the ball $B_{\infty}(\payoff, \xi)$ is in a single set $\polyh_{\bar{\disc}}^{\hat{\sigma},\hat{\tau}}$, the ball $B_{\infty}(\payoff,\delta)$ is in a single polyhedron $\polyh^{\bar{\sigma},\bar{\tau}}$, we have $\Cond(\check{\payoff}) < \frac{16m}{\rho}(\phi + \sqrt{\frac{2m}{\rho}}) + \frac{1}{n+1}$ for any $\check{\payoff}$ in this ball, and $\supnorm{\payoff} < 1 + \frac{1}{\phi}\sqrt{\frac{2m}{\rho}}$. Let $M \coloneqq \frac{16m}{\rho}(\phi + \sqrt{\frac{2m}{\rho}}) + \frac{1}{n+1}$. After $O\bigl(\log(\frac{nm(\phi+1)}{\rho})\bigr)$ updates of $\disc$ we either have $\disc > 1 - \frac{1}{6M^2}$ or $\disc = \bar{\disc} \le 1 - \frac{1}{6M^2}$. In the latter case, we have $\frac{1}{1 - \bar{\disc}} \le 6M^2$, so within $O\bigl(\log(\frac{nm(\phi+1)}{\rho})\bigr)$ further updates of $\varepsilon$ we get $\tilde{\payoff},\payoff^{(1)},\payoff^{(2)} \in B_{\infty}(\payoff, \xi)$. At this point, the algorithm stops and outputs $\hat{\sigma},\hat{\tau}$. Moreover, the number of switches it performs is bounded as in the proof of \cref{main_algo}. In the former case, we have $\disc > 1 - \frac{1}{6M^2}$ and within $O\bigl(\log(\frac{nm(\phi+1)}{\rho})\bigr)$ updates of $\varepsilon$ we get $\tilde{\payoff} \in B_{\infty}(\payoff, \delta)$. At this point, $\DiscPI$ outputs $\bar{\sigma},\bar{\tau}$ by \cref{conditioned_discount}. Furthermore, note that in the subsequent iterations of the main loop, $\DiscPI$ will never perform a single switch: the policies $\bar{\sigma},\bar{\tau}$ remain optimal for all $\disc > 1 - \frac{1}{6M^2}$ and $\supnorm{\tilde{\payoff} - \payoff} \le \varepsilon$. Hence, the number of switches performed by the algorithm is bounded as in the proof of \cref{main_algo}. Afterwards, within $O\bigl(\log(\frac{nm(\phi+1)}{\rho(1-\bar{\disc})})\bigr)$ updates of $\varepsilon$ we will have $\payoff^{(1)},\payoff^{(2)} \in B_{\infty}(\payoff, \delta)$, so the algorithm will stop and output $\bar{\sigma},\bar{\tau}$. The complexity of each of these iterations is dominated by the complexity of solving a linear system (one system is solved by $\DiscPI$ to check that $\bar{\sigma},\bar{\tau}$ are optimal in the game with weights $\tilde{\payoff}$ and two more are needed to check if $\bar{\sigma},\bar{\tau}$ are optimal in games with weights $\payoff^{(1)},\payoff^{(2)}$). Hence, the algorithm has polynomial smoothed complexity.
\end{proof}

\section{Conclusion}

\newcounter{nparcounter}
\setcounter{nparcounter}{0}


\NewDocumentCommand{\npar}{o}{%
  \IfValueTF{#1}{%
    \medskip\noindent\refstepcounter{nparcounter}\textbf{\thenparcounter}\quad \textit{#1}. %
  }{%
    \medskip\noindent\refstepcounter{nparcounter}\textbf{\thenparcounter}. %
  }%
}

We gave an analysis of two-player discounted and mean-payoff games, that led to a condition number, and a policy-iteration algorithm which is efficient on well-conditioned inputs. We showed that random inputs are well-conditioned with high probability. A few remarks are in order.

\npar Our techniques work for two-player games played on ergodic graphs. In non-ergodic graphs, the value $\lambda_i$ is not necessarily the same at each vertex $i$. A folklore reduction, appearing for example in \cite{ChatterjeeHenzingerKrinningerNanongkai:2014}, shows that computing the value vector of a non-ergodic game reduces to computing the value of an ergodic game. So one can ask if our algorithms can be used on non-ergodic games. The answer is not obvious. The reduction proceeds in rounds, where in the first round one finds, say, the largest coordinate $\lambda_i$ of the value vector, and then discards the node $i$ (which requires some care) and repeats. Now, if one takes a non-ergodic game $\dgraph$ with sufficiently random payoffs, and applies this reduction, the resulting game is sufficiently random at the first round, but it is not clear what happens in the succeeding rounds. So, as far as we can tell, the question remains open: \emph{Do deterministic two-player discounted and mean-payoff games have polynomial smoothed complexity, when played on non-ergodic graphs?} A possible way of answering this question is by doing an analysis of Blackwell-optimal policies in the non-ergodic case, similar to what we have done here for the ergodic case.

\npar Allamigeon, Gaubert, Katz and Skomra \cite{AllamigeonGaubertKatzSkomra:2022} show that a certain value-iteration algorithm runs efficiently on all ergodic instances with value $\lambda$ bounded away from zero. They use $\frac{\max_i u_i - \min_i u_i}{|\lambda|}$ as a condition number. Can we use their result to show that value iteration has polynomial smoothed complexity? I.e., is a sufficiently-random instance well-conditioned as per their condition number? This was the central question left unanswered in their paper, and we tried to solve it, or provide a counter-example, but have so far failed to do so.

\npar Our policy-iteration rule is not one of the standard rules (Howard, lexicographic, RandomFacet, \emph{etc}). Do these standard rules also have polynomial smoothed complexity on deterministic two-player games? How about other ``combinatorial'' algorithms? 

\npar Can we extend our results to \emph{stochastic} two-player games? The counter-example of Christ and Yannakakis shows that the Howard all-switches rule does not have polynomial smoothed complexity on stochastic two-player games. This seems to indicate that the stochastic setting is more delicate. On the other hand, our policy iteration rule is different to Howard's. So one could tentatively ask: is there a smoothed counter-example to the Howard rule also in the deterministic (say, two-player) setting? This would show that our policy-iteration rule cannot be replaced by the Howard rule.

\npar How about other problems in $\mathsf{UEOPL}$? Some of these problems are combinatorial, and do not seem to be amenable to smoothed analysis. But one can consider, for example, the P-Matrix Linear Complementarity Problem (P-LCP, see \cite[Section 4.3]{fearnley2020unique}), and ask: \emph{does it have polynomial smoothed complexity?} More broadly speaking, one can make the conjecture that \emph{every problem in $\mathsf{UEOPL}$ becomes easy under a suitable notion of perturbation}. This conjecture is broad and imprecise, but it might be an interesting starting point for further research.


\bibliography{phd_bibliography}

\newcommand{\etalchar}[1]{$^{#1}$}
\begin{thebibliography}{CGMGQ98}

\bibitem[ABG14]{tropical_shadow_vertex}
X.~Allamigeon, P.~Benchimol, and S.~Gaubert.
\newblock The tropical shadow-vertex algorithm solves mean payoff games in
  polynomial time on average.
\newblock In {\em Proceedings of the 41st International Colloquium on Automata,
  Languages, and Programming (ICALP)}, volume 8572 of {\em Lecture Notes in
  Comput. Sci.}, pages 89--100. Springer, 2014.

\bibitem[ABGJ14]{combinatorial_mean_payoff}
X.~Allamigeon, P.~Benchimol, S.~Gaubert, and M.~Joswig.
\newblock Combinatorial simplex algorithms can solve mean payoff games.
\newblock {\em SIAM J. Optim.}, 24(4):2096--2117, 2014.

\bibitem[ABGJ15]{tropical_simplex}
X.~Allamigeon, P.~Benchimol, S.~Gaubert, and M.~Joswig.
\newblock Tropicalizing the simplex algorithm.
\newblock {\em SIAM J. Discrete Math.}, 29(2):751--795, 2015.

\bibitem[ABGJ18]{log_barrier}
X.~Allamigeon, P.~Benchimol, S.~Gaubert, and M.~Joswig.
\newblock Log-barrier interior point methods are not strongly polynomial.
\newblock {\em SIAM J. Appl. Algebra Geom.}, 2(1):140--178, 2018.

\bibitem[ACTDG12]{detournay_policy}
M.~Akian, J.~Cochet-Terrasson, S.~Detournay, and S.~Gaubert.
\newblock Policy iteration algorithm for zero-sum multichain stochastic games
  with mean payoff and perfect information.
\newblock \arxiv{1208.0446}, 2012.

\bibitem[AG13]{AkianGaubert:2013}
M.~Akian and S.~Gaubert.
\newblock Policy iteration for perfect information stochastic mean payoff games
  with bounded first return times is strongly polynomial.
\newblock \arxiv{1310.4953}, 2013.

\bibitem[AGG12]{polyhedra_equiv_mean_payoff}
M.~Akian, S.~Gaubert, and A.~Guterman.
\newblock Tropical polyhedra are equivalent to mean payoff games.
\newblock {\em Int. J. Algebra Comput.}, 22(1):125001 (43 pages), 2012.

\bibitem[AGH15]{ergodicity_conditions}
M.~Akian, S.~Gaubert, and A.~Hochart.
\newblock Ergodicity conditions for zero-sum games.
\newblock {\em Discrete Contin. Dyn. Syst.}, 35(9):3901--3931, 2015.

\bibitem[AGH18]{generic_uniqueness}
M.~Akian, S.~Gaubert, and A.~Hochart.
\newblock Generic uniqueness of the bias vector of finite zero-sum stochastic
  games with perfect information.
\newblock {\em J. Math. Anal. Appl.}, 457:1038--1064, 2018.

\bibitem[AGH20]{hochartdominion}
M.~Akian, S.~Gaubert, and A.~Hochart.
\newblock A game theory approach to the existence and uniqueness of nonlinear
  {P}erron-{F}robenius eigenvectors.
\newblock {\em Discrete \& Continuous Dynamical Systems - A}, 40:207--231,
  2020.

\bibitem[AGKS22]{AllamigeonGaubertKatzSkomra:2022}
X.~Allamigeon, S.~Gaubert, R.~D. Katz, and M.~Skomra.
\newblock Universal complexity bounds based on value iteration and application
  to entropy games.
\newblock In {\em 49th International Colloquium on Automata, Languages, and
  Programming (ICALP 2022)}, volume 229 of {\em Leibniz International
  Proceedings in Informatics (LIPIcs)}, pages 126:1--126:20, Dagstuhl, Germany,
  2022. Schloss Dagstuhl -- Leibniz-Zentrum f{\"u}r Informatik.

\bibitem[AGNT23]{AkianGaubertNaepelsTerver:2023}
M.~Akian, S.~Gaubert, U.~Naepels, and B.~Terver.
\newblock {Solving irreducible stochastic mean-payoff games and entropy games
  by relative {K}rasnoselskii-{M}ann iteration}.
\newblock In {\em 48th International Symposium on Mathematical Foundations of
  Computer Science (MFCS 2023)}, volume 272 of {\em Leibniz International
  Proceedings in Informatics (LIPIcs)}, pages 10:1--10:15, Dagstuhl, Germany,
  2023. Schloss Dagstuhl -- Leibniz-Zentrum f{\"u}r Informatik.

\bibitem[AGS18]{issac2016jsc}
X.~Allamigeon, S.~Gaubert, and M.~Skomra.
\newblock Solving generic nonarchimedean semidefinite programs using stochastic
  game algorithms.
\newblock {\em J. Symbolic Comput.}, 85:25--54, 2018.

\bibitem[AGV21]{ambitropical_convexity}
M.~Akian, S.~Gaubert, and S.~Vannucci.
\newblock Ambitropical convexity: The geometry of fixed point sets of {S}hapley
  operators.
\newblock \arxiv{2108.07748}, 2021.

\bibitem[Ajt99]{ajtai1999generating}
Mikl{\'o}s Ajtai.
\newblock Generating hard instances of the short basis problem.
\newblock In {\em Proceedings of the 26th International Colloquium on Automata,
  Languages and Programming (ICALP)}, pages 1--9, 1999.

\bibitem[AKS87]{adler1987simplex}
Ilan Adler, Richard~M Karp, and Ron Shamir.
\newblock A simplex variant solving an {$m\times d$} linear program in {$O(\min
  (m2, d2)$} expected number of pivot steps.
\newblock {\em Journal of Complexity}, 3(4):372--387, 1987.

\bibitem[BC13]{buergisser2013condition}
Peter B{\"u}rgisser and Felipe Cucker.
\newblock {\em Condition: The geometry of numerical algorithms}, volume 349.
\newblock Springer Science \& Business Media, 2013.

\bibitem[BEF{\etalchar{+}}11]{BorosElbassioniFouzGurvich:2011}
E.~Boros, K.~Elbassioni, M.~Fouz, V.~Gurvich, K.~Makino, and B.~Manthey.
\newblock Stochastic mean payoff games: smoothed analysis and approximation
  schemes.
\newblock In {\em Proceedings of the 38th International Colloquium on Automata,
  Languages, and Programming (ICALP)}, volume 6755 of {\em Lecture Notes in
  Comput. Sci.}, pages 147--158. Springer, 2011.

\bibitem[BEF{\etalchar{+}}18]{boros2018approximation}
Endre Boros, Khaled Elbassioni, Mahmoud Fouz, Vladimir Gurvich, Kazuhisa
  Makino, and Bodo Manthey.
\newblock Approximation schemes for stochastic mean payoff games with perfect
  information and few random positions.
\newblock {\em Algorithmica}, 80:3132--3157, 2018.

\bibitem[BEGM13]{boros2013canonical}
Endre Boros, Khaled Elbassioni, Vladimir Gurvich, and Kazuhisa Makino.
\newblock On canonical forms for zero-sum stochastic mean payoff games.
\newblock {\em Dyn. Games Appl.}, 3:128--161, 2013.

\bibitem[BEGM19]{boros_gurvich_makino}
E.~Boros, K.~Elbassioni, V.~Gurvich, and K.~Makino.
\newblock A pseudo-polynomial algorithm for mean payoff stochastic games with
  perfect information and few random positions.
\newblock {\em Inform. and Comput.}, 267:74--95, 2019.

\bibitem[Bel57]{bellman1957dynamic}
Richard Bellman.
\newblock {\em Dynamic Programming}.
\newblock Princeton University Press, 1957.

\bibitem[BNRC08]{bezem2008exponential}
M.~Bezem, R.~Nieuwenhuis, and E.~Rodr{\'\i}guez-Carbonell.
\newblock Exponential behaviour of the {B}utkovi{\v{c}}--{Z}immermann algorithm
  for solving two-sided linear systems in max-algebra.
\newblock {\em Discrete Appl. Math.}, 156(18):3506--3509, 2008.

\bibitem[BV04]{beier2004typical}
R.~Beier and B.~V{\"o}cking.
\newblock Typical properties of winners and losers in discrete optimization.
\newblock In {\em Proceedings of the 36th Annual ACM Symposium on Theory of
  Computing (STOC)}, pages 343--352. ACM, 2004.

\bibitem[CGMGQ98]{cohen1998numerical}
Jean Cochet-Terrasson~Guy Cohen, Stephane Gaubert, Michael Mc~Gettrick, and
  Jean-Pierre Quadrat.
\newblock Numerical computation of spectral elements in max-plus algebra.
\newblock In {\em Proceedings of the IFAC Conference on System Structure and
  Control}, 1998.

\bibitem[Cha09]{chaloupka2009parallel}
Jakub Chaloupka.
\newblock Parallel algorithms for mean-payoff games: An experimental
  evaluation.
\newblock In {\em European Symposium on Algorithms}, pages 599--610. Springer,
  2009.

\bibitem[CHKN14]{ChatterjeeHenzingerKrinningerNanongkai:2014}
K.~Chatterjee, M.~Henzinger, S.~Krinninger, and D.~Nanongkai.
\newblock Polynomial-time algorithms for energy games with special weight
  structures.
\newblock {\em Algorithmica}, 70(3):457--492, 2014.

\bibitem[CIJ14]{chatterjee_ibsen-jensen}
K.~Chatterjee and R.~Ibsen-Jensen.
\newblock The complexity of ergodic mean-payoff games.
\newblock In {\em Proceedings of the 41st International Colloquium on Automata,
  Languages, and Programming (ICALP)}, volume 8573 of {\em Lecture Notes in
  Comput. Sci.}, pages 122--133. Springer, 2014.

\bibitem[CTG06]{cochetterrasson2006policy}
Jean Cochet-Terrasson and St{\'e}phane Gaubert.
\newblock A policy iteration algorithm for zero-sum stochastic games with mean
  payoff.
\newblock {\em Comptes Rendus Mathematique}, 343(5):377--382, 2006.

\bibitem[CTGG99]{CochetGaubertGunawardena:1999}
J.~Cochet-Terrasson, S.~Gaubert, and J.~Gunawardena.
\newblock A constructive fixed point theorem for min-max functions.
\newblock {\em Dyn. Stab. Syst.}, 14(4):407--433, 1999.

\bibitem[CY23]{christ2023smoothed}
Miranda Christ and Mihalis Yannakakis.
\newblock The smoothed complexity of policy iteration for markov decision
  processes.
\newblock In {\em Proceedings of the 55th Annual ACM Symposium on Theory of
  Computing (STOC)}, pages 1890--1903, 2023.

\bibitem[Den67]{denardo1967contraction}
Eric~V. Denardo.
\newblock Contraction mappings in the theory underlying dynamic programming.
\newblock {\em Siam Review}, 9(2):165--177, 1967.

\bibitem[DF68]{denardo1968multichain}
Eric~V. Denardo and Bennett~L. Fox.
\newblock Multichain markov renewal programs.
\newblock {\em SIAM Journal on Applied Mathematics}, 16(3):468--487, 1968.

\bibitem[DG06]{dhingra2006how}
Vishesh Dhingra and St{\'e}phane Gaubert.
\newblock How to solve large scale deterministic games with mean payoff by
  policy iteration.
\newblock In {\em Proceedings of the 1st International Conference on
  Performance Evaluation Methodologies and Tools (ValueTools)}, pages 12--es,
  2006.

\bibitem[DH18]{dadush2018friendly}
Daniel Dadush and Sophie Huiberts.
\newblock A friendly smoothed analysis of the simplex method.
\newblock In {\em Proceedings of the 50th Annual ACM SIGACT Symposium on Theory
  of Computing (STOC)}, pages 390--403, 2018.

\bibitem[DM23]{disser2023unified}
Yann Disser and Nils Mosis.
\newblock A unified worst case for classical simplex and policy iteration pivot
  rules.
\newblock {\em arXiv preprint arXiv:2309.14034}, 2023.

\bibitem[DY07]{develin_yu}
M.~Develin and J.~Yu.
\newblock Tropical polytopes and cellular resolutions.
\newblock {\em Exp. Math.}, 16(3):277--291, 2007.

\bibitem[EM79]{ehrenfeucht_mycielski}
A.~Ehrenfeucht and J.~Mycielski.
\newblock Positional strategies for mean payoff games.
\newblock {\em Internat. J. Game Theory}, 8(2):109--113, 1979.

\bibitem[Fei02]{feige2002relations}
Uriel Feige.
\newblock Relations between average case complexity and approximation
  complexity.
\newblock In {\em Proceedings of the 34th annual ACM sSmposium on Theory of
  Computing (STOC)}, pages 534--543, 2002.

\bibitem[FGHS22]{fearnley2022complexity}
John Fearnley, Paul Goldberg, Alexandros Hollender, and Rahul Savani.
\newblock The complexity of gradient descent: {$\mathsf{CLS} = \mathsf{PPAD}
  \cap \mathsf{PLS}$}.
\newblock {\em Journal of the ACM}, 70(1):1--74, 2022.

\bibitem[FGMS20]{fearnley2020unique}
John Fearnley, Spencer Gordon, Ruta Mehta, and Rahul Savani.
\newblock Unique end of potential line.
\newblock {\em Journal of Computer and System Sciences}, 114:1--35, 2020.

\bibitem[FM23]{payoffchapter}
N.~Fijalkow and B.~Monmege.
\newblock Games with payoffs.
\newblock In N.~Fijalkow, editor, {\em Games on Graphs}. 2023.

\bibitem[Fri11]{friedmann2011exponential}
Oliver Friedmann.
\newblock {\em Exponential lower bounds for solving infinitary payoff games and
  linear programs}.
\newblock PhD thesis, lmu, 2011.

\bibitem[FS07]{frieze2007probabilistic}
Alan Frieze and Gregory~B. Sorkin.
\newblock The probabilistic relationship between the assignment and asymmetric
  traveling salesman problems.
\newblock {\em SIAM Journal on Computing}, 36(5):1435--1452, 2007.

\bibitem[FV07]{filar_vrieze}
J.~Filar and K.~Vrieze.
\newblock {\em Competitive Markov Decision Processes}.
\newblock Springer, New York, 2007.

\bibitem[GG98]{gaubert1998duality}
S.~Gaubert and J.~Gunawardena.
\newblock The duality theorem for min-max functions.
\newblock {\em C. R. Acad. Sci.}, 326(1):43--48, 1998.

\bibitem[GGTW09]{georgiadis2009experimental}
Loukas Georgiadis, Andrew~V Goldberg, Robert~E Tarjan, and Renato~F Werneck.
\newblock An experimental study of minimum mean cycle algorithms.
\newblock In {\em 2009 Proceedings of the Eleventh Workshop on Algorithm
  Engineering and Experiments (ALENEX)}, pages 1--13. SIAM, 2009.

\bibitem[GHJ{\etalchar{+}}22]{goeoes2022further}
Mika G{\"o}{\"o}s, Alexandros Hollender, Siddhartha Jain, Gilbert Maystre,
  William Pires, Robert Robere, and Ran Tao.
\newblock Further collapses in tfnp.
\newblock In {\em Proceedings of the 37th Computational Complexity Conference
  (CCC)}, pages 1--15, 2022.

\bibitem[Gil57]{gillette}
D.~Gillette.
\newblock Stochastic games with zero stop probabilities.
\newblock In M.~Dresher, A.~W. Tucker, and P.~Wolfe, editors, {\em
  Contributions to the Theory of Games III}, volume~39 of {\em Ann. of Math.
  Stud.}, pages 179--188. Princeton University Press, Princeton, NJ, 1957.

\bibitem[GKK88]{gurvich}
V.~A. Gurvich, A.~V. Karzanov, and L.~G. Khachiyan.
\newblock Cyclic games and finding minimax mean cycles in digraphs.
\newblock {\em Zh. Vychisl. Mat. Mat. Fiz.}, 28(9):1406--1417, 1988.

\bibitem[GL89]{gurvich_lebedev}
V.~A. Gurvich and V.~N. Lebedev.
\newblock A criterion and verification of the ergodicity of cyclic game forms.
\newblock {\em Russian Math. Surveys}, 44(1):243--244, 1989.

\bibitem[Hal07]{halman}
N.~Halman.
\newblock Simple stochastic games, parity games, mean payoff games and
  discounted payoff games are all {LP}-type problems.
\newblock {\em Algorithmica}, 49(1):37--50, 2007.

\bibitem[HK66]{hoffman_karp}
A.~J. Hoffman and R.~M. Karp.
\newblock On nonterminating stochastic games.
\newblock {\em Manag. Sci.}, 12(5):359--370, 1966.

\bibitem[HMZ13]{hansen2013strategy}
T.~D. Hansen, P.~B. Miltersen, and U.~Zwick.
\newblock Strategy iteration is strongly polynomial for 2-player turn-based
  stochastic games with a constant discount factor.
\newblock {\em J. ACM}, 60(1):1--16, 2013.

\bibitem[How60]{howard1960dynamic}
Ronald~A. Howard.
\newblock {\em Dynamic Programming and Markov Processes}.
\newblock MIT Press, 1960.

\bibitem[HY02]{hordijk2002handbook}
A.~Hordijk and A.~A. Yushkevich.
\newblock Blackwell optimality.
\newblock In E.~A. Feinberg and A.~Shwartz, editors, {\em Handbook of Markov
  Decision Processes: Methods and Applications}, volume~40 of {\em Internat.
  Ser. Oper. Res. Management Sci.}, pages 231--267. Springer, Boston, MA, 2002.

\bibitem[HY20]{hubacek2020hardness}
Pavel Hub{\'a}cek and Eylon Yogev.
\newblock Hardness of continuous local search: Query complexity and
  cryptographic lower bounds.
\newblock {\em SIAM Journal on Computing}, 49(6):1128--1172, 2020.

\bibitem[HZ10]{hansen2010lower}
Thomas~Dueholm Hansen and Uri Zwick.
\newblock Lower bounds for howard’s algorithm for finding minimum mean-cost
  cycles.
\newblock In {\em International Symposium on Algorithms and Computation}, pages
  415--426. Springer, 2010.

\bibitem[HZ15]{hansen_zwick_random_facet}
T.~D. Hansen and U.~Zwick.
\newblock An improved version of the {R}andom-{F}acet pivoting rule for the
  simplex algorithm.
\newblock In {\em Proceedings of the 47th Annual ACM Symposium on the Theory of
  Computing (STOC)}, pages 209--218. ACM, 2015.

\bibitem[IJM12]{ibsen-jensen_miltersen}
R.~Ibsen-Jensen and P.~B. Miltersen.
\newblock Solving simple stochastic games with few coin toss positions.
\newblock In {\em Proceedings of the 20th Annual European Symposium on
  Algorithms (ESA)}, volume 7501 of {\em Lecture Notes in Comput. Sci.}, pages
  636--647. Springer, 2012.

\bibitem[JPY88]{johnson1988how}
David~S Johnson, Christos~H Papadimitriou, and Mihalis Yannakakis.
\newblock How easy is local search?
\newblock {\em Journal of computer and system sciences}, 37(1):79--100, 1988.

\bibitem[Jur98]{jurdzinski_parity_up}
M.~Jurdzi{\'n}ski.
\newblock Deciding the winner in parity games is in $\text{UP} \cap
  \text{co-UP}$.
\newblock {\em Inform. Process. Lett.}, 68(3):119--124, 1998.

\bibitem[Kal02]{kallenberg2002handbook}
L.~Kallenberg.
\newblock Finite state and action {MDP}s.
\newblock In E.~A. Feinberg and A.~Shwartz, editors, {\em Handbook of Markov
  Decision Processes: Methods and Applications}, volume~40 of {\em Internat.
  Ser. Oper. Res. Management Sci.}, pages 21--87. Springer, Boston, MA, 2002.

\bibitem[Kat07]{katz2007max}
Ricardo~David Katz.
\newblock Max-plus {$(A, B)$}-invariant spaces and control of timed
  discrete-event systems.
\newblock {\em IEEE Transactions on Automatic Control}, 52(2):229--241, 2007.

\bibitem[KM17]{kretinsky2017efficient}
Jan K{\v{r}}et{\'i}nsk{\`y} and Tobias Meggendorfer.
\newblock Efficient strategy iteration for mean payoff in markov decision
  processes.
\newblock In {\em International Symposium on Automated Technology for
  Verification and Analysis}, pages 380--399. Springer, 2017.

\bibitem[Koh80]{kohlberg}
E.~Kohlberg.
\newblock Invariant half-lines of nonexpansive piecewise-linear
  transformations.
\newblock {\em Math. Oper. Res.}, 5(3):366--372, 1980.

\bibitem[LL69]{liggett_lippman}
T.~M. Liggett and S.~A. Lippman.
\newblock Stochastic games with perfect information and time average payoff.
\newblock {\em SIAM Rev.}, 11(4):604--607, 1969.

\bibitem[MN81]{mertens_neyman}
J.-F. Mertens and A.~Neyman.
\newblock Stochastic games.
\newblock {\em Internat. J. Game Theory}, 10(2):53--66, 1981.

\bibitem[MP91]{megiddo1991total}
Nimrod Megiddo and Christos~H Papadimitriou.
\newblock On total functions, existence theorems and computational complexity.
\newblock {\em Theoretical Computer Science}, 81(2):317--324, 1991.

\bibitem[MSS04]{moehring2004scheduling}
Rolf~H. M{\"o}hring, Martin Skutella, and Frederik Stork.
\newblock Scheduling with {AND/OR} precedence constraints.
\newblock {\em SIAM Journal on Computing}, 33(2):393--415, 2004.

\bibitem[MVV87]{mulmuley1987matching}
Ketan Mulmuley, Umesh~V Vazirani, and Vijay~V Vazirani.
\newblock Matching is as easy as matrix inversion.
\newblock In {\em Proceedings of the 19th annual ACM Symposium on Theory of
  Computing (STOC)}, pages 345--354, 1987.

\bibitem[MW13]{MathieuWilson:2013}
C.~Mathieu and D.~B. Wilson.
\newblock The min mean-weight cycle in a random network.
\newblock {\em Combin. Probab. Comput.}, 22(5):763--782, 2013.

\bibitem[Ohl22]{ohlmann2022gkk}
Pierre Ohlmann.
\newblock The gkk algorithm is the fastest over simple mean-payoff games.
\newblock In {\em Proceedings of the 17th International Computer Science
  Symposium in Russia (CSR)}, pages 269--288, 2022.

\bibitem[Pap94]{papadimitriou1994complexity}
Christos~H Papadimitriou.
\newblock On the complexity of the parity argument and other inefficient proofs
  of existence.
\newblock {\em Journal of Computer and system Sciences}, 48(3):498--532, 1994.

\bibitem[Pis99]{pisaruk1999mean}
Nicolai~N. Pisaruk.
\newblock Mean cost cyclical games.
\newblock {\em Math. Oper. Res.}, 24(4):817--828, 1999.

\bibitem[Pur95]{puri1995theory}
Anuj Puri.
\newblock {\em Theory of hybrid systems and discrete event systems}.
\newblock University of California at Berkeley, 1995.

\bibitem[Put05]{puterman}
M.~L. Puterman.
\newblock {\em Markov Decision Processes: Discrete Stochastic Dynamic
  Programming}.
\newblock Wiley Ser. Probab. Stat. Wiley, Hoboken, NJ, 2005.

\bibitem[RCN73]{rao1973algorithms}
S.~S. Rao, R.~Chandrasekaran, and K.P.K. Nair.
\newblock Algorithms for discounted stochastic games.
\newblock {\em Journal of Optimization Theory and Applications},
  11(6):627--637, 1973.

\bibitem[RS03]{raghavan2003policy}
T.E.S. Raghavan and Zamir Syed.
\newblock A policy-improvement type algorithm for solving zero-sum two-person
  stochastic games of perfect information.
\newblock {\em Mathematical Programming}, 95(3):513--532, 2003.

\bibitem[Rud97]{rudich1997super}
Steven Rudich.
\newblock Super-bits, demi-bits, and {NP/qpoly}-natural proofs.
\newblock In {\em Proceedings of the International Workshop on Randomization
  and Approximation Techniques in Computer Science (RANDOM/APPROX)}, pages
  85--93, 1997.

\bibitem[RV07]{roglin2007smoothed}
H.~R{\"o}glin and B.~V{\"o}cking.
\newblock Smoothed analysis of integer programming.
\newblock {\em Math. Program.}, 110(1):21--56, 2007.

\bibitem[Sch09]{schewe2009parity}
Sven Schewe.
\newblock From parity and payoff games to linear programming.
\newblock In {\em Proceedings of the 34th International Symposium on
  Mathematical Foundations of Computer Science (MFCS)}, pages 675--686, 2009.

\bibitem[Sha53]{shapley_stochastic}
L.~S. Shapley.
\newblock Stochastic games.
\newblock {\em Proc. Natl. Acad. Sci. USA}, 39(10):1095--1100, 1953.

\bibitem[Sko18]{skomra_phd}
M.~Skomra.
\newblock {\em Tropical spectrahedra: {A}pplication to semidefinite programming
  and mean payoff games}.
\newblock PhD thesis, Universit{\'e} Paris-Saclay, 2018.

\bibitem[SML96]{selman1996generating}
Bart Selman, David~G Mitchell, and Hector~J Levesque.
\newblock Generating hard satisfiability problems.
\newblock {\em Artificial intelligence}, 81(1-2):17--29, 1996.

\bibitem[ST04]{spielman2004smoothed}
Daniel~A. Spielman and Shang-Hua Teng.
\newblock Smoothed analysis of algorithms: Why the simplex algorithm usually
  takes polynomial time.
\newblock {\em Journal of the ACM}, 51(3):385--463, 2004.

\bibitem[SWWY18]{sidford2018variance}
Aaron Sidford, Mengdi Wang, Xian Wu, and Yinyu Ye.
\newblock Variance reduced value iteration and faster algorithms for solving
  markov decision processes.
\newblock In {\em Proceedings of the 29th Annual ACM-SIAM Symposium on Discrete
  Algorithms (SODA)}, pages 770--787, 2018.

\bibitem[Ye11]{ye2011simplex}
Y.~Ye.
\newblock The simplex and policy-iteration methods are strongly polynomial for
  the {M}arkov decision problem with a fixed discount rate.
\newblock {\em Math. Oper. Res.}, 36(4):593--603, 2011.

\bibitem[ZP96]{zwick_paterson}
U.~Zwick and M.~Paterson.
\newblock The complexity of mean payoff games on graphs.
\newblock {\em Theoret. Comput. Sci.}, 158(1--2):343--359, 1996.

\end{thebibliography}
\bibliographystyle{alpha}

\end{document}
